\newtheorem{theorem}{Theorem}[section]
\newtheorem{lemma}[theorem]{Lemma}
\newtheorem{corollary}[theorem]{Corollary}
\newtheorem{proposition}[theorem]{Proposition}
\newtheorem{observation}{Observation}
\newcommand{\emphdef}[1]{\textbf{#1}}
\newcommand{\algo}[2]{\vspace{1em}\noindent\textbf{#1.}\hspace{1em}#2\vspace{1em}}
\newcommand{\algoref}[1]{{\normalfont\textbf{#1}}}
\begin{document}

%%%%%%%%%%%%%%%%%%%%%%%%%%%%%%%%%%%%%%%%%%%%%%%%%%%%%%%%%%%%%%%%%%%%%%%%%%%%%%%%%%%%%%%%%%%%%%%%%%%%%%%%%%%%%%%%%%%%%%%%%%%%%%%%%%%%%%%%%%%%%%%%%%%%%%%%%%%%%%%%%%%%%
%%%%%%%%%%%%%%%%%%%%%%%%%%%%%%%%%%%%%%%%%%%%%%%%%%%%%%%%%%%%%%%%%%%%%%%%%%%%%%%%%%%%%%%%%%%%%%%%%%%%%%%%%%%%%%%%%%%%%%%%%%%%%%%%%%%%%%%%%%%%%%%%%%%%%%%%%%%%%%%%%%%%%
%%%%%%%%%%%%%%%%%%%%%%%%%%%%%%%%%%%%%%%%%%%%%%%%%%%%%%%%%%%%%%%%%%%%%%%%%%%%%%%%%%%%%%%%%%%%%%%%%%%%%%%%%%%%%%%%%%%%%%%%%%%%%%%%%%%%%%%%%%%%%%%%%%%%%%%%%%%%%%%%%%%%%
%%%%%%%%%%%%%%%%%%%%%%%%%%%%%%%%%%%%%%%%%%%%%%%%%%%%%%%%%%%%%%%%%%%%%%%%%%%%%%%%%%%%%%%%%%%%%%%%%%%%%%%%%%%%%%%%%%%%%%%%%%%%%%%%%%%%%%%%%%%%%%%%%%%%%%%%%%%%%%%%%%%%%

\title{Computing the Intrinsic Delaunay Triangulation of a Closed Polyhedral Surface} %TODO Please add

\author{Lo{\"i}c Dubois}
\affil{Notre Dame, USA\footnote{This work was done while the author was working at LIGM, CNRS, Université Gustave Eiffel, F-77454 Marne-la-Vallée, France.}}
\date{2026}

\maketitle

\begin{abstract}
Every surface that is intrinsically polyhedral can be represented by a portalgon: a collection of polygons in the Euclidean plane with some pairs of equally long edges abstractly identified. While this representation is arguably simpler than meshes (flat polygons in $\mathbb R^3$ forming a surface), it has unbounded \emph{happiness}: a shortest path in the surface may visit the same polygon arbitrarily many times. This pathological behavior is an obstacle towards efficient algorithms. On the other hand, Löffler, Ophelders, Staals, and Silveira~[SoCG~2023] recently proved that the (intrinsic) Delaunay triangulations have bounded happiness.

In this paper, given a closed polyhedral surface $S$, represented by a triangular portalgon $T$, we provide an algorithm to compute the Delaunay triangulation of $S$ whose vertices are the singularities of $S$ (the points whose surrounding angle is distinct from $2\pi$). The time complexity of our algorithm is polynomial in the number of triangles and in the logarithm of the aspect ratio $r$ of $T$. Within our model of computation, we show that the dependency in $\log r$ is unavoidable. Our algorithm can be used to pre-process a triangular portalgon before computing shortest paths on its surface, and to determine whether the surfaces of two triangular portalgons are isometric. 
\end{abstract}

%%%%%%%%%%%%%%%%%%%%%%%%%%%%%%%%%%%%%%%%%%%%%%%%%%%%%%%%%%%%%%%%%%%%%%%%%%%%%%%%%%%%%%%%%%%%%%%%%%%%%%%%%%%%%%%%%%%%%%%%%%%%%%%%%%%%%%%%%%%%%%%%%%%%%%%%%%%%%%%%%%%%%
%%%%%%%%%%%%%%%%%%%%%%%%%%%%%%%%%%%%%%%%%%%%%%%%%%%%%%%%%%%%%%%%%%%%%%%%%%%%%%%%%%%%%%%%%%%%%%%%%%%%%%%%%%%%%%%%%%%%%%%%%%%%%%%%%%%%%%%%%%%%%%%%%%%%%%%%%%%%%%%%%%%%%
%%%%%%%%%%%%%%%%%%%%%%%%%%%%%%%%%%%%%%%%%%%%%%%%%%%%%%%%%%%%%%%%%%%%%%%%%%%%%%%%%%%%%%%%%%%%%%%%%%%%%%%%%%%%%%%%%%%%%%%%%%%%%%%%%%%%%%%%%%%%%%%%%%%%%%%%%%%%%%%%%%%%%
%%%%%%%%%%%%%%%%%%%%%%%%%%%%%%%%%%%%%%%%%%%%%%%%%%%%%%%%%%%%%%%%%%%%%%%%%%%%%%%%%%%%%%%%%%%%%%%%%%%%%%%%%%%%%%%%%%%%%%%%%%%%%%%%%%%%%%%%%%%%%%%%%%%%%%%%%%%%%%%%%%%%%

\section{Introduction}\label{sec:intro}

In one of its simplest forms a \emph{triangulation} is a finite collection of disjoint triangles in the Euclidean plane, together with a partial matching of the sides of the triangles such that any two matched sides have the same length (Figure~\ref{fig:portalgon}). This simple representation appears under different names in the literature (intrinsic triangulation~\cite{sharp2021geometry,sharp2019navigating}, portalgon~\cite{portalgons}). Cutting out the triangles from the plane and identifying the matched sides isometrically, respecting the orientations of the triangles, provides a (compact, orientable) \emph{polyhedral} surface. This surface is \emph{closed} if, in addition, it is connected and without boundary. 

In this paper we consider the \emph{Delaunay} triangulation of a closed polyhedral surface whose vertex set consists of the singularities (the points surrounded by an angle distinct from $2 \pi$) of the surface (except for flat tori, see below). It is generically unique. Our main contribution is an algorithm to compute it from any triangulation of the surface, whose time complexity is polynomial in the number of triangles and in the logarithm of the aspect ratio of the input triangulation. Our second contribution is a lower bound showing that the dependency in the logarithm of the aspect ratio is unavoidable in our model of computation.

Before describing our contributions in more detail, we discuss related works.

\subsection{Related works}

Polyhedral surfaces can also be obtained from \emph{meshes}, flat triangles in $\mathbb R^3$ glued along their edges. Moreover, every mesh defines a triangulation of its surface. Yet triangulations are more general than meshes: most triangulations cannot be obtained from a mesh. Some recent algorithms advantageously operate on triangulations of polyhedral surfaces without reference to a mesh~\cite{sharp2020you,takayama2022compatible,liu2023surface}. In this context the adjective “intrinsic” is sometimes placed before the name “triangulation” to make the distinction with the particular triangulations arising from a mesh. In the mathematical community, a prominent example is that of a translation surface~\cite{masur2006ergodic,zorich2006flat,hubert2006introduction,gutkin2000affine}, which arises naturally in the study of billiards in rational polygons.

Triangulations are so general that not all of them are suitable for computation, compared to meshes. Prominently, a fundamental problem on polyhedral surfaces is to compute the distance or a report a shortest path between two points. On a mesh, shortest paths can be computed in time polynomial in the number of triangles. For example, an algorithm of Mitchell, Mount, and Papadimitriou~\cite{mitchell1987discrete} propagates waves along the surface, starting from the source, in a discrete manner. See also Chen and Han~\cite{chen1996shortest}. On a generic triangulation however (not arising from a mesh), the number of times a shortest path visits a triangle is not bounded by any function of the number of triangles, as noted for example almost 20 years ago by Erickson~\cite{bworld}. Recently, Löffler, Ophelders, Staals, and Silveira~\cite{portalgons} coined the term \emph{happiness} of a triangulation, for the maximum number of times a shortest path visits a triangle. They adapted the single-source shortest paths algorithm of Mitchell, Mount, and Papadimitriou~\cite{mitchell1987discrete} from meshes to triangulations, whose time complexity now depends on the happiness of the triangulation (it is more efficient on triangulations of \emph{low} happiness). 

This raises the problem of replacing any given triangulation by another triangulation of the same surface whose happiness is “low”. Among the many remeshing algorithms~\cite{heckbert1997survey,shewchuk1997delaunay,ruppert1995delaunay,shewchuk2002delaunay}, only few have been ported to the general context of intrinsic triangulations~\cite{sharp2021geometry}, and the only solution we are aware of that compares to our main result (Theorem~\ref{thm:main result} below), by Löffler, Ophelders, Staals, and Silveira~\cite[Section~5]{portalgons}, is restricted to particular inputs whose surfaces are all homeomorphic to an annulus; we will use it as a black box. Importantly, the same authors also showed that Delaunay triangulations have bounded happiness~\cite[Section~4.2]{portalgons}. 

Delaunay triangulations are classical objects of computational geometry~\cite{de2000computational,fortune2017voronoi,aurenhammer2013voronoi}, closely related to shortest paths. While mostly known in the plane, they generalize to closed polyhedral surfaces, see for example the depiction of Bobenko and Springborn~\cite{bobenko2007discrete}. To compute a Delaunay triangulation from an arbitrary intrinsic triangulation there are, to our knowledge, only two approaches, and neither compares to our main result (Theorem~\ref{thm:main result}). One approach computes a Voronoi diagram with a suitably adapted multiple-source shortest path algorithm, and then derives from it a Delaunay tessellation, see for example Mount~\cite{mount1985voronoi}, Liu, Chen, and Tang~\cite{liu2010construction}, and Liu, Xu, Fan, and He~\cite{liu2015efficient}. Another approach starts from an initial triangulation and flips its edges until it reaches a Delaunay triangulation; it was proved to terminate by Indermitte, Liebling, Troyanov, and Cl{\'e}men{\c{c}}on~\cite{bobenko2007discrete,indermitte2001voronoi}. %On surfaces, the number of flips is not bounded by any function of the number of vertices, in contrast with the plane setting.

\subsection{Our results}

In order to state our results precisely, it now matters to make the distinction between a triangulation and the data structure representing it, and to allow for more general polygons than triangles. Following Löffler, Ophelders, Staals, and Silveira~\cite{portalgons}, we call \emph{portalgon} the collection $T$ of polygons in the Euclidean plane and the partial matching of their sides. We denote by $\mathcal S(T)$ the associated polyhedral surface. We say that the portalgon $T$ is \emph{triangular} if all polygons are triangles. The sides of the polygons, once identified, constitute a graph $T^1$ embedded on $\mathcal S(T)$ (the polygons themselves correspond to the faces of $T^1$): it is this graph $T^1$ that we call \emph{triangulation} if $T$ is triangular, and we call $T^1$ a \emph{tessellation} in general.

We consider, on a closed polyhedral surface $S$, the unique Delaunay tessellation $\mathcal D$ of $S$ whose vertices are exactly the singularities of $S$, with a single very special exception: if $S$ has no singularity, then $S$ is a flat torus and we let $\mathcal D$ be any of the Delaunay tessellations of $S$ that have exactly one vertex, for one can be mapped to the other via an orientation-preserving isometry of $S$ anyway. In any case, we say that $\mathcal D$ is \emph{the} Delaunay tessellation of $S$, in a slight abuse.\footnote{Given a set $V$ of points of the surface, finite, non-empty, and containing all the singularities, our results easily extend to Delaunay triangulations whose vertex set is $V$, but this is incidental to us.} It is “generically” a triangulation, but not always. If not, then triangulating the faces of $\mathcal D$ along any vertex-to-vertex arcs provides a Delaunay triangulation. The \emph{aspect ratio} of a triangular portalgon $T$ is the maximum side length of a triangle of $T$ divided by the smallest height of a triangle of $T$ (possibly another triangle). Our main contribution is:

\begin{restatable}{theorem}{maintheoremportalgon}\label{thm:main result}
Let $T$ be a portalgon of $n$ triangles, of aspect ratio $r$, whose surface $\mathcal S(T)$ is closed. One can compute the portalgon of the Delaunay tessellation of $\mathcal S(T)$ in $O(n^3 \log^2(n) \cdot \log^4(r))$ time.
\end{restatable}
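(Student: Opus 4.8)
The plan is to reduce the problem to two subroutines: first, a procedure that, given a triangular portalgon, produces another triangular portalgon of the \emph{same} surface whose happiness is bounded by a polynomial in $n$ and whose aspect ratio is bounded in terms of $r$; and second, an edge-flipping procedure that, starting from a low-happiness triangulation, converges to the Delaunay triangulation in a controlled number of flips, each flip being computable efficiently. For the first subroutine I would invoke the remeshing result of Löffler, Ophelders, Staals, and Silveira \cite{portalgons} as a black box wherever the relevant pieces of the surface are annulus-like, after cutting $\mathcal S(T)$ along a suitable system of arcs (e.g. a shortest system of loops through a basepoint, or the arcs of the input triangulation itself) so that the complementary regions become topological disks or annuli; the aspect-ratio bookkeeping here is exactly where the $\log r$ dependence enters, since lengths must be represented with $O(\log r)$ bits to distinguish the geometry faithfully.

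The heart of the algorithm is the edge-flip phase. Once we have a triangulation $T'$ of $\mathcal S(T)$ with happiness $\mathrm{poly}(n)$ and aspect ratio $\mathrm{poly}(r)$, each non-Delaunay edge can be flipped; by \cite{bobenko2007discrete,indermitte2001voronoi} this process terminates at the Delaunay tessellation $\mathcal D$. To get a \emph{quantitative} bound I would track a potential function — the total (discrete) Dirichlet-type energy, or the sum of cotangent weights, or simply the lexicographic vector of circumradii — and argue that each flip decreases it by an amount bounded below in terms of $1/\mathrm{poly}(n, \log r)$, while the total variation of the potential over the run is bounded above by $\mathrm{poly}(n, \log r)$; this yields a $\mathrm{poly}(n, \log r)$ bound on the number of flips. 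Each individual flip requires (i) locating the quadrilateral formed by the two triangles sharing the edge — here the happiness bound guarantees we can unfold along a bounded-length corridor of triangles, (ii) testing the empty-circumcircle (equivalently, the cotangent-sum $\ge 0$) condition by an exact arithmetic computation on $O(\log r + \log n)$-bit numbers, and (iii) updating the portalgon data structure. Summing, one flip costs $O(n \cdot \mathrm{poly}(\log n, \log r))$, and with the flip-count bound this should land within the claimed $O(n^3 \log^2 n \cdot \log^4 r)$ budget, the cubic factor coming from the product of the flip count (roughly $n^2$) and the per-flip unfolding cost (roughly $n$), with the polylogarithmic slack absorbing the arithmetic and the overhead of maintaining the combinatorial structure.

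The main obstacle I anticipate is controlling the flip count and, simultaneously, the bit-complexity of the intermediate triangulations: a priori an edge flip can create a triangle that is extremely thin, blowing up the aspect ratio, so the potential argument must be paired with a geometric invariant showing that flips toward Delaunay do not degrade the aspect ratio beyond $\mathrm{poly}(r)$ — intuitively true because the Delaunay condition is precisely what avoids slivers, but it needs a careful monotonicity statement valid throughout the run, not just at the end. A secondary difficulty is that $\mathcal D$ may fail to be a triangulation (coincident circumcircles); the clean way to handle this is to run the flips until no edge strictly violates Delaunay, then contract/merge the remaining flippable-but-neutral edges to obtain the canonical tessellation $\mathcal D$, and finally — only if a triangulation is wanted downstream — re-triangulate its faces along vertex-to-vertex arcs as noted after the statement of the theorem. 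Edge cases (flat tori with no singularity, digons arising transiently during flips, and the one-vertex normalization) are handled separately and contribute only lower-order terms.
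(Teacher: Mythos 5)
Your proposal has a genuine gap in its second phase, and the first phase is also not substantiated.

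The critical problem is the edge-flip phase. You assert that starting from a low-happiness triangulation, a potential-function argument will bound the number of flips by $\mathrm{poly}(n, \log r)$, but you give no such potential and no lower bound on the per-flip decrease. This is not a small omission: the paper itself observes that the flip algorithm's running time is \emph{not} bounded by any polynomial in $n$ and $\log r$, and this is exactly why the paper avoids flips entirely. The known termination proof of Indermitte et al.\ and Bobenko--Springborn gives no quantitative bound, and no potential with a $1/\mathrm{poly}$ decrease per flip is known. Your own stated ``main obstacle'' -- controlling the flip count and preventing aspect-ratio blow-up during intermediate flips -- is precisely the missing idea, and conceding that it ``needs a careful monotonicity statement'' does not supply one. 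The paper sidesteps this by never running a flip algorithm: once a low-happiness triangulation is in hand, it computes the \emph{Voronoi diagram} of the singularities by a multi-source wave-propagation algorithm (a discretization of Mitchell--Mount--Papadimitriou adapted to portalgons with bounded happiness), then reads off the Delaunay tessellation as the dual. This is Proposition~\ref{thm:isometry}, which gives a clean $O(n^2 h \log(nh))$ bound with $h$ the happiness, and it is the source of the cubic factor in the final statement.

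The first phase is also underspecified. Cutting $\mathcal S(T)$ along a system of loops and remeshing the annular pieces with the Löffler--Ophelders--Staals--Silveira black box does not obviously yield a low-happiness portalgon of the \emph{whole} surface: the cut system has $\Theta(g+n)$ arcs, the complementary pieces need not be annuli, and even where they are annuli you must control how happiness and aspect ratio compose across the cut loci. The paper instead builds a dedicated algorithm (Sections~\ref{sec:bifaces}--\ref{sec:analysis}) -- an iterated mix of inserting midpoints, inserting shortest vertex-to-vertex arcs, deleting low-degree vertices, simplifying tubes via good bifaces, and marking thin bifaces as inactive -- whose correctness rests on a new geometric invariant, \emph{enclosure}, that tracks how a segment can be enclosed by short non-contractible loops and shows that highly enclosed edges shorten exponentially fast. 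This analysis is the paper's central technical contribution and is not replaceable by a direct call to the annulus result; the annulus result is used only as one subroutine (\algoref{SimplifyTubes} / \algoref{Gardening}). Your $\log r$ accounting via ``bit-complexity of lengths'' is also not how the paper argues; the $\log r$ factors arise from the number of rounds needed for the enclosure-based length decay to reach the systole scale, not from representation size.
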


As already mentioned, the only two methods we are aware of for computing a Delaunay tessellation from an arbitrary triangulation are the flip algorithm and the computation of the dual Voronoi diagram. The time complexities of these algorithms are not bounded by any polynomial in $n$ and $\log(r)$. 

Applications of Theorem~\ref{thm:main result} are provided in Appendix~\ref{app:applications}. Briefly, on the portalgon returned by Theorem~\ref{thm:main result}, shortest paths can be computed in $O(n^2 \log^{O(1)} n)$ time. And Theorem~\ref{thm:main result} enables to test whether the surfaces of two given portalgons are isometric, simply by computing and comparing the portalgons of the associated Delaunay tessellations.

We analyze our algorithms within the real RAM model of computation described by Erickson, van der Hoog, and Miltzow~\cite{erickson2022smoothing}. It is an extension of the standard integer word RAM, with an additional memory array storing reals, and with additional instructions. On such a machine, we represent each polygon of a portalgon $T$ by the list of its vertices, and each vertex is by its two coordinates, stored in the memory array dedicated to reals. So displacing (translating or rotating) the polygons in the plane provides different representations of $T$. When modifying a portalgon $T$, we actually modify our representation of $T$, using elementary operations that are easily seen to be achievable by a real RAM.

Within this model of computation, our second contribution is a lower bound that backs our main result, Theorem~\ref{thm:main result}, by showing that the polynomial dependency in the logarithm of the aspect ratio is unavoidable:

\begin{restatable}{theorem}{maintheoremlowerbound}\label{thm:main lower bound}
Let $c \in (0,1)$. There are a flat torus $S$, and for every $x \in (1,\infty)$, a representation of a portalgon $T_x$, with two triangles, whose aspect ratio is $O(x^2)$, whose surface is $S$, that satisfy the following. There is no real RAM algorithm computing a representation of the portalgon of the Delaunay tessellation of $S$ from $T_x$ in $O((\log x)^c)$ time.
\end{restatable}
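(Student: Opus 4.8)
I will exhibit a fixed flat torus $S$ whose Delaunay tessellation has a deliberately ``transcendental shape'', together with, for each $x$, a highly sheared two‑triangle fundamental domain $T_x$ of $S$; then I will argue that turning $T_x$ into the portalgon of $\mathcal D$ forces a real RAM to separate an integer of magnitude $\Theta(x)$ from a transcendental real, which it cannot do in $o(\log x)$ steps. Fix once and for all a transcendental $s$ with $0<s<\tfrac12$, say $s=\pi-3$, and let $S=\mathbb R^2/\Lambda$ where $\Lambda$ is generated by $\mathbf e_1=(1,0)$ and $\mathbf e_2=(s,1)$. This basis is Gauss–Lagrange reduced, so $\mathcal D$ is obtained by cutting the parallelogram with sides $\mathbf e_1,\mathbf e_2$ along its shorter diagonal; in particular $|\mathbf e_2|^2=s^2+1$, a transcendental number, is an edge length realized by \emph{every} representation of the portalgon of $\mathcal D$. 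For an integer $m\ge 2$ let $T_m$ be the two triangles obtained by cutting the parallelogram spanned by $\mathbf a_m=\mathbf e_1$ and $\mathbf b_m=m\mathbf e_1+\mathbf e_2$ along a diagonal, with the obvious side identifications; its surface is $\mathbb R^2/\langle\mathbf a_m,\mathbf b_m\rangle=S$. Since the triangles have area $\tfrac12$ and longest side $\Theta(m)$, their smallest height is $\Theta(1/m)$ and the aspect ratio is $\Theta(m^2)$. Setting $T_x:=T_{\lceil x\rceil}$ gives, for every $x\in(1,\infty)$, a two‑triangle portalgon of surface $S$ and aspect ratio $O(x^2)$, which I present by the coordinate list $0,0,\ 1,0,\ m{+}1{+}s,1,\ m{+}s,1$ (with $m=\lceil x\rceil$) of its four distinct vertices.

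\textbf{Why no shortcut exists.} I work in the real RAM of~\cite{erickson2022smoothing}: the operations on reals are $+,-,\times,/,\sqrt{\cdot}$ and sign comparisons (crucially, there is no rounding of a real to an integer), and a program carries finitely many constants, which I may take rational. On input $T_m$ the only transcendental quantity present is $u:=m+s$ (the other coordinates are $0,1,u+1$), so along any fixed control‑flow path every real ever computed lies in the field $K$ obtained from $\mathbb Q(u)$ by closing under $\sqrt{\cdot}$. The key algebraic observation is that a constant lying in $K$ is algebraic over $\mathbb Q(u)$ and therefore algebraic over $\mathbb Q$, hence is not $s^2+1$. Consequently, on a single path and over a whole subinterval $I$ of values of $u$, the algorithm cannot output a valid representation of the portalgon of $\mathcal D$ unless $I$ contains only few of the points $u=m+s$: if it contained more than the degree of the output expressions, then the algebraic function $u\mapsto|\text{second edge}|^2$ produced by the algorithm would be forced to equal the transcendental constant $s^2+1$ identically on $I$, a contradiction. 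This is exactly what forbids the otherwise fatal shortcut of hard‑coding $\mathcal D$ (a hard‑coded output would be a fixed point of the input‑independent, hence transcendental‑free, portion of the model).

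\textbf{The counting.} Suppose a real RAM algorithm computes a representation of the portalgon of $\mathcal D$ from $T_x$ in time $t(x)=O((\log x)^c)$. Fix a large $x$ and consider the $\Theta(x)$ inputs $T_m$, $m\in[x,2x]\cap\mathbb Z$, i.e.\ the parameters $u=m+s$ in $J=[x+s,2x+s]$. The computation is an algebraic computation tree of height $\le t=O((\log x)^c)$ in which every real‑valued node is an expression of DAG‑size $\le t$, hence of degree $2^{O(t)}$; by a Milnor–Thom‑type estimate $J$ is then partitioned into at most $2^{O(t)}=x^{o(1)}$ subintervals on each of which the whole computation — and thus the output — is a fixed tuple of algebraic functions of $u$ of degree $x^{o(1)}$. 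Since $x^{o(1)}=o(x)$, one of these subintervals contains $x^{1-o(1)}$ of the points $m+s$, far exceeding the degree bound, and the previous paragraph yields a contradiction. Hence $t(x)=\Omega(\log x)$, and since $(\log x)^c=o(\log x)$ for every $c\in(0,1)$, no $O((\log x)^c)$‑time algorithm exists, which is the statement of the theorem.

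\textbf{Main obstacle.} The routine parts — that $\mathcal S(T_x)=S$, the aspect‑ratio estimate, and the identification of $\mathcal D$ — are standard flat‑torus geometry. The real work is in the lower‑bound step, and splits into two delicate points: (i) formalizing that ``a valid representation of the portalgon of $\mathcal D$'' is pinned down by a transcendental relation that no bounded‑complexity algebraic function of the barely‑transcendental input $u$ can satisfy over many parameter values; here the transcendental shape $s$ and the field‑theoretic fact about constants in $K$ are doing the essential work, and one must handle the freedom the algorithm has in \emph{which} representation of $\mathcal D$ it outputs (it still must realize the length $\sqrt{s^2+1}$). And (ii) carrying out the algebraic‑computation‑tree count carefully enough that the number of subintervals stays $2^{O(t)}$ rather than $2^{O(t^2)}$ — this is what is needed to obtain the full $\Omega(\log x)$, and hence to defeat every exponent $c<1$, rather than only $\Omega(\sqrt{\log x})$.
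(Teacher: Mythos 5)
Your proposal takes a genuinely different route from the paper, but it has a gap that appears fatal.

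\textbf{Where the paper goes, briefly.} The paper fixes a family $T_x$ whose surface's Delaunay triangle shape varies with the fractional part of $x$, so that from any output representation of the Delaunay portalgon one can read off $x-\lfloor x\rfloor$ in constant time; this reduces the problem to showing that $\lfloor x\rfloor$ requires $\Omega(\log x)$ time, which is then proved by a Blum--Shub--Smale-style ``flattening'' count: along each of the $\le 2^\alpha$ control-flow paths the output is a real-analytic function of $x$ whose domain has at most $(7\alpha)^{3\alpha}$ connected components (Milnor's bound), and a function constant on an open set is constant on the whole component. Crucially, that argument never needs to know anything arithmetic about the program's constants.

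\textbf{The gap in your approach.} Your argument hinges on the assertion that a real RAM program ``carries finitely many constants, which I may take rational.'' That assumption is not available. In the model used by the paper (see Lemma~\ref{lem:flattens bound}, where the assignment instruction is $x_i\leftarrow c$ for an \emph{arbitrary} $c\in\mathbb R$; the same is true in Blum--Shub--Smale and in the Erickson--van der Hoog--Miltzow real RAM), programs may store arbitrary real constants. With a transcendental constant such as $\pi-3$ available, a constant-time program can simply output a hard-coded placement of the two Delaunay triangles of your fixed torus $S$ (for instance with vertices $(0,0)$, $(1,0)$, $(s,1)$, etc.), ignoring the input entirely. Since you made $\mathcal S(T_x)$ \emph{genuinely} independent of $x$ (the lattice $\langle(1,0),(m+s,1)\rangle=\langle(1,0),(s,1)\rangle$ for every integer $m$), the required output is a fixed object, so this hard-coding is correct and runs in $O(1)$ time. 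Thus the lower bound you are trying to prove is actually false for your family, and your dismissal of this shortcut (``a hard-coded output would be a fixed point of the input-independent, hence transcendental-free, portion of the model'') is the step that is wrong: the input-independent portion is not transcendental-free. Your field-theoretic core lemma --- that an expression in $u$, constant on an interval, must evaluate to an algebraic number --- is only true with rational (or algebraic) program constants, which is a different and strictly weaker machine than the one the theorem is about. Your ``counting'' step is otherwise in the same spirit as the paper's (Milnor--Thom over control-flow paths), and would be fine, but it sits on top of the unfounded transcendence step.

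\textbf{What one actually needs.} The lower bound must rest on a quantity that the algorithm is forced to \emph{compute from the input}, not on a quantity it could just store. That is why the paper arranges for the output shape to depend on the fractional part of $x$, and why the core lower bound (Lemma~\ref{lem:floor takes log}) is about flattening --- a property of the function computed, independent of which real constants the program happens to carry. If you want to salvage a transcendence-based variant, you would have to either (a) switch to a weaker model with algebraic constants only, which would no longer prove the stated theorem, or (b) make the Delaunay output depend nontrivially on $x$, at which point you have essentially rediscovered the paper's reduction to $\lfloor x\rfloor$.
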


Altogether Theorem~\ref{thm:main result} and Theorem~\ref{thm:main lower bound} show that, within our model of computation, the complexity of computing the Delaunay tessellation from an arbitrary triangulation of a (closed, orientable) polyhedral surface is polynomial in the number of triangles and in the logarithm of the aspect ratio of the input triangulation.

%Altogether, Theorem~\ref{thm:main result} and Theorem~\ref{thm:main lower bound} show that, on a real RAM, the complexity of computing a Delaunay tessellation from an arbitrary triangulation is polynomial in the number of triangles and in the logarithm of the aspect ratio of the input triangulation. 

\subsection{Overview of the proof of Theorem~\ref{thm:main result}}

The proof of Theorem~\ref{thm:main lower bound} is deferred to Appendix~\ref{app:lower bound}. The rest of the paper is dedicated to the proof of Theorem~\ref{thm:main result}, of which we now provide an overview.

We introduce a slight variation of happiness, more suitable to our needs, which we call \emph{segment-happiness}. To prove Theorem~\ref{thm:main result}, the crux of the matter is to replace the input triangular portalgon by another triangular portalgon of the same surface, whose segment-happiness is “low”. For this purpose, our approach is to first focus on portalgons $T$ whose surface $\mathcal S(T)$ is \emph{flat}: the interior of $\mathcal S(T)$ has no singularity. Note that here we allow $\mathcal S(T)$ to have boundary, and this boundary may have singularities. The \emph{systole} of $\mathcal S(T)$ is the smallest length of a non-contractible geodesic closed curve in $\mathcal S(T)$. Our key technical result is:

\begin{restatable}{proposition}{corealgo}\label{T:core algorithm}
Let $T$ be a portalgon of $n$ triangles, whose sides are all smaller than $L > 0$.  Assume that $\mathcal S(T)$ is flat. Let $s > 0$ be smaller than the systole of $\mathcal S(T)$. One can compute in $O(n \log^2(n) \cdot \log^2 (2+L/s))$ time a portalgon of $O(n \cdot \log(2+L/s))$ triangles, whose surface is isometric to that of $T$, and whose segment-happiness is $O(\log(n) \cdot \log^2 (2+L/s))$.
\end{restatable}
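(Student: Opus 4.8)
The goal is to take a flat polyhedral surface $\mathcal S(T)$ (possibly with boundary), triangulated with $n$ triangles of side length $< L$, and produce a new triangulation of the same surface with only $O(n\log(2+L/s))$ triangles whose segment-happiness is polylogarithmic in $2+L/s$. The ratio $L/s$ (side length over systole) is the natural scale parameter: happiness is unbounded precisely because a triangulation can be "thin" in a way that forces geodesics to wind many times, and the systole controls how thin the surface genuinely is. So the high-level strategy I would pursue is: (1) decompose $\mathcal S(T)$ into geometrically simple pieces — cylinders/annuli and "thick" parts — by cutting along short geodesics; (2) on each annular piece, invoke the Löffler–Ophelders–Staals–Silveira annulus remeshing result (cited from \cite[Section~5]{portalgons}) as a black box to get a low-happiness triangulation; (3) on the thick parts, which now have systole comparable to their diameter, argue directly that a refinement into $O(1)$-aspect-ratio triangles of bounded size already has low happiness; (4) glue the pieces back, controlling the interface, and account for the $O(\log(2+L/s))$ blow-up in triangle count (each cut can be "charged" logarithmically, e.g. via a dyadic/BFS layering of the annulus).

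**The core geometric argument.** The part I expect to be genuinely delicate is the happiness bound itself, i.e. bounding how many times a geodesic can cross a single triangle of the new triangulation. The clean case is a triangle of bounded aspect ratio and size $\Theta(s)$ sitting in a region of injectivity radius $\Omega(s)$: a geodesic segment of length $\ell$ crosses such a triangle $O(\ell/s)$ times, and since we only ever consider geodesics (or shortest paths, for the happiness definition) that are themselves not too long relative to the diameter, this is $O(1)$ per unit of "surface covered." The subtlety is in the annular/cylindrical pieces, where the injectivity radius degenerates: a geodesic can wind around a thin cylinder $\Theta(L/s)$ times, so naively the happiness there is $\Theta(L/s)$, not $\log(L/s)$. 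The resolution must come from how the annulus is triangulated: the black-box remeshing of \cite{portalgons} produces, on each annulus, a triangulation of happiness $O(\log(\cdot))$ — presumably by a dyadic scheme where triangles near the "waist" are long and thin along the winding direction, so that even a geodesic winding many times crosses only logarithmically many of them. I would structure the proof so that the only place where winding happens is inside these black-box annuli, and everything else is handled by the bounded-injectivity-radius argument. Then segment-happiness of the whole triangulation is $O(\log n \cdot \log^2(2+L/s))$, where the extra $\log n$ and one extra $\log(2+L/s)$ factor come from a segment of the (global) Delaunay tessellation potentially passing through $O(n)$ pieces, each contributing a $\log(2+L/s)$ term, and from summing the per-annulus happiness bounds.

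**Algorithmic cost and output size.** For the running time, each step should be near-linear: computing a maximal collection of disjoint short geodesics / a short pants-type decomposition can be done by propagating from vertices (using that the surface is flat, so geodesics are straight in each triangle and we can trace them combinatorially), the annulus black box runs in the claimed time, and the thick-part refinement is a local subdivision. The $\log^2(n)$ factor in the running time plausibly comes from a balanced-tree / sorting substructure used either inside the black box or in locating where cuts enter and exit triangles. The output size $O(n\log(2+L/s))$ follows because each of the $O(n)$ original triangles gets subdivided $O(\log(2+L/s))$ times by the dyadic annulus scheme and an $O(1)$ amount elsewhere. I would present this as: Section (a) the decomposition and its properties; Section (b) remeshing each piece, citing \cite{portalgons} for annuli and giving the direct argument for thick parts; Section (c) the gluing and the global segment-happiness bound; Section (d) the running-time bookkeeping. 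The main obstacle, and the place I would spend the most care, is (b)–(c): making precise the claim that concatenating low-happiness triangulations of pieces along short curves yields a low-happiness triangulation of the whole, since a single geodesic of $\mathcal S(T)$ (or of the eventual Delaunay tessellation) can be chopped by the cuts into many subsegments, and one must verify that the crossing counts add up correctly rather than compounding multiplicatively.
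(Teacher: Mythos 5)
Your proposal takes a genuinely different route from the paper. You propose an upfront thick--thin decomposition of $\mathcal S(T)$ into cylinders (handled by the annulus black box of~\cite{portalgons}) and thick parts (handled by a bounded-aspect-ratio refinement), followed by gluing. The paper instead works iteratively: it repeatedly subdivides every interior edge at its midpoint (\algoref{InsertVertices}), re-triangulates with shortcuts (\algoref{InsertEdges}), then deletes low-degree interior vertices (\algoref{DeleteVertices}) and replaces tubes---detected on-the-fly from loop edges of the evolving 1-skeleton---by good bifaces (\algoref{SimplifyTubes}), freezing thin bifaces as inactive. Cylinders thus emerge implicitly rather than being extracted up-front, and the analysis hinges on a new parameter, the \emph{enclosure} $c_S(e)$ of a segment, together with the key fact (Proposition~\ref{P:main loop geometry}) that highly-enclosed edges halve in length under each round; after $\lceil\log(2+L/s)\rceil$ rounds every surviving active edge has $c_S(e)=O(\log(2+L/s))$, and Proposition~\ref{P:lambda and hapiness global} converts this into the happiness bound.

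Your sketch also has two concrete gaps that I do not see how to close. First, the thick-part happiness argument. You claim a triangle of bounded aspect ratio and size $\Theta(s)$ in a region of injectivity radius $\Omega(s)$ is crossed $O(\ell/s)$ times by a geodesic of length $\ell$. But segment-happiness maximizes over \emph{all} shortest paths of $\mathcal S(T)$, which can have length up to the diameter, i.e.\ as large as $\Theta(nL)$; that gives crossing counts of $\Theta(nL/s)$, nowhere near $\log(n)\log^2(2+L/s)$. Bounded aspect ratio alone does not control happiness---the $O(1)$ happiness of Delaunay triangulations in~\cite{portalgons} comes from the empty-circle property, not aspect ratio, and you cannot appeal to Delaunay triangulations here without circularity (that is what Proposition~\ref{T:core algorithm} is a stepping stone toward). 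Second, the size and happiness bookkeeping. A $\Theta(s)$-scale bounded-aspect-ratio refinement of the thick part has $\Theta(\mathrm{area}/s^2)$ triangles, which need not be $O(n\log(2+L/s))$; your closing claim that each original triangle is subdivided $O(\log(2+L/s))$ times does not follow from the construction you described. And your accounting for the happiness constant---``a segment can pass through $O(n)$ pieces, each contributing $\log(2+L/s)$''---would, if taken at face value, give an $n\log(2+L/s)$ bound rather than $\log(n)\log^2(2+L/s)$; the paper's bound is not a sum over pieces but the multiplicative estimate $h_S(e)=O\bigl(c_S(e)\cdot(1+\log c_S(e)+\log n+\log\lceil L/s\rceil)\bigr)$ applied with $c_S(e)=O(\log(2+L/s))$.
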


%Note that in Proposition~\ref{T:core algorithm} it is possible that $L < s$, which is why we write $\log(2+L/s)$ instead of $\log(L/s)$.

Sections~\ref{sec:bifaces}--\ref{sec:analysis} are devoted to the proof of Proposition~\ref{T:core algorithm}. In Section~\ref{sec:bifaces} we focus on particular portalgons, whose surfaces are all homeomorphic to an annulus; the definitions and results of this section are used by the algorithm of Proposition~\ref{T:core algorithm}. In Section~\ref{sec:algo} we describe the algorithm for Proposition~\ref{T:core algorithm}. It is a finely tuned combination of elementary operations such as inserting and deleting edges and vertices in graphs. While the algorithm itself is relatively simple, its analysis is more involved, and is sketched in Section~\ref{sec:analysis}. In this section, we first provide a combinatorial analysis, and then we prepare for the geometric analysis by introducing a new parameter on the simple geodesic paths $e$ of a flat surface, \emph{enclosure}, possibly of independent interest. Informally, $e$ is enclosed when a short non-contractible loop can be attached to a point of $e$ not too close to the endpoints of $e$. We then use enclosure to analyze the algorithm from a geometric point of view, proving Proposition~\ref{T:core algorithm}.

In Appendix~\ref{app:extension} we extend Proposition~\ref{T:core algorithm} from flat surfaces to surfaces having singularities in their interior, essentially by cutting out caps around these singularities. To get a cleaner result, we also replace $2+L/s$ by the aspect ratio of $T$, and we replace segment-happiness by happiness, obtaining:

\begin{restatable}{proposition}{thmimproving}\label{thm:improving}
Let $T$ be a portalgon of $n$ triangles, of aspect ratio~$r$. One can compute in $O(n \log^2(n) \cdot \log^2(r))$ time a portalgon of $O(n \cdot \log(r))$ triangles, whose surface is $\mathcal S(T)$, and whose happiness is $O(n \log(n) \cdot \log^2(r))$.
\end{restatable}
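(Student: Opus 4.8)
The plan is to reduce Proposition~\ref{thm:improving} to Proposition~\ref{T:core algorithm} by excising a small embedded cone around each interior singularity of $\mathcal S(T)$, applying the flat-case algorithm to the remainder, and gluing the cones back. Since rescaling the whole portalgon changes neither the aspect ratio, nor the happiness or segment-happiness, nor the running time, I would first normalize so that the longest side of $T$ has length $1$; the smallest triangle height is then $h := 1/r$.

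Each interior singularity is a vertex $v$ of $T^1$, with cone angle $\theta_v$ the sum of the angles at $v$ of the incident triangles; a one-line estimate ($\mathrm{area}(\tau) \ge \tfrac12 h^2$ together with side lengths $\le 1$ forces each angle of $\tau$ at $v$ to be at least $1/r^2$) gives $\theta_v \ge 1/r^2$. I would also use the elementary feature-size bounds that any two vertices of $T^1$ lie at surface distance $\Omega(h)$, and that every vertex lies at distance $\Omega(h)$ from the part of $T^1$ not incident to it, so that for a small enough absolute constant $c$ the balls $C_v := B(v, ch)$ around the interior singularities are pairwise disjoint embedded cones. To turn each $C_v$ into a union of triangles I would subdivide the incident triangles along a closed geodesic polygon at distance $\approx ch$ from $v$ (refining first, if needed, so every incident wedge at $v$ has angle $< \pi$); since $\sum_v \theta_v \le \pi n$ this costs $O(n)$ new triangles overall, it keeps all sides $\le 1$, and it produces a flat surface $\mathcal S' := \mathcal S(T) \setminus \bigcup_v \mathrm{int}\, C_v$ carrying a triangulation of $O(n)$ triangles, with boundary $\bigcup_v \partial C_v$ where $\mathrm{length}(\partial C_v) = \Theta(h\, \theta_v) = \Omega(h/r^2)$.

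Next I would check that the systole of $\mathcal S'$ is $\Omega(1/r^{O(1)})$. A non-contractible closed curve of $\mathcal S'$ that is homotopic to some $\partial C_v$ cannot come nearer to $v$ than $\partial C_v$, so it has length $\Omega(\mathrm{length}(\partial C_v)) = \Omega(h/r^2)$; any other non-contractible closed curve of $\mathcal S'$ stays non-contractible in $\mathcal S(T)$ after the cones are filled back in, and a min-width argument (a short closed geodesic is confined to a thin flat cylinder, forcing some triangle to have a small height) bounds its length below by $\Omega(h)$. In either case the length is $\Omega(1/r^3)$, so I may invoke Proposition~\ref{T:core algorithm} on $\mathcal S'$ with $L = 1$ and $s = \Theta(1/r^3)$: this returns, in $O(n \log^2 n \cdot \log^2 r)$ time, a portalgon $T'$ of $O(n \log r)$ triangles with $\mathcal S(T')$ isometric to $\mathcal S'$ and segment-happiness $O(\log n \cdot \log^2 r)$. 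I would then re-triangulate each cone $C_v$ as a fan from its apex so that its boundary matches the triangulation $T'$ induces on the corresponding boundary component — $O(n \log r)$ additional triangles, since $T'$ has that many boundary edges in total — and glue everything together, obtaining a portalgon $T''$ of $O(n \log r)$ triangles with $\mathcal S(T'') = \mathcal S(T)$, within the stated time bound.

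The remaining task, which I expect to be the main obstacle, is to bound the happiness of $T''$; this is exactly the passage from segment-happiness to happiness announced before the statement, and it is where the extra factor $n$ appears. Fix a shortest path $\gamma$ of $\mathcal S(T)$ and a triangle $\tau$ of $T''$. I would first show that $\gamma$ meets each $\partial C_v$ in $O(1)$ points: this is immediate when $\theta_v < 2\pi$ since then $C_v$ is geodesically convex, and when $\theta_v \ge 2\pi$ it requires a short argument using that $\gamma$ is globally shortest (any portion of $\gamma$ running outside $C_v$ between two consecutive visits can be rerouted through $C_v$ at cost $\le 2ch$). Hence $\gamma$ decomposes into $O(n)$ maximal sub-paths, each lying either in $\mathcal S'$ or in a single cone. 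A sub-path lying in $\mathcal S'$ is a shortest path of $\mathcal S'$ — a shorter path in $\mathcal S' \subseteq \mathcal S(T)$ would shorten $\gamma$ — so by the segment-happiness guarantee it meets $\tau$ at most $O(\log n \cdot \log^2 r)$ times when $\tau \subseteq \mathcal S'$; a sub-path lying in a cone is a geodesic chord of that cone which, being part of a globally shortest path, has $O(1)$ winding number and so meets any fan triangle of that cone $O(1)$ times. Summing over the $O(n)$ sub-paths shows $\gamma$ meets $\tau$ at most $O(n \log n \cdot \log^2 r)$ times, completing the proof. The subtle points — that a globally shortest path crosses a non-convex cone boundary only $O(1)$ times and winds only $O(1)$ times inside a cone, and that the min-width estimate indeed bounds the systole in terms of $r$ — are where I would concentrate the effort; the rest is bookkeeping.
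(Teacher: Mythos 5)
Your plan --- excise a small cap around each interior singularity, apply Proposition~\ref{T:core algorithm} to the resulting flat surface, glue the caps back as fans, and bound happiness by decomposing a shortest path into $O(n)$ alternating ``inside a cap'' / ``outside all caps'' pieces --- is exactly the route taken in Appendix~\ref{app:extension}. What differs, and where both of your flagged subtleties live, is the key local tool. The paper introduces the \emph{protected region} of a vertex $v$: the union of the small triangles at $v$ obtained by joining the midpoints of the edges incident to $v$, with the essential property that every geodesic path of length at most $d/2$ starting at $v$ stays inside it ($d$ the minimum triangle height). This one device replaces both of your auxiliary arguments. For the systole of the cut surface, the paper slides a hypothetically short non-contractible loop until it passes through a vertex; if that vertex is an original vertex of $T^1$, the loop is trapped in its protected region and is therefore contractible, a contradiction, and if it is a new cap-boundary vertex, the loop is trapped in the protected region of the cap's center and must be at least as long as an edge of the cap boundary. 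No cone-angle estimate $\theta_v\ge 1/r^2$ and no separate min-width argument for curves not homotopic to a cap boundary are needed. For the path decomposition, the paper does not bound crossings of a cap boundary directly; it shows instead that no shortest path contains two distinct maximal sub-paths inside the same cap: any two points of a cap are within $d/3$ of each other, while the portion of the path between the two sub-paths must leave the protected region of the cap's center and hence has length $>d/3$, contradicting minimality.

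The rest of your outline --- the normalization, the count of fan triangles, the observation that a sub-path inside a cap is a chord meeting each fan triangle $O(1)$ times and that a sub-path outside all caps is a shortest path of the cut surface --- matches the paper and is fine. But as written the proposal is not complete. The ``rerouting at cost $\le 2ch$'' remark only shows that each excursion out of a non-convex cone is short, not that there are $O(1)$ of them; and the systole bound for non-contractible curves not homotopic to any cap boundary is a genuinely nontrivial step that you name without proving. Both are precisely what the protected-region lemma (or an equivalent local feature-size statement) is designed to deliver, and you would need such a statement to turn the outline into a proof.
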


We have not discussed Delaunay tessellations yet. Still, we are almost ready to prove Theorem~\ref{thm:main result}. Indeed, once we have a portalgon of low happiness, we can compute shortest paths on the surface. And, as already mentioned, shortest path algorithms classically extend to construct Voronoi diagrams and then Delaunay tessellations. Formally:

\begin{restatable}{proposition}{thmisometry}\label{thm:isometry}
Let $T$ be a portalgon of $n$ triangles, of happiness $h$, such that $\mathcal S(T)$ is closed. One can compute the portalgon of the Delaunay tessellation of $\mathcal S(T)$ in $O(n^2h \log(nh))$ time.
\end{restatable}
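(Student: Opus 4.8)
The plan is to use the portalgon of low happiness $h$ as a workable substrate on which a multiple-source shortest path algorithm can be run efficiently, then extract the Voronoi diagram of the singularities (or the single marked point, in the flat-torus case), and finally dualize it into the Delaunay tessellation. Concretely, I would first set up the source set $V$ to be the singularities of $\mathcal S(T)$ (detectable from the angle sums around vertices of $T^1$, which are computable from the triangle side lengths), handling the exceptional flat-torus case separately by picking one vertex. Then I would invoke a variant of the Mitchell--Mount--Papadimitriou continuous Dijkstra / wavefront propagation, adapted to portalgons as in Löffler, Ophelders, Staals, and Silveira~\cite{portalgons}: each wavefront event stays inside a triangle, a shortest path crosses each triangle at most $h$ times, so the total combinatorial size of the arrangement of wavefront edges is $O(n h)$ up to logarithmic factors, and the priority-queue-driven propagation runs in $O(n h \log(nh))$ time per source. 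Running it simultaneously from all sources (equivalently, adding a virtual super-source adjacent to all of $V$, or just seeding the queue with all singularities at distance $0$) gives the geodesic Voronoi diagram of $V$ on $\mathcal S(T)$ in $O(|V| \cdot nh \log(nh)) = O(n^2 h \log(nh))$ time, since $|V| = O(n)$.

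From the Voronoi diagram I would read off the Delaunay tessellation combinatorially: its vertices are the sites in $V$; two sites are joined by a Delaunay edge for each Voronoi edge separating their cells, realized by the geodesic dual to that Voronoi edge (the shortest path between the two sites through the bisector); and Voronoi vertices correspond to Delaunay faces. The standard duality guarantees this yields exactly the Delaunay tessellation of $\mathcal S(T)$ with vertex set $V$, which by the paper's conventions is $\mathcal D$. To output it as a portalgon, I would lay out each Delaunay face (a geodesic polygon on the surface, hence developable into the plane since $V$ contains all singularities, so the faces are flat) as a Euclidean polygon, recording the side lengths, and record the side matching induced by adjacency across Voronoi edges; all of this is $O(n^2 h \log(nh))$ since the diagram has that many features. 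The geodesic distances and the developed coordinates of the bisector crossings are exactly the real-RAM quantities produced by the wavefront propagation, so no extra cost is incurred.

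The main obstacle I expect is not the dualization but the careful bookkeeping to ensure the wavefront propagation on a portalgon of happiness $h$ genuinely runs in $O(nh \log(nh))$ time and correctly resolves events that occur where a wavefront edge straddles a matched side of the portalgon (a "portal"), including ties and degeneracies where a Voronoi vertex is equidistant from three or more sites or lies exactly on a portal. One must argue that the total number of wavefront "intervals" ever created is $O(nh)$ — this is where the bound on happiness is used, roughly because a bisector or a wavefront breakpoint traced across the surface is itself a geodesic-like curve that enters each triangle $O(h)$ times — and that each is processed in $O(\log(nh))$ amortized time via the event queue. I would lean on~\cite{portalgons} for the single-source version and only need to additionally verify that seeding with $|V|$ sources multiplies the running time by at most $|V| = O(n)$ (alternatively, that a single multi-source run already costs $O(nh\log(nh))$ because the combined wavefront still has $O(nh)$ complexity, which would actually give the better bound $O(nh \log(nh))$; but the stated bound $O(n^2 h \log(nh))$ is safe and suffices). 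A secondary, more routine obstacle is confirming that the output size $O(nh)$ is consistent with what must be returned and that the flat-torus exception is handled so that the single-vertex Delaunay tessellation is produced correctly.
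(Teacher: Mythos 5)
Your overall plan matches the paper's: compute the geodesic Voronoi diagram of the singularities (or a single marked vertex, for the flat torus) via a multiple-source wavefront propagation adapted from the single-source algorithm of~\cite{portalgons}, then dualize it into the Delaunay tessellation and develop each face into the plane to produce the output portalgon. The dualization and unfolding step you sketch is essentially what the paper does (and is indeed $O(n)$ once the Voronoi diagram is in hand, by the duality lemma and Lemma~\ref{L:Voronoi unfolding}).

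The genuine gap is in the running-time bound for the Voronoi computation, which you yourself flag as the main obstacle and then do not resolve. You offer two heuristics: that seeding with $|V| = O(n)$ sources ``multiplies the running time by at most $|V|$,'' and alternatively that the combined multi-source wavefront has total complexity $O(nh)$. Neither is a proof, and the second is in fact stronger than what the paper establishes and would require a fundamentally different argument. The paper's actual mechanism is Lemma~\ref{L:VF bound}: it associates to each triangle $\Delta$ a planar point set, the \emph{constellation} $V_\Delta$ (the preimages of $V$ under all immersed empty disks attached to $\Delta$), proves $|V_\Delta| = O(nh)$ by a combinatorial argument on shortest-path region sequences (there are $O(n)$ critical sequences, hence $O(n)$ maximal ones, and each revisits a fixed region $O(h)$ times by the happiness bound), and shows the wave algorithm inserts each constellation point at most once. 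Summing over the $n$ triangles gives the $O(n^2 h)$ event count, with $O(\log(nh))$ amortized per event via Lemma~\ref{L:amortized insertion}. Your intuition that happiness must control the number of wavefront intervals is pointing in the right direction, but you would need to supply something equivalent to the constellation bound before the stated complexity follows; as written, the proposal cites the desired bound without deriving it.
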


We could not find a statement equivalent to Proposition~\ref{thm:isometry} in the literature, so we provide a proof in Appendix~\ref{A:Delaunay} for completeness. We insist that the proof of Proposition~\ref{thm:isometry} is incidental to us, and Proposition~\ref{thm:isometry} is not surprising at all. Our contribution is really the proof of Proposition~\ref{thm:improving}. Theorem~\ref{thm:main result} is immediate from Proposition~\ref{thm:improving} and Proposition~\ref{thm:isometry}:

\begin{proof}[Proof of Theorem~\ref{thm:main result}]
Proposition~\ref{thm:improving} computes in $O(n \log^2(n) \cdot \log^2(r))$ time a portalgon $T'$ of $O(n \cdot \log(r))$ triangles, whose happiness is $O(n \log(n) \cdot \log^2(r))$. Proposition~\ref{thm:isometry} then computes the portalgon of the Delaunay tessellation from $T'$ in $O(n^3 \log^2(n)\cdot \log^4(r))$ time.
\end{proof}

%%%%%%%%%%%%%%%%%%%%%%%%%%%%%%%%%%%%%%%%%%%%%%%%%%%%%%%%%%%%%%%%%%%%%%%%%%%%%%%%%%%%%%%%%%%%%%%%%%%%%%%%%%%%%%%%%%%%%%%%%%%%%%%%%%%%%%%%%%%%%%%%%%%%%%%%%%%%%%%%%%%%%
%%%%%%%%%%%%%%%%%%%%%%%%%%%%%%%%%%%%%%%%%%%%%%%%%%%%%%%%%%%%%%%%%%%%%%%%%%%%%%%%%%%%%%%%%%%%%%%%%%%%%%%%%%%%%%%%%%%%%%%%%%%%%%%%%%%%%%%%%%%%%%%%%%%%%%%%%%%%%%%%%%%%%
%%%%%%%%%%%%%%%%%%%%%%%%%%%%%%%%%%%%%%%%%%%%%%%%%%%%%%%%%%%%%%%%%%%%%%%%%%%%%%%%%%%%%%%%%%%%%%%%%%%%%%%%%%%%%%%%%%%%%%%%%%%%%%%%%%%%%%%%%%%%%%%%%%%%%%%%%%%%%%%%%%%%%
%%%%%%%%%%%%%%%%%%%%%%%%%%%%%%%%%%%%%%%%%%%%%%%%%%%%%%%%%%%%%%%%%%%%%%%%%%%%%%%%%%%%%%%%%%%%%%%%%%%%%%%%%%%%%%%%%%%%%%%%%%%%%%%%%%%%%%%%%%%%%%%%%%%%%%%%%%%%%%%%%%%%%

\section{Preliminaries}\label{sec:preliminaries}

We use without review standard notions of graph theory and low dimensional topology and geometry, referring to textbooks for details~\cite{diestel,armstrong2013basic,s-ctcgt-93,do1992riemannian}. We only mention that on a surface $S$, a path  $p: [0,1] \to S$ is \emphdef{simple} if its restriction to the interval $(0,1)$ is injective, in which case the image of $(0,1)$ by $p$ is the \emphdef{relative interior} of $p$. We denote by $\ell(p)$ the length of a geodesic path $p$. Throughout the paper, logarithms are in base two.

The definition of Delaunay tessellation given by Bobenko and Springborn~\cite[Section~2]{bobenko2007discrete} is not used in the core of the paper, but only in in Appendix~\ref{A:Delaunay} for proving Proposition~\ref{thm:isometry}. We collect details on this definition in Appendix~\ref{app:delaunay and voronoi defs} for completeness.

\subsection{Portalgons, tessellations, and polyhedral surfaces}

A \emphdef{portalgon} $T$ is a disjoint collection of oriented polygons in the Euclidean plane, together with a partial matching of the sides of the polygons such that any two matched sides have the same length. It is \emphdef{triangular} if all polygons are triangles. See Figure~\ref{fig:portalgon}. Any subset of the polygons defines a \emphdef{sub-portalgon} $T'$ of $T$: two sides of polygons are matched in $T'$ if and only if they are matched in $T$. In a portalgon $T$, identifying the matched sides, isometrically, and respecting the orientations of the polygons, provides \emphdef{the surface of} $T$, denoted $\mathcal S(T)$; it is a 2-dimensional Riemannian manifold whose metric may have singularities. The sides of the polygons of $T$ correspond to a graph $T^1$ embedded on $\mathcal S(T)$, the \emphdef{1-skeleton} of $T$.

A \emphdef{polyhedral surface} is any Riemannian manifold $S$ (possibly with singularities) isometric to the surface of a portalgon. And when we say that a portalgon $T$ is \emphdef{a portalgon of} $S$, we implicitly fix an isometry between $\mathcal S(T)$ and $S$. A \emphdef{tessellation} of $S$ is any 1-skeleton of a portalgon of $S$, it is a \emphdef{triangulation} if the portalgon is triangular.

\begin{figure}[ht]
    \centering
    \includegraphics[width=0.6\linewidth]{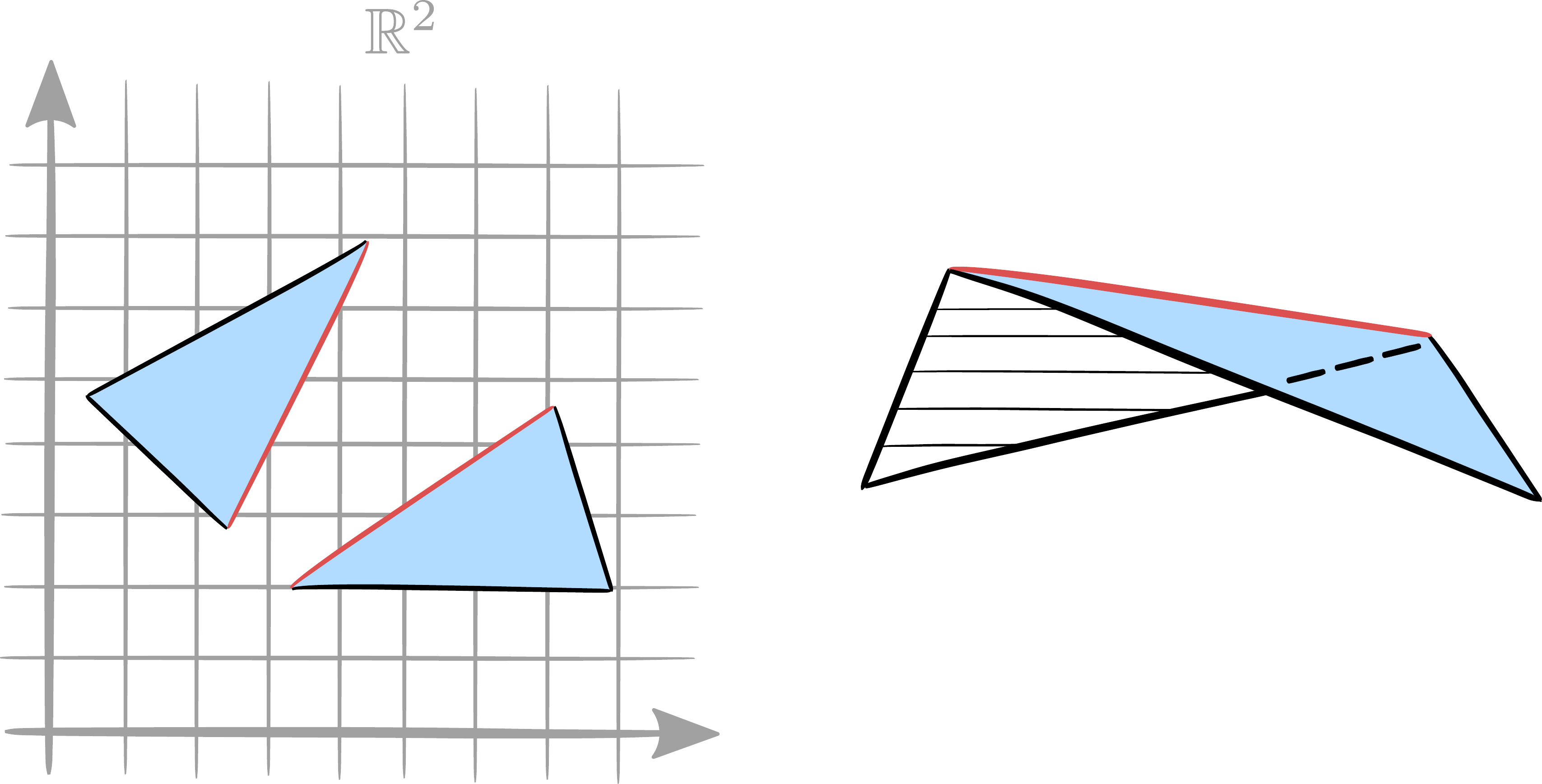}
    \caption{(Left) A triangular portalgon $T$: two triangles in the Euclidean plane, with two sides matched in red. (Right) The surface $\mathcal S(T)$, and the 1-skeleton $T^1$.}
    \label{fig:portalgon}
\end{figure}

%\begin{figure}[ht]
%    \centering
%    \includegraphics[width=0.6\linewidth]{figures/singularity.pdf}
%    \caption{(Left) The surface $\mathcal S(T)$ of a portalgon $T$, on which is represented the 1-skeleton $T^1$ of $T$. The vertex of $T^1$ in the interior of $\mathcal S(T)$, represented by a black disk, is a singularity of $\mathcal S(T)$. (Middle) The vertex is a flat point. (Right) The vertex is a singularity.}
%    \label{fig:singularity}
%\end{figure}

Consider a polyhedral surface $S$, a triangulation $T^1$ of $S$, a vertex $x$ of $T^1$, and the sum $a$ of the angles of faces of $T^1$ around $x$. The point $x$ is a \emphdef{singularity} if $x$ lies in the boundary of $S$ and $a \neq \pi$, or if $x$ lies in the interior of $S$ and $a \neq 2 \pi$. Every other point of $S$ is \emphdef{flat}. This does not depend on any particular triangulation of $S$. A surface $S$ is \emphdef{flat} if its interior has no singularity (although its boundary may have singularities). The closed flat surfaces are called \emphdef{flat tori}.

\subsection{Aspect ratio, systole, happiness, and segment-happiness}

The \emphdef{aspect ratio} of a triangular portalgon $T$ is the maximum side length of a triangle of $T$ divided by the smallest height of a triangle of $T$ (possibly another triangle). Note that the aspect ratio is always greater than or equal to $2/\sqrt 3 > 1$, because the maximum side length of a triangle is always greater than or equal to $2/\sqrt 3$ times its smallest height.

The \emphdef{systole} of a polyhedral surface $S$ is the smallest length of a non-contractible geodesic closed curve in $S$, except in the particular case where every closed curve in $S$ is contractible, in which case the systole is $\infty$. The important thing is that for every positive real $s$ smaller than the systole of $S$, any non-contractible closed curve in $S$ is longer than $s$.

The \emphdef{happiness} of a portalgon $T$ is the maximum number of times a shortest path in $\mathcal S(T)$ visits the image of a polygon of $T$, maximized over all the shortest paths of $\mathcal S(T)$ and all the polygons of $T$ (see~\cite[Section~3]{portalgons}). We introduce a variation, more suitable to our needs. In a polyhedral surface $S$, a \emphdef{segment} is a simple geodesic path $e$ whose relative interior is disjoint from any singularity of $S$. The \emphdef{segment-happiness} of $e$ in $S$, denoted $h_{S}(e)$, is the maximum number of intersections between $e$ and a shortest path of $S$, maximized over all the shortest paths of $S$. The \emphdef{segment-happiness} of a portalgon $T$ is then the maximum segment-happiness $h_{\mathcal S(T)}(e)$, maximized over the edges $e$ of its 1-skeleton $T^1$. A priori, the segment-happiness of a portalgon $T$ differs from the happiness of $T$. Indeed a path in $\mathcal S(T)$ may visit many times a face of $T^1$ without intersecting any edge of $T^1$ more than once, if the face has high degree. However, if $T$ is triangular, then the happiness and the segment-happiness of $T$ do not differ by more than a constant factor.

%%%%%%%%%%%%%%%%%%%%%%%%%%%%%%%%%%%%%%%%%%%%%%%%%%%%%%%%%%%%%%%%%%%%%%%%%%%%%%%%%%%%%%%%%%%%%%%%%%%%%%%%%%%%%%%%%%%%%%%%%%%%%%%%%%%%%%%%%%%%%%%%%%%%%%%%%%%%%%%%%%%%%
%%%%%%%%%%%%%%%%%%%%%%%%%%%%%%%%%%%%%%%%%%%%%%%%%%%%%%%%%%%%%%%%%%%%%%%%%%%%%%%%%%%%%%%%%%%%%%%%%%%%%%%%%%%%%%%%%%%%%%%%%%%%%%%%%%%%%%%%%%%%%%%%%%%%%%%%%%%%%%%%%%%%%
%%%%%%%%%%%%%%%%%%%%%%%%%%%%%%%%%%%%%%%%%%%%%%%%%%%%%%%%%%%%%%%%%%%%%%%%%%%%%%%%%%%%%%%%%%%%%%%%%%%%%%%%%%%%%%%%%%%%%%%%%%%%%%%%%%%%%%%%%%%%%%%%%%%%%%%%%%%%%%%%%%%%%
%%%%%%%%%%%%%%%%%%%%%%%%%%%%%%%%%%%%%%%%%%%%%%%%%%%%%%%%%%%%%%%%%%%%%%%%%%%%%%%%%%%%%%%%%%%%%%%%%%%%%%%%%%%%%%%%%%%%%%%%%%%%%%%%%%%%%%%%%%%%%%%%%%%%%%%%%%%%%%%%%%%%%

\section{Tubes and bifaces}\label{sec:bifaces}

In this section we focus on particular triangular portalgons. See Figure~\ref{fig:bifaces}. A \emphdef{tube} is a triangular portalgon $X$ whose surface $\mathcal S(X)$ is homeomorphic to an annulus and has no singularity in its interior, and whose 1-skeleton $X^1$ has exactly one vertex on each boundary component of $\mathcal S(X)$. Among tubes, a \emphdef{biface} is a portalgon $B$ of two triangles whose respective sides $s_0,s_1,s_2$ and $s_0',s_1',s_2'$, in order (clockwise say), are such that $s_0$ is matched with $s_0'$ and $s_1$ is matched with $s_1'$. Its 1-skeleton $B^1$ has four edges: two loop edges forming the two boundary components of $\mathcal S(B)$, which we call \emphdef{boundary edges}, and two edges whose relative interiors are included in the interior of $\mathcal S(B)$, which we call \emphdef{interior edges}. 

\begin{figure}[ht]
    \centering
    \includegraphics[width=0.7\linewidth]{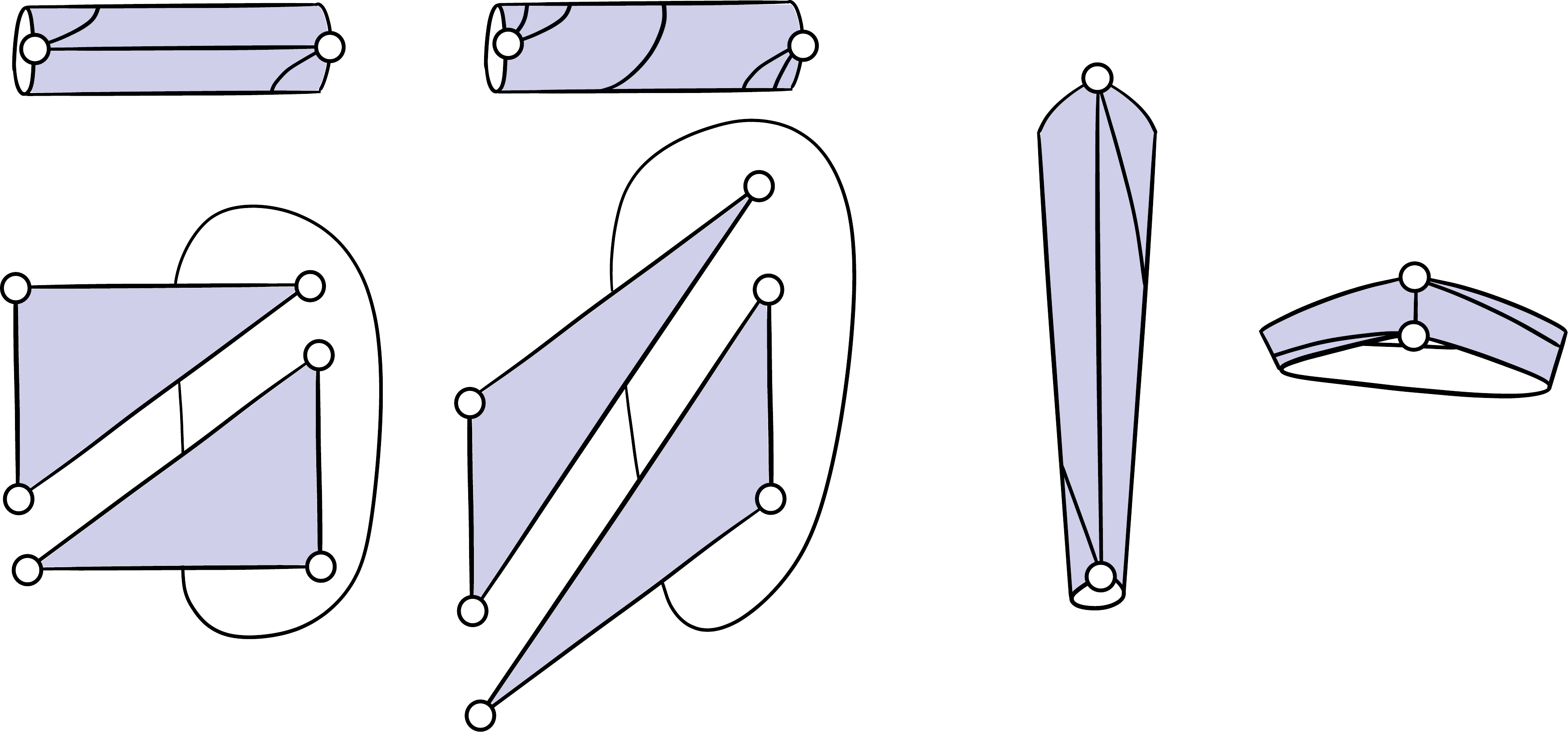}
    \caption{(From left to right) A good biface, a biface not good, a thin biface, a thick biface.}
    \label{fig:bifaces}
\end{figure}

We say that a biface $B$ is \emphdef{good} if the two interior edges $e$ and $f$ of $B^1$ satisfy both of the following up to possibly exchanging $e$ and $f$. First, $e$ is a shortest path in $\mathcal S(B)$. Second, cut $\mathcal S(B)$ along $e$, and consider the resulting quadrilateral. If this quadrilateral has two diagonals then $f$ is shortest among the two diagonals. We will distinguish good bifaces. A good biface $B$ is \emphdef{thin} if every interior edge of $B^1$ is longer than every boundary edge of $B^1$. Otherwise $B$ is \emphdef{thick}. While tubes and bifaces have unbounded happiness, good bifaces on the other hand satisfy the following (Appendix~\ref{app:bifaces}):

\begin{lemma}\label{L:biface interior hapiness}
Given a good biface $B$, let $e$ be an interior edge of $B^1$. Then $h_{\mathcal S(B)}(e) \leq 6$.
\end{lemma}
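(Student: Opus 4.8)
The plan is to trace a single shortest path $\gamma$ of $\mathcal S(B)$ across the two triangles of $B$ and combine a combinatorial reduction with a developing-map argument. Write $u,v$ for the two vertices of $B^1$, and label the two interior edges of $B^1$ as $e$ and $f$ so that $e$ is a shortest path of $\mathcal S(B)$ and $f$ is a shortest diagonal of the quadrilateral $Q$ obtained by cutting $\mathcal S(B)$ along $e$; by definition of a good biface such a labelling exists, and we must establish $h_{\mathcal S(B)}(e)\le 6$ as well as $h_{\mathcal S(B)}(f)\le 6$.

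The first step is combinatorial. Each of the two triangles of $B$ has exactly two sides mapping to interior edges of $B^1$---one copy of $e$ and one copy of $f$---and one side mapping to a boundary edge. Hence, away from $u$, $v$ and the boundary edges, $\gamma$ runs across $B$ as an alternating sequence of straight chords of the two triangles, each chord entering a triangle through one of its interior sides and leaving through the other; in particular the successive crossings of $\gamma$ with $e\cup f$ strictly alternate between $e$ and $f$, so $h_{\mathcal S(B)}(e)$ and $h_{\mathcal S(B)}(f)$ differ by at most one. It therefore suffices to bound the total number of chords of $\gamma$ inside $B$; the cases where $\gamma$ has an endpoint strictly inside a triangle, passes through $u$ or $v$, or travels along a boundary edge only change this count by an additive constant, which I would fold into the final bound.

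The second step is geometric. Develop the consecutive chords of $\gamma$ into the Euclidean plane by unfolding the visited triangles across their shared interior sides, producing a ``fan'' $\tau_0,\tau_1,\tau_2,\dots$ of congruent copies of the two triangles of $B$, glued along alternating copies of $e$ and of $f$, together with the straight segment $\bar\gamma$ that develops $\gamma$ and crosses the gluing edges in order. This fan develops a portion of the universal cover of the annulus $\mathcal S(B)$, the ``one period'' map $\Psi$---the developing-map holonomy of the core loop, which is nontrivial unless $\mathcal S(B)$ is a flat cylinder---taking $\tau_i$ to $\tau_{i+2}$; Gauss--Bonnet applied to $\mathcal S(B)$ (flat interior, straight boundary edges) gives $a_u+a_v=2\pi$ for the cone angles at $u,v$ and identifies the rotation angle of $\Psi$. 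Two inputs then constrain how many gluing edges $\bar\gamma$ can cross. First, $\gamma$ being a shortest path forces $\bar\gamma$ to realize the distance between its endpoints in the developed picture, so the images of the endpoint of $\bar\gamma$ under $\Psi^{\pm 1}$ are no closer to the start of $\bar\gamma$ than the endpoint itself; this bounds how far $\bar\gamma$ may wind along the fan in terms of the rotation angle of $\Psi$. Second, the extremality properties built into ``good''---$e$ globally shortest, $f$ the shorter diagonal of $Q$---pin down, up to a short case analysis, the shapes and relative positions of the developed copies of $e$ and $f$, and hence the spacing of the gluing edges that $\bar\gamma$ must cross. Combining these with a case split according to whether $B$ is thin or thick and whether $Q$ is convex or has its single reflex vertex should yield that $\bar\gamma$ crosses at most twelve gluing edges, whence $h_{\mathcal S(B)}(e)\le 6$ and $h_{\mathcal S(B)}(f)\le 6$.

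The main obstacle is this final quantitative estimate: turning ``the fan winds, so a globally shortest path cannot stay inside it for long'' into the explicit constant $6$. This is exactly where the full definition of a good biface is needed; if either extremality hypothesis is dropped, the fan can be an almost-straight thin sliver of many triangles through which an arbitrarily long shortest path runs, which is precisely the unbounded-happiness behaviour of general bifaces.
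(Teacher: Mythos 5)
Your proposal takes a genuinely different route from the paper's, and it contains a real gap that you yourself flag: the ``final quantitative estimate'' --- turning the winding of the developed fan into the explicit constant $6$ via holonomy, Gauss--Bonnet, and a thin/thick and convex/reflex case split --- is declared ``the main obstacle'' and is left unresolved. Without that step the argument does not close, and it is far from clear that the sketch would produce a constant at all without further ideas; you also lean on the ``$f$ is the shorter diagonal of $Q$'' clause of goodness, which is in fact not needed for this lemma.

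The paper's proof is one short paragraph and avoids the developing map entirely. The key fact you do not exploit is that, because $B$ is good, one of the two interior edges is a shortest path of $\mathcal S(B)$, and it is necessarily the shorter of the two interior edges (both join $u$ to $v$, and the shortest path realizes $d(u,v)$). Call this edge $f$ and the other interior edge $g$. Two shortest paths cannot cross twice in their relative interiors (exchange sub-paths between two crossings, then shorten at a corner), so for any shortest path $p$ the set $\mathring p \cap \mathring f$ has at most one point, hence $\mathring p$ meets $f$ (including its endpoints $u,v$) at most $3$ times. Your interleaving idea then finishes the job, stated more carefully than strict $e$--$f$ alternation: $f\cup g$ cuts $\mathcal S(B)$ into two Euclidean triangles each having a single side on $g$, so between any two consecutive crossings of $\mathring p$ with $\mathring g$ the path must meet $f$; this caps $|\mathring p\cap\mathring g|$ at $4$ and gives the bound. (Your ``strict alternation'' claim is also not quite right as phrased, since a chord can enter and leave through a boundary side, but that is a side issue.) In short, you identified a correct combinatorial reduction but missed the one-line input --- two shortest paths cross at most once, and the shorter interior edge of a good biface is a shortest path --- that makes the entire holonomy machinery unnecessary.
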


We will use the elementary operation of replacing a tube by a good biface (Appendix~\ref{app:bifaces}):

\begin{restatable}{proposition}{lemcomputebiface}\label{P:compute good biface}
Let $X$ be a tube with $n$ triangles, whose sides are smaller than $L > 0$. Let $s > 0$ be smaller than the systole of $\mathcal S(X)$. One can compute a good biface whose surface is $\mathcal S(X)$ in $O(n \log n \cdot \log(2+L/s))$ time.
\end{restatable}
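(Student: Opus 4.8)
The plan is to realize the good biface as the tube $X$ cut open along a shortest geodesic joining its two boundary vertices. Write $u$ and $v$ for the unique vertices of $X^1$ on the two boundary components of $\mathcal S(X)$, and suppose for the moment that a shortest geodesic path $e$ from $u$ to $v$ in $\mathcal S(X)$ is available. Being a shortest path, $e$ is simple, and its relative interior lies in the open annulus, hence is disjoint from every singularity of $\mathcal S(X)$ (all of which lie on the boundary). Cutting $\mathcal S(X)$ along $e$ unrolls the annulus into a disk $Q$ whose boundary reads cyclically $e_1, \beta_2, e_2, \beta_1$, where $e_1, e_2$ are the two copies of $e$ and $\beta_1, \beta_2$ the two boundary loops of $\mathcal S(X)$; thus $Q$ is a flat quadrilateral with corners $u_1, v_1, v_2, u_2$ and no interior singularity. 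I then triangulate $Q$ by a diagonal $f$ and re-glue $e_1$ to $e_2$: this identifies $u_1$ with $u_2$ and $v_1$ with $v_2$, and turns the two triangles of $Q$ into a portalgon $B$ of two triangles whose matched side pairs are $(e_1, e_2)$ and the two copies of $f$, while $\beta_1, \beta_2$ become boundary loops. Hence $B$ is a biface with interior edges $e$ and $f$, boundary edges $\beta_1, \beta_2$, and $\mathcal S(B) = \mathcal S(X)$.

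The diagonal $f$ is picked, and the goodness of $B$ is verified, by a short Gauss--Bonnet argument. The angle sum of the flat quadrilateral $Q$ equals $2\pi$, so at most one of its corners is reflex; for a pair of opposite corners containing the reflex corner (or for either pair, if there is none), the shortest path within $Q$ between those two corners bends at no reflex corner in its relative interior, hence is a single geodesic segment, and it splits $Q$ into two flat simply-connected regions with three geodesic sides each --- that is, by Gauss--Bonnet again, two genuine Euclidean triangles. So a valid triangulating diagonal always exists; I take $f$ to be one, and the shorter one when both pairs of opposite corners work. Goodness then follows: $e$ is a shortest path in $\mathcal S(B) = \mathcal S(X)$ since it was chosen globally shortest between $u$ and $v$ and the surgery changes neither the surface nor these two points; and cutting $\mathcal S(B)$ along $e$ recovers precisely $Q$, of whose two diagonals --- when it has two --- $f$ was chosen to be the shorter.

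The real work is to \emph{compute} a shortest $u$--$v$ path efficiently, which is delicate because tubes have unbounded happiness, so $e$ cannot be traced by running a shortest-path algorithm directly on $X$. My plan is to first invoke the annulus-remeshing algorithm of Löffler, Ophelders, Staals, and Silveira~\cite[Section~5]{portalgons} as a black box: applied to $X$, and using the lower bound $s$ on the systole to keep the geometry under control, it returns in $O(n \log n \cdot \log(2 + L/s))$ time a portalgon $Y$ of the same annulus $\mathcal S(X)$, with few triangles and bounded happiness (inserting $u$ and $v$ as vertices of $Y^1$ costs $O(1)$ more). On $Y$ a shortest-path algorithm runs efficiently (again via~\cite{portalgons}), producing $e$; cutting $\mathcal S(Y)$ along $e$ and carrying out the constant-size surgery above then costs only $O(1)$, and the running time is dominated by the remeshing step, whose $\log(2 + L/s)$ factor comes internally from a doubling or binary search over how many times $e$ winds around the annulus. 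I expect the main obstacle to be precisely this step --- correctly quoting and driving the cited machinery --- together with the degenerate configurations my sketch glosses over (for instance $e$ running along a boundary loop, or $Q$ having a corner of angle exactly $\pi$ so that $Q$ degenerates to a triangle), which must be excluded or handled separately.
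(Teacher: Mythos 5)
Your geometric surgery — cut $\mathcal S(X)$ along a shortest $u$--$v$ geodesic $e$ to get a flat quadrilateral $Q$, pick the shorter admissible diagonal $f$, re-glue — is essentially the construction the paper uses inside Lemma~\ref{T:Make a biface good}, so that part is sound modulo the degeneracies you flag yourself. The gap is in the step you call ``the real work.'' You propose to obtain $e$ by first feeding the tube $X$ directly into the annulus-remeshing black box of Löffler, Ophelders, Silveira, and Staals~\cite[Section~5 / Theorem~45]{portalgons}. But that theorem does not apply to tubes: as the paper itself points out in Appendix~\ref{app:bifaces}, the cited result is only proved for portalgons whose dual graph has at most one simple cycle, and the dual graph of a tube with $n > 2$ triangles has $n/2 > 1$ independent cycles. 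So the black box you want to invoke simply does not accept $X$ as input, and ``applied to $X$, \dots it returns \dots a portalgon $Y$ of the same annulus with few triangles and bounded happiness'' is precisely the conclusion that needs to be established, not one that can be quoted.

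Closing that gap is exactly the content of the paper's proof, which is structured quite differently from yours. Rather than trying to improve the whole tube at once, the paper (Lemma~\ref{L:concatenation of bifaces}) repeatedly removes independent sets of low-degree interior vertices, reducing $X$ in $O(n\log n)$ time to a concatenation of $O(n)$ bifaces with controlled edge-length blow-up. A single biface \emph{does} fall under Theorem~45 of~\cite{portalgons} (its dual graph is one $2$-cycle), so each biface can be made good in $O(\log n + \log(2 + L/s))$ time (Lemma~\ref{T:Make a biface good}); the $O(n)$ resulting good bifaces are then merged pairwise in $O(1)$ time each (Lemma~\ref{L:concat of bifaces is happy}). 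That decomposition into bifaces, followed by piecewise improvement and merging, is the idea your proposal is missing, and it is where the $O(n\log n)$ and $\log(2+L/s)$ factors in the claimed running time actually come from.
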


Proposition~\ref{P:compute good biface} is similar to a result described by L\"{o}ffler, Ophelders, Silveira, and Staals~\cite[Theorem~45]{portalgons}, building upon a ray shooting algorithm of Erickson and Nayyeri~\cite{erickson2012tracing}.

%%%%%%%%%%%%%%%%%%%%%%%%%%%%%%%%%%%%%%%%%%%%%%%%%%%%%%%%%%%%%%%%%%%%%%%%%%%%%%%%%%%%%%%%%%%%%%%%%%%%%%%%%%%%%%%%%%%%%%%%%%%%%%%%%%%%%%%%%%%%%%%%%%%%%%%%%%%%%%%%%%%%%
%%%%%%%%%%%%%%%%%%%%%%%%%%%%%%%%%%%%%%%%%%%%%%%%%%%%%%%%%%%%%%%%%%%%%%%%%%%%%%%%%%%%%%%%%%%%%%%%%%%%%%%%%%%%%%%%%%%%%%%%%%%%%%%%%%%%%%%%%%%%%%%%%%%%%%%%%%%%%%%%%%%%%
%%%%%%%%%%%%%%%%%%%%%%%%%%%%%%%%%%%%%%%%%%%%%%%%%%%%%%%%%%%%%%%%%%%%%%%%%%%%%%%%%%%%%%%%%%%%%%%%%%%%%%%%%%%%%%%%%%%%%%%%%%%%%%%%%%%%%%%%%%%%%%%%%%%%%%%%%%%%%%%%%%%%%
%%%%%%%%%%%%%%%%%%%%%%%%%%%%%%%%%%%%%%%%%%%%%%%%%%%%%%%%%%%%%%%%%%%%%%%%%%%%%%%%%%%%%%%%%%%%%%%%%%%%%%%%%%%%%%%%%%%%%%%%%%%%%%%%%%%%%%%%%%%%%%%%%%%%%%%%%%%%%%%%%%%%%

\section{Description of the algorithm}\label{sec:algo}

In this section we describe our algorithm for Proposition~\ref{T:core algorithm}. We first describe the elementary operations and the data structure, before giving the algorithm itself. Along the way, we provide informal explanations of our choices. We do not prove anything, as the analysis of the algorithm is deferred to Section~\ref{sec:analysis}.

\subsection{Inserting vertices and edges}

Informally, our goal is to “improve the geometry” of a triangular portalgon $T$. We will make this precise in Section~\ref{sec:analysis}. Roughly, the issue is that, without any condition on $T$, the edges of $T^1$ that lie in the interior of $\mathcal S(T)$ can be arbitrarily long, so one of them may intersect some shortest path arbitrarily many times by wrapping around the surface, and so the segment-happiness of $T$ can be arbitrarily large. A naive way of shortening an edge is to cut the edge in two at its middle point.

\algo{InsertVertices}{Given a triangular portalgon $T$, consider every edge $e$ of $T^1$ that lies in the interior of $\mathcal S(T)$, and insert the middle point of $e$ as a vertex in $T^1$.}

Appendix~\ref{app:detail routines} details how to modify the portalgon $T$ to perform \algoref{InsertVertices}. Applying \algoref{InsertVertices} to a \emph{triangular} portalgon $T$ produces a portalgon $T'$ whose polygons are usually not triangles. We now consider transforming $T'$ into a triangular portalgon. To do that we repeatedly cut the polygons of $T'$. We need a definition. In the plane consider a polygon $P$, two vertices $u \neq v$ of $P$, and the rectilinear segment $a$ between $u$ and $v$. If the relative interior of $a$ is included in the interior of $P$ then $a$ is called a vertex-to-vertex arc of $P$. It is easily seen that if $P$ is not a triangle then $P$ has at least one vertex-to-vertex arc. Among the vertex-to-vertex arcs of $P$, the shortest ones are the \emphdef{shortcuts} of $P$. We emphasize that we consider the shortest ones among all the vertex-to-vertex arcs, without fixing the endpoints, but the endpoints are chosen among the vertices of $P$. In a portalgon $T$ every polygon $P$ corresponds to a face $F$ of $T^1$, and every shortcut of $P$ corresponds to a path whose relative interior is included in $F$: we say of this path that it is a shortcut of $F$.

\algo{InsertEdges}{Given a portalgon $T$, as long as there is a face of $T^1$ that is not a triangle, insert a shortcut of this face as an edge in $T^1$.}

Appendix~\ref{app:detail routines} details how to modify the portalgon $T$ to perform \algoref{InsertEdges}. We shall apply \algoref{InsertVertices} followed by \algoref{InsertEdges} to a triangular portalgon $T$ in order to produce another triangular portalgon $T'$, hopefully with a “nicer geometry”. The  problem is now that $T'^1$ has more vertices than $T^1$. All the other operations of the algorithm are devoted to keeping the number of vertices low.

\subsection{Deleting vertices}

From now on it is important that every surface considered is flat, there is no singularity in its interior. Given a triangular portalgon $T$, assuming that the surface $\mathcal S(T)$ is flat, we consider decreasing the number of vertices of $T^1$. To do that we naturally consider deleting some vertices. Not all vertices can be deleted. For example a vertex incident to a loop edge cannot be deleted. Also we will not delete vertices that lie on the boundary of the surface $\mathcal S(T)$. A vertex of $T^1$ is \emphdef{weak} if it lies in the interior of $\mathcal S(T)$ and is not incident to any loop edge in $T^1$. It is \emphdef{strong} otherwise.

\algo{DeleteVertices}{Given a triangular portalgon $T$ whose surface $\mathcal S(T)$ is flat, construct a maximal independent set $V$ of weak vertices of $T^1$ that have degree smaller than or equal to six. For every vertex $v \in V$ delete $v$ and its incident edges from $T^1$.}

Appendix~\ref{app:detail routines} details how to modify the portalgon $T$ to perform \algoref{DeleteVertices}. Afterward the polygons of $T$ are usually not triangles anymore, but this will be solved by applying \algoref{InsertEdges} after each application of \algoref{DeleteVertices}. Observe that in \algoref{DeleteVertices} we delete only vertices of degree smaller than or equal to six. Informally, the reason is that deleting a weak vertex of degree $d \geq 3$ creates a face of degree $d$ around it. We then insert $d-3$ edges in this face when applying \algoref{InsertEdges}. The problem is that only a constant number of edges can be inserted in each face without risking to destroy our improvements on the geometry of the tessellation. This is why we make sure that $d = O(1)$ beforehand. The exact bound on $d$ is not really important (although changing it would change some constants of the algorithm), but it must be at least six so that we can still remove a fraction of the excess vertices this way, at least when most of them are strong. Similar ideas can be found in the literature, see for example Kirkpatrick~\cite[Lemma~3.2]{kirkpatrick1983optimal}.

\subsection{Simplifying tubes}

The operation \algoref{DeleteVertices} cannot delete strong vertices,  and among them the vertices that lie the interior of the surface and are incident to a loop edge. In this section we describe an operation for deleting such vertices. 

In order to grasp the intuition, observe, informally, that it is possible that almost all the vertices of $T^1$ lie in the interior of $\mathcal S(T)$ and are incident to a loop edge. Fortunately, it turns out that in this case there must be a sub-portalgon $X$ of $T$ such that $X$ is a tube and the interior of $\mathcal S(X)$ contains loop edges of $X^1$. We delete such loop edges by replacing $X$ by a good biface with Proposition~\ref{P:compute good biface}. There is one subtlety: we must choose $X$ carefully so that we replace any concatenation of tubes by a single biface when possible, in order to delete the loops in-between the tubes, instead of replacing each tube individually. That leads to:

\algo{SimplifyTubes}{In a triangular portalgon $T$ whose surface $\mathcal S(T)$ is flat, do the following:
\begin{enumerate}
    \item In $T^1$ build a set $J$ of loop edges that lie in the interior of $\mathcal S(T)$ and are pairwise~disjoint, as follows. There are two cases:
    \begin{enumerate}
        \item If $\mathcal S(T)$ is homeomorphic to a torus, do the following. Let $J$ contain two disjoint loop edges of $T^1$ if there exist two such edges, otherwise let $J = \emptyset$. 
        
        \item Otherwise do the following. Construct a set $J'$ of loop edges by considering every vertex $v$ of $T^1$ that lies in the interior of $\mathcal S(T)$ and is incident to a loop edge, and by putting one of the loop edges incident to $v$ in $J'$. Then build a subset $J \subseteq J'$ by removing from $J'$ every $e \in J'$ satisfying both of the following. First, cutting $\mathcal S(T)$ along the loops in $J'$, and considering the resulting connected components, two such components are adjacent to $e$ (instead of one), say $S_0$ and $S_1$. Second, each one of the two sub-portalgons of $T$ whose surfaces are $S_0$ and $S_1$ is a tube.
        
    \end{enumerate}
    \item  Cut the surface $\mathcal S(T)$ along the loops in $J$. Each resulting component is the surface of a sub-portalgon $X$ of $T$. If $X$ is a tube replace $X$ by a good biface $B$.
\end{enumerate}}

The idea behind step 1b is to remove loops from $J'$ so that step 2 replaces a concatenation of tubes by a single good biface when possible, instead of replacing the tubes individually.

\subsection{Data structure for marking bifaces as inactive}

We are almost ready to give the algorithm, but there is still one important thing to describe. In step 2 of \algoref{SimplifyTubes}, if the good biface $B$ is \emph{thin} we will not just replace $X$ by $B$, but we will also make sure to not modify $B$ ever again. In this sense $B$ becomes inactive. Doing so requires a data structure remembering which parts of the portalgon are inactive.

\begin{figure}
    \centering
    \includegraphics[width=0.7\linewidth]{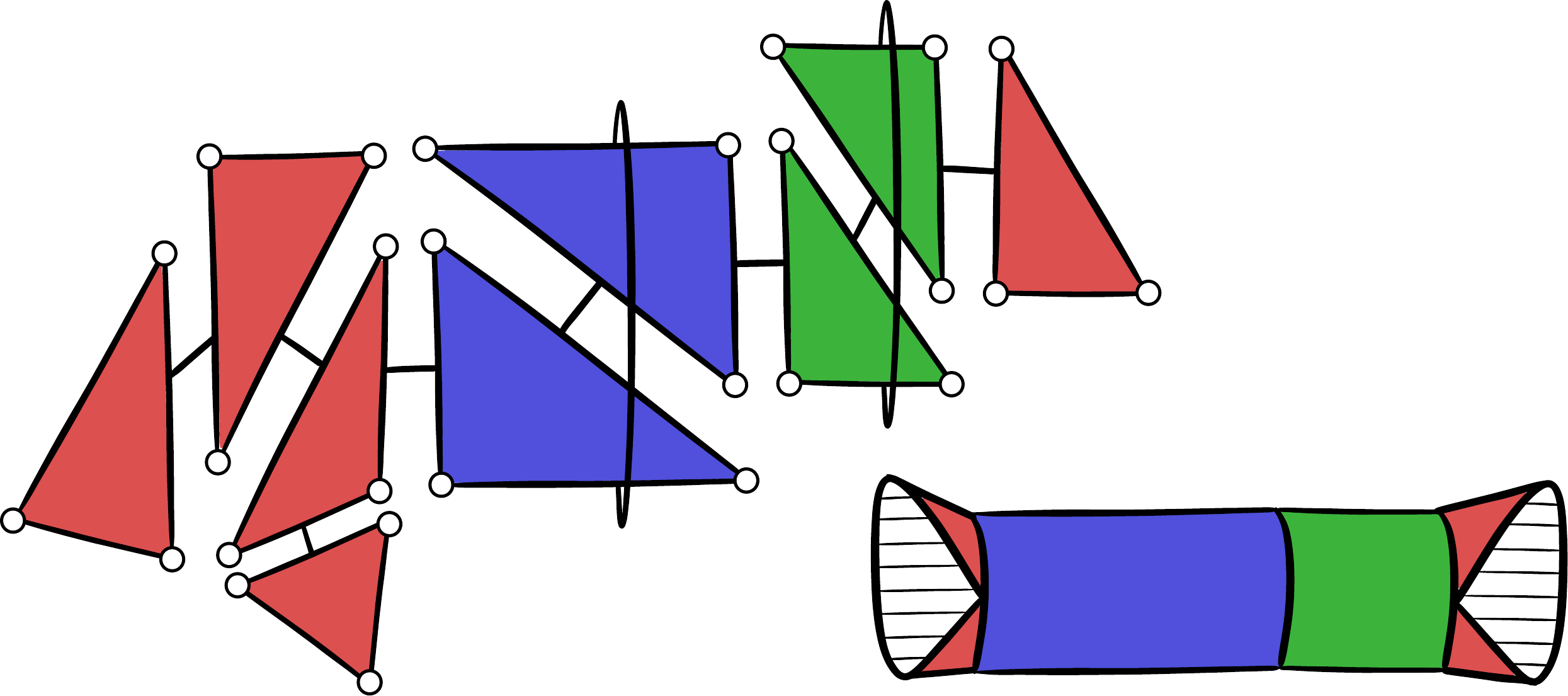}
    \caption{Data structure for \algoref{Algorithm}: a portalgon whose polygons are partitioned, here by color, inducing sub-portalgons called regions, and a region singularized as active, here in red.}
    \label{fig:DSPortalgon}
\end{figure}

See Figure~\ref{fig:DSPortalgon}. The data structure maintains a portalgon $R$ together with a partition of the polygons of $R$. Each set $X$ of polygons in the partition defines a sub-portalgon of $R$ which we call \emphdef{region}. One region is singularized as the \emphdef{active} region $R_A$. The other regions are \emphdef{inactive}. Note that the surface of the active region may be disconnected, and that the surfaces of distinct inactive regions may be adjacent.

%We now describe what the algorithm will do to the data structure. 

The data structure will be initialized by setting $R_A = R$, without inactive region. Then the algorithm will apply the routines \algoref{InsertVertices}, \algoref{InsertEdges}, \algoref{DeleteVertices}, and \algoref{SimplifyTubes} to the active region $R_A$, and mark as inactive every thin biface encountered in step 2 of \algoref{SimplifyTubes}. The surface of $R_A$ will diminish over time as more and more regions are marked inactive. This may increase the numbers of connected components and boundary components of $\mathcal S(R_A)$, ruining our efforts to keep the combinatorial complexity of $R_A$ bounded. To counteract this, we introduce:

\algo{Gardening}{Every connected component of $\mathcal S(R_A)$ is the surface of a sub-portalgon $X$ of $R_A$. If $X$ is a tube replace $X$ by a good biface $B$, and mark $B$ as inactive.}

%We emphasize that in \algoref{Gardening} the good biface $B$ is always marked as inactive, even if $B$ is thick. 

We described everything that the algorithm can do to the data structure. This immediately implies three invariants maintained by the algorithm. First, 1) Every polygon of the active region has degree at most six, and 2) Every inactive region is a good biface. For the last invariant we need a definition. Recall that in $R$ if two sides $s$ and $s'$ of polygons are matched then $s$ and $s'$ correspond to an edge $e$ of $R^1$. If moreover $s$ and $s'$ belong to different polygons, and if their respective polygons belong to different regions, we say that $e$ is \emphdef{separating}. Then $e$ is a loop, for it is a boundary edge of a biface by 2), and $e$ belongs to the interior of $\mathcal S(R)$. The third invariant is that 3) The separating loops are pairwise disjoint (no two of them are based at the same vertex of $R^1$).

\subsection{Algorithm}

The algorithm repeatedly applies two parts. The first part “improves the geometry” by applying \algoref{InsertVertices} and then \algoref{InsertEdges}. However this increases the number of vertices. So the second part applies \algoref{SimplifyTubes}, \algoref{DeleteVertices}, and \algoref{InsertEdges}, together with \algoref{Gardening}. The second part can only remove a fraction of the vertices at once, so it is repeated several times. It turns out that 350 repetitions suffice.

\algo{Algorithm}{Given a triangular portalgon $T$ whose surface $\mathcal S(T)$ is flat, and $N \geq 1$, do the following. Initialize the data structure by letting $R$ be the input portalgon $T$, and by letting the active region $R_A$ be $R$ itself, without inactive region. Repeat $N$ times the following:
\begin{enumerate}
    \item Apply \algoref{InsertVertices} to $R_A$. Then apply \algoref{InsertEdges} to $R_A$.
    \item Repeat 350 times the following:
    \begin{enumerate}
        \item Apply \algoref{Gardening}. Then apply \algoref{SimplifyTubes} to $R_A$ but in step 2 of \algoref{SimplifyTubes}, whenever $B$ is thin, mark $B$ as inactive. Apply \algoref{Gardening} again.
        \item Apply \algoref{DeleteVertices} to $R_A$. Then apply \algoref{InsertEdges} to $R_A$.
    \end{enumerate}
\end{enumerate}
In the end return $R$.}

%Note that perhaps the algorithm would work too if it applied \algoref{Gardening} only once, but applying \algoref{Gardening} twice will simplify our analysis. 

When proving Proposition~\ref{T:core algorithm}, we will apply \algoref{Algorithm} with $N = \lceil \log(2 + L/s) \rceil$.

%%%%%%%%%%%%%%%%%%%%%%%%%%%%%%%%%%%%%%%%%%%%%%%%%%%%%%%%%%%%%%%%%%%%%%%%%%%%%%%%%%%%%%%%%%%%%%%%%%%%%%%%%%%%%%%%%%%%%%%%%%%%%%%%%%%%%%%%%%%%%%%%%%%%%%%%%%%%%%%%%%%%%
%%%%%%%%%%%%%%%%%%%%%%%%%%%%%%%%%%%%%%%%%%%%%%%%%%%%%%%%%%%%%%%%%%%%%%%%%%%%%%%%%%%%%%%%%%%%%%%%%%%%%%%%%%%%%%%%%%%%%%%%%%%%%%%%%%%%%%%%%%%%%%%%%%%%%%%%%%%%%%%%%%%%%
%%%%%%%%%%%%%%%%%%%%%%%%%%%%%%%%%%%%%%%%%%%%%%%%%%%%%%%%%%%%%%%%%%%%%%%%%%%%%%%%%%%%%%%%%%%%%%%%%%%%%%%%%%%%%%%%%%%%%%%%%%%%%%%%%%%%%%%%%%%%%%%%%%%%%%%%%%%%%%%%%%%%%
%%%%%%%%%%%%%%%%%%%%%%%%%%%%%%%%%%%%%%%%%%%%%%%%%%%%%%%%%%%%%%%%%%%%%%%%%%%%%%%%%%%%%%%%%%%%%%%%%%%%%%%%%%%%%%%%%%%%%%%%%%%%%%%%%%%%%%%%%%%%%%%%%%%%%%%%%%%%%%%%%%%%%

\section{Analysis of the algorithm}\label{sec:analysis}

In this section, we sketch the analysis of \algoref{Algorithm} (Section~\ref{sec:algo}) to prove Proposition~\ref{T:core algorithm}.

\subsection{Combinatorial analysis}\label{sec:combi analysis}

%The combinatorial analysis of \algoref{Algorithm} consists in the following:

\begin{restatable}{proposition}{propcombianalysis}\label{prop:bound number vertices}
Apply \algoref{Algorithm} to a portalgon $T$ of $n$ triangles, whose surface $\mathcal S(T)$ is flat. During the execution the number of polygons of the active region $R_A$ is $O(n)$.
\end{restatable}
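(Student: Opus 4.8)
The plan is to track the number of polygons (equivalently, faces of $R^1$) of the active region through one iteration of the outer loop, and show that a single iteration maps a bound of the form $cn$ back to $cn$ for a suitable absolute constant $c$, so that the bound is maintained for all $N$ iterations. Since the surface $\mathcal S(T)$ is fixed (closed after the caps are handled, or more precisely flat here) its Euler characteristic is fixed, so by Euler's formula the number of faces, edges, and vertices of any triangulation of a region are all linearly related; hence it suffices to bound the number of \emph{vertices} of $R_A^1$. Let me write $V$ for the number of weak vertices of $R_A^1$ and note that the number of strong vertices is $O(n)$ throughout: strong vertices are either on the boundary of $\mathcal S(R_A)$ or incident to a loop edge, and the boundary components and separating loops are controlled because every inactive region is a good biface (invariant 2) and the separating loops are pairwise disjoint (invariant 3), so there are $O(n)$ of them. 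So the whole game is to bound the number of weak vertices.

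The key steps, in order. First, \textbf{step 1 of the outer loop} (\algoref{InsertVertices} then \algoref{InsertEdges}) at most roughly doubles the vertex count: \algoref{InsertVertices} adds one vertex per interior edge, i.e.\ $O(V)$ new vertices, and \algoref{InsertEdges} adds edges but no vertices. So if we enter with $V$ weak vertices we leave step 1 with at most $\alpha V + O(n)$ for some absolute constant $\alpha$ (around $3$–$4$, since an interior edge count is $O(V+n)$ by Euler). Second, \textbf{one pass of step 2(a)--2(b)} must be shown to \emph{decrease} the weak-vertex count by a definite fraction, \emph{plus} an additive $O(n)$. The mechanism: \algoref{SimplifyTubes} plus \algoref{Gardening} eliminate all the weak vertices that are incident to loop edges or that otherwise live inside tube pieces (these get absorbed into good bifaces with $O(1)$ vertices each, and there are $O(n)$ regions/tubes so this removes all but $O(n)$ of that type); then \algoref{DeleteVertices} takes a maximal independent set of weak vertices of degree $\le 6$ and deletes them. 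The standard Kirkpatrick-style counting argument (cited as \cite[Lemma~3.2]{kirkpatrick1983optimal}) shows that in a planar-like graph a constant fraction of the vertices have degree $\le 6$, and a maximal independent set among those has size a constant fraction of that — so a constant fraction of the \emph{weak} vertices gets deleted, modulo an $O(n)$ term for strong vertices and loop-incident vertices that cannot participate. The subsequent \algoref{InsertEdges} re-triangulates the created faces; since each deleted vertex had degree $\le 6$, each such face gets $\le 3$ new edges and no new vertices, so vertex count only goes down. Iterating step 2(a)--2(b) $350$ times, the recursion $V \mapsto \beta V + O(n)$ with $\beta < 1$ (some fixed fraction like $1 - 1/\text{const}$) drives the count down to $V \le \gamma n + \beta^{350} \cdot(\alpha V_{\text{in}})$ — and choosing the number $350$ large enough that $\beta^{350}\alpha \le 1/2$ guarantees that we exit the outer iteration with no more weak vertices than we entered it with, plus $O(n)$. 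Closing the loop: the map "one outer iteration" sends $V \mapsto \tfrac12 V + O(n)$, whose fixed point is $O(n)$, so $V = O(n)$ throughout all $N$ iterations.

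The main obstacle I expect is the \textbf{counting argument for \algoref{DeleteVertices}}: I need that after removing from consideration all strong vertices and all weak vertices of degree $> 6$, what remains — a maximal independent set of low-degree weak vertices — is still a constant fraction of the total weak-vertex count, \emph{uniformly}, even in the presence of many strong vertices clustered together. The subtlety is that strong vertices and their loops (boundary edges of inactive bifaces) could in principle be arranged so that most weak vertices are forced to have high degree or be adjacent to deleted vertices. This is where invariants 1)–3) are essential: every polygon of the active region has degree $\le 6$ (invariant 1), so the active region's triangulation has bounded face degree, which by a discharging/Euler argument forces average vertex degree $< 6$ and hence a positive fraction of vertices of degree $\le 6$; and since strong vertices number only $O(n)$, if $V \gg n$ then almost all low-degree vertices are weak, and a maximal independent set among them has size $\Omega(V) - O(n)$. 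I would make this quantitative by: (i) applying Euler's formula to the active triangulation to get $\#\{\text{vertices of degree} \le 6\} \ge \epsilon \cdot \#\{\text{all vertices}\}$ for an absolute $\epsilon > 0$; (ii) subtracting the $O(n)$ strong ones; (iii) using that each vertex in a maximal independent set "blocks" at most $6$ others, so the set has size at least $1/7$ of the low-degree weak vertices. Assembling (i)–(iii) with the accounting above, and being careful that \algoref{Gardening} and \algoref{SimplifyTubes} only ever \emph{remove} weak vertices (moving them into $O(1)$-size inactive bifaces) so they can only help, completes the argument; the restatement-friendly statement then follows by induction on the outer loop index.
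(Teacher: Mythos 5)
Your overall strategy matches the paper's: reduce (via Euler's formula) to bounding the vertex count of $R_A^1$, observe that \algoref{InsertVertices} multiplies it by a constant, show that each pass of the interior loop shrinks it by a constant factor once \algoref{Gardening} and \algoref{SimplifyTubes} have brought the number of strong vertices down to $O(n)$ (a Kirkpatrick-style Euler/maximal-independent-set argument, as in the paper's Lemma~\ref{lem:euler general} and Lemma~\ref{lem:deletes fraction of vertices}), and check that $350$ iterations overcome the initial expansion.

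The gap is in your justification of the $O(n)$ bound on strong vertices. You argue that strong vertices are few because ``every inactive region is a good biface (invariant~2) and the separating loops are pairwise disjoint (invariant~3)'' and because ``there are $O(n)$ regions/tubes.'' Neither claim yields the bound as stated: the total number of inactive regions accumulated over the whole execution is not $O(n)$ (it grows to $\Theta(n\log r)$, as the statement of Proposition~\ref{T:core algorithm} itself says), and pairwise-disjoint loops in a triangulation on many more than $n$ vertices need not number $O(n)$. What actually controls the count is a topological/Euler-characteristic argument (the paper's Lemma~\ref{L:packing}): cutting $\mathcal S(R)$ along any family of pairwise-disjoint interior loops, all but $O(g+b)$ of the resulting pieces are tubes. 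This is what lets \algoref{Gardening} leave $\mathcal S(R_A)$ with only $O(g+b)$ boundary components (Lemma~\ref{lem:gardening}) and lets \algoref{SimplifyTubes} leave only $O(g+b)$ interior vertices incident to a loop (Lemma~\ref{lem:tubing23}); the $O(n)$ bound on strong vertices then follows since $g,b \le n$ and the at most $3n$ original vertices of $T^1$ account for the rest of the boundary. Without that lemma, the shrinking step of the interior loop cannot be closed. (Separately, a terminological slip: ``weak vertices that are incident to loop edges'' is vacuous, since incidence to a loop is exactly what makes a vertex strong; and the polygon-degree-$\le 6$ invariant is not needed for the degree-counting step, because \algoref{DeleteVertices} always runs on a triangulation, by virtue of the preceding \algoref{InsertEdges}.)
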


We only sketch the proof of Proposition~\ref{prop:bound number vertices}, the complete proof is deferred to Appendix~\ref{app:combi analysis}.

\begin{proof}[Sketch of proof]
We consider $R_A^1$, the 1-skeleton of the active region $R_A$, and we show that the number $m_A$ of vertices of $R_A^1$ remains $O(n)$ throughout the execution. There are two loops in the algorithm: the main loop, which repeats $N$ times, and the interior loop, which repeats 350 times within each iteration of the main loop. To prove the lemma, we consider a single iteration of the main loop, we assume that $m_A$ exceeds $n$ by at least a constant factor at the beginning of the iteration, and we prove that $m_A$ has decreased after the iteration. 

The iteration starts with \algoref{InsertVertices}. This is the only moment where $m_A$ may increase, and we prove that $m_A$ is multiplied by at most a constant factor. Then the iteration applies the interior loop, and we claim that, as long as $m_A$ exceeds $n$ by a constant factor, $m_A$ is divided by at least a constant factor by each iteration of the interior loop. We show that this claim implies the lemma as the interior loop is applied sufficiently many times to counteract the initial increase of $m_A$. To prove the claim, we show that for \algoref{DeleteVertices} to remove a fraction of the vertices of $R_A^1$, it suffices that $m_A$ vastly exceeds the genus and the number of boundary components of $\mathcal S(R_A)$, and that almost all of the vertices of $R_A^1$ are weak. We show that this is ensured by first applying \algoref{Gardening} and \algoref{SimplifyTubes}.
\end{proof}

%\section{Enclosure}\label{sec:enclosure}

\subsection{Enclosure}\label{sec:enclosure}

To analyze \algoref{Algorithm} from a geometric point of view, we introduce, on the segments of a flat surface $S$, a parameter that we call \emph{enclosure}. So consider a segment $e$ of $S$. See Figure~\ref{fig:enclosure}. 

Informally, $e$ is “enclosed” in $S$ when a short non-contractible loop can be attached to a point of $e$ not too close to the endpoints of $e$. Formally, consider a point $x$ in the relative interior of $e$. We denote by $\langle x \rangle_e$ the minimum length of the two sub-segments of $e$ separated by $x$. Assume that there exists a loop $\gamma$ based at $x$ in $S$, such that $\gamma$ is geodesic except possibly at its basepoint. Further assume that its length satisfies $\ell(e) < \langle x \rangle_e$. In this case $\gamma$ and $e$ are necessarily in \emph{general position}: informally, they do not overlap, more formally, every sufficiently short sub-path of $\gamma$ is either disjoint from $e$ or its intersection with $e$ is a single point. There are two cases: either $\gamma$ crosses $e$ at $x$, or $\gamma$ meets $x$ on only one side of $e$. If $\gamma$ crosses $e$ at $x$, then we say that $\gamma$ \emphdef{encloses} $e$ \emphdef{in} $S$. Also we say that $\gamma$ encloses $e$ \emphdef{by a factor of} $\langle x \rangle_e / \ell(\gamma)$ in $S$. The \emphdef{enclosure} $c_S(e) \geq 1$ is the supremum of the ratios $\langle x \rangle_e / \ell(\gamma)$ over all the basepoints $x$ in the relative interior of $e$, and over all the loops $\gamma$ based at $x$ that enclose $e$ in $S$. It is conventionally set to one if there is no loop enclosing $e$ in $S$.

\begin{figure}[ht]
    \centering
    \includegraphics[width=0.5\linewidth]{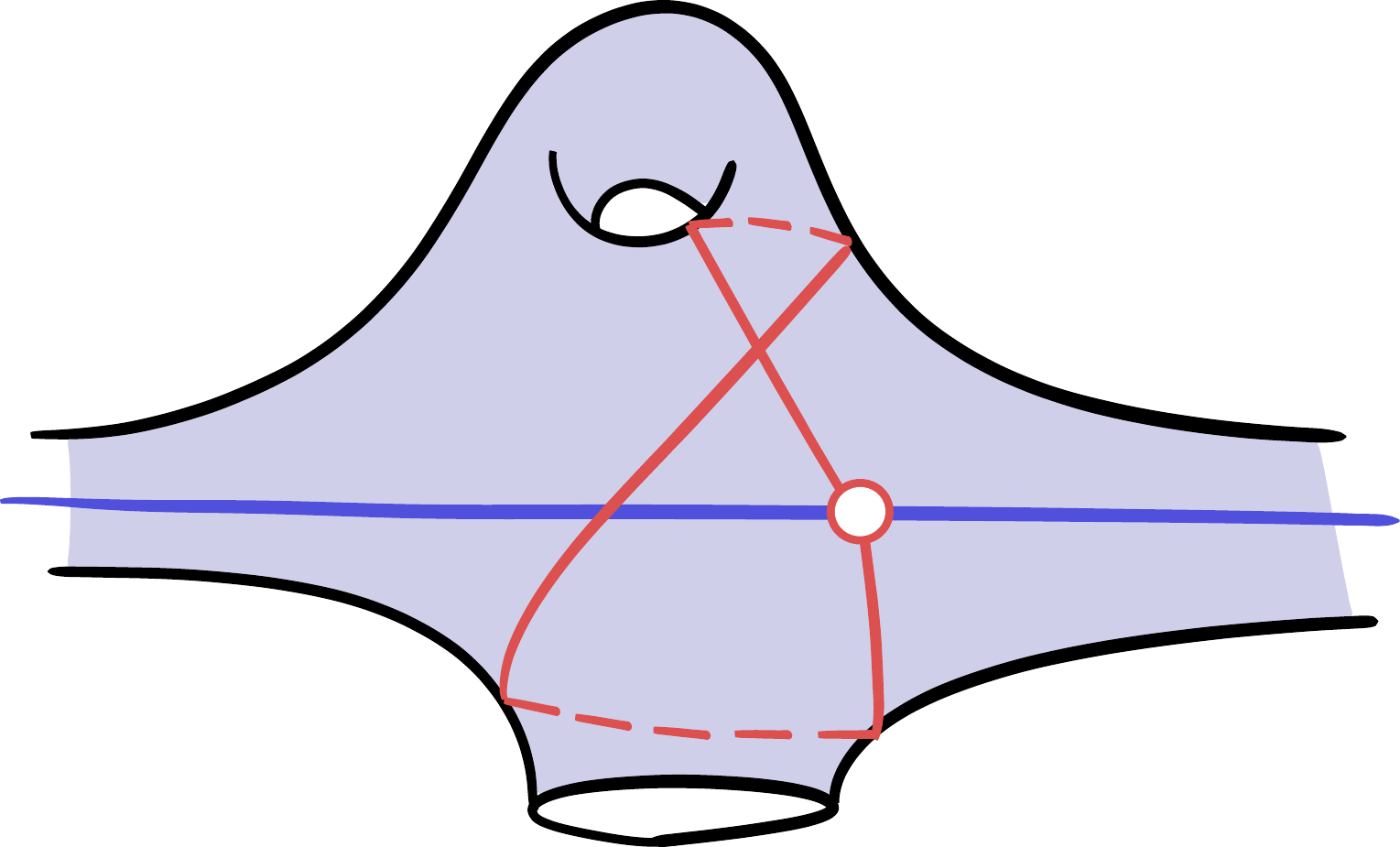}
    \caption{The red loop encloses the blue segment in the surface.}
    \label{fig:enclosure}
\end{figure}

The segment-happiness $h_S(e)$ and the length $\ell(e)$ can be bounded from above using the enclosure $c_S(e)$. Our bound depends on the surface $S$. More precisely, on the systole of $S$ and the diameter of $S$. But instead of the diameter of $S$, we consider a triangulation of $S$, and we use its number $n$ of triangles together with the maximum length $L$ of its edges. This will be more convenient to us when analyzing \algoref{Algorithm}. We prove (Appendix~\ref{app:enclosure}):

\begin{restatable}{proposition}{proplambdaandhapinessglobal}\label{P:lambda and hapiness global}
Let $e$ be a segment of $S$. Let $s > 0$ be smaller than the systole of $S$. Assume that there is a triangulation of $S$ with $n \geq 1$ triangles, whose edges are all smaller than $L > 0$. Then $h_S(e) = O(c_S(e) \cdot (1+ \log c_S(e) + \log n + \log \lceil L/s\rceil ))$ and $\ell(e) /s = O(c_S(e) \cdot n \cdot \lceil L/s\rceil^2)$.
\end{restatable}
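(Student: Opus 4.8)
The plan is to establish the two inequalities in tandem: prove the length bound first, and then feed it into a dyadic-scale analysis for the happiness bound, since the number of relevant scales is $\log(\ell(e)/s)$ and the length bound converts this into $\log c_S(e)+\log n+\log\lceil L/s\rceil$.

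\textbf{Length bound.} Suppose for contradiction that $\ell(e)/s$ exceeds the claimed bound by a large constant. Since $e$ is geodesic and each of the $n$ triangles is flat with all sides $<L$, the part of $e$ inside a single triangle is a chord of length $<L$, so $e$ crosses edges of the triangulation $\Omega(\ell(e)/L)$ times; restricting to the middle third of $e$ (length $\ell(e)/3$) and using that there are only $O(n)$ edges, some fixed edge $g$ is crossed $k=\Omega(\ell(e)/(nL))$ times by that middle third. Develop $e$ and $g$ along $\tilde e$: a lift $\tilde e$ of $e$ is a straight line crossing $k$ pairwise distinct lifts of $g$, each a segment of length $<L$ lying within distance $L$ of $\tilde e$, with the $k$ crossing points on a sub-segment of $\tilde e$ of length $\ell(e)/3$. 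By pigeonhole two crossing points $\tilde w,\tilde w'$ lie within distance $O(\ell(e)/k)=O(nL)$ along $\tilde e$; the deck transformation $T$ carrying the lift of $g$ through $\tilde w$ to the one through $\tilde w'$ is nontrivial and moves $\tilde w$ by at most that distance plus $L$, hence by $O(nL)$. Its projection is a non-contractible loop $\gamma$, geodesic except at its basepoint $w$, with $w$ in the middle third of $e$ so $\langle w\rangle_e\ge \ell(e)/3$ and $\ell(\gamma)=O(nL)$. Once $\gamma$ is arranged to cross $e$ transversally at a central point (see the obstacle below), it witnesses $c_S(e)\ge \langle w\rangle_e/\ell(\gamma)=\Omega(\ell(e)/(nL))$, i.e. $\ell(e)=O(c_S(e)\cdot nL)=O(c_S(e)\cdot n\lceil L/s\rceil^2\cdot s)$, using $L/s\le \lceil L/s\rceil^2$ — contradiction for the right constants. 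Segments with $\ell(e)=O(nL)$ satisfy the bound outright.

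\textbf{Happiness bound.} Fix a shortest path $p$ of $S$ and let $y_1,\dots,y_m$ be its intersections with $e$, in order along $e$; we bound $m$. For consecutive $y_i,y_{i+1}$ the loop $\gamma_i:=p[y_i,y_{i+1}]\cup e[y_i,y_{i+1}]$ is non-contractible: a sub-path of a shortest path is shortest, so if $\gamma_i$ bounded a disk then, $S$ being flat, Gauss--Bonnet on an innermost embedded sub-disk (two geodesic sides, all corners transverse hence non-reflex) is violated — modulo a short argument when part of $\partial S$ intrudes. Hence $s\le \ell(\gamma_i)\le 2\,\ell(e[y_i,y_{i+1}])$, so $\ell(e[y_i,y_{i+1}])\ge s/2$. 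Classify each consecutive pair by its scale $j$ with $\ell(e[y_i,y_{i+1}])\in[\ell(e)/2^{j+1},\ell(e)/2^{j})$; the previous line gives $j\le J:=\log(2\ell(e)/s)$, and the length bound yields $J=O(\log c_S(e)+\log n+\log\lceil L/s\rceil)$. It now suffices to show each scale has $O(c_S(e))$ pairs. A scale-$j$ pair, after pushing $e[y_i,y_{i+1}]$ off $e$ to an equidistant (hence still geodesic, by flatness) copy and using that $p$ switches sides of $e$ exactly at its transverse crossings, produces a non-contractible loop crossing $e$ transversally at a point $x_i$ near $y_i$ or $y_{i+1}$, of length $O(\ell(e)/2^j)$; by the definition of enclosure as a supremum, $\langle x_i\rangle_e=O(c_S(e)\,\ell(e)/2^j)$, so $x_i$ lies within $O(c_S(e)\,\ell(e)/2^j)$ of an endpoint of $e$. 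Since scale-$j$ pairs occupy pairwise disjoint sub-arcs of $e$ of length $\ge \ell(e)/2^{j+1}$, at most $O(c_S(e))$ of them fit near the two endpoints. Summing over the $J$ scales gives $m=O(c_S(e)\cdot J)$, which is exactly the claimed bound (the additive $1$ absorbing the case where all logarithms vanish).

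\textbf{Main obstacle.} The essential difficulty in both parts is topological rather than metric: the loops we manufacture (the projected deck transformation; the pushed-off $\gamma_i'$) carry no a priori guarantee that some short representative, geodesic except at its basepoint, crosses $e$ \emph{transversally at a central basepoint}, which is precisely what the definition of enclosure demands. One must (i) put all paths in general position so that intersections are transverse, (ii) invoke flatness via Gauss--Bonnet to exclude contractible geodesic bigons, (iii) exploit that a shortest path changes side of $e$ exactly at each transverse crossing to control the side-pattern of the pushed-off loop, and (iv) treat the configuration in which short non-contractible loops can always be attached on one and the same side of $e$ — a ``thin tube'' running along $e$ — by observing that $e$ must then traverse and exit this tube, which itself yields a sub-arc of $e$ together with a short transversal forming an enclosing loop. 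Making this case analysis rigorous, together with the behaviour of $e$ and $p$ near $\partial S$ and its singularities, is where the real work of the proof lies.
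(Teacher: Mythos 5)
Your proposal is structurally similar to the paper's in that both isolate a length bound and a happiness bound, both use dyadic scales, and both reduce to manufacturing a short non-contractible loop based on $e$. But the approaches differ in substance, and the central difficulty — which you correctly identify in your ``Main obstacle'' paragraph — is not resolved, whereas the paper resolves it with a specific idea you do not have.

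The paper's engine is a single lemma (Lemma~\ref{shortest path encloses}): if a shortest path crosses the relative interior of $e$ \emph{twice in the same direction} at $x,y$ with $e[x,y]$ much shorter than $\langle x\rangle_e$, then $c_S(e)$ is large. Its proof builds the loop $\gamma$ as the sub-path of the shortest path between the two crossings concatenated with a short parallel of $e$, takes the geodesic representative $\gamma'$, and proves via a delicate universal-cover / separating-line argument that $\gamma'$ genuinely meets $x$ on both sides of $e$. The ``same direction'' hypothesis is what makes that lifting argument close. Both of the paper's lemmas feeding the proposition are then engineered to produce same-direction crossings: Lemma~\ref{L:lambda and hapiness local} partitions one half of $e$ into sub-segments $f_i$ indexed by the dyadic size of $\langle\cdot\rangle_e$ and extracts $>6t$ crossings of a single $f_i$, two of which must be same-direction and close; Lemma~\ref{L:length and enclosure} subdivides each triangulation edge into ``doors'' (shortest paths of length $\le s/2$) and shows some door must cross $e_0$ five times, hence twice in the same direction.

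Your construction does not have this ``same direction'' control built in. For the happiness bound you take \emph{consecutive} crossings along $e$; but consecutive crossings of a simple path with a segment \emph{alternate} direction, so the paper's Lemma~\ref{shortest path encloses} is not applicable and the pushed-off loop $\gamma_i'$ — which moreover is not geodesic at its four corners — has no a priori reason to have a geodesic homotope crossing $e$ at the basepoint. For the length bound you manufacture the loop via a deck transformation carrying one lift of an edge $g$ to a nearby one; again $T(\widetilde w)$ may well land on the same side of $\widetilde e$ as $\widetilde w$, in which case the projected geodesic loop does not cross $e$ at $w$ at all. You flag all this yourself in points (i)--(iv), and in particular the ``thin tube running along $e$'' case you gesture at in (iv) is precisely where an argument of your form breaks without the same-direction device; but you do not supply the argument. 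As written, both halves of the proof hinge on a fact you have not proved, so the proposal has a genuine gap.

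Two smaller remarks. Your length bound, if completed, would actually be a slightly stronger estimate $\ell(e)=O(c_S(e)\,nL)$ than the paper's $\ell(e)=O(c_S(e)\,n\lceil L/s\rceil L)$, since you pigeonhole directly on triangulation edges rather than on doors; that is a legitimate simplification and not a problem. And your order of proof (length, then happiness using it to bound the number of scales) is the reverse of the paper's, but is fine: the paper proves $h_S(e)=O(c_S(e)(1+\log\lceil\ell(e)/s\rceil))$ first and the length bound second, then combines, which is logically equivalent.
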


In Proposition~\ref{P:lambda and hapiness global} the $O()$ notation does not depend on $S$, it involves a universal constant. In the second inequality of Proposition~\ref{P:lambda and hapiness global} the exact powers above $\lceil L/s\rceil$ and $n$, here $2$ and $1$, do not matter to us. We need only a polynomial in $\lceil L/s\rceil$ and $n$. 

\subsection{Geometric analysis}\label{sec:geometric analysis}

The geometric analysis of\textbf{Algorithm} consists in two properties on the enclosure and the length of the edges involved in any execution: Lemma~\ref{P:active boundary enclosure} and Proposition~\ref{P:main loop geometry} below, whose proofs we sketch in Section~\ref{sec:Pactiveboudnaryenclosure} and Section~\ref{sec:Pmainloopgeometry}. Each proof relies on properties of enclosure that are independent of \algoref{Algorithm}, or of any portalgon, and can be seen as independent mathematical contributions of us. In this extended abstract we only explain how these properties of enclosure serve to analyse \algoref{Algorithm}, deferring their proof to Appendix~\ref{app:enclosure}.

In this section, we fix a portalgon $T$ of $n$ triangles, whose sides are smaller than some positive real $L$, and whose surface $\mathcal S(T)$ is flat. We abbreviate $S = \mathcal S(T)$. We apply the algorithm \algoref{Algorithm} to $T$, and we discuss the execution of the algorithm.

%In this section we complete the analysis of \algoref{Algorithm} to prove Proposition~\ref{T:core algorithm}. We consider the setting of Proposition~\ref{T:core algorithm}: 

\subsubsection{The separating loops are not very enclosed}\label{sec:Pactiveboudnaryenclosure}

%Informally, any time during the execution the separating loops are “not very enclosed” in $S$.

%In this section we prove the following:

\begin{restatable}{lemma}{activeboundaryenclosure}\label{P:active boundary enclosure}
Any time during the execution every separating loop $e$ satisfies $c_S(e) \leq 2$.
\end{restatable}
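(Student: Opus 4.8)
The statement reduces, via the invariants maintained by \algoref{Algorithm}, to an algorithm-independent fact about good bifaces. By invariant 3) a separating loop $e$ is a boundary edge of some inactive region; by invariant 2) that region is a good biface $B$; and $B$ occurs as a region of $R$, so the subsurface $\mathcal S(B)$ is a flat annulus whose interior is embedded in $S=\mathcal S(R)$, with both of its boundary curves — in particular $e$ — lying in $\mathrm{int}(S)$; and $S$ is flat since we are in the setting of Proposition~\ref{T:core algorithm}. So it suffices to prove: if a good biface $B$ sits as a flat annular subsurface of a flat surface $S$ with its two boundary curves in $\mathrm{int}(S)$, then each boundary edge $e$ of $B$ satisfies $c_S(e)\le 2$. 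I would establish this as one of the algorithm-independent properties of enclosure (Section~\ref{sec:enclosure}), arguing by contradiction.

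Assume $c_S(e)>2$: there are a point $x$ in the relative interior of $e$ and a loop $\gamma$ based at $x$, geodesic except at $x$, crossing $e$ at $x$, with $\ell(\gamma)<\tfrac12\langle x\rangle_e\le\tfrac14\ell(e)$. First $\gamma$ is non-contractible: otherwise it bounds a flat disk (no interior singularity) and Gauss--Bonnet forces the total turning $2\pi$ to lie at the basepoint $x$, making $\gamma$ a degenerate back-and-forth geodesic — impossible. Since $\gamma$ crosses $e$, one prong of $\gamma$ at $x$ enters $B$; let $\sigma$ be the maximal initial subarc of $\gamma$ contained in $\overline{\mathcal S(B)}$. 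Developing the flat annulus $\mathcal S(B)$ as an infinite strip of triangles, $\sigma$ is a straight segment, hence ends at a point $z$ of $e$ or of the other boundary edge $e'$ of $B$. If $z\in e'$, then $\sigma$ crosses the annulus, and the good-biface hypothesis (the interior edge that is a shortest path of $\mathcal S(B)$, and the ``shorter-diagonal'' condition on the quadrilateral obtained by cutting along it), combined with the enclosure estimates of Section~\ref{sec:enclosure}, gives $\ell(\sigma)\ge\langle x\rangle_e$, so $\ell(\gamma)\ge\langle x\rangle_e$, contradicting $\ell(\gamma)<\langle x\rangle_e$. If $z\in e$, then $\sigma$ is a geodesic of $\mathcal S(B)$ with both endpoints on the boundary loop $e$; a flat-disk Gauss--Bonnet computation rules out that $\sigma$ together with the vertex-free arc of $e$ between its endpoints bounds a disk, so $\sigma$ cuts off the vertex $v$ of $e$, which pins $x$ and $z$ close to $v$ along $e$ (if $z$ were far from $v$ then $\ell(\sigma)$, hence $\ell(\gamma)$, would exceed $\langle x\rangle_e$); replacing $\sigma$ by the arc of $e$ through $v$ yields a loop $\gamma'\simeq\gamma$ that avoids the interior of $\mathcal S(B)$ and has $\ell(\gamma')\le\ell(\gamma)+2\langle x\rangle_e$, and an enclosure estimate for $\gamma'$ together with non-contractibility then contradicts $\ell(\gamma)<\tfrac12\langle x\rangle_e$. (If $\sigma$ winds around $\mathcal S(B)$ before returning to $e$ it only gets longer, so this is harmless.)

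\textbf{Main obstacle.} The invariant-level bookkeeping is routine; the heart, and the part I expect to be hardest, is the two geometric estimates above — quantifying precisely how the ``good'' property prevents a short loop from dipping into the annulus $\mathcal S(B)$ to cross $e$ far from its vertex. This is exactly what the enclosure parameter and its algorithm-independent properties (Section~\ref{sec:enclosure}) are there for, and tuning the corner-cutting case so that the absolute constant comes out as $2$, rather than something larger, is the delicate point.
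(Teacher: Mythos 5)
Your invariant-level reduction contains a genuine gap: you reduce to showing that the boundary edge $e$ of an inactive \emph{good} biface $B$ satisfies $c_S(e)\le 2$, but that statement is false for good bifaces in general. It is only true for \emph{thin} good bifaces. Consider a thick good biface whose boundary edges have length $1000$ and whose shortest interior edge has length $1$: a loop based at the midpoint of a boundary edge can dip across the annulus and back in length roughly $2$, while $\langle x\rangle_e\approx 500$, giving enclosure on the order of hundreds. Your case ``$z\in e'$'' hinges on $\ell(\sigma)\ge\langle x\rangle_e$ for a crossing path $\sigma$, and that estimate is exactly where thinness enters (it follows because, in a thin biface, every crossing path is at least half as long as the shortest interior edge, which in turn is longer than the boundary edges); for a thick biface the crossing can be much cheaper than $\langle x\rangle_e$ and the bound collapses.

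The missing algorithmic ingredient is that \algoref{Algorithm} only marks bifaces as inactive when they are \emph{thin} (step 2a: ``whenever the good biface $B$ is thin mark $B$ as inactive''), and that \algoref{Gardening}, the only other operation that marks regions inactive, never creates a \emph{new} separating loop because it operates on entire connected components of the active region, whose boundary edges are either already separating loops (adjacent to an older thin inactive biface) or lie on $\partial S$. So the correct invariant to extract is not ``every separating loop borders an inactive good biface'' but the sharper ``every separating loop borders an inactive \emph{thin} biface,'' after which the desired bound is exactly Proposition~\ref{L:biface boundary hapiness}. Your geometric sketch is recoverable once you restrict to thin bifaces, since then the key length inequalities (Lemma~\ref{L:path length in thin biface} and Lemma~\ref{L:angle in thin biface}) become available, but as written the argument attempts to prove a false statement.
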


Lemma~\ref{P:active boundary enclosure} follows from the following property of enclosure (Appendix~\ref{app:enclosure}):

%Recall that \algoref{Algorithm} keeps some thin bifaces in the output portalgon by marking them as inactive. The following shows that their boundary edges are “not very enclosed” in $S$, which is not surprising:

\begin{restatable}{proposition}{bifaceboundaryhapiness}\label{L:biface boundary hapiness}
Assume that $S$ contains the surface of a thin biface $B$, and let $e$ be one of the two boundary edges of $B^1$. Then $c_S(e) \leq 2$. 
\end{restatable}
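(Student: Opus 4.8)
The statement to prove is Proposition~\ref{L:biface boundary hapiness}: if $S$ contains the surface of a thin biface $B$, and $e$ is one of the two boundary edges of $B^1$, then $c_S(e) \leq 2$. Recall that a biface $B$ has two triangles, two interior edges (both geodesic simple paths, with $e_0$ a shortest path in $\mathcal S(B)$ and $f_0$ shortest among the diagonals of the quadrilateral obtained by cutting along $e_0$, since $B$ is good), and two boundary loop edges $e, e'$, which are the two boundary components of the annulus $\mathcal S(B)$. Being \emph{thin} means every interior edge is longer than every boundary edge; in particular $\ell(e) < \ell(e_0)$ and $\ell(e) < \ell(f_0)$, and also $\ell(e') < \ell(e_0), \ell(f_0)$.

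The plan is to argue by contradiction: suppose some loop $\gamma$ based at a point $x$ in the relative interior of $e$ encloses $e$ by a factor strictly greater than $2$, i.e. $\langle x\rangle_e / \ell(\gamma) > 2$. Since $\langle x\rangle_e \leq \ell(e)/2$, this forces $\ell(\gamma) < \ell(e)/4$. First I would use the defining inequality $\ell(\gamma) < \langle x \rangle_e$ of enclosure (which holds whenever $\gamma$ encloses $e$) together with $\ell(\gamma) < \ell(e)/4$ to conclude that $\gamma$ is a very short non-contractible loop — it is non-contractible because it crosses $e$ at $x$, and $e$ being a boundary component of the embedded annulus $\mathcal S(B) \subseteq S$, a loop crossing $e$ cannot bound a disk (a homological/intersection-number argument: $\gamma$ has nonzero algebraic intersection with the cycle $e$, hence is non-contractible in $S$; more carefully, $\gamma$ crossing $e$ transversally an odd number of times near $x$, combined with $\ell(\gamma) < \langle x\rangle_e$ ensuring $\gamma$ stays close to $x$ and crosses $e$ exactly once, gives algebraic intersection $\pm 1$). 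Then I would leverage the short loop $\gamma$ to build, inside $\mathcal S(B)$ or $S$, a closed geodesic, or a vertex-to-vertex arc, shorter than $e_0$ or $f_0$, contradicting the goodness of $B$ — specifically, concatenating sub-arcs of $\gamma$ with sub-arcs of the boundary edge $e$ (or its partner $e'$) should produce a loop in the free homotopy class of the core of the annulus that is shorter than what thinness/goodness permits; tightening it to a geodesic and comparing with $e_0$ (the shortest diagonal of the quadrilateral / shortest path) yields the contradiction. The constant $2$ (rather than some smaller bound) should come out of the slack: $\langle x \rangle_e \le \ell(e)/2$, while a loop genuinely enclosing $e$ must have length at least comparable to $\langle x \rangle_e$ after we route it around the annulus using the boundary.

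\textbf{Main obstacle.} The hard part will be the geometric comparison step: turning "there is a short loop $\gamma$ crossing $e$" into "there is a path shorter than the interior edge $e_0$ or $f_0$," and doing so carefully enough to get the explicit factor $2$ rather than merely $O(1)$. This requires understanding how $\gamma$ interacts with the annulus $\mathcal S(B)$: since $\gamma$ crosses $e$, a sub-arc of $\gamma$ enters the interior of $\mathcal S(B)$, and one must control where it exits (through $e$ again, or through $e'$, or it could leave $\mathcal S(B)$ into the rest of $S$ if $\mathcal S(B)\ne S$). I expect one handles this by cutting $S$ along $e$ and along the interior edges of $B$, developing the relevant triangles into the plane, and using that $e_0$ realizes the shortest path / shortest diagonal to bound lengths; the inequality $\ell(\gamma) < \langle x \rangle_e \le \ell(e)/2$ keeps $\gamma$ localized near $x$ so that it cannot wander far, which is what makes the planar development argument tractable. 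A secondary technical point is the general-position / transversality bookkeeping (that $\gamma$ crosses $e$ exactly once near $x$ and that "crosses" vs. "meets on one side" is well-defined), but the excerpt already sets this up in the definition of enclosure, so I would cite that and focus the work on the length comparison.
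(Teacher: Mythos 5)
Your high-level plan (argue by contradiction from a loop $\gamma$ enclosing $e$ by a factor $>2$, note $\ell(\gamma)<\ell(e)/4$, and derive a contradiction with goodness/thinness) is in the right spirit, but the contradiction mechanism you propose does not work and is not what the paper does. You suggest concatenating sub-arcs of $\gamma$ with sub-arcs of the boundary edge to obtain ``a loop in the free homotopy class of the core of the annulus that is shorter than what thinness/goodness permits.'' But neither goodness nor thinness gives a lower bound on the length of core loops of $\mathcal S(B)$: goodness is a statement about the interior edges as vertex-to-vertex \emph{paths} (that one interior edge is a shortest path and the other a shortest diagonal), and thinness compares edge lengths; neither says anything directly about loops. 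So the target of your construction is not a forbidden object, and the argument stalls exactly at the step you flagged as the ``main obstacle.''

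The paper instead proves and uses two structural facts about a \emph{good, thin} biface $B$ with shortest interior edge $e_0$: (i) every path in $\mathcal S(B)$ joining the two boundary components has length at least $\ell(e_0)/2$ (Lemma~\ref{L:path length in thin biface}, proved by extending such a path along the boundary to a vertex-to-vertex path, which cannot undercut the shortest path $e_0$), and (ii) each of the four corner angles between $e_0$ and the boundary of $\mathcal S(B)$ exceeds $\pi/4$ (Lemma~\ref{L:angle in thin biface}, proved by showing a smaller angle would yield a shortcut of $e_0$, contradicting goodness). With these in hand the proof is short: the portion of $\gamma$ entering $\mathcal S(B)$ from $x$ cannot reach the other boundary component because $\ell(\gamma)<\langle x\rangle_f/2\le\ell(f)/4\le\ell(e_0)/4$ is smaller than the bound in (i); hence this portion must hit $e_0$, and since it is so much shorter than $\langle x\rangle_f$, the corner between the boundary edge and $e_0$ that it subtends must have angle $<\pi/4$, contradicting (ii). So the actual contradiction is an \emph{angle} bound, not a loop-length bound. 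Your instinct to ``find something shorter than $e_0$ or $f_0$'' is indeed how the two auxiliary lemmas are themselves established, but you need to target vertex-to-vertex arcs (or shortcuts), not core loops, and you need the across-the-annulus length lemma to first rule out $\gamma$ escaping through the other boundary component before the angle argument can be applied.
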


\begin{proof}[Proof of Lemma~\ref{P:active boundary enclosure}]
Only step 2 of \algoref{SimplifyTubes} may create a separating loop, by marking a \emph{thin} biface $B$ as inactive. Then $B$ is is never touched again by the algorithm. So the algorithm maintains the invariant that every separating loop $e$ is adjacent to the surface of at least one inactive region that is a \emph{thin} biface. So $c_S(e) \leq 2$ by Proposition~\ref{L:biface boundary hapiness}.
\end{proof}

\subsubsection{The very enclosed edges shorten exponentially fast}\label{sec:Pmainloopgeometry}

%In this section we prove that, informally, the maximum length of the edges “very enclosed” in $S$ (if any) scales down exponentially:

%In this section we prove the following:

\begin{restatable}{proposition}{scalesdownexponentially}\label{P:main loop geometry}
After $i \geq 1$ iterations of the main loop, let $e$ be an edge of $R_A^1$. If $c_S(e) > 22000 \cdot i$ then $\ell(e) < 2^{1-i} L$.
\end{restatable}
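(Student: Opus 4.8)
## Proof proposal for Proposition~\ref{P:main loop geometry}

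The plan is to argue by induction on $i$, showing that each iteration of the main loop roughly halves the length of every edge that is still "very enclosed", while keeping the enclosure of the relevant curves under control. The intuition is that an edge $e$ with large enclosure $c_S(e)$ carries a short non-contractible loop $\gamma$ attached at an interior point $x$ that is far (a distance $\langle x\rangle_e \ge c_S(e)\,\ell(\gamma)$) from both endpoints of $e$; after the geometry-improving step (\algoref{InsertVertices} followed by \algoref{InsertEdges}) such an edge must have been cut, because its midpoint was inserted and subsequent \algoref{InsertEdges} calls can only subdivide faces further without un-cutting. So the "descendants" of $e$ in $R_A^1$ each have length at most half of $\ell(e)$, \emph{unless} $e$ was already short. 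The subtlety is that \algoref{DeleteVertices} and \algoref{SimplifyTubes} can merge edges back together or replace whole tubes by bifaces, potentially re-creating a long edge; this is exactly where one needs the enclosure bound to rule out the bad case.

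First I would set up the induction. For $i=1$: after the first application of \algoref{InsertVertices} every interior edge of the original $T^1$ has been bisected, so after step~1 every edge of $R_A^1$ has length at most $L$; the edges created by \algoref{InsertEdges} (shortcuts of faces) are bounded by the longest side of the face they cut, hence also $\le L$; and the interior-loop routines of step~2 (\algoref{SimplifyTubes}, \algoref{DeleteVertices}, \algoref{InsertEdges}, \algoref{Gardening}) are where I would invoke the enclosure hypothesis: any edge $e$ that emerges from these routines with $\ell(e)\ge L$ must be a boundary edge of a good biface produced by \algoref{SimplifyTubes} or \algoref{Gardening}, or an edge obtained by "un-deleting" a weak vertex — and in either case one shows $c_S(e)=O(1)$, far below the threshold $22000\cdot 1$, using Lemma~\ref{P:active boundary enclosure} / Proposition~\ref{L:biface boundary hapiness} for the biface boundaries and a separate enclosure estimate (one of the independent enclosure lemmas deferred to Appendix~\ref{app:enclosure}) for edges formed by merging two short edges across a deleted vertex. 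For the inductive step, I would assume the statement after $i-1$ iterations and track what happens to an edge $e$ after iteration $i$: if before iteration $i$ every edge had length $<2^{2-i}L$ on the "very enclosed" edges (threshold $22000(i-1)$), then \algoref{InsertVertices} bisects all interior edges, giving length $<2^{1-i}L$ on their halves, and again the only way to produce a longer edge in the rest of the iteration is via a biface boundary or a vertex deletion, both of which are accounted for by keeping their enclosure below the (now larger) threshold $22000\cdot i$. The gap $22000(i-1)$ versus $22000\cdot i$, i.e. a slack of $22000$ per iteration, is the budget absorbing the constant-factor blow-ups of the $350$ inner repetitions and the constant enclosure increments incurred by \algoref{DeleteVertices}/\algoref{SimplifyTubes}.

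The key quantitative inputs I would isolate as enclosure lemmas (proved independently of the algorithm) are: (a) if an edge $e'$ is obtained by concatenating two segments across a removed weak vertex of bounded degree, then $c_S(e')$ is bounded by a constant times the max of the two enclosures plus a constant — so enclosure grows only additively, by $O(1)$, under the kinds of local surgery performed in the inner loop; (b) the boundary edges of any good biface produced along the way have enclosure $O(1)$ in $S$, which for thin bifaces is exactly Proposition~\ref{L:biface boundary hapiness} and for thick bifaces follows from a short argument (a thick good biface has a boundary edge no longer than an interior edge, and its interior edge is a shortest path, so any loop through a boundary point is controlled); and (c) bisecting an edge at its midpoint does not increase enclosure. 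Combining (a)–(c) with the halving from \algoref{InsertVertices}, an edge that after iteration $i$ has $\ell(e)\ge 2^{1-i}L$ must trace back — through at most $350$ inner rounds, each contributing an additive $O(1)$ to enclosure — to an edge that was either long and not-very-enclosed at the start of iteration $i$ (impossible by the inductive hypothesis, since $c_S$ of that ancestor would be $\le 22000(i-1)$ yet its length $\ge 2^{2-i}L$, contradicting the inductive claim) or a biface boundary (enclosure $O(1)$), and in all cases $c_S(e)\le 22000(i-1)+ 350\cdot O(1)\le 22000\cdot i$.

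The main obstacle I anticipate is item (a): controlling how enclosure propagates when \algoref{DeleteVertices} removes a weak vertex and \algoref{InsertEdges} re-triangulates the resulting face of degree $\le 6$, because the new edges are shortcuts whose relationship to the old edges' enclosures is not obviously bounded — one has to argue that a loop realizing large enclosure of a new shortcut edge can be "pushed" to a loop realizing comparable enclosure of one of the old edges (or a short non-contractible loop witnessing a systole-scale bound), using that the face has bounded degree and that its edges were short relative to the loop. Getting the constants to close — so that $350$ inner iterations times the per-iteration enclosure increment stays under the per-main-iteration budget of $22000$ — is bookkeeping, but it is the kind of bookkeeping where an off-by-a-factor error would propagate, so I would want to state (a) with an explicit additive constant and an explicit multiplicative constant of $1$ (enclosure increases by at most an additive constant, not a multiplicative one) to make the induction go through cleanly.
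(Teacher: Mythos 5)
You have the right ingredients — the halving from \algoref{InsertVertices}, the additive enclosure bounds of Lemmas~\ref{lem:geom insert vertices},~\ref{lem:geom insert edges},~\ref{lem:geom simplify tubes}, and a per-main-iteration budget — but the per-iteration induction does not close, and this is a genuine gap. The paper instead traces the edge $e'$ after $i$ iterations directly back, in a single sweep, to an edge $e$ of the input triangulation $T^1$: there have been $i$ applications of \algoref{InsertVertices}, $351i$ of \algoref{InsertEdges}, and $350i$ of \algoref{SimplifyTubes}, and chaining the three lemmas gives $\ell(e)\ge 2^i\prod_j(1-\epsilon_j)\,\ell(e')$ where the cumulative loss satisfies $\sum_j\epsilon_j<11000\,i/c_S(e')<1/2$ \emph{globally}, hence $\ell(e)>2^{i-1}\ell(e')$; then $\ell(e)\le L$ finishes. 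Your per-iteration induction chains $\ell(e')<\ell(e'')\big/\bigl(2(1-\epsilon)\bigr)$ with the inductive hypothesis $\ell(e'')<2^{2-i}L$, which yields only $\ell(e')<2^{1-i}L/(1-\epsilon)$ — strictly larger than the target $2^{1-i}L$. The slack $1/(1-\epsilon)>1$ lost in each iteration is not absorbed anywhere, and the induction already fails numerically at $i=2$. The logical fault shows up concretely in your parenthetical ``impossible by the inductive hypothesis, since $c_S$ of that ancestor would be $\leq 22000(i-1)$ yet its length $\geq 2^{2-i}L$, contradicting the inductive claim'': an ancestor with small enclosure and large length is \emph{consistent} with, not contradicted by, the inductive claim, which only constrains edges of \emph{large} enclosure. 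The paper's global product bound sidesteps all of this, because the total loss $\prod_j(1-\epsilon_j)>1/2$ is charged once against $2^i$, not once per iteration.

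Two smaller mismatches. Your item (b) asserts an $O(1)$ enclosure bound for boundary edges of \emph{thick} good bifaces via ``a short argument''; the paper proves no such thing and does not need it. Proposition~\ref{L:biface boundary hapiness} is for thin bifaces only (supporting Lemma~\ref{P:active boundary enclosure} about separating loops), while thick bifaces are handled by Proposition~\ref{prop:thick tube}, which traces an \emph{interior} edge of a thick biface to a boundary edge — and that boundary edge is a loop of $J$ that already existed in $R_A^1$ before \algoref{SimplifyTubes}, so it is simply traced further back, with no $O(1)$ bound asserted. Also, your item (a) initially describes \algoref{DeleteVertices} as ``concatenating two segments across a removed weak vertex''; in fact it deletes the vertex \emph{together with all its incident edges}, and the new edges come from the subsequent \algoref{InsertEdges} as shortcuts of the merged face, controlled by Proposition~\ref{prop:polygons} (which, in Lemma~\ref{lem:geom insert edges}, is applied at most three times because the active-region faces have degree at most six). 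You do land on the correct picture in your ``main obstacle'' paragraph, so this is a wording slip rather than a conceptual miss.
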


To prove Proposition~\ref{P:main loop geometry}, we analyze each routine applied. Informally, each application of \algoref{InsertVertices} “improves the geometry” of the active region, and the rest of the algorithm does not deteriorate this improvement too much. Formally:

\begin{restatable}{lemma}{lemgeominsertvertices}\label{lem:geom insert vertices}
Consider the active regions $R_A$ and $R_A'$ respectively before and after some application of \algoref{InsertVertices}. Assume that there is an edge $e'$ of $R_A'^1$ such that $c_S(e') > 2$. Then there is an edge $e$ of $R_A^1$ such that $c_S(e) \geq c_S(e')$ and $\ell(e) \geq 2 \ell(e')$.
\end{restatable}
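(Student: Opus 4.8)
The statement compares the enclosure/length of edges before and after \algoref{InsertVertices}. Recall that \algoref{InsertVertices} only subdivides each interior edge $g$ of $R_A^1$ at its midpoint, turning $g$ into two sub-segments $g_1, g_2$ of length $\ell(g)/2$, and leaves the surface $S$ (hence all geodesics and loops in $S$) completely unchanged; boundary edges of $R_A^1$ are untouched. So every edge $e'$ of $R_A'^1$ is either an untouched edge of $R_A^1$, or one of the two halves $g_i$ of some interior edge $g$ of $R_A^1$. In the first case we simply take $e = e'$: then $c_S(e) = c_S(e')$ and $\ell(e) = \ell(e') \ge 2\ell(e')$ is false — so the real content is that the untouched case cannot have $c_S(e') > 2$, or must be handled by a different choice of $e$. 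Let me reconsider: an untouched edge $e'$ with $c_S(e') > 2$ is a boundary edge of $R_A^1$; I would want to argue that such an edge is "short" relative to some interior edge, or use the hypothesis differently. Actually the cleanest route: the interesting edges $e'$ with large enclosure are the newly created halves $g_i$, and for those the argument below works; for an untouched edge I must show it either has $c_S(e')\le 2$ or inherits from a long interior edge. I expect the intended reading is that $R_A$ before \algoref{InsertVertices} is triangular (it is, at the top of each main-loop iteration, after the previous iteration's \algoref{InsertEdges}), so \emph{every} edge of $R_A^1$ is a side of a triangle, and after \algoref{InsertVertices} the new edges are exactly the halves, while old boundary edges persist.

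The heart of the matter is the midpoint case. Let $g$ be an interior edge of $R_A^1$ with midpoint $m$, and let $e' = g_1$ be one half, so $\ell(e') = \ell(g)/2$, i.e. $\ell(g) = 2\ell(e')$, which already gives the length inequality with $e = g$. It remains to show $c_S(g) \ge c_S(g_1)$. Fix a basepoint $x$ in the relative interior of $g_1$ and a loop $\gamma$ based at $x$ that encloses $g_1$, witnessing a ratio $\langle x\rangle_{g_1}/\ell(\gamma)$ close to $c_S(g_1)$. The point $x$ also lies in the relative interior of $g$ (since $g_1 \subset g$ and $x$ is interior to $g_1$, hence interior to $g$), and $\gamma$ crosses $g$ at $x$ because $g$ and $g_1$ coincide near $x$; so $\gamma$ is a candidate loop for enclosing $g$ at $x$. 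The only thing to check is the admissibility condition $\ell(\gamma) < \langle x\rangle_g$ and that the witnessed ratio does not drop, i.e. $\langle x\rangle_g \ge \langle x\rangle_{g_1}$. This last inequality is immediate: $\langle x\rangle_g = \min$ of the two sub-segment lengths of $g$ cut by $x$, and each of those sub-segments of $g$ contains the corresponding sub-segment of $g_1$ cut by $x$ (one of them also contains all of $g_2$), so both are $\ge$ the sub-segments of $g_1$, whence $\langle x\rangle_g \ge \langle x\rangle_{g_1}$. Combined with $\ell(\gamma) < \langle x\rangle_{g_1} \le \langle x\rangle_g$ this shows $\gamma$ admissibly encloses $g$ at $x$ with ratio $\langle x\rangle_g/\ell(\gamma) \ge \langle x\rangle_{g_1}/\ell(\gamma)$. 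Taking the supremum over all such $(x,\gamma)$ gives $c_S(g) \ge c_S(g_1)$, as desired, and since the supremum defining $c_S(g_1)$ is approached by such witnesses (using $c_S(g_1) > 2 > 1$ so that enclosing loops genuinely exist), we conclude.

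For the untouched-edge case, I would argue as follows. If $e'$ is an edge of $R_A'^1$ that was already present in $R_A^1$, then it is not an interior edge (those were all subdivided), so it is a boundary edge of $R_A^1$. But the invariant from Lemma~\ref{P:active boundary enclosure} — proved via Proposition~\ref{L:biface boundary hapiness} — says boundary edges that are separating loops have enclosure at most $2$; and a boundary edge of $R_A^1$ adjacent to the active region on both sides is shared by two triangles of $R_A$ and hence is \emph{not} on $\partial\mathcal S(R_A)$ unless... — here I need to be careful about what "boundary edge of $R_A^1$" means versus "edge on $\partial S$". The robust statement is: an untouched edge $e'$ with $c_S(e') > 2$ cannot exist among the edges persisting through \algoref{InsertVertices}, because all such persisting edges are of a type (boundary/separating) controlled by the $c_S \le 2$ invariant. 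So the hypothesis $c_S(e') > 2$ forces $e'$ to be one of the new halves, and the midpoint argument applies.

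**Main obstacle.** The delicate point is \emph{not} the inequality $\langle x\rangle_g \ge \langle x\rangle_{g_1}$ (that is essentially definitional), but rather pinning down exactly which edges of $R_A'^1$ can have $c_S > 2$ — i.e. ensuring the untouched edges are harmless. This requires knowing the precise invariant on the types of edges of $R_A^1$ at the moment \algoref{InsertVertices} is applied (that $R_A$ is triangular then, so that "non-interior $\Rightarrow$ boundary/separating $\Rightarrow c_S \le 2$"), and reconciling the various notions of "boundary edge". Everything else is a short, essentially formal manipulation of the definition of enclosure: a witnessing pair $(x,\gamma)$ for $g_1$ is automatically a witnessing pair for $g$ with a ratio that only improves.
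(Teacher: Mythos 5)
Your proposal is correct and follows essentially the same route as the paper: reduce to the case that $e'$ is a half of a subdivided edge $e$ (so that $\ell(e) = 2\ell(e')$), and use monotonicity of enclosure under inclusion $e' \subseteq e$ to get $c_S(e) \ge c_S(e')$ --- the paper cites Lemma~\ref{prop:edges} here, while you re-derive it inline. For excluding the untouched edges, the paper is crisper: the edges not subdivided are exactly those on $\partial\mathcal S(R_A')$, and each such edge is either on $\partial S$ (hence cannot be enclosed at all, so $c_S = 1$) or a separating loop (hence $c_S \le 2$ by Lemma~\ref{P:active boundary enclosure}); your discussion circles around this but lands on the same conclusion.
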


%Recall that, given an edge $e$ of a tessellation of $S$, the operation \algoref{InsertVertices} may insert the middle point of $e$ as a vertex in the tessellation. The following ensures that the two resulting edges are not “more enclosed” in $S$ than the initial edge. It is straightforward:

Lemma~\ref{lem:geom insert vertices} follows from the following (easy) property of enclosure (Appendix~\ref{app:enclosure}):

\begin{restatable}{lemma}{lemmapropedges}\label{prop:edges}
Let $f \subseteq e$ be segments in $S$. Then $c_S(e) \geq c_S(f)$.
\end{restatable}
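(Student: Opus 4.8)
We must show: if $f \subseteq e$ are segments in a flat surface $S$, then $c_S(e) \geq c_S(f)$. In words, enlarging a segment can only increase its enclosure.

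\textbf{Approach.} The plan is to unwind the definition of enclosure and show that any loop witnessing a large enclosure for $f$ also witnesses at least as large an enclosure for $e$. Recall that $c_S(f)$ is the supremum over all basepoints $x$ in the relative interior of $f$ and all loops $\gamma$ based at $x$, geodesic except possibly at the basepoint, with $\ell(\gamma) < \langle x \rangle_f$, such that $\gamma$ crosses $f$ at $x$, of the ratio $\langle x \rangle_f / \ell(\gamma)$; and it is $1$ if no such loop exists. The proof amounts to four observations, which I would carry out in this order.

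\textbf{Step 1: handle the trivial case.} If $c_S(f) = 1$ there is nothing to prove, since always $c_S(e) \geq 1$. So assume $c_S(f) > 1$, and fix a basepoint $x$ in the relative interior of $f$ together with a loop $\gamma$ enclosing $f$ at $x$ by a factor arbitrarily close to $c_S(f)$.

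\textbf{Step 2: the basepoint is interior to $e$ and crossing is preserved.} Since $f \subseteq e$ and $x$ lies in the relative interior of $f$, the point $x$ also lies in the relative interior of $e$. Moreover, crossing is a purely local notion at $x$: the loop $\gamma$ crosses $f$ at $x$ means that, locally near $x$, the two germs of $\gamma$ at $x$ lie on opposite sides of $f$. Since $e$ and $f$ coincide in a neighborhood of $x$ (as $f \subseteq e$ and $x$ is interior to $f$), $\gamma$ crosses $e$ at $x$ as well, and the general-position hypothesis transfers verbatim.

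\textbf{Step 3: the key inequality on the sub-segment lengths.} This is the heart of the argument. I claim $\langle x \rangle_e \geq \langle x \rangle_f$. Indeed, $x$ splits $e$ into two sub-segments; since $f \subseteq e$ contains $x$ in its relative interior, each of the two sub-segments of $f$ at $x$ is contained in the corresponding sub-segment of $e$ at $x$, hence is no longer. Taking the minimum of the two lengths on each side gives $\langle x \rangle_f \leq \langle x \rangle_e$. In particular the hypothesis $\ell(\gamma) < \langle x \rangle_f$ implies $\ell(\gamma) < \langle x \rangle_e$, so $\gamma$ is a legitimate enclosing loop for $e$ at $x$, and it encloses $e$ by a factor $\langle x \rangle_e / \ell(\gamma) \geq \langle x \rangle_f / \ell(\gamma)$.

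\textbf{Step 4: pass to the supremum.} Therefore every factor witnessed for $f$ is witnessed (by the same loop) for $e$, with a value that is at least as large. Taking the supremum over all basepoints and all enclosing loops yields $c_S(e) \geq c_S(f)$.

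\textbf{Main obstacle.} There is no serious obstacle; the only point that requires a moment's care is Step 2, namely checking that ``crossing at $x$'' and ``general position'' are genuinely local at $x$ and thus insensitive to replacing $f$ by the larger $e$ — this is where one uses that $f$ and $e$ agree on a neighborhood of $x$ in $S$. Everything else is a one-line comparison of lengths. This is why the paper calls it an ``easy'' property of enclosure.
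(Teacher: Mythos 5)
Your proof is correct and follows the same approach as the paper's one-line argument, which hinges on the same key observation that $\langle x\rangle_f \le \langle x\rangle_e$ and that any loop enclosing $f$ at a point $x$ also encloses $e$ there with a ratio at least as large. You simply spell out the local checks (interiority of $x$ in $e$, transfer of the crossing and general-position conditions) that the paper leaves implicit.
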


\begin{proof}[Proof of Lemma~\ref{lem:geom insert vertices}]
First observe that $e'$ is not included in the boundary of $\mathcal S(R_A')$ because $e'$ is enclosed and thus not included in the boundary of $S$, and because $e'$ is not a separating loop by Lemma~\ref{P:active boundary enclosure}. So there is an edge $e$ of $R_A^1$ such that $e'$ is one of the two half-segments obtained after the insertion of the middle point of $e$ as a vertex. Then $\ell(e) = 2 \ell(e')$. And $c_S(e) \geq c_S(e')$ by Lemma~\ref{prop:edges}.
\end{proof}

\begin{restatable}{lemma}{lemgeominsertedges}\label{lem:geom insert edges}
Consider the active regions $R_A$ and $R_A'$ respectively before and after some application of \algoref{InsertEdges}. Assume that there is an edge $e'$ of $R_A'^1$ such that $c_S(e') > 14$. Then there is an edge $e$ of $R_A^1$ such that $c_S(e) \geq c_S(e') - 12$ and $\ell(e) \geq (1 - 12/c_S(e')) \cdot \ell(e')$.
\end{restatable}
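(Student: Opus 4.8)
The plan is to track what happens to a single interior edge $e'$ of $R_A'^1$ that is highly enclosed ($c_S(e') > 14$) after one application of \algoref{InsertEdges}, and to trace it back to an edge of $R_A^1$ that is almost as enclosed and almost as long. Recall that \algoref{InsertEdges} only cuts faces along shortcuts, so the edge set of $R_A^1$ is a subset of the edge set of $R_A'^1$; the new edges of $R_A'^1$ are exactly the inserted shortcuts. So either $e'$ is already an edge of $R_A^1$, in which case we take $e = e'$ and we are trivially done, or $e'$ is (a sub-path of) one of the inserted shortcuts. First I would rule out the degenerate possibilities: since $c_S(e') > 14 > 2$, the edge $e'$ is enclosed, hence not contained in the boundary of $S$, and by Lemma~\ref{P:active boundary enclosure} it is not a separating loop; so $e'$ lies in (the interior of) some face $F$ of $R_A^1$, being the shortcut that was inserted into $F$.

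Next I would exploit that a shortcut of a polygon $P$ is a \emph{shortest} vertex-to-vertex arc of $P$. Let $F$ be the face of $R_A^1$ whose polygon was cut, with boundary edges (sides of $P$) $a_1, \dots, a_d$ in cyclic order, each of length $\ell(a_i) \le L$ actually we cannot use $L$ here; rather, each $\ell(a_i)$ is at most the length of the longest edge of $R_A^1$. The shortcut $e'$ has endpoints two vertices $u, v$ of $P$; going around $\partial P$ the shorter of the two boundary arcs from $u$ to $v$ is a concatenation of at most $\lfloor d/2 \rfloor \le 3$ sides of $P$ (since every polygon of the active region has degree at most six, by the first invariant), call this boundary path $\beta$. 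Since $e'$ is a shortcut it is no longer than the vertex-to-vertex arc obtained from $\beta$, so $\ell(e') \le \ell(\beta) \le \ell(a_i) + \ell(a_j) + \ell(a_k)$ for the three sides composing $\beta$; in particular $\ell(e')$ is at most three times the longest boundary edge of $F$. The idea is then: among the (at most three) sides $a_i$ composing $\beta$, pick the longest one, call it $e$, so $\ell(e) \ge \ell(e')/3$; actually we want a sharper constant, so instead I would use a length-transfer argument. The key geometric input is the following easy property of enclosure (the analog of Lemma~\ref{prop:edges} for concatenation): if $\gamma$ is a loop based at a point $x$ in the relative interior of $e'$ that encloses $e'$ by a large factor, then, replacing the portion of $e'$ on the shorter side of $x$ by pushing $x$ along $e'$ onto $\beta$ and then onto one of its constituent edges $a_i$, the loop $\gamma$ (suitably extended by a sub-path of $e'$ and of $\beta$, of total length at most $2\,\ell(e')/c_S(e')$) becomes a loop witnessing enclosure of some $a_i = e$, with $\langle x' \rangle_e \ge \langle x \rangle_{e'} - \ell(e')$-type loss. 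Quantitatively this should give $c_S(e) \ge c_S(e') - O(1)$ and $\ell(e) \ge (1 - O(1)/c_S(e')) \cdot \ell(e')$, and one tunes the bookkeeping so that the constants come out to $12$ as in the statement.

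\textbf{Main obstacle.} The delicate part is the second step — the transfer of the enclosing loop from $e'$ to a constituent edge $e = a_i$ of the boundary path $\beta$ while losing only a bounded additive amount in the enclosure factor and only a $(1 - O(1)/c_S(e'))$ multiplicative amount in the length. One has to be careful that the point $x \in e'$ witnessing enclosure may lie near an endpoint of $e'$ and need to be moved all the way across a vertex of $P$ onto $\beta$, and then the loop $\gamma$ must be prolonged by a short detour (length controlled by $\ell(\gamma) < \langle x\rangle_{e'} \le \ell(e')$ and by the enclosure factor) and must still \emph{cross} $e$ transversally at its new basepoint rather than merely touch it — handling general position and the "crossing versus one-sided" dichotomy at the relocated basepoint is exactly where the constant $12$ (rather than, say, $4$) comes from, since one may need to route around up to three sides and across two vertices. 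I expect this step to require the structural lemmas on enclosure under subdivision and concatenation deferred to Appendix~\ref{app:enclosure}, and the bound $d \le 6$ on face degrees (invariant~1) to be essential so that the detour length stays a bounded multiple of $\ell(e')$.
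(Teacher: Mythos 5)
Your first step matches the paper: since $c_S(e') > 14 > 2$, the edge $e'$ is not a boundary edge and (by Lemma~\ref{P:active boundary enclosure}) not a separating loop, so either $e' \in R_A^1$ (trivial) or $e'$ was inserted as a shortcut. But after that your plan diverges from the paper's proof and has two genuine gaps.

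First, the paper does not attempt a direct ``length-transfer'' from $e'$ to a side of the \emph{original} face $F$ of $R_A^1$. Instead it iterates Proposition~\ref{prop:polygons} (which is stated immediately before this lemma in the body of the paper and is precisely the single-step version of the transfer you are sketching). The subtlety your plan does not address is that $e'$ need not be a shortcut of $F$ itself: \algoref{InsertEdges} inserts shortcuts one at a time, so if $e'$ was the second or third shortcut inserted, then at the moment of insertion $e'$ was a shortcut of an \emph{intermediate} face $G \subsetneq F$ whose boundary contains one or two \emph{previously inserted shortcuts}, not just sides of $F$. When one pushes the enclosing loop off $e'$, it may land on such an intermediate shortcut before it can reach your arc $\beta \subset \partial F$. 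Proposition~\ref{prop:polygons} returns a side $f$ of $G$, which may itself be an inserted shortcut; one must then apply the proposition again, up to three times (since a face of degree at most six receives at most three shortcuts), to reach a genuine edge of $R_A^1$. Your one-shot argument to a constituent edge of $\beta$ does not account for this; the ``main obstacle'' you flag is real, and it is not resolved by your sketch.

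Second, the constant $12$ is not the result of tuning bookkeeping in a hand-wavy transfer: it is exactly $3 \times 4$ from three applications of Proposition~\ref{prop:polygons}, each losing an additive $4$ in enclosure, and the multiplicative length loss telescopes exactly: with $t = c_S(e')$ the worst case is $\frac{t-4}{t}\cdot\frac{t-8}{t-4}\cdot\frac{t-12}{t-8} = 1 - 12/t$. (The threshold $> 14$ is what guarantees $c_S > 6$ still holds at the third application.) Your proposed estimate $\langle x' \rangle_e \geq \langle x \rangle_{e'} - \ell(e')$ is vacuous, since $\langle x \rangle_{e'} \leq \ell(e')/2$; the correct single-step loss is the one encapsulated in Proposition~\ref{prop:polygons} via Lemma~\ref{inapolygon}, which requires the two-sided ``prefix vs.\ suffix of $\gamma$'' case analysis rather than a one-sided push of the basepoint. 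In short: invoke Proposition~\ref{prop:polygons} and iterate it at most three times, rather than reproving a weaker and vaguer version of it from scratch.
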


%Recall that, given a face $F$ of a tessellation of $S$, the operation \algoref{InsertEdges} may insert a shortcut of $F$ as an edge in the tessellation. The following ensures that if the shortcut inserted is “very enclosed” in $S$, then it is “not much more enclosed” in $S$ and “not much longer” than the edges initially in the tessellation:

Lemma~\ref{lem:geom insert edges} follows from the following (key) property of enclosure (Appendix~\ref{app:enclosure}):

\begin{restatable}{proposition}{proppolygons}\label{prop:polygons}
Let $F$ be a face of a tessellation of $S$. Assume that $F$ has a shortcut $e$ such that $c_S(e) > 6$. Then $F$ has a side $f$ such that $c_S(f) \geq c_S(e)-4$ and $\ell(f) \geq (1 - 4/c_S(e)) \cdot \ell(e)$.
\end{restatable}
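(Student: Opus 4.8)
\textbf{Proof plan for Proposition~\ref{prop:polygons}.}

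The plan is to show that if a shortcut $e = uv$ of the face $F$ is highly enclosed, then one of the two boundary chains of $F$ between $u$ and $v$ contains a side $f$ that "inherits" almost all of the enclosure of $e$. Fix a basepoint $x$ in the relative interior of $e$ and a loop $\gamma$ based at $x$ that encloses $e$ by nearly the factor $c := c_S(e)$, so $\ell(\gamma) < \langle x\rangle_e/c_S(e) \cdot c_S(e) = \langle x \rangle_e$ is tiny compared to $\langle x\rangle_e$, and $\gamma$ crosses $e$ transversally at $x$. Since $e$ is a straight chord of the planar polygon $P$ representing $F$ with relative interior inside $P$, cutting $\mathcal S(S)$ along $e$ (conceptually) splits $F$ into two polygonal regions; the loop $\gamma$ leaves $x$ into one of these two regions, say the one bounded by $e$ together with the boundary chain $\partial_0$ of $P$ from $u$ to $v$. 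The first key observation is that $\gamma$ cannot stay forever inside this region-with-corners: being a non-contractible loop (it encloses $e$, hence crosses $e$, hence is essential), it must eventually exit through $\partial_0$, i.e.\ $\gamma$ meets $\partial_0$.

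The heart of the argument is then a rerouting/projection step: I would take the sub-arc $\gamma_0$ of $\gamma$ from $x$ until the first time it hits $\partial_0$, landing at a point $y$ on some side $f$ of the chain $\partial_0$, and build from it a loop $\gamma'$ based at a point $x'$ of $f$ that encloses $f$. The new loop is, roughly, "$y$ back to $x$ along $\gamma_0$ reversed, then along $e$ a little, then project down onto $f$"; more cleanly, one reattaches the relevant tail of $\gamma$ and uses the segment of $\partial_0$ or $e$ to close it up, taking a geodesic representative. The length bookkeeping is where the constants $4$ (additive) appear: every time we route a piece of the loop along a boundary edge of $P$ instead of through the interior, or project a point of $e$ onto $f$, we add at most a controlled multiple of $\ell(\gamma)$, and since $\ell(\gamma) \le \langle x\rangle_e / c$ this costs only an $O(1/c)$ fraction of the relevant length scale. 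Symmetrically, $\langle x'\rangle_f$ is at least $\langle x\rangle_e$ minus a few copies of $\ell(\gamma)$, because $x'$ sits "over" a point of $e$ that is within $O(\ell(\gamma))$ of $x$ along $e$, and $f$ is at least as long as that truncated portion — this is morally Lemma~\ref{prop:edges} applied after accounting for the small perturbation. Combining, $c_S(f) \ge \langle x'\rangle_f / \ell(\gamma') \ge (\langle x\rangle_e - O(\ell(\gamma)))/(\ell(\gamma) + O(\ell(\gamma))) \ge c - 4$ after taking the supremum defining $c_S(f)$; and the length bound $\ell(f) \ge (1 - 4/c)\,\ell(e)$ follows because $f$ contains a sub-segment of length at least $\langle x'\rangle_f$ on the "short side", hence $\ell(f) \ge 2\langle x'\rangle_f \ge 2(\langle x\rangle_e - O(\ell(\gamma)))$, and one picks $x$ near the midpoint of $e$ so that $2\langle x\rangle_e$ is close to $\ell(e)$ (or argues directly that the chosen $f$ on whichever side $\gamma$ escapes must span the projection of the longer half of $e$). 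I would finally check the trivial case: if $F$ is a triangle with side $e$ a shortcut this is impossible, and if $P$ has $e$ adjacent to a single other side the chain $\partial_0$ is that side, so the statement still goes through.

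The main obstacle I anticipate is making the rerouting step fully rigorous while keeping the additive loss at exactly $4$: one must argue that the loop $\gamma_0$ genuinely crosses $f$ (transversally, in general position) rather than merely touching $\partial_0$ tangentially at a vertex, and that closing up $\gamma_0$ into a based loop at a point of $f$ does not accidentally make it contractible. The non-contractibility is the delicate point — the obvious closure "$\gamma_0$ followed by a boundary path back" could be null-homotopic. The fix is to use that $\gamma$ crosses $e$ at $x$: one closes up using the portion of $e$ on the appropriate side so that the resulting loop still has algebraically nonzero crossing number with $e$ (or with a fixed essential curve), hence is essential; this requires a careful homotopy/intersection-number argument in the cut-open surface, and tracking whether $\gamma'$ crosses $f$ (to qualify as "enclosing" per the definition) rather than bouncing off it. I would also need the general-position hypothesis from the definition of enclosure to hold for the new pair $(\gamma', f)$, which should follow by a small perturbation since $\ell(\gamma') < \langle x'\rangle_f$ is maintained by the length estimates. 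Once these topological points are pinned down, the metric estimates are elementary planar geometry.
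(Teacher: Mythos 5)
Your plan has a genuine gap: it never uses the hypothesis that $e$ is a \emph{shortcut} of $F$, i.e.\ the shortest vertex-to-vertex arc of the polygon $P$. Without that assumption the proposition is false — take a long thin polygon $P$ whose shortcut $e$ is replaced by a long diagonal near a short side $f$; a loop escaping through $f$ proves nothing, since $f$ can be arbitrarily short compared to $e$. The paper's proof uses the shortcut property crucially via a \emph{two-sided} argument (Lemma~\ref{inapolygon}): it tracks where $\gamma$ first hits $\partial F$ on \emph{both} sides of $e$ (recall $\gamma$ crosses $e$ at $x$, so there is a piece of $\gamma$ entering each of the two sub-regions of $F$ separated by $e$). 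If the estimate fails on one side, a bad vertex $w$ of $P$ is produced near $x$ on that side; if it fails on both sides, the chord $ww'$ is a vertex-to-vertex arc of $P$ strictly shorter than $e$, contradicting that $e$ is a shortcut. Your proposal follows only the side "into which $\gamma$ escapes" and so cannot reach this contradiction; the late remark that "the chosen $f$ must span the projection of the longer half of $e$" is precisely the part that fails without the shortcut hypothesis.

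There is also a quantitative misstep. You write that one "picks $x$ near the midpoint of $e$ so that $2\langle x\rangle_e$ is close to $\ell(e)$", but $x$ is not yours to choose: the enclosure $c_S(e)$ is a supremum over all basepoints, and the near-optimal enclosing loop may be based far from the midpoint. The paper's Lemma~\ref{inapolygon} deals with this by producing \emph{two} independent estimates, $\langle y\rangle_f \ge (1-4/t)\langle x\rangle_e$ and $\ell(f)\ge (1-4/t)\ell(e)$, via a sliding argument that moves along both halves of the candidate side $f$, not just the half on the side of $x$. Finally, your worry about non-contractibility after rebasing is real; the paper disposes of it with Lemma~\ref{rebasing}, which works in the universal cover of the flat surface and shows that the geodesic loop homotopic to the rebased $\gamma$ genuinely meets the new side $f$ on both sides. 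So your overall idea — transfer the enclosing loop from the chord $e$ to a boundary side $f$ and keep the books on $\langle\cdot\rangle$ and $\ell(\gamma)$ — is the right one, but the plan as written is missing the two-sided dichotomy and the appeal to the shortcut property, and without those the claimed constants (and even the qualitative statement) cannot be obtained.
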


\begin{proof}[Proof of Lemma~\ref{lem:geom insert edges}]
Here we crucially use the fact that every polygon of $R_A$ has degree at most six. so that at most three edges are inserted within the polygon. Indeed either $e'$ was already an edge of $R_A^1$ and there is nothing to do, or $e'$ has been inserted in some face $F$ of $R_A^1$. At most three edges were inserted in $F$, and Proposition~\ref{prop:polygons} applied at most three times gives a boundary edge $e$ of $F$ such that $c_S(e) \geq c_S(e') - 12$ and $\ell(e) \geq (1 - 12/c_S(e')) \ell(e')$. 
\end{proof}

\begin{restatable}{lemma}{lemgeomsimplifytubes}\label{lem:geom simplify tubes}
Consider the active regions $R_A$ and $R_A'$ respectively before and after some application of \algoref{SimplifyTubes}. Assume that there is an edge $e'$ of $R_A'^1$ such that $c_S(e') > 6$. Then there is an edge $e$ of $R_A^1$ such that $c_S(e) \geq c_S(e') - 5$ and $\ell(e) \geq (1 - 4/c_S(e')) \cdot \ell(e')$.
\end{restatable}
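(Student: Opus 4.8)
\textbf{Proof proposal for Lemma~\ref{lem:geom simplify tubes}.}

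The plan is to track a single enclosed edge $e'$ of $R_A'^1$ (the active region after \algoref{SimplifyTubes}) back through the operation, bounding the loss in enclosure and length at each stage. Recall that \algoref{SimplifyTubes} builds a disjoint set $J$ of interior loop edges, cuts $\mathcal S(R_A)$ along them, and for each resulting sub-portalgon $X$ that is a tube replaces $X$ by a good biface $B$ via Proposition~\ref{P:compute good biface}. First I would observe that the edges of $J$ themselves survive as boundary edges in $R_A'^1$, as do all edges outside the tubes being replaced; so the only edges that are genuinely \emph{new} in $R_A'^1$ are the four edges (two boundary, two interior) of each good biface $B$ produced. Hence there are two cases for $e'$: either $e'$ was already an edge of $R_A^1$, in which case we take $e = e'$ and there is nothing to prove; or $e'$ is an edge of one of the good bifaces $B$.

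In the second case, I would split further according to whether $e'$ is a boundary or an interior edge of $B$. If $e'$ is a boundary edge of $B$, then $e'$ coincides with one of the two boundary components of $\mathcal S(X)$, i.e.\ with one of the two boundary cycles of the tube $X$ inside $R_A^1$; that boundary cycle is either a single loop edge $e$ of $R_A^1$ (giving $e$ directly with $c_S(e) = c_S(e')$ and $\ell(e) = \ell(e')$) or a concatenation of several edges of $R_A^1$ along a geodesic. In the latter concatenation subcase, the boundary cycle is a geodesic closed curve homotopic to $e'$, so I would argue it has the same length as $e'$ (both are the shortest such loop, or at least $e'$ being a boundary of a \emph{good} biface is geodesic of that homotopy class) and that at least one of its constituent edges $e$ has $c_S(e) \geq c_S(e') - O(1)$ — here I would invoke the same kind of local argument as in Proposition~\ref{prop:polygons}, or note that a loop enclosing $e'$ crossing $e'$ at an interior point $x$ with $\langle x\rangle_{e'}$ large must cross one of the sub-edges at a point far from that sub-edge's endpoints. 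The boundary-edge case should contribute the ``$-5$ versus $-4$'' discrepancy in the statement (one extra unit of slack compared with Lemma~\ref{lem:geom insert edges}). If $e'$ is an \emph{interior} edge of $B$, then by Lemma~\ref{L:biface interior hapiness} we have $h_{\mathcal S(B)}(e') \leq 6$, and combined with Proposition~\ref{P:lambda and hapiness global} this forces $c_S(e')$ to be bounded by an absolute constant; I would choose the constants so that this constant is $\leq 6$, contradicting the hypothesis $c_S(e') > 6$, so this subcase simply cannot occur.

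The main obstacle I anticipate is the boundary-edge case when the tube's boundary cycle is a proper concatenation of edges of $R_A^1$: I need a clean statement that an enclosing loop for the whole geodesic cycle $e'$ yields an enclosing loop (of at most the same length) for one of the constituent segments, losing only $O(1)$ in the enclosure factor, and that the longest constituent segment carries at least a $(1 - O(1)/c_S(e'))$ fraction of the length. This is morally the ``polygon/shortcut'' mechanism of Proposition~\ref{prop:polygons} applied to the degenerate polygon obtained by cutting along $e'$, together with Lemma~\ref{prop:edges} for monotonicity under taking sub-segments; the bookkeeping of which point of which sub-edge the enclosing loop crosses, and verifying $\langle x\rangle$ is still comparably large there, is where the argument needs care. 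Everything else — the reduction to the four biface edges, the impossibility of the interior-edge case via Lemma~\ref{L:biface interior hapiness} and Proposition~\ref{P:lambda and hapiness global}, and the trivial case $e = e'$ — should be short. Finally I would assemble the three cases, taking the worst constants, to conclude $c_S(e) \geq c_S(e') - 5$ and $\ell(e) \geq (1 - 4/c_S(e'))\cdot \ell(e')$.
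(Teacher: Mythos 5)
Your case split is right that the only non-trivial alternative is when $e'$ lives in one of the new good bifaces $B$, but both of your subcases go astray, and the second one contains a genuine gap. For the boundary-edge subcase, there is no concatenation to worry about: a tube has, by definition, exactly one vertex on each boundary component, so each boundary component of $\mathcal S(X)$ is a \emph{single} loop edge of $R_A^1$, and the boundary edges of $B^1$ are literally those same loops. Hence a boundary edge of $B$ is already an edge of $R_A^1$ and falls under your trivial case $e=e'$; the machinery you were planning to build there (Proposition~\ref{prop:polygons}-style bookkeeping for a degenerate polygon) is unnecessary.

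The real problem is your interior-edge subcase, which you claim cannot occur; in fact it is exactly the case the paper has to handle. Your attempted contradiction does not work: Lemma~\ref{L:biface interior hapiness} bounds $h_{\mathcal S(B)}(e')$, the segment-happiness measured \emph{inside the small surface} $\mathcal S(B)$, which says nothing about $c_S(e')$, the enclosure in the ambient surface $S$ (a loop enclosing $e'$ in $S$ is free to leave $\mathcal S(B)$ and return). Moreover Proposition~\ref{P:lambda and hapiness global} gives an upper bound on happiness in terms of enclosure, not a bound on enclosure in terms of happiness, so it cannot be run backwards as you propose. The correct argument, which you are missing, is the following two observations. Since thin bifaces are marked inactive in \algoref{Algorithm}, any biface $B$ whose edges still belong to $R_A'^1$ must be \emph{thick}. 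And for a thick good biface with an interior edge $e'$ of enclosure $c_S(e')>6$, Proposition~\ref{prop:thick tube} produces a boundary edge $e$ of $B^1$ (hence, as above, an edge of $R_A^1$) with $c_S(e)\geq c_S(e')-5$ and $\ell(e)\geq(1-4/c_S(e'))\ell(e')$, which is precisely the claim. The $-5$ slack comes from Proposition~\ref{prop:thick tube} itself (an extra ``rebasing'' loss on top of the $-4$ from Proposition~\ref{prop:polygons}), not from the boundary-edge analysis you had in mind.
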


Lemma~\ref{lem:geom simplify tubes} is similar to Lemma~\ref{lem:geom insert edges}, its proof is deferred to Appendix~\ref{sec:app geom simplify tubes}. 

%Proposition~\ref{P:main loop geometry} follows:

\begin{proof}[Proof of Proposition~\ref{P:main loop geometry}]
Consider the active regions $R_A$ and $R'_A$ respectively at the beginning of the algorithm, and after $i$ iterations of the main loop. Assume that there is an edge $e'$ in $R_A'^1$ such that $c_S(e') > 22000 \cdot i$. During those $i$ iterations there has been $i$ applications of \algoref{InsertVertices}, $351 i$ applications of \algoref{InsertEdges}, and $350i$ applications of \algoref{SimplifyTubes}. Also $12 \cdot 351i + 5 \cdot 350i < 11000 i$. So Lemma~\ref{lem:geom insert vertices}, Lemma~\ref{lem:geom insert edges}, and Lemma~\ref{lem:geom simplify tubes} imply that there is an edge $e$ in $R_A^1$ such that $\ell(e) \geq 2^i(1 - 11000 i/c_S(e')) \ell(e') > 2^{i-1} \ell(e')$. And $\ell(e) \leq L$ because $e$ belongs to the input triangulation $T^1$.
\end{proof}

\subsection{Proof of Proposition~\ref{T:core algorithm}}\label{sec:proofofTcorealgo}

We need a last (easy) lemma (Appendix~\ref{app:bound systole}):

\begin{lemma}\label{lem:bound systole}
Let $S$ be a flat surface. Assume that $S$ contains the surface of a tube $X$. Then the systole of $\mathcal S(X)$ is greater than or equal to the systole of $S$.
\end{lemma}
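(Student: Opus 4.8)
\textbf{Proof plan for Lemma~\ref{lem:bound systole}.}
The statement compares two systoles, so the natural approach is to take a witness for the smaller quantity and push it into the other surface. Concretely, suppose for contradiction that the systole of $\mathcal S(X)$ is strictly smaller than the systole of $S$. Then there is a non-contractible geodesic closed curve $\gamma$ in $\mathcal S(X)$ whose length is smaller than the systole of $S$. Since $S$ contains $\mathcal S(X)$ (via the fixed isometry onto a sub-surface of $S$), the curve $\gamma$ can be regarded as a geodesic closed curve in $S$ of the same length. The curve $\gamma$ is still a closed curve of the same length in $S$; the only thing to check is that it is non-contractible \emph{in $S$}, which would then contradict the definition of the systole of $S$ (any non-contractible closed curve in $S$ is longer than any value below the systole).

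So the heart of the argument is: a closed curve that is non-contractible in $\mathcal S(X)$ remains non-contractible in $S$. This is where the hypothesis that $X$ is a \emph{tube} is used. A tube has surface homeomorphic to an annulus, whose fundamental group is $\mathbb Z$, generated by a core loop; a closed curve in the annulus is non-contractible precisely when it has nonzero winding number around the core, equivalently when it separates the two boundary components. The inclusion $\mathcal S(X) \hookrightarrow S$ could in principle kill this generator (if the core loop bounds a disk in $S \setminus \mathcal S(X)$). The key geometric point ruling this out is that $\mathcal S(X)$ is a full-dimensional sub-surface of the \emph{flat} surface $S$ glued along its boundary loops: I would argue that if the core loop of the annulus bounded a disk $D$ in $S$, then $D$ would have to lie on one side of the annulus, so one of the two boundary loops of $\mathcal S(X)$ would bound a disk in $S$; but that boundary loop, together with the annulus glued on the other side, would then force a topological contradiction, or — more cleanly — one invokes that the two boundary components of the annulus are non-contractible in $S$ and homotopic to each other, which is exactly the content that needs to be true for the construction to make sense. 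The cleanest route is probably to observe that the core loop of the tube is freely homotopic in $\mathcal S(X)$ to each boundary loop of $\mathcal S(X)$, and that these boundary loops survive in $S$; here one should use whatever the paper has already established about tubes appearing inside surfaces (the boundary loops of a tube $X$ sitting inside a larger surface are the curves along which cutting was performed, hence genuinely non-contractible in $S$).

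The main obstacle, then, is purely the topological claim that $\pi_1(\mathcal S(X)) \to \pi_1(S)$ is injective (or at least that the generator is not killed), rather than anything metric: once non-contractibility transfers, the length comparison is immediate. I would handle this by appealing to the way tubes arise in the algorithm — a tube is a sub-portalgon whose surface is obtained by cutting $S$ along disjoint non-contractible loops, so each boundary loop of the tube is non-contractible in $S$, and the core loop, being freely homotopic to these, is likewise non-contractible in $S$. If one prefers a self-contained topological argument, note that an annulus embedded in a surface with both boundary curves non-contractible is $\pi_1$-injective, since a compressing disk would give a contradiction with the incompressibility of the boundary curves. Either way, the length of $\gamma$ in $S$ is unchanged, so $\gamma$ is a non-contractible geodesic closed curve in $S$ shorter than the systole of $S$, a contradiction; hence the systole of $\mathcal S(X)$ is at least the systole of $S$.
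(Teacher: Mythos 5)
Your reduction is sound as far as it goes: to transfer the systole bound you need only that the non‑contractible systolic curve of $\mathcal S(X)$ (which, after passing to a simple representative freely homotopic to the core of the annulus, is also freely homotopic to each boundary loop of $\mathcal S(X)$) remains non‑contractible once regarded as a curve in $S$. But you never actually close that step, and the routes you float cannot close it. The claim that one of the boundary loops bounding a disk in $S$ would "force a topological contradiction" is false as a purely topological statement — an annulus can be embedded in a sphere, or in a disk, with both of its boundary circles bounding disks in the ambient surface, and indeed with the inclusion killing $\pi_1$. Your fallback appeals (that the boundary loops are non‑contractible "because that's how tubes arise in the algorithm," or invoking incompressibility under the hypothesis that the boundary loops are already non‑contractible) either rely on context that the abstract statement of the lemma does not provide, or assume exactly what must be proven.

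What is actually needed — and what the paper supplies — is a geometric argument that exploits flatness. If the boundary loop $e$ of the tube were contractible in $S$, then, being a simple closed curve, it would bound an embedded disk in $S$ by Epstein's theorem. That disk has flat interior (since $S$ is flat), and its boundary $e$ is a geodesic loop, geodesic everywhere except possibly at the basepoint where it closes up. Gauss–Bonnet on the disk then gives $2\pi = \int K + \int \kappa_g + (\text{turning angle at the basepoint})$, where the first two terms vanish and the turning angle is strictly less than $\pi$ in absolute value — a contradiction. You correctly identify flatness as "the key geometric point," but you never cash it in; the Gauss–Bonnet contradiction is the missing idea, and without it the argument does not go through.
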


\begin{proof}[Proof of Proposition~\ref{T:core algorithm}]
Apply \algoref{Algorithm} to $T$ with $N = \lceil \log(2+L/s) \rceil$, resulting in a triangular portalgon $R$. By Proposition~\ref{prop:bound number vertices} the number of polygons of the active region is $O(n)$ throughout the execution. So in the end $R$ has $O(n \cdot \log(2+L/s))$ triangles; Indeed each iteration of the main loop marks $O(n)$ triangles as inactive, and there are $\lceil \log(2+L/s) \rceil$ iterations of the main loop. We have two claims that immediately imply the proposition. 

Our first claim is that the algorithm takes $O(n \log^2(n) \cdot \log^2(2+L/s))$ time. Let us prove this first claim. Each application of \algoref{InsertVertices} or \algoref{InsertEdges} takes $O(n)$ time. And each application of \algoref{SimplifyTubes} or \algoref{Gardening} takes $O(n \log(n) \cdot \log(2+\Lambda/s))$ time by Proposition~\ref{P:compute good biface} and Lemma~\ref{lem:bound systole}, where $\Lambda$ is the maximum length reached by an edge of the 1-skeleton of the active region during the execution. Now let us bound $\Lambda$. If at some point an edge $e$ of the 1-skeleton of the active region is longer than $L$ then $c_S(e) = O(\log(2+L/s))$ by Proposition~\ref{P:main loop geometry}. Moreover $\ell(e)/s = O(c_S(e) \cdot n \lceil L/s\rceil^2)$ by Proposition~\ref{P:lambda and hapiness global}. This proves $\log(2+\Lambda/s) = O(\log(n) + \log(2+L/s))$, which proves the claim.

Our second claim is that in the end every edge $e$ of $R^1$ satisfies $h_S(e) = O(\log (n) \cdot \log^2(2+L/s))$. Let us prove this second claim. First observe that if $e$ is in $R_A^1$ then $c_S(e) < 22000 \log(2+L/s)$, for otherwise Proposition~\ref{P:main loop geometry} would imply $\ell(e) < 2s$, implying that no loop encloses $e$ in $S$, a contradiction. In this case $h_S(e) = O(\log(2+L/s)\cdot (\log(n) + \log(2+L/s)))$ by Proposition~\ref{P:lambda and hapiness global}, and we are done. Every other edge of $R^1$ belongs to the 1-skeleton of an inactive good biface $B$. Every boundary edge $e$ of $B^1$ is either a boundary component of $S$ or a separating loop, so $c_S(e) \leq 2$ by Lemma~\ref{P:active boundary enclosure}, and so $h_S(e) = O(\log(n) + \log(2+L/s))$ by Proposition~\ref{P:lambda and hapiness global}. Every interior edge $f$ of $B^1$ then satisfies $h_S(f) = O(\log(n) + \log(2+L/s))$ by Lemma~\ref{L:biface interior hapiness}. This proves the second claim, and the proposition.
\end{proof}

%%%%%%%%%%%%%%%%%%%%%%%%%%%%%%%%%%%%%%%%%%%%%%%%%%%%%%%%%%%%%%%%%%%%%%%%%%%%%%%%%%%%%%%%%%%%%%%%%%%%%%%%%%%%%%%%%%%%%%%%%%%%%%%%%%%%%%%%%%%%%%%%%%%%%%%%%%%%%%%%%%%%%
%%%%%%%%%%%%%%%%%%%%%%%%%%%%%%%%%%%%%%%%%%%%%%%%%%%%%%%%%%%%%%%%%%%%%%%%%%%%%%%%%%%%%%%%%%%%%%%%%%%%%%%%%%%%%%%%%%%%%%%%%%%%%%%%%%%%%%%%%%%%%%%%%%%%%%%%%%%%%%%%%%%%%
%%%%%%%%%%%%%%%%%%%%%%%%%%%%%%%%%%%%%%%%%%%%%%%%%%%%%%%%%%%%%%%%%%%%%%%%%%%%%%%%%%%%%%%%%%%%%%%%%%%%%%%%%%%%%%%%%%%%%%%%%%%%%%%%%%%%%%%%%%%%%%%%%%%%%%%%%%%%%%%%%%%%%
%%%%%%%%%%%%%%%%%%%%%%%%%%%%%%%%%%%%%%%%%%%%%%%%%%%%%%%%%%%%%%%%%%%%%%%%%%%%%%%%%%%%%%%%%%%%%%%%%%%%%%%%%%%%%%%%%%%%%%%%%%%%%%%%%%%%%%%%%%%%%%%%%%%%%%%%%%%%%%%%%%%%%

\bibliographystyle{plainurl}
\bibliography{bib}

@article{masur2006ergodic,
  title={Ergodic theory of translation surfaces},
  author={Masur, Howard},
  journal={Handbook of dynamical systems},
  volume={1},
  pages={527--547},
  year={2006},
  publisher={Elsevier Amsterdam}
}

@article{zorich2006flat,
  title={Flat surfaces},
  author={Zorich, Anton},
  journal={arXiv preprint math/0609392},
  year={2006}
}

@article{hubert2006introduction,
  title={An introduction to Veech surfaces},
  author={Hubert, Pascal and Schmidt, Thomas A},
  journal={Handbook of dynamical systems},
  volume={1},
  number={200601},
  year={2006},
  publisher={Elsevier Amsterdam}
}

@article{gutkin2000affine,
  title={Affine mappings of translation surfaces: geometry and arithmetic},
  author={Gutkin, Eugene and Judge, Chris},
  year={2000},
  journal={Duke Mathematical Journal}
}

@article{milnor1964betti,
  title={On the Betti numbers of real varieties},
  author={Milnor, John},
  journal={Proceedings of the American Mathematical Society},
  volume={15},
  number={2},
  pages={275--280},
  year={1964},
  publisher={JSTOR}
}

@book{armstrong2013basic,
  title={Basic Topology},
  author={Armstrong, Mark Anthony},
  year={2013},
  publisher={Springer Science \& Business Media}
}

@article{blum1989theory,
  title={On a Theory of Computation and Complexity over the Real Numbers: {NP}-Completeness, Recursive Functions and Universal Machines},
  author={Blum, Lenore and Shub, Mike and Smale, Steve},
  journal={Bulletin of the American Mathematical Society},
  volume={21},
  number={1},
  pages={1--46},
  year={1989}
}

@article{bobenko2007discrete,
  title={A Discrete {L}aplace--{B}eltrami Operator for Simplicial Surfaces},
  author={Bobenko, Alexander I. and Springborn, Boris A.},
  journal={Discrete \& Computational Geometry},
  volume={38},
  number={4},
  pages={740--756},
  year={2007},
  publisher={Springer}
}

@article{sharp2021geometry,
  title={Geometry processing with intrinsic triangulations},
  author={Sharp, Nicholas and Gillespie, Mark and Crane, Keenan},
  journal={SIGGRAPH'21: ACM SIGGRAPH 2021 Courses},
  year={2021}
}

@book{de2000computational,
  title={Computational geometry: algorithms and applications},
  author={De Berg, Mark},
  year={2000},
  publisher={Springer Science \& Business Media}
}

@article{sharp2019navigating,
  title={Navigating intrinsic triangulations},
  author={Sharp, Nicholas and Soliman, Yousuf and Crane, Keenan},
  journal={ACM Transactions on Graphics (TOG)},
  volume={38},
  number={4},
  pages={1--16},
  year={2019},
  publisher={ACM New York, NY, USA}
}

@book{diestel
, author =        "Diestel, Reinhard"
, title =         {Graph Theory}
, publisher =     "Springer-Verlag"
, year =          "2000"
}

@book{do1992riemannian,
  title={Riemannian Geometry},
  author={Do Carmo, Manfredo Perdigao},
  volume={2},
  year={1992},
  publisher={Springer}
}

@article{e-c2mi-66
, author =        "Epstein, David B.~A."
, title =         {Curves on 2-Manifolds and Isotopies}
, journal =       "Acta Mathematica"
, volume =        "115"
, pages =         "83--107"
, year =          "1966"
}

@misc{bworld,
  author = {Jeff Erickson},
  title = {Ernie’s 3{D} Pancakes: Shortest paths on {PL} surfaces},
  howpublished = "\url{https://3dpancakes.typepad.com/ernie/2006/03/shortest_paths_.html}",
  year = {2006}
}

@inproceedings{erickson2012tracing,
  title={Tracing Compressed Curves in Triangulated Surfaces},
  author={Erickson, Jeff and Nayyeri, Amir},
  booktitle={Proceedings of the twenty-eighth annual symposium on Computational geometry},
  pages={131--140},
  year={2012}
}

@article{erickson2022smoothing,
  title={Smoothing the Gap between {NP} and {ER}},
  author={Erickson, Jeff and Van Der Hoog, Ivor and Miltzow, Tillmann},
  journal={SIAM Journal on Computing},
  pages={FOCS20--102},
  year={2022},
  publisher={SIAM}
}

@incollection{fortune2017voronoi,
  title={Voronoi Diagrams and Delaunay Triangulations},
  author={Fortune, Steven},
  booktitle={Handbook of discrete and computational geometry},
  pages={705--721},
  year={2017},
  publisher={Chapman and Hall/CRC}
}

@book{aurenhammer2013voronoi,
  title={Voronoi diagrams and Delaunay triangulations},
  author={Aurenhammer, Franz and Klein, Rolf and Lee, Der-Tsai},
  year={2013},
  publisher={World Scientific Publishing Company}
}

@inproceedings{heckbert1997survey,
  title={Survey of Polygonal Surface Simplification Algorithms},
  author={Heckbert, Paul S and Garland, Michael},
  year={1997},
  booktitle={{Proceedings of the 24th annual conference on Computer graphics and interactive techniques}},
  organization={Siggraph}
}

@article{indermitte2001voronoi,
  title={Voronoi Diagrams on Piecewise Flat Surfaces and an Application to Biological Growth},
  author={Indermitte, Claude and Liebling, Th M and Troyanov, Marc and Cl{\'e}men{\c{c}}on, Heinz},
  journal={Theoretical Computer Science},
  volume={263},
  number={1-2},
  pages={263--274},
  year={2001},
  publisher={Elsevier}
}

@article{liu2023surface,
  title={Surface Simplification Using Intrinsic Error Metrics},
  author={Liu, Hsueh-Ti Derek and Gillespie, Mark and Chislett, Benjamin and Sharp, Nicholas and Jacobson, Alec and Crane, Keenan},
  journal={ACM Transactions on Graphics},
  volume={42},
  number={4},
  year={2023},
  publisher={Association for Computing Machinery}
}

@article{liu2010construction,
  title={Construction of Iso-contours, Bisectors, and Voronoi Diagrams on Triangulated Surfaces},
  author={Liu, Yong-Jin and Chen, Zhanqing and Tang, Kai},
  journal={IEEE Transactions on Pattern Analysis and Machine Intelligence},
  volume={33},
  number={8},
  pages={1502--1517},
  year={2010},
  publisher={IEEE}
}

@article{liu2015efficient,
  title={Efficient Construction and Simplification of Delaunay Meshes},
  author={Liu, Yong-Jin and Xu, Chun-Xu and Fan, Dian and He, Ying},
  journal={ACM Transactions on Graphics (TOG)},
  volume={34},
  number={6},
  pages={1--13},
  year={2015},
  publisher={ACM New York, NY, USA}
}

@inproceedings{portalgons,
  author =	{L\"{o}ffler, Maarten and Ophelders, Tim and Silveira, Rodrigo I. and Staals, Frank},
  title =	{Shortest Paths in Portalgons},
  booktitle =	{39th International Symposium on Computational Geometry (SoCG 2023)},
  pages =	{48:1--48:16},
  ISBN =	{978-3-95977-273-0},
  ISSN =	{1868-8969},
  year =	{2023},
  volume =	{258},
  URN =		{urn:nbn:de:0030-drops-178980},
  doi =		{10.4230/LIPIcs.SoCG.2023.48},
  annote =	{Keywords: Polyhedral surfaces, shortest paths, geodesic distance, Delaunay triangulation}
}

@article{mitchell1987discrete,
  title={The Discrete Geodesic Problem},
  author={Mitchell, Joseph S.B. and Mount, David M. and Papadimitriou, Christos H.},
  journal={SIAM Journal on Computing},
  volume={16},
  number={4},
  pages={647--668},
  year={1987},
  publisher={SIAM}
}

@book{mount1985voronoi,
  title={{V}oronoi Diagrams on the Surface of a Polyhedron},
  author={Mount, David M.},
  year={1985},
  publisher={University of Maryland}
}

@article{sharp2020you,
  title={You Can Find Geodesic Paths in Triangle Meshes by Just Flipping Edges},
  author={Sharp, Nicholas and Crane, Keenan},
  journal={ACM Transactions on Graphics (TOG)},
  volume={39},
  number={6},
  pages={1--15},
  year={2020},
  publisher={ACM New York, NY, USA}
}

@book{s-ctcgt-93
, author =        "Stillwell, John"
, title =         "Classical topology and combinatorial group theory"
, edition =       "second"
, publisher =     "Springer-Verlag"
, address =       "New York"
, year =          "1993"
}

@article{takayama2022compatible,
  title={Compatible Intrinsic Triangulations},
  author={Takayama, Kenshi},
  journal={ACM Transactions on Graphics (TOG)},
  volume={41},
  number={4},
  pages={1--12},
  year={2022},
  publisher={ACM New York, NY, USA}
}

@article{chen1996shortest,
  title={Shortest paths on a polyhedron, Part {I}: Computing shortest paths},
  author={Chen, Jindong and Han, Yijie},
  journal={International Journal of Computational Geometry \& Applications},
  volume={6},
  number={02},
  pages={127--144},
  year={1996},
  publisher={World Scientific}
}

@article{kirkpatrick1983optimal,
  title={Optimal search in planar subdivisions},
  author={Kirkpatrick, David},
  journal={SIAM Journal on Computing},
  volume={12},
  number={1},
  pages={28--35},
  year={1983},
  publisher={SIAM}
}

@book{shewchuk1997delaunay,
  title={Delaunay refinement mesh generation},
  author={Shewchuk, Jonathan Richard},
  year={1997},
  publisher={Carnegie Mellon University}
}

@article{ruppert1995delaunay,
  title={A Delaunay refinement algorithm for quality 2-dimensional mesh generation},
  author={Ruppert, Jim},
  journal={Journal of algorithms},
  volume={18},
  number={3},
  pages={548--585},
  year={1995},
  publisher={Elsevier}
}

@article{shewchuk2002delaunay,
  title={Delaunay refinement algorithms for triangular mesh generation},
  author={Shewchuk, Jonathan Richard},
  journal={Computational geometry},
  volume={22},
  number={1-3},
  pages={21--74},
  year={2002},
  publisher={Elsevier}
}

\appendix

%%%%%%%%%%%%%%%%%%%%%%%%%%%%%%%%%%%%%%%%%%%%%%%%%%%%%%%%%%%%%%%%%%%%%%%%%%%%%%%%%%%%%%%%%%%%%%%%%%%%%%%%%%%%%%%%%%%%%%%%%%%%%%%%%%%%%%%%%%%%%%%%%%%%%%%%%%%%%%%%%%%%%
%%%%%%%%%%%%%%%%%%%%%%%%%%%%%%%%%%%%%%%%%%%%%%%%%%%%%%%%%%%%%%%%%%%%%%%%%%%%%%%%%%%%%%%%%%%%%%%%%%%%%%%%%%%%%%%%%%%%%%%%%%%%%%%%%%%%%%%%%%%%%%%%%%%%%%%%%%%%%%%%%%%%%
%%%%%%%%%%%%%%%%%%%%%%%%%%%%%%%%%%%%%%%%%%%%%%%%%%%%%%%%%%%%%%%%%%%%%%%%%%%%%%%%%%%%%%%%%%%%%%%%%%%%%%%%%%%%%%%%%%%%%%%%%%%%%%%%%%%%%%%%%%%%%%%%%%%%%%%%%%%%%%%%%%%%%
%%%%%%%%%%%%%%%%%%%%%%%%%%%%%%%%%%%%%%%%%%%%%%%%%%%%%%%%%%%%%%%%%%%%%%%%%%%%%%%%%%%%%%%%%%%%%%%%%%%%%%%%%%%%%%%%%%%%%%%%%%%%%%%%%%%%%%%%%%%%%%%%%%%%%%%%%%%%%%%%%%%%%

\section{Appendix of Section~\ref{sec:bifaces}}\label{app:bifaces}

%\subsection{Proof of Lemma~\ref{L:biface interior hapiness}}\label{app:biface interior hapiness}

\begin{proof}[Proof of Lemma~\ref{L:biface interior hapiness}]
Among the two interior edges of $B^1$, let $f$ be a shortest one. Let $g \neq f$ be the other interior edge of $B^1$. Let $p$ be a shortest path in $\mathcal S(B)$. The relative interior $\mathring p$ of $p$ cannot intersect the relative interior of $f$ twice for those intersections would be crossings and $p$ and $f$ are both shortest paths because $B$ is good. So $\mathring p$ intersects $f$ less than four times. Then $\mathring p$ cannot intersect the relative interior of $g$ five times, for those intersections would be crossings, and $\mathring p$ would intersect $f$ in-between any two consecutive crossings with the relative interior of $g$. Altogether $p$ intersects $f$ and $g$ at most six times each.
\end{proof}

%\subsection{Proof of Proposition~\ref{P:compute good biface}}\label{app:computing good biface}

The rest of this section is dedicated to the proof of Proposition~\ref{P:compute good biface}, which we restate for convenience:

\lemcomputebiface*

Proposition~\ref{P:compute good biface} is similar to but different from a result of L\"{o}ffler, Ophelders, Silveira, and Staals~\cite[Theorem~45]{portalgons} (building upon a ray shooting algorithm of Erickson and Nayyeri~\cite{erickson2012tracing}), in which the authors provide an algorithm to transform a biface into a portalgon of bounded happiness, and of bounded combinatorial complexity. They extend their result from bifaces to portalgons $X$ such that the dual graph of $X^1$ in $\mathcal S(X)$ has at most one simple cycle, but unfortunately this does not include tubes. We extend their result to tubes to prove Proposition~\ref{P:compute good biface}, reusing some of ideas developed in the core of the paper. 

We need a few lemmas. The following is a corollary of~\cite[Theorem~45]{portalgons}:

\begin{lemma}\label{T:Make a biface good}
Let $B$ be a biface of happiness $h$. One can compute in $O(1+ \log h)$ time a good biface whose surface is that of $B$.
\end{lemma}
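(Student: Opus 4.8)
\textbf{Proof plan for Lemma~\ref{T:Make a biface good}.}

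The statement asks to turn an arbitrary biface $B$ of happiness $h$ into a \emph{good} biface on the same surface, in $O(1+\log h)$ time. The plan is to first invoke \cite[Theorem~45]{portalgons} (the cited black box) to replace $B$ by a portalgon $B'$ of $O(1)$ triangles, of bounded happiness, whose surface is $\mathcal S(B)$, in time $O(1+\log h)$. Because $\mathcal S(B')$ is homeomorphic to an annulus, with one vertex on each boundary component, $B'$ is a tube of constant combinatorial complexity; and since a biface has bounded happiness after this normalization, the diameter-to-systole ratio of $\mathcal S(B')$ is also bounded in terms of a constant (this is essentially the content of the happiness bound). So from here on all remaining work is on an object of $O(1)$ size.

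Next I would turn the constant-size tube $B'$ into a biface, and then into a \emph{good} biface, by purely local surgery. First cut $\mathcal S(B')$ along a shortest non-contractible loop $e$ based at a vertex (or, if no vertex works, along a shortest geodesic arc between the two boundary vertices); cutting an annulus this way yields a Euclidean quadrilateral $Q$ with the two boundary loops as opposite sides and two copies of $e$ as the other opposite sides. Re-gluing gives a biface $B''$ with $e$ an interior edge. To make $B''$ good I would adjust the two interior edges so that one of them is a shortest path in the surface and the other is the shorter of the two diagonals of $Q$: this is a bounded number of edge-flip / retriangulation moves inside $Q$, each realizable on the real RAM in $O(1)$ time, because $Q$ has $O(1)$ vertices. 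The choice of the shortest interior edge can be read off by comparing $O(1)$ geodesic lengths, each of which is computable in $O(1)$ time on a constant-complexity portalgon. The whole surgery phase costs $O(1)$.

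The main obstacle I anticipate is not the surgery but ensuring that the interior edge we designate really is a \emph{shortest path in $\mathcal S(B)$} (and the diagonal condition holds), rather than merely shortest among some finite local family. Because the happiness of $B'$ is bounded, any shortest path crosses each triangle $O(1)$ times, so a shortest non-contractible loop — and a shortest geodesic joining the two boundary vertices — has a combinatorial description of $O(1)$ length and can be found by exhaustive search over $O(1)$ candidate edge sequences; this is where we must quote the happiness bound to control the search. Once such a shortest loop $e$ is in hand and used as the cut, the quadrilateral $Q$ is fixed and the "shortest diagonal" condition is an $O(1)$ comparison. So the argument reduces to: (i) apply \cite[Theorem~45]{portalgons} to get constant complexity and bounded happiness in $O(1+\log h)$ time; (ii) use the bounded happiness to enumerate $O(1)$ candidate shortest loops/arcs and pick a genuine shortest one; (iii) cut along it and perform $O(1)$ local flips to satisfy the diagonal condition, obtaining a good biface. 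Total time $O(1+\log h)$, as claimed.
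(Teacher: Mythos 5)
Your proposal follows essentially the same route as the paper's proof: invoke \cite[Theorem~45]{portalgons} to replace $B$ by a constant-size, bounded-happiness portalgon of the same surface in $O(1+\log h)$ time; use the bounded happiness and $O(1)$ combinatorial complexity to find, by exhaustive search over $O(1)$ candidate edge sequences, a genuine shortest geodesic arc between the two boundary vertices; cut along this arc to obtain a flat quadrilateral, reduce it to its four corner vertices, glue back to get a biface with that arc as an interior edge, and choose the second interior edge to be the shorter diagonal. This is exactly the paper's construction.

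One slip worth correcting: your primary choice, cutting ``along a shortest non-contractible loop $e$,'' would not yield a quadrilateral — cutting an annulus along a simple non-contractible loop either leaves it as an annulus or splits it into two annuli, never a disk. Only the parenthetical alternative (the shortest geodesic arc between the two boundary vertices) produces the quadrilateral $Q$ that the rest of your argument relies on, and that is what the paper does. Also, the aside that bounded happiness implies bounded diameter-to-systole ratio is not justified (a long thin concatenation of bifaces can have $O(1)$ happiness and arbitrarily large diameter/systole); fortunately you never actually use this claim — what controls the running time is the $O(1)$ combinatorial complexity together with $O(1)$ happiness, which you invoke correctly to bound the exhaustive search.
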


\begin{proof}
By the result of L\"{o}ffler, Ophelders, Silveira, and Staals~\cite[Theorem~45]{portalgons} we can compute in $O(1+ \log h)$ time a portalgon $T$, whose surface is $\mathcal S(B)$, whose happiness is $O(1)$, and whose 1-skeleton $T^1$ has $O(1)$ edges. Without loss of generality the two vertices $b_0$ and $b_1$ of $B^1$ are also vertices of $T^1$, and we know which vertices of the polygons of $T$ correspond to $b_0$ and $b_1$. 

We now describe how to compute, in constant time, from $T$, a good biface of $\mathcal S(T)$. The key thing is that we can exploit the fact that $T$ has $O(1)$ combinatorial complexity and happiness to compute by exhaustive search. First compute, in constant time, by exhaustive search, a shortest path $q$ between $b_0$ and $b_1$ in $\mathcal S(T)$: represent $q$ by its pre-image in the polygons of $T$. Then cut the polygons of $T$ along the pre-image of $q$: every time a polygon is cut in two along a segment $a$, the two edges issued of $a$ are not matched in the resulting portalgon (the goal is to cut the surface of $T$, not just changing $T$). Consider the resulting portalgon $D$. Then $\mathcal S(D)$ is homeomorphic to a closed disk. The two endpoints $b_0$ and $b_1$ of $q$ become a set $V$ of four vertices of $D^1$ that lie on the boundary of $\mathcal S(D)$. Every singularity of $\mathcal S(D)$ lies on the boundary of $\mathcal S(D)$ and belongs to $V$. Now replace $D$ by a triangular portalgon $D'$, of the same surface, and such that the vertex set of $D'^1$ is exactly $V$. This can be done for example by iteratively inserting vertex-to-vertex arcs in the faces of $D^1$ to make $D^1$ a triangulation, and by deleting a vertex $v$ of $D^1$ and its incident edges. When $v$ lies on the boundary of $\mathcal S(D)$, only the edges whose relative interior is included in the interior of $\mathcal S(D)$ are deleted. In the end, identify back the occurrences of $q$ on the boundary of $\mathcal S(D')$, by matching the two corresponding sides of polygons in $D'$, thereby obtaining a biface $B'$ of $\mathcal S(B)$ such that $q$ is an interior edge of $B'$. Change the other interior edge of $B'$ if necessary so that $B'$ is good.
\end{proof}

Consider $k \geq 1$ bifaces $B_1, \dots, B_k$. For every $1 \leq i \leq k$ let $e_i$ and $f_i$ be the two sides of triangles of $B_i$ that correspond to the boundary of $\mathcal S(B_i)$. If $i < k$, assume $\ell(e_i) = \ell(f_{i+1})$, and match $e_i$ with $f_{i+1}$. The resulting triangular portalgon $T$ is a \emphdef{concatenation} of the bifaces $B_1, \dots, B_k$. Note that $T$ is not necessarily a tube, for the vertices of $T^1$ in the interior of $\mathcal S(T)$ may be singularities.

\begin{lemma}\label{L:concat of bifaces is happy}
Let $T$ be the concatenation of two good bifaces. If $T$ is a tube, then one can compute in constant time a good biface whose surface is that of $T$.
\end{lemma}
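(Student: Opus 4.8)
The plan is to reduce the concatenation $T$ of two good bifaces to a single biface by collapsing the middle boundary loop, and then invoke Lemma~\ref{T:Make a biface good} --- provided we can control the happiness of the resulting biface well enough. Concretely, write $T$ as the concatenation of good bifaces $B_1$ and $B_2$ glued along a boundary loop $\ell$ (the common boundary edge). Since $T$ is a tube, $\ell$ lies in the interior of $\mathcal S(T)$ and is not a singularity, so its total angle is $2\pi$. The first step is to produce a biface structure on $\mathcal S(T)$: take the interior edge $q$ of $B_1$ that is a shortest path in $\mathcal S(B_1)$, and similarly the interior shortest-path edge of $B_2$; concatenate suitable representatives of these (or rather, take a shortest path between the two boundary vertices of $\mathcal S(T)$, which we can describe explicitly from the two good bifaces since each has constant combinatorial complexity) to get a vertex-to-vertex arc $p$ in $\mathcal S(T)$. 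Cutting $\mathcal S(T)$ along $p$ yields a disk, which can be re-triangulated and then reglued as a biface $B$, exactly as in the proof of Lemma~\ref{T:Make a biface good}.

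The crux, and the main obstacle, is bounding the happiness of this intermediate biface $B$ (equivalently, $\mathcal S(T)$ as presented by $B$) by a constant, so that Lemma~\ref{T:Make a biface good} runs in $O(1)$ time. A shortest path $\sigma$ in $\mathcal S(T)$ visits the image of each triangle of $B$ some number of times; to bound this we want to bound how often $\sigma$ crosses the few edges of $B^1$. The boundary edges of $B^1$ are the two boundary loops of $\mathcal S(T)$, which are also boundary edges of $B_1$ and $B_2$ respectively; a shortest path of $\mathcal S(T)$ restricted to $\mathcal S(B_i)$ is a concatenation of shortest paths of $\mathcal S(B_i)$ (since subpaths of shortest paths are shortest), so Lemma~\ref{L:biface interior hapiness} and the good-biface structure of $B_i$ bound the number of crossings of $\sigma$ with the interior edges of $B_i^1$, hence with $\ell$, by a constant. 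This controls how $\sigma$ decomposes into pieces living alternately in $\mathcal S(B_1)$ and $\mathcal S(B_2)$. Within each piece we again use that $B_i$ is good: each piece is a shortest path in $\mathcal S(B_i)$ (being a subpath of the geodesic $\sigma$ that stays in $\mathcal S(B_i)$), so by Lemma~\ref{L:biface interior hapiness} it crosses each interior edge of $B_i^1$ at most six times, and it crosses the boundary edges $O(1)$ times as well. Summing over the $O(1)$ pieces gives that $\sigma$ crosses each edge of the biface $B$ we constructed $O(1)$ times, so $B$ has happiness $O(1)$.

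With happiness bounded, Lemma~\ref{T:Make a biface good} applied to $B$ computes, in $O(1+\log O(1)) = O(1)$ time, a good biface whose surface is $\mathcal S(B) = \mathcal S(T)$, which is what we want. For the time bound of the whole construction: each $B_i$ has $O(1)$ triangles (two triangles, by definition of biface), so $T$ has $O(1)$ triangles; computing the arc $p$, cutting, re-triangulating the resulting disk on its $O(1)$ boundary vertices, and regluing are all $O(1)$ operations; and the final call to Lemma~\ref{T:Make a biface good} is $O(1)$. Hence the total running time is constant. The one place requiring care in the write-up is verifying that the explicit arc $p$ obtained by concatenating interior edges of $B_1$ and $B_2$ (or a nearby shortest path found by exhaustive search in the constant-complexity portalgon $T$) is genuinely a simple vertex-to-vertex arc cutting $\mathcal S(T)$ into a disk --- this uses that $\mathcal S(T)$ is an annulus with one vertex on each boundary component, so any arc between the two boundary vertices whose relative interior is interior to $\mathcal S(T)$ and which crosses no boundary loop is non-separating only in the sense that cutting along it yields a disk; since $T$ has constant complexity and happiness, such an arc (and in fact a shortest one) can simply be found by brute-force search, sidestepping any delicate argument.
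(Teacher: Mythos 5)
Your overall plan is sound and matches the paper's: bound the happiness of $T$ (or of a biface presentation of $\mathcal S(T)$) by a constant, then invoke (the technique of) Lemma~\ref{T:Make a biface good}. The final time-complexity bookkeeping is also fine. But there is a genuine gap in the central step, bounding the number of crossings of a shortest path $\sigma$ of $\mathcal S(T)$ with the middle loop $\ell$.

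You argue that since the restriction of $\sigma$ to $\mathcal S(B_i)$ is a concatenation of shortest paths of $\mathcal S(B_i)$, Lemma~\ref{L:biface interior hapiness} bounds the crossings of $\sigma$ with the interior edges of $B_i^1$ by a constant, and hence also the crossings with $\ell$. This is circular. Lemma~\ref{L:biface interior hapiness} gives ``at most six'' \emph{per shortest path of $\mathcal S(B_i)$}; the restriction of $\sigma$ to $\mathcal S(B_i)$ is a union of however many pieces $\sigma$ has in $\mathcal S(B_i)$, which is essentially half the number of crossings of $\sigma$ with $\ell$. So the total number of interior-edge crossings you get out of the lemma is $6$ times the quantity you were trying to bound, not a constant. (And $\sigma$ as a whole is not a shortest path of $\mathcal S(B_i)$, so you cannot apply the lemma to $\sigma$ directly either.) Bounding crossings with $\ell$ needs a genuinely new argument, and the paper supplies one: if $\sigma$ crossed the relative interior of $\ell$ three times, then the arc of $\sigma$ between two consecutive crossings lying on the side of $\ell$ where the angle at the base vertex is $\geq \pi$ would cut off a flat disk whose corner angles violate Gauss--Bonnet. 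Once you have the crossings with $\ell$ bounded by~$2$, the rest of your argument (each remaining piece of $\sigma$ lies in one $\mathcal S(B_i)$ and is a shortest path there, so Lemma~\ref{L:biface interior hapiness} bounds its crossings with interior edges) goes through as in the paper.
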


\begin{proof}
Consider a shortest path $p$ in $\mathcal S(T)$, and the loop edge $e$ of $T^1$ that lies in the interior of $\mathcal S(T)$, in-between the surfaces of the two bifaces. We claim that the relative interior of $p$ does not cross the relative interior of $e$ more than twice. By contradiction assume that $p$ crosses the relative interior of $e$ three times. There is a connected component $S_0$ of $\mathcal S(T) \setminus e$ whose angle at the base vertex of $e$ is greater than or equal to $\pi$. Some portion $p'$ of $p$ enters $S_0$ and then leaves $S_0$ by two of the three crossings between $p$ and $e$. One of the two connected components of $S_0 \setminus p'$, say $S_1$, is homeomorphic to an open disk. By construction $S_1$ has at most three angles distinct from $\pi$: at the two points where $p$ crosses $e$, and possibly at the base vertex of $e$. By the Gauss-Bonnet theorem, there are exactly three such angles, not less, and they are all smaller than $\pi$. One of them is the incidence of $S_0$ and the base vertex of $e$. This is a contradiction. This proves the claim.

Using the claim immediately the intersection of $p$ and $e$ has $O(1)$ connected components, so $p$ writes as a concatenation of $k = O(1)$ paths $p_1, \dots, p_k$ such that for every $1 \leq i \leq k$ the path $p_i$ is either included in $e$ or its relative interior is disjoint from $e$. Every edge $f \neq e$ of $T^1$ intersects $p_i$ less than 7 times: if $f$ is included in the boundary of $\mathcal S(T)$ then $f$ intersects $p_i$ at most once, otherwise Lemma~\ref{L:biface interior hapiness} applies. So $f$ intersects $p$ less than $O(1)$ times. We proved that the segment-happiness of $T$ is $O(1)$. Then the happiness of $T$ is also $O(1)$ because the polygons of $T$ are all triangles. So we can compute a good biface whose surface is that of $T$ in constant time, exactly as in the proof of Lemma~\ref{T:Make a biface good}.
\end{proof}

We will use the following simple consequence of Euler's formula, similar to Lemma~\ref{lem:euler general}: 

\begin{lemma}\label{L:euler annulus}
Let $S$ be the topological annulus. Let $Y$ be a topological triangulation of $S$ that has only one vertex on each boundary component of $S$. Among the vertices of $Y$ that lie in the interior of $S$ and are not incident to any loop edge, at least half have degree smaller than or equal to ten.
\end{lemma}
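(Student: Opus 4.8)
The plan is to run the classical Euler-characteristic argument in the style of Kirkpatrick~\cite[Lemma~3.2]{kirkpatrick1983optimal}, adapted to the annulus and with a little extra care for loop edges. First I would set up the count. Writing $V,E,F$ for the numbers of vertices, edges and triangles of $Y$, using $\chi(S)=0$ and the fact that each of the two boundary components, carrying a single vertex, is a single loop edge, double counting $(\text{triangle},\text{side})$-incidences gives $3F = 2(E-2)+2$, and then $V-E+F=0$ forces $E = 3V-2$. Hence $\sum_v \deg(v) = 2E = 6V-4$, that is $\sum_v(6-\deg(v)) = 4$, where a loop at $v$ adds $2$ to $\deg(v)$ as usual. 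I would then split the vertices of $Y$ into the two boundary vertices $b_0,b_1$, the set $A$ of interior vertices incident to a loop edge, and the set $B$ of interior vertices incident to no loop edge (the set in the statement, say of size $b$); and record three lower bounds on degrees: every vertex has degree $\ge 2$ (immediate); each $b_i$ has degree $\ge 3$ (its boundary loop already contributes $2$, and the triangle of $Y$ on the interior side of that loop forces one more edge at $b_i$); and every $v\in A$ has degree $\ge 6$.

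The last bound is the crux, and I would get it from a local analysis around $v\in A$. Fixing a loop edge $\ell$ at $v$ (which is interior, since $v$ is), its two ends split the cyclic sequence of edge-ends at $v$ into two arcs; the claim is that each arc contains at least two further edge-ends, so that $\deg(v)\ge 2+2+2 = 6$. If some arc contained at most one further edge-end, a short case analysis on the triangle(s) meeting $\ell$ along that arc — whose sides are then forced to be loops at $v$ or to be repeated, and whose corners are forced to be $v$ — produces a configuration in which some edge bounds at least three triangle-sides, a contradiction. I expect this to be the main obstacle: it is visually clear but genuinely needs care because $Y$ is only a \emph{topological} triangulation and may contain multi-edges or triangles glued to themselves along an edge, and the precise value $6$ (rather than $4$ or $5$) is exactly what makes the total contribution of $A$ to the surplus $\sum_v(6-\deg(v))$ non-positive, which is what the final count requires.

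Finally I would close with a Markov-type computation. Let $h$ be the number of vertices of $B$ of degree at least $11$. From $\sum_v(6-\deg(v)) = 4$, the bound $\sum_{v\in A}(6-\deg v)\le 0$, the estimates $6-\deg b_i\le 3$ and $6-\deg v\le 4$ for the $b-h$ remaining vertices of $B$, and $6-\deg v\le -5$ for the $h$ high-degree ones, I obtain $4 \le 6 + 4(b-h) - 5h$, that is $9h \le 4b+2$. A direct check — with the handful of small values of $b$ treated separately — then gives $h\le b/2$, so at least $b-h\ge b/2$ vertices of $B$ have degree at most $10$, which is the assertion.
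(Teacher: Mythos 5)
Your route is genuinely different from the paper's. The paper first reduces to the case where no interior vertex is incident to a loop edge, by cutting $S$ open along interior loop edges and recursing; after that reduction the set $A$ is empty and the Euler count is immediate. You instead keep $A$ and try to neutralize its contribution to $\sum_v (6-\deg v)$ by showing $\deg v \geq 6$ for every $v \in A$. This degree bound is the crux, and it is the gap: it is neither proven by your sketch nor true for a general topological triangulation. Take an interior loop $\ell$ based at $v$ that bounds a disk $D$, and let $D$ consist of a single \emph{fold} triangle, i.e.\ a triangle whose two sides incident to its third vertex $w$ are identified to one edge $a$ from $v$ to $w$. Then the arc of the link of $v$ on the $D$-side of $\ell$ contains only one further edge-end (that of $a$), not two, so $\deg v$ can be as low as $5$. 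Your parenthetical case analysis does not close this: in the fold configuration the repeated edge $a$ bounds exactly two triangle-sides (both belonging to the same triangle), which is perfectly allowed, and the third corner of that triangle is at $w\neq v$, not at $v$. Since you yourself note that the value $6$ (rather than $4$ or $5$) is ``exactly what the final count requires,'' the argument does not go through without the paper's cut-and-recurse reduction (or some replacement for it).

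Two secondary issues, both of which would also need attention but are less central. First, the vertex $w$ above lies in $B$ and has degree $1$, so the blanket bound $\deg v \geq 2$ fails; the correct generic estimate is $6-\deg v \leq 5$, which tightens your Markov step further. Second, the boundary-vertex bound can and should be $\deg b_i \geq 4$, as in the paper, because the triangle adjacent to a boundary loop cannot be a fold triangle without disconnecting the surface; your $\geq 3$ is weaker and gives you less slack. These two points are repairable, but the claim $\deg v \geq 6$ for $v\in A$ is not, and without it (or without the paper's reduction to $A=\emptyset$) the final inequality does not follow.
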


\begin{proof}
We may assume without loss of generality that no vertex of $Y$ in the interior of $S$ is incident to a loop edge, by cutting $S$ open at an interior loop edge and recursing on the resulting two triangulations otherwise. Euler's formula gives $m - m_1 + m_2 = 0$, where $m$, $m_1$, and $m_2$ count respectively the vertices, edges, and faces of $Y^1$. Double counting gives $3m_2 = 2m_1 -2$ and $\sum_{v} \deg v = 2m_1$, where the sum is over the vertices $v$ of $Y$. Then $\sum_v (6 - \deg v)  = 4$. The two vertices of $Y$ on the boundary of $S$ have degree greater than or equal to four. So in the interior of $S$ every vertex of degree greater than ten must be compensated by a vertex of degree smaller than or equal to ten.
\end{proof}

Now we start proving Proposition~\ref{P:compute good biface}. In particular we fix a tube $X$ with $n$ triangles, whose sides are all smaller than some $L > 0$. 

\begin{lemma}\label{L:concatenation of bifaces}
One can compute in $O(n \log n)$ time a concatenation of less than $3n$ bifaces, whose surface is that of $X$, whose edges are all shorter than $(3n)^c L$ with $c = \log_{14/13}(3) < 15$.
\end{lemma}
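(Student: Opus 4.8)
The plan is to express $X$ as a concatenation of bifaces recursively, alternating between two moves: \emph{slicing} $X$ along an interior loop edge — which splits $X$ into two smaller tubes, the answer for $X$ being the concatenation of the answers for the two pieces — and, once a tube has no interior loop edge, \emph{deleting} a constant fraction of its interior vertices and re-triangulating, repeated until the tube is a single biface.

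First I would record a rigidity fact: in a flat triangulated surface every loop edge $\gamma$ lying in the interior is non-contractible, since a contractible such $\gamma$ would bound a flat disk $D$ whose whole boundary is the single geodesic segment $\gamma$, so $D$ develops to a zero-area region in the plane — absurd. Hence, if $X$ has an interior loop edge $\gamma$ based at an interior vertex $v$, cutting $\mathcal S(X)$ along $\gamma$ produces two triangular portalgons $X_1,X_2$, each again a tube: $\gamma$ becomes a one-vertex boundary component (the copy of $v$) of each, and no singularity appears in their interiors. A concatenation of bifaces for $X$ is then obtained by gluing one for $X_1$ to one for $X_2$ along the copies of $\gamma$. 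So it suffices to handle tubes with no interior loop edge; since every cut is along a distinct interior loop edge of $X^1$, there are fewer than $3n/2$ cuts, producing fewer than $3n$ pieces in total, hence fewer than $3n$ bifaces.

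On a tube $Y$ with no interior loop edge, an Euler count (in the spirit of the proof of Lemma~\ref{L:euler annulus}: the two boundary vertices have degree at least three and $\sum_v(6-\deg v)=4$) shows that either $Y$ has no interior vertex — in which case $Y$ is a biface — or at least half of the interior vertices of $Y$ have degree at most six. In the latter case I would take a maximal independent set among the interior vertices of degree at most six, which, the induced subgraph having maximum degree six, has size at least a $1/14$ fraction of all interior vertices of $Y$; delete it together with its incident edges; and re-triangulate each resulting face — the star of a deleted vertex, a polygon of at most six sides — by inserting shortcuts. The diameter of such a polygon is at most three times the current maximum edge length, hence so are the inserted shortcuts; deleting a degree-$d$ vertex replaces $d$ triangles by $d-2$, so the triangle count never increases; and no interior loop edge is created, again because a loop bounding a flat contractible region has zero area. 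Thus each such round multiplies the interior-vertex count of $Y$ by at most $13/14$ and the maximum edge length by at most $3$, so after at most $\log_{14/13}(n)+O(1)$ rounds $Y$ is a biface.

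Performing all the slices first and then running the deletion rounds on the resulting pieces in parallel amounts to $O(\log n)$ global rounds, each costing $O(n)$ time (finding the independent sets, deleting, re-triangulating), hence $O(n\log n)$ in total; the number of bifaces produced is the number of pieces, below $3n$; and the maximum edge length is at most $3^{\log_{14/13}(n)+O(1)}\,L=(3n)^{c}L$ with $c=\log_{14/13}(3)<15$. I expect the main obstacle to be the bookkeeping around loop edges — verifying that the pieces produced by slicing are honest tubes even when several interior loop edges meet at a common vertex, and that re-triangulation genuinely never reintroduces an interior loop edge — together with pinning down the degree threshold, the independent-set fraction, and the shortcut-length factor precisely enough that they combine to the exponent $\log_{14/13}(3)$.
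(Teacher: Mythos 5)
Your general strategy --- delete a constant fraction of interior vertices per round, re-triangulate with short arcs, and track how the maximum edge length grows by a factor of at most $3$ per round --- is the same as the paper's, but two of your concrete choices create genuine gaps.

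The degree threshold is wrong. You claim that the Euler count $\sum_v (6-\deg v)=4$ forces at least half of the interior vertices to have degree at most six. It does not: if boundary vertices have degree at least four, the count only gives $\sum_{\text{interior}}(6-\deg v)\ge 0$, and since a vertex of degree $\le 6$ contributes at most $+3$ while a vertex of degree $\ge 7$ contributes at least $-1$, this is consistent with only a quarter of the interior vertices having degree $\le 6$ (e.g.\ a mix of degree-$7$ and degree-$4$ vertices in a $3{:}1$ ratio). The paper's Lemma~\ref{L:euler annulus} uses the threshold ten precisely because vertices of degree $\ge 11$ contribute at least $-5$, so the count $3a \ge 5b$ gives $a \ge \tfrac58(a+b)$, hence at least half. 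With your threshold of six the fraction of vertices removable per round is only about $1/28$, so the number of rounds is about $\log_{28/27} n$ and the final exponent on edge length would be $\log_{28/27}3\approx 30$, not the claimed $\log_{14/13}3<15$. You must raise the degree bound (and re-derive the independent-set fraction) to get the stated constant.

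Your decomposition is also structurally different, and not for the better. You slice $X$ along all interior loop edges first and then run deletion rounds on the resulting loop-free pieces, which forces you to argue that re-triangulation never reintroduces an interior loop edge --- a delicate point you flag yourself as an obstacle, and the argument you sketch (a contractible loop bounds zero area) does not obviously rule out a loop arising when two corners of a star polygon are the same graph vertex, since the closure of that polygon is then not an embedded disk. The paper sidesteps the issue entirely: it never cuts, it only deletes interior vertices that are \emph{not} incident to any loop edge (and have degree at most ten), and it tracks the count $m'$ of such vertices. New loop edges may appear during re-triangulation, but that can only move vertices out of the $m'$-set, which helps; and once $m'$ reaches zero, Lemma~\ref{L:euler annulus} forces every interior vertex to carry a loop edge, so $X$ is already a concatenation of bifaces with no cutting at all. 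I would recommend adopting this bookkeeping --- it removes both the loop-edge subtlety and the need to re-glue pieces at the end.
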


\begin{proof}
Let us first describe the algorithm before analyzing it. As long as there are vertices of $X^1$ in the interior of $\mathcal S(X)$ that are not incident to any loop edge and have degree smaller than or equal to ten, we consider a maximal independent set $V$ of such vertices, and we do the following. First we delete all the vertices in $V$ along with their incident edges. Then we insert arbitrary vertex-to-vertex arcs in the faces of $X^1$ to make $X^1$ a triangulation again. 

The algorithm terminates because the number of vertices of $X^1$ decreases at each iteration. In the end every vertex in the interior of $\mathcal S(X)$ is incident to a loop edge by Lemma~\ref{L:euler annulus}, so $X$ is a concatenation of less than $m$ bifaces, where $m \leq 3n$ is the initial number of vertices of $X^1$. Each iteration can be performed in $O(n)$ time by maintaining a bucket with the vertices of degree smaller than or equal to ten. And we claim that there less than $\log_{14/13} m$ iterations. Before proving the claim, observe that it implies the lemma. Indeed the algorithm then terminates in $O(n \log n)$ time. Also no edge can get longer than $3^{\log_{14/13} m} L = m^c L$ because the maximum edge length of $X^1$ cannot be multiplied by more than 3 at each iteration.

Let us now prove the claim. Consider the number $m'$ of vertices of $X^1$ not incident to any loop edge that lie in the interior of $\mathcal S(X)$. By Lemma~\ref{L:euler annulus}, if $m' > 0$ before an iteration of the algorithm, then at least $m'/2$ such vertices have degree smaller than or equal to ten. So $V$ contains at least $m'/14$ vertices, which are deleted. Every non-deleted vertex that was incident to a loop edge before the iteration remains incident to a loop edge after the iteration. We proved that $m'$ is divided by at least $14/13$ during the iteration, which proves the claim.
\end{proof}

\begin{proof}[Proof of Proposition~\ref{P:compute good biface}]
Apply Lemma~\ref{L:concatenation of bifaces}, and replace $X$ in $O(n \log n)$ time by a concatenation of less than $3n$ bifaces whose edges are smaller than $(3n)^c L$ for some constant $c > 0$. Each biface $B$ has segment-happiness $O(1+(3n)^c L /s)$; indeed the systole of $\mathcal S(B)$ is greater than or equal to the systole of $X$, so every segment $e$ in $\mathcal S(B)$ satisfies $h_{\mathcal S(B)}(e) = O(1+\ell(e)/s)$. Replace $B$ by a good biface whose surface is that of $B$ in $O(\log(n) + \log(2+L/s))$ time with Lemma~\ref{T:Make a biface good}. Doing so for all bifaces takes $O(n \cdot (\log(n) + \log(2+L/s)))$ time in total. We crudely bound this running time from above by $O(n \log(n) \cdot \log(2+L/s))$. In the end apply Lemma~\ref{L:concat of bifaces is happy} repeatedly to merge those $O(n)$ good bifaces into a single good biface, in $O(n)$ total time.
\end{proof}

%%%%%%%%%%%%%%%%%%%%%%%%%%%%%%%%%%%%%%%%%%%%%%%%%%%%%%%%%%%%%%%%%%%%%%%%%%%%%%%%%%%%%%%%%%%%%%%%%%%%%%%%%%%%%%%%%%%%%%%%%%%%%%%%%%%%%%%%%%%%%%%%%%%%%%%%%%%%%%%%%%%%%
%%%%%%%%%%%%%%%%%%%%%%%%%%%%%%%%%%%%%%%%%%%%%%%%%%%%%%%%%%%%%%%%%%%%%%%%%%%%%%%%%%%%%%%%%%%%%%%%%%%%%%%%%%%%%%%%%%%%%%%%%%%%%%%%%%%%%%%%%%%%%%%%%%%%%%%%%%%%%%%%%%%%%
%%%%%%%%%%%%%%%%%%%%%%%%%%%%%%%%%%%%%%%%%%%%%%%%%%%%%%%%%%%%%%%%%%%%%%%%%%%%%%%%%%%%%%%%%%%%%%%%%%%%%%%%%%%%%%%%%%%%%%%%%%%%%%%%%%%%%%%%%%%%%%%%%%%%%%%%%%%%%%%%%%%%%
%%%%%%%%%%%%%%%%%%%%%%%%%%%%%%%%%%%%%%%%%%%%%%%%%%%%%%%%%%%%%%%%%%%%%%%%%%%%%%%%%%%%%%%%%%%%%%%%%%%%%%%%%%%%%%%%%%%%%%%%%%%%%%%%%%%%%%%%%%%%%%%%%%%%%%%%%%%%%%%%%%%%%

\section{Appendix of Section~\ref{sec:algo}}\label{app:detail routines}

In this section, we detail how to modify a portalgon $T$ to perform the routines \algoref{InsertVertices}, \algoref{InsertEdges}, and \algoref{DeleteVertices}.

To perform \algoref{InsertVertices}, recall that $T$ is given as a disjoint collection of triangles in the plane, together with a partial matching of their sides: we consider every triangle $P$ of $T$, and every side $s$ of $P$ that is matched in $T$, and we make the middle point of $s$ a new vertex of $P$.

We perform \algoref{InsertEdges} as follows: as long as there is a polygon $P$ of $T$ that is not a triangle, we cut $P$ into two polygons along a shortcut. This creates two new polygon sides, which we match in $T$.

We perform \algoref{DeleteVertices} as follows. To delete a vertex $v$ of $T^1$, we consider the triangle vertices of $T$ that correspond to $v$. No two of them belong to the same triangle for otherwise there would be a loop of $T^1$ based at $v$, contradicting the assumption that $v$ is weak. We move their triangles in the plane so that these vertices are now placed at the same point of the plane, and so that the triangles are placed in the correct cyclic order around this point, without overlapping. This is possible because $v$ lies in the interior of $\mathcal S(T)$, and because we assumed that every point in the interior of $\mathcal S(T)$ is flat: it is surrounded by an angle of $2\pi$. Now the union of the triangles is a polygon. In $T$, we replace all the triangles by this single polygon.

%%%%%%%%%%%%%%%%%%%%%%%%%%%%%%%%%%%%%%%%%%%%%%%%%%%%%%%%%%%%%%%%%%%%%%%%%%%%%%%%%%%%%%%%%%%%%%%%%%%%%%%%%%%%%%%%%%%%%%%%%%%%%%%%%%%%%%%%%%%%%%%%%%%%%%%%%%%%%%%%%%%%%
%%%%%%%%%%%%%%%%%%%%%%%%%%%%%%%%%%%%%%%%%%%%%%%%%%%%%%%%%%%%%%%%%%%%%%%%%%%%%%%%%%%%%%%%%%%%%%%%%%%%%%%%%%%%%%%%%%%%%%%%%%%%%%%%%%%%%%%%%%%%%%%%%%%%%%%%%%%%%%%%%%%%%
%%%%%%%%%%%%%%%%%%%%%%%%%%%%%%%%%%%%%%%%%%%%%%%%%%%%%%%%%%%%%%%%%%%%%%%%%%%%%%%%%%%%%%%%%%%%%%%%%%%%%%%%%%%%%%%%%%%%%%%%%%%%%%%%%%%%%%%%%%%%%%%%%%%%%%%%%%%%%%%%%%%%%
%%%%%%%%%%%%%%%%%%%%%%%%%%%%%%%%%%%%%%%%%%%%%%%%%%%%%%%%%%%%%%%%%%%%%%%%%%%%%%%%%%%%%%%%%%%%%%%%%%%%%%%%%%%%%%%%%%%%%%%%%%%%%%%%%%%%%%%%%%%%%%%%%%%%%%%%%%%%%%%%%%%%%

\section{Appendix of Section~\ref{sec:combi analysis}: proof of Proposition~\ref{prop:bound number vertices}}\label{app:combi analysis}

In this section we prove Proposition~\ref{prop:bound number vertices}, which we restate for convenience:

\propcombianalysis*

We analyze each operation independently before proving Proposition~\ref{prop:bound number vertices}. Our analysis is on the number vertices of $R_A^1$, not the number of polygons of $R_A$, but bounding one immediately bounds the other, as we shall see, and we find it more convenient to reason about the vertices of $R_A^1$.

\subsection{Analysis of \algoref{InsertVertices}}

We start by bounding the increase in vertices of \algoref{InsertVertices}:

\begin{lemma}\label{lem:insert how many vertices}
Let $T$ be a triangular portalgon. Let $g$ be the genus of $\mathcal S(T)$. Let $m$ be the number of vertices of $T^1$. Apply \algoref{InsertVertices} to $T$ and consider the resulting portalgon $T'$. Then $T'^1$ has less than $7(g+m)$ vertices.
\end{lemma}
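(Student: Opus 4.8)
The statement asserts that applying \algoref{InsertVertices} to a triangular portalgon $T^1$ with $m$ vertices yields a 1-skeleton with fewer than $7(g+m)$ vertices, where $g$ is the genus of $\mathcal{S}(T)$. The routine adds exactly one new vertex per \emph{interior} edge of $T^1$ (edges on the boundary are untouched), so the new vertex count is $m + e_{\mathrm{int}}$, where $e_{\mathrm{int}}$ is the number of interior edges. Hence it suffices to show $e_{\mathrm{int}} < 7g + 6m$, or more crudely that the total number of edges $e$ of the triangulation $T^1$ satisfies $e = O(g+m)$. The first step is therefore to set up Euler's formula for $\mathcal{S}(T)$, which is a compact orientable surface, possibly with boundary.

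\textbf{Key steps.} I would write Euler's formula $m - e + f = 2 - 2g - b$, where $f$ is the number of faces (triangles), $b$ the number of boundary components, and $g$ the genus. Since $T^1$ is a triangulation, each face has three sides; double counting side-incidences gives $3f = 2e_{\mathrm{int}} + e_{\mathrm{bd}}$, where $e_{\mathrm{bd}}$ counts boundary edges (each boundary edge bounds one triangle, each interior edge two). Also $e = e_{\mathrm{int}} + e_{\mathrm{bd}}$. Substituting $f = \tfrac{1}{3}(2e_{\mathrm{int}} + e_{\mathrm{bd}})$ into Euler's formula and simplifying yields a linear relation among $m$, $e_{\mathrm{int}}$, $e_{\mathrm{bd}}$, $g$, $b$. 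One must be slightly careful that $T^1$ may have loops and multi-edges, but this does not affect the counting arguments since they only use incidence counts; one should, however, note that $b \geq 0$ and that each boundary component contributes at least one boundary edge (indeed at least three for a genuine triangulation, but even the trivial bound $e_{\mathrm{bd}} \geq 0$ and $b \geq 0$ suffice after rearranging). Solving, $e_{\mathrm{int}} = 3m - 3(2 - 2g - b) - 2e_{\mathrm{bd}} + (\text{correction})$; plugging back, I expect to land on something like $e_{\mathrm{int}} \leq 3m + 6g + 3b - 6 - e_{\mathrm{bd}} \leq 3m + 6g + 3b$. To finish I need $b = O(g+m)$; since every boundary component contains at least one vertex of $T^1$ (in fact at least three, but one is enough), we have $b \leq m$, giving $e_{\mathrm{int}} \leq 3m + 6g + 3m = 6m + 6g$, hence the new vertex count $m + e_{\mathrm{int}} \leq 7m + 6g < 7(g+m)$.

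\textbf{Main obstacle.} The arithmetic is entirely routine; the only real care is bookkeeping around the boundary (distinguishing $e_{\mathrm{int}}$ from $e_{\mathrm{bd}}$ in the double-counting of triangle sides) and making sure loops and parallel edges do not invalidate the face-degree count. A secondary subtlety is the case $b = 0$ (closed surface), where Euler's formula reads $m - e + f = 2 - 2g$ and all edges are interior, so $3f = 2e$ and one gets $e = 3m - 3(2-2g) = 3m + 6g - 6$, so the new count is $m + e = 4m + 6g - 6 < 7(g+m)$ comfortably; the bound is not tight but that is irrelevant. I do not anticipate any genuine difficulty, only the need to present the Euler-characteristic computation cleanly and to invoke $b \leq m$ (each boundary cycle meets a vertex) to absorb the boundary-component term.
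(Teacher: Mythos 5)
Your approach is the same as the paper's: Euler's formula plus a double-count of edge–face incidences, absorbing the number $b$ of boundary components via $b\leq m$. The paper is slightly coarser --- it bounds the \emph{total} edge count $m_1<6(g+m)$ (its Lemma~\ref{lem:euler edges}) and observes that the number of inserted vertices is at most $m_1$ --- whereas you track $e_{\mathrm{int}}$ directly; that is a cosmetic refinement, not a different method.

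There is one slip at the very end. After discarding the additive $-6$ from Euler's formula you conclude $m+e_{\mathrm{int}}\leq 7m+6g<7(g+m)$, but $7m+6g<7m+7g$ requires $g\geq 1$, so the chain does not close when $g=0$. The $-6$ is precisely what gives strictness uniformly (the paper keeps it for the same reason): from $e_{\mathrm{int}}=3m+6g+3b-6-2e_{\mathrm{bd}}$, together with $b\leq m$ and $e_{\mathrm{bd}}\geq 0$, you get $e_{\mathrm{int}}\leq 6m+6g-6$, hence $m+e_{\mathrm{int}}\leq 7m+6g-6<7(g+m)$ for every $g\geq 0$. (You in fact retain the $-6$ in your side remark on the $b=0$ case, so the fix is just to carry it through the main chain as well; the $-e_{\mathrm{bd}}$ vs.\ $-2e_{\mathrm{bd}}$ coefficient is also off by a factor of two, though that does not affect the upper bound once you drop the term.)
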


Lemma~\ref{lem:insert how many vertices} relies on the following classical consequence of Euler's formula:

\begin{lemma}\label{lem:euler edges}
There are less than $6(g+m)$ edges in $T^1$.
\end{lemma}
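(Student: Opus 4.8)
The statement to prove is \cref{lem:euler edges}: a triangular portalgon $T^1$ with $m$ vertices on a surface of genus $g$ has fewer than $6(g+m)$ edges.

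\textbf{Plan.} This is a standard Euler characteristic computation, so I would simply carry it out carefully, paying attention to the fact that the surface may have boundary and that the portalgon is triangular. Let $T^1$ have $m$ vertices, $E$ edges, and $F$ faces. Since $T$ is triangular, every face is a triangle, so double-counting incidences between faces and edges gives $3F \leq 2E$ (equality when the surface is closed; strict inequality is possible when there is boundary, since a boundary edge is incident to only one face). Euler's formula for a surface of genus $g$ with $b \geq 0$ boundary components gives $m - E + F = 2 - 2g - b \geq 2 - 2g - b$; since we only need an upper bound on $E$ we can be generous. Combining, $F \leq \tfrac{2}{3}E$, so $m - E + \tfrac{2}{3}E \geq m - E + F = 2 - 2g - b$, i.e. $-\tfrac{1}{3}E \geq 2 - 2g - b - m$, hence $E \leq 3(2g + b + m - 2) \leq 3(2g + b + m)$.

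\textbf{The remaining point} is to bound $b$, the number of boundary components. Each boundary component of $\mathcal S(T)$ is a closed walk in $T^1$ using at least one vertex, and distinct boundary components are vertex-disjoint (a vertex lies on at most one boundary component), so $b \leq m$. Actually one must be slightly careful: a single vertex could be the only vertex of a boundary component that is a loop edge, but still each boundary component consumes at least one vertex not shared with another, so $b \leq m$. Plugging in, $E \leq 3(2g + m + m) = 6g + 6m < 6(g+m)$ provided the inequality is strict somewhere — and indeed if the surface is closed then $3F = 2E$ forces... let me instead just note that $E \leq 3(2g+b+m-2)$ with $b \leq m$ gives $E \leq 3(2g + 2m - 2) = 6g + 6m - 6 < 6(g+m)$, which is the claimed strict bound. (In the closed case $b = 0$ and the bound is even better; the worst case for the stated form is when there is boundary.)

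\textbf{Main obstacle.} There is essentially no obstacle; the only thing requiring a moment of thought is getting the inequality direction right with boundary present (the inequality $3F \leq 2E$ rather than an equality) and confirming $b \leq m$, both of which are routine. I would present it as a two-line computation from Euler's formula plus the triangular double-count plus $b \le m$, and flag that the $-6$ slack coming from the $-2$ in Euler's formula (and from $b \le m$ whenever $b<m$, e.g. always in the closed case) is what makes the bound strict.
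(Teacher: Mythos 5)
Your proof is correct and follows essentially the same route as the paper: double-count face–edge incidences to get $3F \le 2E$, apply Euler's formula for a genus-$g$ surface with $b$ boundary components, and bound $b \le m$ by noting each boundary component contains at least one vertex. The only blemish is the typo "$m - E + F = 2 - 2g - b \ge 2 - 2g - b$" (a tautology), but the computation that follows is right.
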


\begin{proof}%[Proof of Lemma~\ref{lem:euler edges}]
Let $m_1$ and $m_2$ count respectively the edges and the faces of $T^1$, and let $b$ count the boundary components of $\mathcal S(T)$. Double counting gives $3m_2 \leq 2m_1$. Euler's formula gives $m_1 - m_2 = m + 2g+ b - 2$. And we have $b \leq m$. Therefore $m_1 \leq 3m_1 - 3m_2 < 6(m+g)$. 
\end{proof}

\begin{proof}[Proof of Lemma~\ref{lem:insert how many vertices}]
There are no more vertices inserted than there are edges in $T^1$, and there are less than $6(g+m)$ edges in $T^1$ by Lemma~\ref{lem:euler edges}.
\end{proof}

\subsection{Analysis of \algoref{DeleteVertices}}

For \algoref{DeleteVertices} to remove a fraction of the vertices, it suffices that the number of vertices vastly exceeds the topology of the surface, and that almost all of the vertices are weak:

\begin{lemma}\label{lem:deletes fraction of vertices}
Let $T$ be triangular portalgon whose surface $\mathcal S(T)$ is flat. Let $m$ be the number of vertices of $T^1$. Let $g$ be the genus of $\mathcal S(T)$, and let $\bar m$ be the number of strong vertices of $T^1$. Apply \algoref{DeleteVertices} to $T$ and consider the resulting portalgon $T'$. If $m > 24(g+\bar m)$ then $T'^1$ has less than $167m/168$ vertices.
\end{lemma}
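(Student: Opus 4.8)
\textbf{Proof plan for Lemma~\ref{lem:deletes fraction of vertices}.}
The plan is to count how many weak vertices are \emph{eligible} for deletion (weak, of degree at most six) and then how many of those survive the maximal independent set $V$. First I would bound the number of weak vertices of high degree. Let $w$ be the number of weak vertices and $\bar m$ the number of strong vertices, so $m = w + \bar m$. Applying Euler's formula as in Lemma~\ref{lem:euler edges}, the sum $\sum_v(6-\deg v)$ over all vertices of $T^1$ is controlled by the topology, precisely it equals $6\chi$ up to a correction from the boundary; since every boundary vertex and every vertex incident to a loop has degree at least some small constant (at least three, or more carefully at least four on the boundary of a triangulated surface, but in any case I only need a linear bound), one gets $\sum_v(\deg v - 6) = O(g + b)$, and with $b \le m$ trivially but in fact $b \le \bar m$ since boundary vertices are strong, this is $O(g+\bar m)$. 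Hence the number of weak vertices with $\deg v \ge 7$ is $O(g+\bar m)$; chasing the constants in the paper's normalization this should come out to something like at most $\#\{\text{weak, }\deg\ge 7\} \le 6(g+\bar m) + (\text{small})$. So the number of weak vertices eligible for deletion is at least $w - 6(g+\bar m) - \text{const} \ge w - O(g+\bar m)$.

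Next I would use the hypothesis $m > 24(g+\bar m)$ to see that this eligible set is a constant fraction of $m$. Indeed $w = m - \bar m > 24(g+\bar m) - \bar m \ge 23(g+\bar m)$, hence $w$ dominates $g+\bar m$, and subtracting the $O(g+\bar m)$ high-degree weak vertices still leaves $\Omega(m)$ eligible vertices — concretely, after plugging in the constant $24$ one should be able to certify that at least, say, $m/12$ vertices are eligible (the exact fraction is what forces the choice of $24$ and feeds the $167/168$ in the conclusion; I would reverse-engineer it). Finally, $V$ is a \emph{maximal} independent set among the eligible vertices, each of which has degree at most six; by the standard greedy argument, a maximal independent set in a graph of maximum degree $\le 6$ on a vertex set of size $k$ has size at least $k/7$ (each chosen vertex "blocks" at most $6$ others). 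Therefore $|V| \ge (m/12)/7 = m/84$, and \algoref{DeleteVertices} removes $|V| \ge m/84$ vertices, leaving at most $m - m/84 = 83m/84$ vertices — and I would tighten the bookkeeping so the surviving count is the claimed $167m/168$ (note $167/168 = 1 - 1/168$ and $168 = 24\cdot 7$, which strongly suggests the intended chain is: $\Omega(m) = m/24$ eligible vertices after the topology subtraction, then divide by $7$ for the independent set, giving $m/168$ deleted).

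The main obstacle, and the only genuinely delicate point, is the first step: getting a clean linear bound on the number of weak vertices of degree $\ge 7$ purely in terms of $g+\bar m$, with the right constant. The subtlety is that the discharging identity $\sum_v(6-\deg v) = 6\chi(\mathcal S(T))$ must be corrected for the boundary — boundary vertices have a "deficit" relative to $6$ that is not bounded below by a fixed positive number in general, but each of the $b$ boundary \emph{faces}/arcs contributes a controllable amount, and crucially all boundary vertices are strong so they are absorbed into $\bar m$. One must also handle weak vertices of degree $3,4,5$ correctly: these are eligible, so they help rather than hurt, but they do contribute positively to $\sum_v(6-\deg v)$ and so cannot simply be discarded; the right move is to write $\sum_{v \text{ weak}, \deg v \ge 7}(\deg v - 6) \le \sum_{v \text{ weak}}(\deg v-6)^+ \le 6\chi^- + \sum_{v\text{ strong}}(6-\deg v)^+ + (\text{weak low-degree surplus})$ and note the last term is itself at most $3w$, which is too big — so instead one should bound the count of high-degree weak vertices directly by observing $\#\{\deg v \ge 7\} \le \sum_v (\deg v - 6)^+ = \sum_v(6-\deg v)^- $ and pairing this against $\sum_v(6-\deg v)^+ \le 6|\chi| + \sum_v(6-\deg v)^+ $; cleanest is probably to just invoke Lemma~\ref{lem:euler edges} ($<6(g+m)$ edges, hence $\sum_v \deg v < 12(g+m)$) together with the trivial $\sum_v \deg v \ge 7\cdot\#\{\deg \ge 7\} + 3\cdot\#\{\deg\le 6\}$ and rearrange. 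Once that count is pinned down, the rest is the routine maximal-independent-set fraction argument.
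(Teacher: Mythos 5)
Your overall skeleton is the same as the paper's: use Euler's formula to certify that a constant fraction of the vertices are weak of degree at most six, then observe that a maximal independent set in a graph of maximum degree six captures at least a $1/7$ fraction of those vertices, and indeed $168 = 24\cdot 7$ is exactly the constant chain you reverse-engineer. The paper proves, via an auxiliary Lemma (``lem:euler general''), that under $m>24(g+b)$ at least $m/12$ of the vertices of a triangulation have degree at most six; since $\bar m < m/24$ are strong and $b\le\bar m$, at least $m/12 - m/24 = m/24$ vertices are weak of degree at most six, and dividing by $7$ gives $m/168$ deletions.

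However, the Euler step in your proposal has a genuine gap and your two concrete suggestions for filling it do not close. First, the intermediate claim that the number of \emph{weak vertices of degree $\geq 7$} is $O(g+\bar m)$ is false: in a large flat triangulated disk the average degree is close to $6$, so there can be $\Theta(m)$ vertices of degree $7$ balanced by $\Theta(m)$ vertices of degree $5$, while $g=0$ and $\bar m$ is small. What is true, and what the paper proves, is the complementary statement that at least $m/12$ of the vertices (weak or strong) have degree $\leq 6$. Second, your fallback of feeding Lemma~\ref{lem:euler edges} (which gives $m_1 < 6(g+m)$, hence $\sum_v \deg v < 12(g+m)$) into the inequality $\sum_v \deg v \ge 7\,\#\{\deg\ge 7\} + 3\,\#\{\deg\le 6\}$ only yields $\#\{\deg\ge 7\} < 3g + 9m/4$, which exceeds $m$ and says nothing. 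The reason is that Lemma~\ref{lem:euler edges} uses the crude bound $b\le m$, giving an average degree near $12$, which does not force any vertex to have small degree. The paper's Lemma ``lem:euler general'' instead exploits the hypothesis $m>24(g+b)$ directly in the Euler identity to get $\sum_v(6-\deg v)>-m/2$, i.e.\ average degree strictly below $6.5$, and then uses that each vertex of degree $\le 6$ contributes at most $5$ to the positive part of $\sum_v(6-\deg v)$ to close the counting argument by contradiction. That refinement is the missing piece; once you replace your Euler estimate by it, your subsequent subtraction of the strong vertices and the maximal-independent-set division by $7$ is exactly the paper's argument.
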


Lemma~\ref{lem:deletes fraction of vertices} relies on the following classical consequence of Euler's formula:

\begin{lemma}\label{lem:euler general}
Let $S$ a topological surface of genus $g$ with $b$ boundary components. Let $Y$ be a topological triangulation of $S$ with $m$ vertices. If $m > 24(g+b)$ then at least $m/12$ vertices of $Y$ have degree smaller than or equal to 6.
\end{lemma}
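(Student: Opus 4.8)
\textbf{Proof plan for Lemma~\ref{lem:euler general}.}

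The plan is to argue by a discharging-style counting argument anchored on Euler's formula. First I would let $m_1$ and $m_2$ count respectively the edges and faces of $Y$, and let $b$ be the number of boundary components of $S$. Euler's formula gives $m - m_1 + m_2 = 2 - 2g - b$. Since $Y$ is a triangulation, double counting of incidences between faces and edges gives $3m_2 \le 2m_1$ (with equality when $S$ is closed; the boundary edges are counted once but this only helps the inequality). Combining, $m_1 \le 3(m_1 - m_2) = 3(m - 2 + 2g + b)$, hence $\sum_{v} \deg v = 2m_1 \le 6m - 12 + 12g + 6b$, so $\sum_v (6 - \deg v) \ge 12 - 12g - 6b > 12 - 12(g+b)$, i.e.\ $\sum_v(6 - \deg v) \ge 12(1 - (g+b)) + \text{(something)}$; more simply $\sum_v (6 - \deg v) \ge 12 - 6(2g+b) \ge 12 - 12(g+b)$.

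Next I would split the vertices into those of degree at most $6$ (call their number $m_{\le 6}$) and those of degree at least $7$. Every vertex of degree $\ge 7$ contributes at most $-1$ to $\sum_v(6-\deg v)$, and every vertex contributes at most $6-1 = 5$ from the positive side (degree $\ge 1$; in a triangulation with more than a few vertices one in fact has $\deg v \ge 2$ or $3$, but the crude bound $6-\deg v \le 5$ suffices). Writing $m_{>6} = m - m_{\le 6}$, we get
\[
12 - 12(g+b) \le \sum_v (6-\deg v) \le 5\, m_{\le 6} - (m - m_{\le 6}) = 6\, m_{\le 6} - m,
\]
so $m_{\le 6} \ge (m + 12 - 12(g+b))/6 \ge (m - 12(g+b))/6$. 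Under the hypothesis $m > 24(g+b)$ we have $12(g+b) < m/2$, hence $m_{\le 6} > (m - m/2)/6 = m/12$, which is the claim.

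I expect the only genuine subtlety to be handling the boundary correctly — namely verifying that $3m_2 \le 2m_1$ still holds when $S$ has boundary (each boundary edge borders only one triangle, which weakens the face-edge count in the favorable direction) and that boundary vertices may have small degree, which is fine since we only need a lower bound on the count of low-degree vertices. There is no real obstacle; the argument is a direct Euler-characteristic estimate, entirely parallel to Lemma~\ref{lem:euler edges} and Lemma~\ref{L:euler annulus} already proved above, just with the bookkeeping tuned so that the hypothesis $m > 24(g+b)$ yields the explicit fraction $m/12$.
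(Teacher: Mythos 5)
Your proof is correct and follows essentially the same approach as the paper's: apply Euler's formula and the edge--face double count to bound $\sum_v(6-\deg v)$ from below, then split vertices by degree and use $6-\deg v \le 5$ for low-degree vertices and $6-\deg v \le -1$ for high-degree ones. The only cosmetic differences are that the paper uses the slightly sharper count $3m_2 \le 2m_1 - b$ and phrases the final step as a contradiction rather than a direct inequality, neither of which changes the substance.
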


\begin{proof}%[Proof of Lemma~\ref{lem:euler general}]
Let $m_1$ and $m_2$ count respectively the edges and the faces of $Y$. Euler's formula gives $6m - 6m_1 + 6m_2 = 12 - 12g - 6b$. Double counting gives $3m_2 \leq 2m_1 - b$ and $2m_1 = \sum_v \deg v$, where the sum is over the vertices, and where $\deg v$ denotes the degree of a vertex $v$. Then $\sum_v 6 - \deg v = 6m - 2m_1 \geq 6m - 6m_1 + 6m_2 + 2b \geq 12 - 12g - 4b > -m/2$. Let $a$ and $b$ count the number of vertices whose degree is respectively smaller than or equal to six, and greater than six. Then $b < 5a + m/2$. Assuming $a < m/12$, we get $b < 11m/12$, and so $a + b < m$. This is a contradiction. This proves the lemma.
\end{proof}

\begin{proof}[Proof of Lemma~\ref{lem:deletes fraction of vertices}]
Let $b$ be the number of boundary components of $\mathcal S(T)$. We have $m > 24 (g+b)$. Indeed we assumed $m > 24 (g+\bar m)$, and we have $\bar m \geq b$ as every boundary component of $\mathcal S(T)$ contains a strong vertex of $T^1$. So by Lemma~\ref{lem:euler general} at least $m/12$ vertices of $T^1$ have degree smaller than or equal to six. Moreover less than $m/24$ vertices of $T^1$ are strong by assumption. So more than $m/24$ vertices of $T^1$ are weak and have degree smaller than or equal to six. Any maximal independent set of such vertices contains more than $m/ (24 \times 7) = m/168$ vertices, so \algoref{DeleteVertices} deletes more than $m/168$ vertices.
\end{proof}

\subsection{Analysis of \algoref{SimplifyTubes}}

Right after applying \algoref{SimplifyTubes} the number of vertices that lie in the interior of the surface and are incident to a loop is bounded by the topology of the surface:

\begin{lemma}\label{lem:tubing23}
Let $T$ be a triangular portalgon whose surface $\mathcal S(T)$ is flat. Let $g$ and $b$ be the genus and the number of boundary components of $\mathcal S(T)$. Apply \algoref{SimplifyTubes} to $T$, and consider the resulting portalgon $T'$. At most $9(g + b)$ vertices of $T'^1$ lie in the interior of $\mathcal S(T')$ and are incident to a loop in $T'^1$.
\end{lemma}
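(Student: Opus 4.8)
\textbf{Proof plan for Lemma~\ref{lem:tubing23}.}

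The plan is to analyze the set $J$ of loops constructed in step~1 of \algoref{SimplifyTubes}, and to argue that after cutting along $J$ and replacing every resulting tube-component by a good biface, only ``few'' interior loop-incident vertices can remain, where ``few'' is controlled by $g$ and $b$. First I would dispose of the torus case (step~1a): there $\mathcal S(T')$ is either a torus with a single vertex after replacing a tube by a biface, or $T$ had no two disjoint interior loops in the first place; in both situations a direct count bounds the interior loop-incident vertices by an absolute constant, comfortably below $9(g+b)$ once we note $g=1$ here. The bulk of the argument is step~1b. After cutting $\mathcal S(T)$ along the loops of $J'$, the components that are tubes get merged appropriately and replaced by good bifaces in step~2; a good biface contributes exactly two boundary vertices and no interior loop-incident vertex of its own. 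So the interior loop-incident vertices of $T'^1$ are, up to the bookkeeping of the cutting/regluing, exactly the base vertices of the loops of $J$ that survive — i.e. loops of $J'$ not removed — plus possibly base vertices of loops that were not put in $J'$ at all because their vertex was already counted, but those are also base vertices of loops in $J'$.

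Next I would bound $|J|$. Each loop in $J'$ is non-contractible and interior, and they are pairwise disjoint, so cutting $\mathcal S(T)$ along all of $J'$ yields a collection of surfaces whose total Euler characteristic and boundary count is controlled: if $|J'|=k$ then after cutting we get a surface (possibly disconnected) with genus summing to at most $g$ and with $b + 2k'$ boundary components where $k'\le k$ counts loops that become two boundary curves (the others, separating ones, don't increase the boundary-curve count the same way but split components). The removal rule in step~1b is precisely designed so that every loop $e$ of $J'$ that is \emph{not} removed has the property that it is either adjacent to at most one cut-component, or at least one of the two adjacent cut-components is not a tube. In the first sub-case $e$ is ``topologically essential'' for the decomposition; in the second sub-case $e$ is charged to a non-tube component. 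A non-tube component either has positive genus, or has a singularity in its interior — but $\mathcal S(T)$ is flat, so it has no interior singularity — or is a pair of pants or worse, i.e. has Euler characteristic $\le -1$, or is an annulus with extra vertex structure preventing it from being a tube. A standard charging argument then bounds the number of surviving loops linearly in $g+b$; I expect the constant coming out to be well within the slack of~$9$.

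Finally I would assemble the pieces: interior loop-incident vertices of $T'^1$ are base vertices of surviving loops of $J$ (at most $|J|$ of them), plus vertices coming from the bifaces produced in step~2 — but those are boundary vertices, not interior ones, or are interior non-loop-incident in merged configurations — so they do not contribute. A more careful accounting of how cutting and regluing along $J$ interacts with the loop structure, together with Lemma~\ref{lem:euler general}-style Euler-characteristic bookkeeping on the pieces, gives the bound $9(g+b)$. The main obstacle I anticipate is precisely this last accounting: making rigorous the claim that ``after step~2 the only interior loop-incident vertices are the base vertices of surviving loops'' requires carefully tracking which loops of $J'$ become separating vs.\ non-separating after the cut, how the ``merge adjacent tubes'' rule of step~1b interacts with chains of tubes, and ruling out that the good-biface replacements in step~2 reintroduce interior loops through their interior edges (they cannot, since an interior edge of a biface is not a loop). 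Setting up the right charging scheme — each surviving loop charged either to a genus handle or to a boundary component or to a non-tube ``pant'' — and verifying the constant is where the real work lies; the topology inputs are all routine consequences of Euler's formula already available as \cref{lem:euler general,lem:euler annulus,lem:euler edges}.
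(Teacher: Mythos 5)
Your plan follows the paper's proof: handle the torus case (step~1a) by a direct count, observe that after step~2 every interior loop-incident vertex of $T'^1$ must be the base vertex of a surviving loop of $J$ (otherwise it lies in a tube cut-component and is deleted when that tube is replaced by a biface), and bound $|J|$ by an Euler-characteristic charging argument on the components obtained by cutting along the loops. The paper isolates that charging as Lemma~\ref{L:packing} with the weight $\lambda(T_0)=2d(T_0)-\chi(T_0)$ rather than reusing \cref{lem:euler general,lem:euler annulus,lem:euler edges}, but the idea is the one you describe, and the annulus-with-extra-vertex-structure case you worry about does not in fact arise since each new boundary is a single-vertex loop of $J'$.
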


Lemma~\ref{lem:tubing23} relies on the following:

\begin{lemma}\label{L:packing}
Let $I$ be a set of loop edges of $T^1$ that lie in the interior of $\mathcal S(T)$ and are pairwise disjoint. In $I$ all but at most $9(g+b)$ loops $e$ satisfy the following: there are two connected components of $\mathcal S(T) \setminus I$ incident to $e$, and each of them is the surface of a sub-portalgon of $T$ that is a tube.
\end{lemma}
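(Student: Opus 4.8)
The plan is to combine Euler-characteristic bookkeeping with one geometric input that comes from flatness (recall $\mathcal S(T)$ is flat here). Write $S = \mathcal S(T)$, and assume $S$ connected, otherwise I would run the argument on each connected component and add; let $g$ and $b$ be its genus and number of boundary circles, so $\chi(S) = 2 - 2g - b$. The loops of $I$ are pairwise disjoint simple closed curves in the interior of $S$, hence two-sided, so cutting $S$ open along them produces a compact surface whose connected components I call \emph{pieces}, and since cutting along a two-sided simple closed curve preserves the Euler characteristic, $\sum_{S_0}\chi(S_0) = \chi(S)$, the sum ranging over all pieces $S_0$. A piece $S_0$ has genus $g_0$ and $b_0$ boundary circles, $\iota_0$ of which come from loops of $I$ (each such \emph{$I$-circle} carries exactly one vertex, the basepoint of its loop) and $b_0 - \iota_0$ of which are boundary components of $S$. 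I would form the multigraph $G$ with one vertex per piece and one edge per loop of $I$, each loop joining the one or two pieces containing its two boundary copies, so that $\iota_0 = \deg_G(S_0)$, and call a loop \emph{nice} when it joins two distinct pieces that are both tubes. The goal is then to show that at most $9(g+b)$ loops are not nice.

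The geometric input is that, because $S$ is flat, no loop edge of $T^1$ bounds a disk. Indeed, if a loop edge $e$ bounded a triangulated disk $D$, then $D$ is a monogon, so with $F$ triangles Euler's formula forces $(F+1)/2$ interior vertices; each of them lies in the interior of $S$, hence is flat and carries angle $2\pi$, while the basepoint of $e$ carries some angle $\theta > 0$ inside $D$, so summing the angles of the $F$ triangles gives $\pi F = \theta + 2\pi(F+1)/2$, i.e. $\theta = -\pi$, which is absurd. Two consequences: $S$ has no sphere component (a flat closed surface is a torus), and no piece incident to a loop of $I$ is a disk. In particular $\chi(S_0) \le 0$ for every piece incident to a loop, and every piece with $\chi(S_0) > 0$ is a disk whose boundary is a boundary component of $S$, so there are at most $b$ such pieces and none of them is incident to a loop.

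Next I would classify the pieces. A piece with $\chi(S_0) = 0$ is an annulus; if incident to a loop it has $\iota_0 \in \{1,2\}$, and it is a tube unless $\iota_0 = 1$ and its unique $S$-boundary circle carries at least two vertices, in which case it contains a distinct boundary component of $S$, so there are at most $b$ such \emph{bad annuli}. A piece with $\chi(S_0) < 0$ (call it \emph{big}) has $\iota_0 \le b_0 = 2 - 2g_0 - \chi(S_0) \le 2 + |\chi(S_0)|$, and since each big piece has $|\chi(S_0)| \ge 1$ this gives $\sum_{\text{big}}\iota_0 \le 3\sum_{\text{big}}|\chi(S_0)|$; moreover $\sum_{\text{big}}|\chi(S_0)| \le -\sum_{\chi(S_0)<0}\chi(S_0) = \sum_{\chi(S_0)>0}\chi(S_0) - \chi(S) \le b - (2 - 2g - b) = 2g + 2b - 2$, using that pieces with $\chi = 0$ contribute nothing and pieces with $\chi > 0$ (disks) contribute at most $b$. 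Since the only non-tube pieces incident to a loop are bad annuli (with $\iota_0 = 1$, at most $b$ of them) and big pieces, the sum of $\iota_0$ over non-tube pieces incident to a loop is at most $b + 3(2g + 2b - 2)$. Finally a loop that is not nice is either a self-loop or joins a non-tube piece, and a self-loop at a tube forces that tube, reglued along the loop, to be a whole torus component of $S$ (an annulus with its two ends identified), of which there are at most $g$; hence the number of non-nice loops is at most $g + b + 3(2g+2b-2) \le 9(g+b)$.

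I expect the main obstacle — really the heart of the lemma — to be the geometric input: observing that flatness forbids a loop edge from bounding a disk (equivalently, loop edges of a flat polyhedral surface are non-contractible), which is exactly what prevents the pieces from breaking up into arbitrarily many small disks and lets the identity $\sum_{S_0}\chi(S_0) = \chi(S)$ control everything. The rest is careful bookkeeping; the one point requiring vigilance is to keep the final bound linear in $g + b$ rather than in the number of pieces, which works precisely because tubes have zero Euler characteristic and so are invisible to $\sum_{S_0}\chi(S_0)$.
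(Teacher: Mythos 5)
Your proof is correct and rests on the same two pillars as the paper's: Euler characteristic bookkeeping across the pieces of $\mathcal S(T)\setminus I$, and the observation that a flat disk cannot be bounded by a single geodesic loop. The paper packages the bookkeeping more compactly via the potential $\lambda(T_0) = 2d(T_0) - \chi(T_0)$ (which is $0$ exactly on tubes, strictly positive otherwise, and sums over all pieces to $2b - \chi(S) = 2g + 3b - 2$), then bounds the number of boundary circles of non-tube pieces by $3\sum\lambda \leq 9(g+b)$; your version makes the same accounting explicit via a sign-of-$\chi$ case analysis with separate bounds for the $\chi>0$ disks, the $\chi=0$ ``bad annuli'', and the $\chi<0$ pieces. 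The paper also settles the self-loop-at-a-tube case by a dichotomy (if it occurs, $S$ is a torus and $I$ is a singleton, so one is done outright) rather than by the $\leq g$ count you use, but these are interchangeable. One genuine, if minor, point of departure: for the geometric input you give a self-contained discrete argument — counting angles in a triangulated monogon forces a vertex angle of $-\pi$ — whereas the paper simply invokes Gauss--Bonnet for the flat disk bounded by a geodesic loop; your version is more elementary and does not need the smooth formula.
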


\begin{proof}%[Proof of Lemma~\ref{L:packing}]
Cut $\mathcal S(T)$ along $I$, and consider the resulting connected components. Those components are the surfaces of sub-portalgons of $T$. Let $Z$ contain those sub-portalgons of $T$. Let $Z' \subseteq Z$ contain the sub-portalgons that are not tubes. Without loss of generality $I \neq \emptyset$. Then every $T_0 \in Z$ is such that $\partial \mathcal S(T_0) \neq \emptyset$ because $\mathcal S(T)$ is connected. Let $\chi(T_0)$ and $d(T_0)$ be respectively the Euler characteristic of $\mathcal S(T_0)$ and the number of boundary components of $\mathcal S(T)$ that belong to $\mathcal S(T_0)$. Let $\lambda(T_0) = 2d(T_0) - \chi(T_0)$. 

We claim that every $T_0 \in Z$ satisfies $\lambda(T_0) \geq 0$, and that if $T_0 \in Z'$ then $\lambda(T_0) > 0$. Indeed we have $\chi(T_0) \leq 1$ because $\mathcal S(T_0)$ is not homeomorphic to a sphere. So assuming $\lambda(T_0) \leq 0$, we get $d(T_0) = 0$. Then $\chi(T_0) \neq 1$ for otherwise $\mathcal S(T_0)$ would be homeomorphic to a disk, would have no curved point in its interior, and would be bounded by a single geodesic loop issued of $I$, contradicting the formula of Gauss--Bonnet. So $\chi(T_0) = 0$. Then $T_0$ is a tube because $\mathcal S(T_0)$ is not homeomorphic to a torus. This proves the claim. 

Now for every $T_0 \in Z'$ let $b(T_0)$ be the number of boundary components of $\mathcal S(T_0)$. The claim implies $b(T_0) \leq 2 - \chi(T_0) \leq 2 + \lambda(T_0) \leq 3 \lambda(T_0)$. So $\sum_{T_0 \in Z'} b(T_0) \leq 3 \sum_{T_0 \in Z'} \lambda(T_0) \leq 3 \sum_{T_0 \in Z} \lambda(T_0) \leq 9(g+b)$. Therefore at most $9(g+b)$ loops in $I$ are incident to the surface of some $T_0 \in Z'$. If every other loop in $I$ is incident to the surfaces of two distinct $T_0, T_1 \in Z$ then we are done. Otherwise there is a loop $e \in I$ incident to the surface of only one $T_0 \in Z$. Because $T_0$ is a tube, $\mathcal S(T)$ is a homeomorphic to a torus, and $e$ is the only loop in $I$, so we are done. This proves the lemma.
\end{proof}

\begin{proof}[Proof of Lemma~\ref{lem:tubing23}]
We claim that in the application of \algoref{SimplifyTubes} the set $J$ contains at most $9(g+b)$ loops. This is true if step 1a is applied, for in this case $g = 1$ and $J$ contains either zero or two loops. And if step 1b is applied all but $9(g+b)$ loops in $J'$ are incident to two distinct connected component of $\mathcal S(T) \setminus J'$ whose corresponding sub-portalgons of $T$ are tubes, by Lemma~\ref{L:packing}. Those loops are not retained in $J$. This proves the claim.

In the particular case where the surface $\mathcal S(T)$ is homeomorphic to a torus, and where $T^1$ contains exactly one vertex incident to loop a edge, the application of \algoref{SimplifyTubes} does nothing and $T = T'$. In this case the lemma is proved. In all other cases if a vertex $v$ of $T'^1$ lies in the interior of $\mathcal S(T')$ and is incident to a loop edge in $T'^1$, then $v$ is the base vertex of some loop in $J$. Indeed $v$ would otherwise have been deleted by \algoref{SimplifyTubes} when replacing a tube by a biface. There are at most $9(g+b)$ such vertices by our claim. This proves the lemma.
\end{proof}

\subsection{Analysis of \algoref{Gardening}}

Right after applying \algoref{Gardening} the topology of $\mathcal S(R_A)$, the surface of the active region, is bounded by the topology of $\mathcal S(R)$, the whole surface:

\begin{lemma}\label{lem:gardening}
Let $g$ and $b$ be the genus and the number of boundary components of $\mathcal S(R)$. The genus of $\mathcal S(R_A)$ is smaller than or equal to $g$. And right after applying \algoref{Gardening} $\mathcal S(R_A)$ has at most $10(g+b)$ boundary components.
\end{lemma}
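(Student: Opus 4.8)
The plan is to bound separately the genus and the number of boundary components of $\mathcal S(R_A)$ right after \algoref{Gardening} is applied. For the genus, I would argue that throughout the execution the active region is obtained from $\mathcal S(R)$ by cutting along some collection of disjoint simple loops (the separating loops, which are boundary edges of inactive good bifaces) and discarding some of the resulting pieces. Cutting a surface along a disjoint collection of simple closed curves and then taking a subsurface can never increase the genus of any component, so every connected component of $\mathcal S(R_A)$ has genus at most $g$, and hence so does $\mathcal S(R_A)$ (its genus being the sum over components, but here I would more carefully note that $\mathcal S(R_A)$ is a subsurface of $\mathcal S(R)$ up to cutting, so its total genus is at most $g$; in fact since cutting along loops only ever splits handles apart, the sum of the genera of the pieces is at most $g$).

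For the boundary components, the key point is that right after \algoref{Gardening} every connected component of $\mathcal S(R_A)$ that is a tube has been replaced by a good biface, which is a tube with exactly two boundary components and no other wasted structure; and any connected component of $\mathcal S(R_A)$ that is not a tube contributes "topological complexity" that can be charged against $g$ and $b$. Concretely, I would let the connected components of $\mathcal S(R_A)$ be $C_1, \dots, C_k$, each of which is the surface of a sub-portalgon $X_i$ of $R_A$ with boundary (it has boundary since its boundary edges include separating loops or boundary components of $\mathcal S(R)$). For each $i$ set $d(C_i)$ to be the number of boundary components of $\mathcal S(R)$ lying in $C_i$, and use a Euler-characteristic / Gauss--Bonnet argument exactly as in Lemma~\ref{L:packing}: define $\lambda(C_i) = 2 d(C_i) - \chi(C_i) \geq 0$, with $\lambda(C_i) > 0$ unless $C_i$ is a tube. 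The separating loops are disjoint simple closed curves in the interior of $\mathcal S(R)$, and cutting $\mathcal S(R)$ along all of them and summing $\lambda$ over all pieces (active and inactive) gives $\sum \lambda \leq 2b + \text{(something like } 2g + b)$; being slightly generous, $\sum_{\text{all pieces}} \lambda = O(g+b)$, with the precise constant arranged to give $\sum_i \lambda(C_i) \le$ something that yields the bound $10(g+b)$.

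Having done that, the number of boundary components of $\mathcal S(R_A)$ is $\sum_i b(C_i)$ where $b(C_i)$ is the number of boundary components of $C_i$. For the tube components we have $b(C_i) = 2$ after \algoref{Gardening}, but each tube component has at least one boundary edge that is a separating loop or a boundary component of $\mathcal S(R)$ --- I would use the disjointness invariant 3) to bound the number of tube components by (number of separating loops) $+$ (number of boundary components of $\mathcal S(R)$), and then bound the number of separating loops itself by $O(g+b)$ via the same cutting argument (a maximal disjoint collection of interior loops in a genus-$g$, $b$-boundary surface has $O(g+b)$ elements). For the non-tube components, $b(C_i) \le 2 - \chi(C_i) \le 3\lambda(C_i)$ as in Lemma~\ref{L:packing}, so $\sum_{\text{non-tube }i} b(C_i) \le 3\sum_i \lambda(C_i) = O(g+b)$. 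Combining, the total is $O(g+b)$, and one checks the constants give $10(g+b)$.

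The main obstacle I anticipate is getting the constant $10$ exactly right, which requires being careful about (i) the precise relationship between $\chi(\mathcal S(R))$, $g$, $b$ and the sum of $\lambda$ over the pieces of the cut surface, (ii) counting the tube components of $\mathcal S(R_A)$ accurately using invariant 3) --- in particular handling the degenerate torus case where $\mathcal S(R)$ is a torus and the cut produces a single tube glued to itself --- and (iii) making sure that the separating loops really do form a pairwise-disjoint family of essential simple closed curves so the standard "a disjoint family of essential loops has at most $3g-3+b$ (roughly) elements" type bound applies. The genus bound and the qualitative $O(g+b)$ bound on boundary components are routine; squeezing everything into $10(g+b)$ is the only delicate bookkeeping, and mirrors almost verbatim the argument already carried out in the proof of Lemma~\ref{L:packing} together with Lemma~\ref{lem:tubing23}.
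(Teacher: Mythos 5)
Your genus argument is fine, and your instinct to re-run the $\lambda$-count of Lemma~\ref{L:packing} directly on the components of $\mathcal S(R_A)$ is reasonable (and would actually give $9(g+b)$, slightly better than the paper's $b + 9(g+b) \le 10(g+b)$). But there is a genuine gap that derails the boundary-component bound.

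You write ``For the tube components we have $b(C_i)=2$ after \algoref{Gardening}, but each tube component has at least one boundary edge that is a separating loop\ldots'' and then set out to bound the number of tube components via the total number of separating loops, which you claim is $O(g+b)$. This reveals a misreading of \algoref{Gardening}: when a connected component of $\mathcal S(R_A)$ is a tube, \algoref{Gardening} replaces it by a good biface \emph{and marks that biface as inactive}. Consequently, right after \algoref{Gardening}, $\mathcal S(R_A)$ has \emph{no} tube components at all --- and this fact is precisely what makes the bound work. Without it, the route you take requires ``a maximal disjoint collection of interior loops in a genus-$g$, $b$-boundary surface has $O(g+b)$ elements,'' which is false: a long chain of thin bifaces inside an annulus ($g=0$, $b=2$) has arbitrarily many pairwise-disjoint separating loops. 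Lemma~\ref{L:packing} was specifically engineered to sidestep this: it bounds only the number of loops whose \emph{both} incident pieces fail to be tubes, not all disjoint loops.

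The fix is to replace your ``tube components'' case analysis by the single observation that after \algoref{Gardening} every $C_i$ is not a tube, hence $\lambda(C_i) \ge 1$, hence $b(C_i) \le 2 - \chi(C_i) \le 2 + \lambda(C_i) \le 3\lambda(C_i)$, and then sum $\lambda$ over all pieces (active and inactive) of $\mathcal S(R)$ cut along the separating loops to get $\sum \lambda = 3b + 2g - 2 \le 3(g+b)$, yielding $\sum_i b(C_i) \le 9(g+b)$. This is essentially what the paper does, except that the paper packages it by observing that every separating loop adjacent to $\mathcal S(R_A)$ has a non-tube on its active side, so Lemma~\ref{L:packing} applied to those loops directly gives at most $9(g+b)$ of them, to which one adds the $b$ original boundary components.
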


\begin{proof}
The genus of $\mathcal S(R_A)$ is smaller than or equal to $g$. It is the number of boundary components of $\mathcal S(R_A)$ that we must handle. Each boundary component of $\mathcal S(R_A)$ is either a boundary component of $\mathcal S(R)$, and there are $b$ of them, or it is a separating loop. We bound the the number of separating loops adjacent to $\mathcal S(R_A)$, so let $I$ contain those loops. Each $e \in I$ is incident to two connected components of $\mathcal S(R) \setminus I$: one of them is in $\mathcal S(R_A)$, the other is not. The component in $\mathcal S(R_A)$ is not the surface of a tube because \algoref{Gardening} was just applied. So $I$ contains at most $9(g+b)$ loops by Lemma~\ref{L:packing}. We proved that $\mathcal S(R_A)$ has at most $10(g+b)$ boundary components.
\end{proof}

\subsection{Proof of Proposition~\ref{prop:bound number vertices}}

\begin{proof}[Proof of Proposition~\ref{prop:bound number vertices}]
We will prove that the number of vertices of $R_A^1$ is $O(n)$ throughout the execution. This will prove the lemma for then the number of edges of $R_A^1$ is also $O(n)$ by Lemma~\ref{lem:euler edges}, because the genus of $\mathcal S(R_A)$ is $O(n)$, and so the number of polygons of $R_A$ is also $O(n)$.

Consider the input triangular portalgon $T$. Let $m$ be the number of vertices of $T^1$. Let $g$ and $b$ be the genus and the number of boundary components of $\mathcal S(T)$. Observe that $m \leq 3n$, $g \leq n$, and $b \leq n$. We will argue using $m$, $g$, and $b$ instead of $n$. There are two loops in the algorithm: the main loop, which repeats $N$ times, and the interior loop, which repeats 350 times within each iteration of the main loop.

First we consider a single iteration of the interior loop. Let $m_A$ be the number of vertices of $R_A^1$ at the begining of this iteration. Observe that the iteration does not insert any new vertex in $R_A^1$. We claim that if $m_A > 3000(g+b+m)$ then less than $167m_A/168$ vertices are in $R_A^1$ at the end of the loop. To prove the claim first observe that after each application of \algoref{Gardening} $\mathcal S(R_A)$ has at most $10(g+b)$ boundary components by Lemma~\ref{lem:gardening}. And the genus of $\mathcal S(R_A)$ is smaller than or equal to $g$. Now after the application of \algoref{SimplifyTubes} at most $9(g + 10(g+b)) \leq 99(g+b)$ vertices of $R_A^1$ lie in the interior of $\mathcal S(R_A)$ and are incident to a loop by Lemma~\ref{lem:tubing23}. This is still the case just before the application of \algoref{DeleteVertices}. Moreover, at this point, at most $m + 10(g+b)$ vertices of $R_A^1$ lie on the boundary of $\mathcal S(R_A)$; indeed every such vertex is either a vertex of $T^1$, and there are at most $m$, or it is the base vertex of a separating loop, in which case it is the unique vertex in its boundary component of $\mathcal S(R_A)$, and there are at most $10(g+b)$. Altogether, just before the application of DeleteVertices, the number $\bar m_A$ of strong vertices of $R_A^1$ satisfies $\bar m_A \leq m + 109(g+b)$. If at this point $R_A^1$ has at most $24(g + \bar m_A)$ vertices then it already has less than $167m_A/168$ vertices because we assumed $m_A > 3000(g+b+m)$. Otherwise less than $167m_A/168$ vertices remain after \algoref{DeleteVertices} by Lemma~\ref{lem:deletes fraction of vertices}. In any case the claim is proved.

Now we prove the lemma by considering a single iteration of the main loop. Assuming that $R_A^1$ has more than $3000(g+b+m)$ vertices at the beginning of the iteration, we shall prove that in the end of the iteration the number of vertices of $R_A^1$ has decreased. To do so first observe that the iteration starts with \algoref{InsertVertices}, and this is the only moment where vertices are inserted. At this point the number of vertices of $R_A^1$ is multiplied by less than $8$ by Lemma~\ref{lem:insert how many vertices}. And by our claim, as long as the number of vertices exceeds $3000 (g+b+m)$ it is divided by more than $168/167$ by each iteration of the interior loop. There are 350 iterations of the interior loop, and $8 < (168/167)^{350}$. This proves the lemma.
\end{proof}

%%%%%%%%%%%%%%%%%%%%%%%%%%%%%%%%%%%%%%%%%%%%%%%%%%%%%%%%%%%%%%%%%%%%%%%%%%%%%%%%%%%%%%%%%%%%%%%%%%%%%%%%%%%%%%%%%%%%%%%%%%%%%%%%%%%%%%%%%%%%%%%%%%%%%%%%%%%%%%%%%%%%%
%%%%%%%%%%%%%%%%%%%%%%%%%%%%%%%%%%%%%%%%%%%%%%%%%%%%%%%%%%%%%%%%%%%%%%%%%%%%%%%%%%%%%%%%%%%%%%%%%%%%%%%%%%%%%%%%%%%%%%%%%%%%%%%%%%%%%%%%%%%%%%%%%%%%%%%%%%%%%%%%%%%%%
%%%%%%%%%%%%%%%%%%%%%%%%%%%%%%%%%%%%%%%%%%%%%%%%%%%%%%%%%%%%%%%%%%%%%%%%%%%%%%%%%%%%%%%%%%%%%%%%%%%%%%%%%%%%%%%%%%%%%%%%%%%%%%%%%%%%%%%%%%%%%%%%%%%%%%%%%%%%%%%%%%%%%
%%%%%%%%%%%%%%%%%%%%%%%%%%%%%%%%%%%%%%%%%%%%%%%%%%%%%%%%%%%%%%%%%%%%%%%%%%%%%%%%%%%%%%%%%%%%%%%%%%%%%%%%%%%%%%%%%%%%%%%%%%%%%%%%%%%%%%%%%%%%%%%%%%%%%%%%%%%%%%%%%%%%%

\section{Appendix of Sections~\ref{sec:enclosure}~and~\ref{sec:geometric analysis}: enclosure}\label{app:enclosure}

%In this section we relate enclosure to segment-happiness and length, and we show what the elementary operations used by \algoref{Algorithm} do to the enclosure and the length of the edges involved.

\subsection{Proof Lemma~\ref{prop:edges}}

In this section we prove Lemma~\ref{prop:edges}, which we restate for convenience:

\lemmapropedges*

\begin{proof}[Proof of Lemma~\ref{prop:edges}]
Let $t > 1$. Assume that there is a loop $\gamma$, based at a point $x$, that encloses $f$ by a factor of $t$. Then $\gamma$ encloses $e$ by a factor of $t$ because $\langle x \rangle_f \leq \langle x \rangle_e$.
\end{proof}

\subsection{Proof of Proposition~\ref{prop:polygons}}

In this section we prove Proposition~\ref{prop:polygons}, which we restate for convenience:

\proppolygons*

First we need a lemma:

\begin{lemma}\label{rebasing}
In $S$, let $e$ and $f$ be two segments whose relative interiors are disjoint, and let $\gamma$ be a geodesic loop. Assume that $\gamma$ encloses $e$ by a factor of $t > 2$, and that $\gamma$ intersects $f$ at a point $y$ such that $\langle y \rangle_f > \ell(\gamma)$. Rebase $\gamma$ at $y$, and let $\gamma'$ be the geodesic loop homotopic to it. Then $\gamma'$ meets $y$ on both sides of $f$.
\end{lemma}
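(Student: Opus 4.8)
The plan is to argue by contradiction: suppose $\gamma'$ meets $y$ on only one side of $f$. I will derive a contradiction with the hypothesis that $\gamma$ encloses $e$ by a factor of $t>2$, that is, that $\gamma$ \emph{crosses} $e$ at its basepoint $x$ with $\langle x\rangle_e/\ell(\gamma) = t$, so in particular $\ell(\gamma) < \langle x \rangle_e / 2 \le \ell(e)/2$ (using $2\langle x\rangle_e \le \ell(e)$... in fact $\langle x\rangle_e \le \ell(e)/2$, hence $\ell(\gamma) < \ell(e)/4$; the precise constant is not important, only that $\ell(\gamma)$ is much shorter than $\ell(e)$ and than the relevant sub-arcs).

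First I would set up the geometry. Since $\gamma$ and $\gamma'$ are freely homotopic (one is obtained from the other by rebasing along a sub-path of $\gamma$ from $x$ to $y$), they have the same length: $\ell(\gamma') = \ell(\gamma)$. The loop $\gamma'$ is a geodesic loop based at $y$, and $y$ lies in the relative interior of $f$ with $\langle y\rangle_f > \ell(\gamma) = \ell(\gamma')$. If $\gamma'$ meets $y$ on only one side of $f$, then $\gamma'$ does not cross $f$ at $y$; since $\ell(\gamma') < \langle y\rangle_f$, the loop $\gamma'$ and the segment $f$ are in general position, and $\gamma'$ stays entirely on one (closed) side of $f$ near $y$ — more strongly, I want to conclude that $\gamma'$ is disjoint from the relative interior of $f$ except possibly being tangent at $y$. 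The key quantitative point: any point of $\gamma'$ is within distance $\ell(\gamma')/2 < \langle y\rangle_f / 2$ of $y$ along $f$'s two sub-arcs, so $\gamma'$ cannot "wrap around" and reach $f$ from the other side without having length at least $2\langle y\rangle_f$. So $\gamma'$ lies in the half-surface bounded locally by $f$ on the side where it emanates from $y$.

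Next I would transport this back to $e$. Because the relative interiors of $e$ and $f$ are disjoint, and $\gamma$ crosses $e$ at $x$, the crossing forces $\gamma$ to have points strictly on both sides of $e$. Rebasing is a homotopy; I would track how the crossing at $x$ interacts with the point $y \in \gamma \cap f$. The idea is that the sub-path of $\gamma$ from $x$ to $y$, together with a short sub-arc of $f$ through $y$ and a short sub-arc of $e$ through $x$, and general position, let one run a Gauss–Bonnet / disk-filling argument (as in the proof of Lemma~\ref{L:concat of bifaces is happy}): if $\gamma'$ sat on only one side of $f$, one could cap off a loop and get a contractible disk on $S$ (flat, since $\mathcal S$ is flat in the relevant region) whose boundary turning angles violate Gauss–Bonnet, exactly because $\gamma'$ being a geodesic loop based at $y$ with a single-sided incidence bounds a disk with too few corners. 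Alternatively, and perhaps more cleanly: use that $\gamma$ enclosed $e$ to produce, via $y$, a loop based at $y$ crossing $f$ — concatenate the relevant sub-arc of $\gamma$ with its "return" and with the crossing of $e$ — of length comparable to $\ell(\gamma)$, hence shorter than $\langle y\rangle_f$; pulling this tight gives a geodesic loop at $y$ that must cross $f$, and this tightened loop is $\gamma'$ (it is in the same homotopy class), contradiction.

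The main obstacle I anticipate is making the homotopy-invariance bookkeeping rigorous: showing that "$\gamma$ crosses $e$" survives rebasing in the right form, i.e.\ that the tightened loop $\gamma'$ at $y$ inherits a genuine crossing of $f$ from $\gamma$'s crossing of $e$, rather than a mere touching. The clean way is to phrase everything in the universal cover (or a suitable developing map of the flat surface away from singularities): lift $\gamma$ and the segments $e$, $f$ to the plane, where "crossing" becomes a transversality statement about straight segments and a broken geodesic, and where the rebasing corresponds to choosing a different lift of the basepoint; then the two-sidedness of $\gamma'$ at $y$ is a statement about which side of the lifted $f$ the straight-line representative of the loop class exits, which one can pin down from the fact that the corresponding statement holds for $e$ at $x$ and that $\ell(\gamma)$ is small relative to $\langle x\rangle_e$ and $\langle y\rangle_f$. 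The constant $t > 2$ is exactly what guarantees $\ell(\gamma) < \langle x\rangle_e/2$, leaving enough room in $e$ on both sides of $x$ to run the lifting argument without the loop escaping past the endpoints of $e$.
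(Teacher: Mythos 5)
You have correctly identified the universal cover (developing map of the flat surface) as the right setting — the paper's proof does exactly that — and you have also correctly isolated the hard step: showing that the tightened loop $\gamma'$ based at $y$ inherits a genuine \emph{crossing} of $f$ rather than a mere touching. But you stop at naming this obstacle; your sketch never overcomes it. Your second alternative ("pulling this tight gives a geodesic loop at $y$ that must cross $f$") is circular: why the geodesic representative in the homotopy class must still cross $f$ is precisely the content of the lemma. Your first alternative (Gauss--Bonnet / disk-filling) is a plausible gesture but is not carried out, and the informal "half-surface bounded locally by $f$" picture is misleading on a non-simply-connected surface: what one actually has, after lifting, is that the full geodesic line $\widetilde L \supseteq \widetilde f$ separates $\widetilde S$ and that a geodesic lift of $\gamma'$ starting at $\widetilde y \in \widetilde L$ can meet $\widetilde L$ nowhere else, by flatness.

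The missing core of the paper's argument is the following construction, which your proposal does not anticipate. Assume for contradiction $\gamma'$ meets $y$ only on the right of $f$. In $\widetilde S$, both lifts of $\gamma'$ at $\widetilde y$ and hence their far endpoints lie strictly right of $\widetilde L$. Direct $\gamma$ to cross $e$ from right to left at $x$ and split it at that crossing as $\gamma_0 \cdot \gamma_1$. A lift $\widetilde\gamma_1$ from $\widetilde y$ lands at a lift $\widetilde x$ strictly left of $\widetilde L$; the lift $\widetilde\gamma_0$ starting at $\widetilde x$ ends at a point that (by the freely-homotopic comparison with $\gamma'$) lies strictly right of $\widetilde L$, so $\widetilde\gamma_0$ must cross $\widetilde L$ at some $\widetilde z$. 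Joining the prefix of $\widetilde\gamma_0$ up to $\widetilde z$, the subsegment $\widetilde I \subset \widetilde L$ from $\widetilde z$ to $\widetilde y$, and $\widetilde\gamma_1$ yields a simple closed curve $\widetilde C$. Because $\gamma$ \emph{crosses} $e$ at $x$, a portion of $\widetilde e$ enters the bounded side of $\widetilde C$ at $\widetilde x$, and by flatness it extends to a geodesic that must exit through $\widetilde C$, necessarily through the relative interior of $\widetilde I$. The length bounds $\ell(\gamma) < \langle x\rangle_e / t$ with $t>2$ and $\ell(\gamma) < \langle y\rangle_f$ then force the exit point to lie in the relative interiors of both $\widetilde e$ and $\widetilde f$, contradicting disjointness. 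Without something equivalent to this construction and the final length estimates, the proposal is a strategy outline, not a proof. (Minor: rebasing followed by tightening gives $\ell(\gamma') \le \ell(\gamma)$, not equality as you assert; fortunately only the inequality is used.)
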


We emphasize that, in Lemma~\ref{rebasing}, the loops $\gamma$ and $\gamma'$ have distinct basepoints, and that they may not be geodesic at their basepoints.

\begin{proof}
We have $\ell(\gamma') \leq \ell(\gamma)$ so $\ell(\gamma') < \langle y \rangle_f$, and so $\gamma'$ is in general position with $f$. We prove the lemma by contradiction, so assume that $\gamma'$ meets $y$ only on the right side of $f$, for some direction of $f$. In the universal covering space $\widetilde S$ of $S$, consider a lift $\widetilde f$ of $f$. Let $\widetilde y$ be the lift of $y$ that belongs to $\widetilde f$. Because the interior of $\widetilde S$ is flat, there is a geodesic $\widetilde L$, containing $\widetilde f$, such that on both ends $\widetilde L$ is either infinite or reaches the boundary of $\widetilde S$. Then $\widetilde L$ separates $\widetilde S$ in two connected components. The two lifts of $\gamma'$ incident to $\widetilde y$ meet $\widetilde y$ on the right side of $\widetilde f$ by assumption, and they are otherwise disjoint from $\widetilde L$. In particular, their other endpoints lie on the right side of $\widetilde L$.

We have $\ell(\gamma) < \langle y \rangle_f$ so $\gamma$ is in general position with $f$. Direct $\gamma$ so that $\gamma$ crosses $f$ from right to left at $y$, and write $\gamma$ as the concatenation of two paths $\gamma_0$ and $\gamma_1$ respectively before and after its crossing at $y$. There is a lift $\widetilde \gamma_1$ of $\gamma_1$ that leaves $\widetilde y$ on the left of $\widetilde f$. And $\widetilde \gamma_1$ is otherwise disjoint from $\widetilde L$, because the interior of $\widetilde S$ is flat. Thus the endpoint $\widetilde x$ of $\widetilde \gamma_1$ lies on the left of $\widetilde L$. There is a lift $\widetilde \gamma_0$ of $\gamma_0$ that starts at $\widetilde x$. And $\widetilde \gamma_0$ is otherwise disjoint from $\widetilde \gamma_1$ because $\gamma$ meets $x$ on both sides of $e$, and because the interior of $\widetilde S$ is flat. By the previous paragraph, the endpoint of $\widetilde \gamma_0$ lies on the right side of $\widetilde L$, so $\widetilde \gamma_0$ intersects $\widetilde L$. Cut $\widetilde \gamma_0$ at its first intersection point $\widetilde z$ with $\widetilde L$. Let $\widetilde I$ be the sub-segment of $\widetilde L$ between $\widetilde y$ and $\widetilde z$. The concatenation of the prefix of $\widetilde \gamma_0$ ending at $\widetilde z$, of $\widetilde I$, and of $\widetilde \gamma_1$ is a simple closed curve $\widetilde C$. At $\widetilde x$, there is a portion of $\widetilde e$ that enters the bounded side of $\widetilde C$, because $\gamma$ meets $x$ on both sides of $e$. This portion of $\widetilde e$ can be extended into a geodesic $\widetilde p$ that meets $\widetilde C$ at some point $\widetilde v$, because the interior of $\widetilde S$ is flat. Then $\widetilde v$ belongs to the relative interior of $\widetilde I$. We claim that $\widetilde v$ belongs to the relative interiors of both $\widetilde e$ and $\widetilde f$, which is a contradiction because the relative interiors of $e$ and $f$ are disjoint. To prove the claim, first observe that the distance between $\widetilde y$ and $\widetilde z$ in $\widetilde S$ is at most $\ell(\gamma)$, and this distance is equal to the length of $\widetilde I$, because the interior of $\widetilde S$ is flat. So the sub-segment of $\widetilde I$ between $\widetilde y$ and $\widetilde v$ is no longer than $\ell(\gamma) < \langle y \rangle_f$, and is thus included in the relative interior of $\widetilde f$. Also, the distance between $\widetilde v$ and $\widetilde x$ is smaller than or equal to $2 \ell(\gamma) \leq 2 \langle x \rangle_e / t < \langle x \rangle_e$, so $\widetilde p$ is included in the relative interior of $\widetilde e$.
\end{proof}

\begin{figure}[ht]
    \centering
    \includegraphics[width=0.6\linewidth]{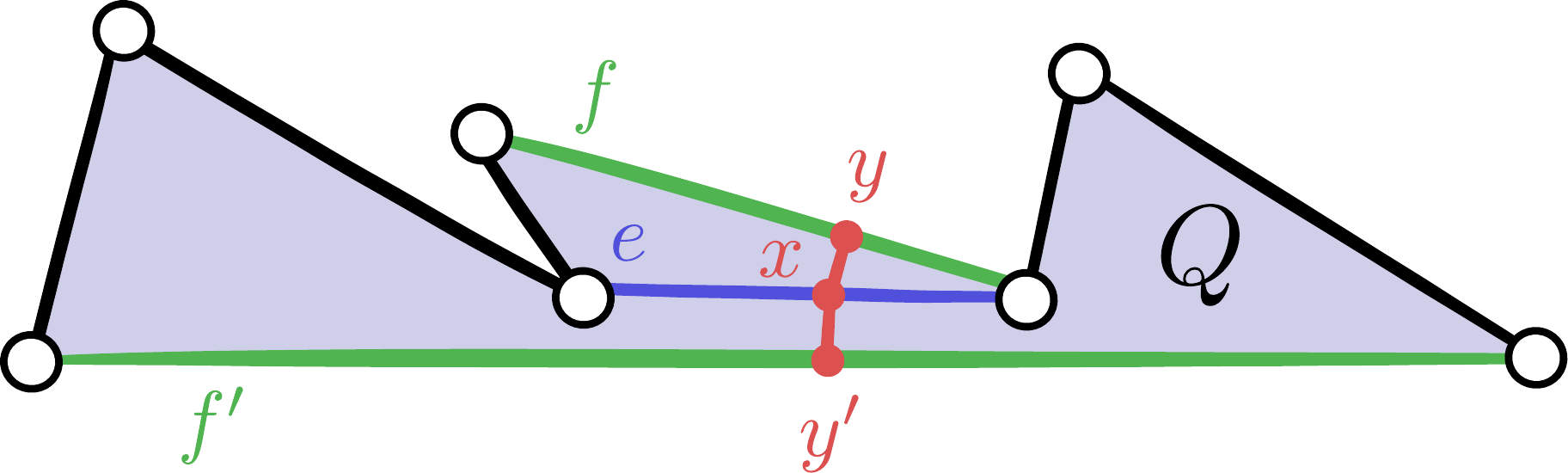}
    \caption{The setting of Lemma~\ref{inapolygon}.}
    \label{fig:sarc proof}
\end{figure}

The proof of Proposition~\ref{prop:polygons} also relies on the following construction. See Figure~\ref{fig:sarc proof}. In the Euclidean plane $\mathbb R^2$ let $Q$ be a polygon with more than three vertices. Let $e$ be a shortcut of $Q$. Let $f$ and $f'$ be sides of $Q$ separated by $e$ along the boundary of $Q$. Let $x$ be a point in the relative interior of $e$. Let $y$ and $y'$ be points that lie on respectively $f$ and $f'$ (possibly vertices of $Q$), and do not lie on $e$. Consider the segments $p$ and $p'$ between $x$ and respectively $y$ and $y'$. Assume that the relative interiors $p$ and $p'$ are included in the interior of $Q$. Then:

\begin{lemma}\label{inapolygon}
Let $t > 6$. If $\ell(p) \leq \langle x \rangle_e / t$ and $\ell(p') \leq \langle x \rangle_e / t$, then at least one of $f$ and $f'$, say $f$, is such that $\langle y \rangle_f \geq (1 -4/t) \cdot \langle x \rangle_e$ and $\ell(f) \geq (1-4/t)\cdot \ell(e)$.
\end{lemma}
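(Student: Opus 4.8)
The plan is to argue by contradiction, using the triangle inequality in the plane together with the combinatorial fact that $e$ is a \emph{shortcut}, i.e.\ a shortest vertex-to-vertex arc of $Q$. Write $\ell(e) = \langle x\rangle_e + \langle x\rangle_e'$ where $\langle x\rangle_e$ and $\langle x\rangle_e'$ denote the lengths of the two sub-segments of $e$ separated by $x$, with $\langle x\rangle_e \le \langle x\rangle_e'$ the smaller one (so $\langle x\rangle_e \le \ell(e)/2$). Let $u,v$ be the endpoints of $e$ (vertices of $Q$), arranged so that $\langle x\rangle_e = |xu|$ and $\langle x\rangle_e' = |xv|$. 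The points $y\in f$ and $y'\in f'$, together with the segments $p=[x,y]$ and $p'=[x,y']$, cut the polygon $Q$ into smaller pieces; in particular, walking along $\partial Q$ from $f$ to $f'$ on the side not containing the "short" portion of $e$ near $u$ is relevant. The key geometric observation is that since $p,p'$ lie in the interior of $Q$ and have relative interiors in the interior of $Q$, each of $[x,y]$ and $[y,$ (endpoint of $f$)$]$ etc.\ are vertex-to-vertex arcs or sub-arcs of sides, so their lengths are controlled by $\ell(e)$ via the shortcut property.

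First I would make precise which sides $f$ and $f'$ to compare. Since $e$ separates $f$ and $f'$ along $\partial Q$, one of the two endpoints of $e$, say $u$, is "between" $f$ and $f'$ on one arc of $\partial Q$, and $v$ on the other. I will choose to look at the side, among $f$ and $f'$, whose relevant endpoint is "far" from $x$ along $e$; by symmetry of the hypotheses on $p$ and $p'$ it suffices to handle one case and the conclusion will name that side $f$. Now suppose for contradiction that the conclusion fails for the chosen $f$, i.e.\ either $\langle y\rangle_f < (1-4/t)\langle x\rangle_e$ or $\ell(f) < (1-4/t)\ell(e)$. Let $w$ be the endpoint of $f$ on the arc of $\partial Q$ containing $u$ (the endpoint "near" $u$), so that $[w,u]$ is a sub-arc of $\partial Q$ — actually, I would instead use that $[x,w]$ is a vertex-to-vertex arc (relative interior in the interior of $Q$, since $p=[x,y]$ is and $w$ is an endpoint of the side $f$ containing $y$), hence $|xw| \ge \ell(e)$ because $e$ is a shortcut. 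On the other hand, the triangle inequality along $x \to y \to w$ gives $|xw| \le \ell(p) + |yw|$, and $|yw|$ is either $\langle y\rangle_f$ or $\ell(f) - \langle y\rangle_f$ depending on which endpoint $w$ is; combining with $\ell(p)\le \langle x\rangle_e/t \le \ell(e)/t$ yields a lower bound on $\langle y\rangle_f$ or $\ell(f)$ in terms of $\ell(e)$, hence in terms of $\langle x\rangle_e$ after using $\langle x\rangle_e \le \ell(e)/2$... — this is where the factor $4/t$ (rather than, say, $1/t$ or $2/t$) will come from, after accounting for the two possible roles of $w$ and the two quantities $\langle y\rangle_f$, $\ell(f)$, and for the fact that both of $f,f'$ must be examined. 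A symmetric argument with $p'$ handles the other side $f'$, and the assumption $t>6$ is used to guarantee $1-4/t>0$ and that the cross-terms like $\ell(p)+\ell(p') \le 2\langle x\rangle_e/t < \langle x\rangle_e$ stay controlled, so that at least one of $f,f'$ survives with the claimed bounds.

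The main obstacle I anticipate is the bookkeeping of which endpoint of which side plays the role of $w$, and establishing cleanly that the relevant segment (from $x$ to a vertex of $Q$, or from $y$ to a vertex of $Q$ along a side) is genuinely a vertex-to-vertex arc whose relative interior lies in the interior of $Q$ — this needs the hypothesis that $\mathring p,\mathring p'\subset \operatorname{int} Q$ and a short convexity/planarity argument about how $p$ and $f$ sit inside $Q$. Once that is pinned down, the shortcut minimality $|xw|\ge \ell(e)$ and the triangle inequality do all the work, and extracting the constant $4/t$ is a routine estimate. A secondary subtlety is ruling out the degenerate situation where $y$ or $y'$ coincides with an endpoint of $e$ (excluded by hypothesis: $y,y'\notin e$) or where $f=f'$ (excluded since $e$ separates them), so these need only a one-line dismissal.
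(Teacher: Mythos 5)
Your plan has a genuine gap, and it sits exactly where you anticipated the difficulty but waved it away as ``a short convexity/planarity argument.''

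The first concrete error is the assertion that $[x,w]$ is a vertex-to-vertex arc of $Q$ with $|xw|\ge\ell(e)$. The point $x$ is in the \emph{relative interior} of the shortcut $e$, not a vertex of $Q$, so $[x,w]$ is not a vertex-to-vertex arc and the shortcut minimality does not apply to it. Replacing $x$ by an endpoint $u$ or $v$ of $e$ fixes the terminology but creates the real problem: there is no reason for the straight segment $[u,w]$ to have its relative interior inside $Q$, and when it does not, $|uw|\ge\ell(e)$ is simply unavailable. Showing that the segments you need are (or are not) inside $Q$, and controlling what happens when they exit, is not a one-line convexity remark; it is the bulk of the paper's proof. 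The paper handles it by placing $e$ on a coordinate axis, tracking a moving point $z$ through three explicit phases along $p$ and the two halves of $f$, maintaining slope/height invariants to bound the angles of a triangle $\Delta$ formed when a ``break condition'' fires, and then repeating on the $f'$ side. The conclusion is a genuine disjunction: only if \emph{both} sides break does one obtain two triangles $\Delta,\Delta'$ whose small angles at the endpoints of $e$ force the segment between two vertices $w,w'$ of $Q$ to be a vertex-to-vertex arc strictly shorter than $e$, contradicting that $e$ is a shortcut. Your per-side triangle-inequality contradiction does not have that structure, and it is false per side: for example $f$ may share a vertex with $e$, making $|uw|=0$ so the shortcut bound gives nothing for that $f$. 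Finally, even granting that the relevant segments lie in $Q$, a single inequality $|uw|\ge\ell(e)$ plus the triangle inequality does not obviously deliver \emph{both} $\langle y\rangle_f\ge(1-4/t)\langle x\rangle_e$ and $\ell(f)\ge(1-4/t)\ell(e)$; the paper's algorithm is designed precisely to certify both quantities simultaneously along the sweep. So the core idea (shortcut minimality will eventually produce the contradiction) is the right seed, but the unfolding/angle-bookkeeping machinery that makes it go through is missing from your plan and cannot be skipped.
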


\begin{proof}
Assume without loss of generality that $e$ is horizontal, that $f$ lies above $e$, and that $x$ is the origin $(0,0) \in \mathbb R^2$. Then $x$ cuts $e$ into two segments $e_0$ and $e_1$, respectively the right and left one. Let $v_0$ and $v_1$ be respectively the right and left endpoints of $e$. Consider the following algorithm in three phases. In the first phase consider the point $z = x$ and move $z$ along $p$. Doing so, consider the segments from $z$ to $v_0$ and $v_1$. If moving $z$ makes the relative interior of one of those two segments intersect $\partial Q$, then stop: this is a break condition. Also break if $z$ reached $y$ and $y$ is a vertex of $Q$. Otherwise the algorithm enters its second phase. Then $y$ cuts $f$ in two segments $f_0$ and $f_1$, where $f_0$ is on the right of $y$ as seen from the path $p$ directed from $x$ to $y$. In phase two move $z$ along $f_0$ or $f_1$, choosing carefully which segment to move along so that the second coordinate of $z$ does not increase. We assume without loss of generality that $z$ moves along $f_0$, by flipping the figure horizontally otherwise. Move along $f_0$ by a distance of $(1-4/t) \ell(e_0)$, but break if $z$ reaches the right end-vertex of $f$, or if the relative interior of the segment between $z$ and $v_0$ intersects $\partial Q$. If the algorithm did not break, it enters its third and final phase. In this phase put $z$ back on $y$, and move it along the other sub-segment of $f$, here $f_1$, by a distance of $(1-4/t) \ell(e_1)$, breaking if $z$ reaches the left end-vertex of $f$, or if the relative interior of the segment between $z$ and $v_1$ intersects $\partial Q$. 

If the algorithm did not break then $\ell(f) \geq (1-4/t) \ell(e)$ and $\langle y \rangle_f \geq (1-4/t) \langle x \rangle_e$ and we are done. Otherwise, if the algorithm broke, consider the triangle $\Delta$ between $v_0$, $v_1$, and $z$. The break conditions ensure that the interior of $\Delta$ is included in the interior of $Q$, and that there is a vertex $w$ of $Q$ that lies on $\partial \Delta$ and not on $e$. We claim that the inner-angles of $\Delta$ at $v_0$ and $v_1$ are both strictly smaller than $\pi/4$. We prove this claim by considering the coordinates $(\alpha,\beta) \in \mathbb R \times [0,+\infty[$ of $z$, and the coordinates $(\ell(e_0), 0)$ and $(-\ell(e_1), 0)$ of $v_0$ and $v_1$ respectively, and by proving that the invariants $\ell(e_0) - \alpha > \beta$ and $\alpha + \ell(e_1) > \beta$ hold at any time during the algorithm. Let $m = \min(\ell(e_0), \ell(e_1)) = \langle x \rangle_e$. In the first phase $|\alpha| \leq m/t$ and $0 \leq \beta \leq m/t$, so the invariants hold because $t > 2$. In the second phase $\beta$ does not increase and $\alpha$ does not decrease. Moreover $\alpha$ does not increase by more than $\ell(e_0)(1-4/t)$ so the invariants hold. If the second phase ends without breaking then the absolute slope $\lambda$ of the line supporting $f$ is smaller than or equal to $1/(t-5)$. Indeed during the second phase $\beta$ decreased by at most $m/t$ while $z$ moved a distance $\ell(e_0)(1-4/t)$, so $\alpha$ increased by at least $\ell(e_0)(1-4/t) - m/t$, and so $1/\lambda \geq \ell(e_0)(1-4/t)t/m - 1 \geq t - 5$. In the third phase $\alpha \geq -m/t -\ell(e_1)(1-4/t)$ and $\beta \leq m/t + \lambda \ell(e_1)(1-4/t)$ so $\alpha + \ell(e_1) \geq 3\ell(e_1)/t > \beta$ because $t >6$. Also $\beta$ increases less than $\alpha$ decreases because $\lambda < 1/2$, so $\ell(e_0) - \alpha > \beta$ remains true. This proves the claim.

Applying the algorithm to $p'$ and $f'$ on the other side of $e$, either the algorithm does not break in which case $\ell(f') \geq (1-4/t) \ell(e)$, $\langle y' \rangle_{f'} \geq (1-4/t) \langle x \rangle_e$, and we are done. Or the algorithm breaks and we get similarly a triangle $\Delta'$ and a vertex $w'$ of $P$. The inner angles of $\Delta'$ at $v_0$ and $v_1$ are also both strictly smaller than $\pi/4$, so the relative interior of the segment between $w$ and $w'$ is included in the interior of the quadrilateral formed by $\Delta$ and $\Delta'$, and is strictly shorter than $e$. This segment is a vertex-to-vertex arc of $Q$ shorter than $e$, a contradiction.
\end{proof}

\begin{proof}[Proof of Proposition~\ref{prop:polygons}]
Let $t > 6$. Assume that there is a geodesic loop $\gamma$ that encloses $e$ by a factor of $t$. Let $x$ be the basepoint of $\gamma$. In the Euclidean plane, consider the polygon $Q$ corresponding to $F$. Let $\widehat e$ and $\widehat x$ be the pre-images of $e$ and $x$ in $Q$. Consider the prefix and the suffix of $\gamma$ that leave $x$ on both sides of $e$ to meet $\partial F$, and their pre-image paths in $Q$ that meet two boundary edges $\widehat f$ and $\widehat f'$ of $Q$, at respective points $\widehat y$ and $\widehat y'$. By Lemma~\ref{inapolygon}, one of those two points, say $\widehat y$ without loss of generality, is such that $\langle \widehat y \rangle_{\widehat f} \geq (1-4/t)\langle \widehat x \rangle_{\widehat e}$ and $\ell(\widehat f) \geq (1-4/t) \ell(\widehat e)$. Also $\widehat f$ projects to a boundary edge $f$ of $F$, and $\widehat y$ projects to a point $y$ in the relative interior of $f$. Now rebase $\gamma$ at $y$, and consider the geodesic loop $\gamma'$ homotopic to it (where the basepoint at $y$ is fixed by the homotopy). Then $\ell(\gamma') \leq \ell(\gamma) = \langle x \rangle_e / t < \langle y \rangle_f / (t-4)$. In particular $\ell(\gamma') < \langle y \rangle_f$ because $t > 5$. And $\gamma'$ meets $y$ on both sides of $f$ by Lemma~\ref{rebasing}, because $t > 2$.
\end{proof}

\subsection{Interior of a thick biface}

%Recall that, given a portalgon $R$ whose surface is $S$, \algoref{Algorithm} may replace a sub-portalgon of $R$ by a good biface $B$. The following ensures that if $B$ is thick, and if an interior edge of $B^1$ is “very enclosed” in the surface $S$, then it is “not much more enclosed” in $S$ and “not much longer” than the edges initially in $R^1$, similarly to Proposition~\ref{prop:polygons}:

In this section we prove the following:

\begin{restatable}{proposition}{propthicktube}\label{prop:thick tube}
Assume that $S$ contains the surface of a thick biface $B$, and let $e$ be one of the two interior edges of $B^1$. Assume that $c_S(e) > 6$. Then there is a boundary edge $f$ of $B^1$ such that $c_S(f) \geq c_S(e)-5$ and $\ell(f) \geq (1-4/c_S(e)) \cdot \ell(e)$.
\end{restatable}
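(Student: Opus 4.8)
The plan is to mimic the proof of Proposition~\ref{prop:polygons} on a quadrilateral attached to $B$. Let $e'$ be the interior edge of $B^1$ other than $e$; since $B$ is good, one of $e,e'$ — call it the \emph{distinguished} one — is a shortest path in $\mathcal S(B)$. Cutting $\mathcal S(B)$ open along $e'$ produces a flat planar quadrilateral $\hat Q$ whose four sides are the two copies of $e'$ and the two boundary edges $b_1,b_2$ of $B^1$, and in which $e$ reappears as a diagonal, splitting $\hat Q$ into the two triangles of $B$. First I would check that $e$ is the \emph{shorter} diagonal of $\hat Q$, hence a shortcut of $\hat Q$: if $e$ is the non-distinguished interior edge this is precisely the defining property of a good biface, and if $e$ is the distinguished one then the other diagonal of $\hat Q$ projects to a geodesic path of $\mathcal S(B)$ joining the two vertices of $B^1$, which cannot be shorter than the shortest such path, namely $e$.

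Then I would transcribe the proof of Proposition~\ref{prop:polygons} with $\hat Q$ in the role of the polygon and $e$ in the role of the shortcut. Fix $t_0$ with $6<t_0<c_S(e)$ and a geodesic loop $\gamma$ based at a point $x$ in the relative interior of $e$, crossing $e$ at $x$, with $\langle x\rangle_e/\ell(\gamma)\ge t_0$. Lifting $\gamma$ and following it from $\hat x$ in its two directions, it enters the two triangles of $\hat Q$ and exits each through a side along a straight chord of length at most $\ell(\gamma)$; Lemma~\ref{inapolygon} (legitimate since $t_0>6$) then hands us a side $\hat g$ of $\hat Q$, met at a point $\hat y$, with $\langle\hat y\rangle_{\hat g}\ge(1-4/t_0)\langle\hat x\rangle_e$ and $\ell(\hat g)\ge(1-4/t_0)\ell(e)$; rebasing $\gamma$ at the projection $y$ of $\hat y$ and invoking Lemma~\ref{rebasing} (its hypotheses hold because $t_0>5$) shows the edge $g$ of $B^1$ carrying $\hat g$ satisfies $c_S(g)\ge t_0-4$ and $\ell(g)\ge(1-4/t_0)\ell(e)$. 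If $\hat g$ lies on $b_1$ or $b_2$, then letting $t_0\uparrow c_S(e)$ — and, since $B^1$ has finitely many edges, fixing a single boundary edge $f$ — already gives the conclusion, even with constant $4$ in place of $5$.

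The remaining case, $\hat g$ on a copy of $e'$, is the crux: we must still reach a boundary edge. Here the key is that $e$ is the \emph{shorter} diagonal of $\hat Q$: if the two chords of $\gamma$ leaving $\hat x$ both exited very close to the apexes of the two triangles of $\hat Q$ — the endpoints of the \emph{other} diagonal — then that other diagonal would be a vertex-to-vertex arc of $\hat Q$ of length less than $\ell(e)$, contradicting that $e$ is a shortcut. The aim is to parlay this into a \emph{cheap return}: being pushed onto $e'$ should cost only about $1$, rather than $4$, in the enclosure factor, so that $c_S(e')\gtrsim c_S(e)-1$ with the corresponding length bound; then, $e'$ being symmetrically a shortcut of the complementary quadrilateral $\mathcal S(B)$ cut along $e$ (whose sides are the two copies of $e$ and $b_1,b_2$), re-running the argument for $e'$ with the enclosing loop just obtained yields a boundary edge $f$ at a further cost of $4$, for the total loss of $5$. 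Making this precise — proving the cheap-loss estimate, disposing of the $O(1)$-sized range of values of $c_S(e)$ near $6$ in which the second application is borderline, and, most importantly, showing that the alternation between $e$ and $e'$ cannot continue indefinitely so that a boundary edge is genuinely reached with constant $5$ — is the step I expect to be the main obstacle.
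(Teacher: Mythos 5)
Your setup matches the paper's: cut $\mathcal S(B)$ along the other interior edge $e'$ to get a quadrilateral $\hat Q$ in which $e$ is a shortcut, apply Lemma~\ref{inapolygon}/Proposition~\ref{prop:polygons} at cost~4 to land on a side $\hat g$ of $\hat Q$, and observe that if $\hat g$ is a boundary edge you are done with constant~4. (The paper phrases the first step as ``delete $e$ from $B^1$ and let $F'$ be the merged face,'' which is the same quadrilateral; and it only needs the goodness condition, not your separate argument for the distinguished case, because it first disposes of that case by a different route — see below.) Up to that point you are essentially on track.

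The gap is in the remaining case where $\hat g$ is a copy of $e'$. Your plan is either to re-run the Lemma~\ref{inapolygon} machinery starting from $e'$ (which costs another~4, giving~8, not~5, and raises your legitimate worry about endless alternation between $e$ and $e'$) or to somehow show that landing on $e'$ costs only~1, which you flag as the main obstacle and do not substantiate. The paper's key ingredient, which you are missing, is a \emph{direct one-step geometric argument from the shortest interior edge $g$ to the longest boundary edge $g'$}: if $c_S(g)>2$ then $c_S(g')\ge c_S(g)-1$. This is proved not by unfolding a polygon but by rebasing a loop: a loop $\gamma$ enclosing $g$ at $x$, entering the triangle $F$ adjacent to $g'$, must exit through $g'$ at some point $y$ — this uses the angle bound of Lemma~\ref{L:angle in good biface} (the angle of $F$ between $g$ and $g'$ is at most $\pi/2$) together with thickness ($\ell(g)\le\ell(g')$), which guarantee the exit side is $g'$ and that $y$ is not too close to the corners; then Lemma~\ref{rebasing} lets you rebase $\gamma$ at $y$ to a loop that encloses $g'$, and the triangle inequality costs only~1 in the enclosure ratio. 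Crucially, this hop lands directly on a boundary edge: there is no alternation to control, and the length inequality $\ell(g')\ge\ell(g)$ is free from thickness. Combined with the cost-4 step, you get exactly $4+1=5$. The claim also dispatches your distinguished-$e$ case instantly ($e=g$ needs only the one-step hop, no quadrilateral argument at all), and the threshold $c_S(e)>6$ guarantees $c_S(g)>2$ after the first step, so there is no delicate borderline range to worry about. In short: the missing idea is the rebasing/angle-bound lemma giving a cheap direct passage from the short interior edge to the long boundary edge, and without it the estimate you want does not close.
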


First we need a lemma:

\begin{lemma}\label{L:angle in good biface}
Let $B$ be a good biface. Among the two interior edges of $B^1$ let $f$ be a longest one. Let $F$ be the face of $B^1$ adjacent to $f$. Each corner of $F$ incident to $f$ has angle smaller than or equal to $\pi/2$.
\end{lemma}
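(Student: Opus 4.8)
The plan is to unfold $B$ into the plane, reduce the statement to a single inequality between edge lengths, and then extract that inequality from the hypothesis that $e$ is a shortest path. Recall the anatomy of a good biface: its surface is the union of two Euclidean triangles glued along the two interior edges $e$ and $f$ of $B^1$, where $e$ is a shortest path in $\mathcal S(B)$, and where $f$ is the diagonal separating the two triangles in the Euclidean quadrilateral $Q$ obtained by cutting $\mathcal S(B)$ along $e$ — the \emph{shorter} of the two diagonals of $Q$ when $Q$ has two. Both $e$ and $f$ join the two vertices $v$ and $v'$ of $B$, so $f$, being a path between the endpoints of the geodesic $e$, satisfies $\ell(f)\ge\ell(e)$; hence $f$ is indeed a longest interior edge. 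The face $F$ is one of the two triangles, and has $f$ as a side, its remaining sides being a copy of $e$ and a boundary edge $\partial$ of $B^1$. Write $a=\ell(e)$, $\phi=\ell(f)$, $\delta=\ell(\partial)$, so $\phi\ge a$. Of the two corners of $F$ incident to $f$, the one also incident to $\partial$ is opposite the side $e$, hence has measure at most $\pi/2$ because $a\le\phi\le\sqrt{\phi^2+\delta^2}$. Thus the statement reduces to showing that the corner of $F$ incident to both $e$ and $f$ — opposite the side $\partial$ — has measure at most $\pi/2$, i.e.\ that $\delta^2\le a^2+\phi^2$.

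If $\delta\le\phi$ this is immediate, since then $\delta^2\le\phi^2\le a^2+\phi^2$; in particular this disposes of every thin biface, where the boundary edges are shorter even than $e$. So I may assume $\delta>\phi\ge a$ and, aiming for a contradiction, that $\delta^2>a^2+\phi^2$, i.e.\ that the corner $\gamma$ of $F$ at the vertex $v$ where $e$ and $f$ meet is obtuse; then $\partial$ is the boundary loop based at the other vertex $v'$, and $e$ and $f$ both run from $v$ to $v'$. The goal becomes to contradict that $\ell(e)$ equals the distance in $\mathcal S(B)$ between $v$ and $v'$, by exhibiting a path from $v$ to $v'$ of length less than $a$.

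Since $\gamma$ is obtuse, in the planar triangle $F$ the vertex $v$ has an orthogonal foot $H$ on the side $\partial$ lying strictly between its endpoints, at distance $h=\dfrac{a\,\phi\,\sin\gamma}{\delta}\le\dfrac{a\phi}{\delta}<a$ from $v$ (using $\delta>\phi$). This already gives $d_{\mathcal S(B)}(v,\partial)<a$, but not yet $d_{\mathcal S(B)}(v,v')<a$: completing the perpendicular $vH$ to a path to $v'$ by running along $\partial$ costs an extra boundary arc of length up to $\delta/2$, which may exceed $a$, so the naive shortcut is too lossy. To route it back to $v'$ without paying that arc, I would pass to the universal cover of $\mathcal S(B)$ (a simply connected flat surface carrying the lifts of $v$ and $v'$ on its boundary), lift the triangles and edges, and unfold the triangle $T'$ neighbouring $F$ across $f$; the hypothesis that $f$ is the shorter diagonal of $Q$ should control the angle of $T'$ at $v$ and force the unfolded picture to stay in ``general position'' (in the sense used for Lemma~\ref{rebasing}), so that the segment $vH$ can be continued, or re-based as in that lemma, to reach a lift of $v'$ while keeping its length below $a$ — the desired contradiction. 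I expect this last step to be the main obstacle: the short path must be built carefully inside $T'$, and making both the shortest-path hypothesis and the shorter-diagonal hypothesis bite, together with the general-position bookkeeping, is the delicate part.
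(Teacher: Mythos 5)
Your proposal is built on a mis-statement: the $f$ in Lemma~\ref{L:angle in good biface} is meant to be a longest \emph{boundary} edge of $B^1$, not a longest interior edge. You can tell because the paper's own proof, after naming the two interior edges $e$ (shortest) and $g$ (the other), says ``Then $e$, $g$, and $f$ are the sides of $F$'' --- which forces $f$ to be the third, boundary, side of the triangular face $F$ --- and because this is the only reading under which ``the face adjacent to $f$'' is well defined (a boundary edge bounds a unique face, whereas both faces are adjacent to either interior edge). It is also the reading used when the lemma is invoked in the proof of Proposition~\ref{prop:thick tube}, where $f$ is instantiated by a longest boundary edge. Under the corrected statement, the corner opposite $e$ is dispatched exactly as you did; for the corner $\widehat c$ between $\widehat e$ and the boundary side $\widehat f$, the paper shows that an obtuse $\widehat c$ forces the opposite corner $\widehat d$ of $Q$ to be obtuse too (using $\ell(\widehat e)=\ell(\widehat e')$ and $\ell(\widehat f)\ge\ell(\widehat f')$), hence $Q$ is convex, and then Thales' theorem (both $\widehat c,\widehat d$ lie strictly inside the circle on $\widehat g$ as diameter) makes the \emph{other} diagonal of $Q$ strictly shorter than $\widehat g$, contradicting the shorter-diagonal clause of goodness.

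Independently of the typo, the inequality your reduction requires, $\delta^2\le a^2+\phi^2$ (with $a$, $\phi$, $\delta$ the lengths of the shortest interior edge, the longest interior edge, and the boundary edge of $F$), is simply false for good bifaces, so the ``last step'' you were searching for in the case $\delta>\phi$ does not exist. A concrete counterexample: take the convex planar quadrilateral $Q$ with $A=(0,0)$, $B=(1,0)$, $C=(3,1)$, $D=(2,1)$, and glue side $AB$ to side $DC$ (so $A\sim D$, $B\sim C$). Then $a=\ell(AB)=1$, $\phi=\ell(BD)=\sqrt2$, and $\delta=\ell(BC)=\ell(DA)=\sqrt5$. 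The diagonal $BD=\sqrt2$ is shorter than $AC=\sqrt{10}$, and developing the universal cover of this annulus into the plane gives the convex strip $\{(x,y):0\le x-2y\le1\}$, in which the two vertex classes sit on the bounding lines at minimum mutual distance exactly $1=\ell(e)$; so $e$ is a shortest path, and the biface is good. Yet $\delta^2=5>3=a^2+\phi^2$, so the corner between the two interior edges is obtuse in both faces. You were right to flag $\delta>\phi$ as the obstacle --- it is precisely the case that breaks, and no amount of unfolding or rebasing in the universal cover can establish an inequality that is not there.
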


\begin{proof}
Among the two interior edges of $B^1$ let $e$ be a shortest one, and let $g \neq e$ be the other one. Then $e$, $g$, and $f$ are the sides of $F$. The angle at the corner of $F$ between $f$ and $g$ is smaller than $\pi/2$ because $\ell(e) \leq \ell(g)$. Now consider the corner $c$ between $f$ and $e$. Cut $\mathcal S(B)$ open along $e$ and consider the resulting quadrilateral $Q$ in the plane. The edge $f$ of $B^1$ corresponds to a side $\widehat f$ of $Q$, the edge $e$ corresponds to two opposite sides $\widehat e$ and $\widehat e'$, and the edge $g$ corresponds to a vertex-to-vertex arc $\widehat g$ of $Q$. Also the other boundary edge $f' \neq f$ of $B^1$ corresponds to the side $\widehat f'$ of $Q$ opposite to $\widehat f$. And the corner $c$ corresponds to the corner $\widehat c$ of $Q$ between $\widehat e$ and $\widehat f$. Let $\widehat d$ be the corner of $Q$ opposite to $\widehat c$, between $\widehat e'$ and $\widehat f'$. Assume by contradiction that the angle at $\widehat c$ is greater than $\pi/2$. We have $\ell(\widehat e) = \ell(\widehat e')$ and $\ell(\widehat f) \geq \ell(\widehat f')$ so the angle at $\widehat d$ is greater than or equal to the angle at $\widehat c$, and in particular is also greater than $\pi/2$. The two other angles of $Q$ are smaller than $\pi$, so $Q$ is convex and admits a diagonal $p \neq \widehat g$. Consider the unique circle $C$ that admits $\widehat g$ as a diameter. Then the two endpoints of $p$ lie in the interior of $C$. So $p$ is shorter than $\widehat g$. This contradicts the assumption that $B$ is good.
\end{proof}

\begin{proof}[Proof of Proposition~\ref{prop:thick tube}]
Among the two interior edges of $B^1$ let $g$ be a shortest one. Among the two boundary edges of $B^1$ let $g'$ be a longest one. Then $\ell(g) \leq \ell(g')$ because $B$ is thick. We claim that if $c_S(g) > 2$, then $c_S(g') \geq c_S(g)-1$. To prove the claim let $t > 2$ and assume that there is a loop $\gamma$ that encloses $g$ by a factor of $t$ in $S$. Let $x$ be the basepoint of $\gamma$. Let $F$ be the face of $B^1$ adjacent to $g'$. Around $x$ there is a portion of $\gamma$ that enters $F$.  This portion of $\gamma$ must leave $F$ by a point $y$ of $g'$ because the angle of $F$ between $g$ and $g'$ is smaller than or equal to $\pi/2$ by Lemma~\ref{L:angle in good biface}, because $\ell(g) \leq \ell(g')$, and because $\ell(\gamma) = \langle x \rangle_g / t < \langle x \rangle_g / \sqrt 2$. Then $\langle y \rangle_{g'} \geq \langle x \rangle_g - \ell(\gamma) = (1 - 1/t) \langle x \rangle_g$ by triangular inequality and because $\ell(g) \leq \ell(g')$. Rebase $\gamma$ at $y$, and consider the geodesic loop $\gamma'$ homotopic to it (where the homotopy fixes the basepoint at $y$). Then $\ell(\gamma') \leq \ell(\gamma) = \langle x \rangle_g / t \leq \langle y \rangle_{g'}/(t-1)$. And $\gamma'$ encloses $g'$ by Lemma~\ref{rebasing}, because $t > 2$. This proves the claim.

If $e = g$ we are done by our claim, so assume that $e$ is a longest interior edge of $B^1$. Deleting $e$ merges the two faces of $B^1$ into a single face $F'$ of which $e$ is a shortcut, because $B$ is good. So Proposition~\ref{prop:polygons} applies because $c_S(e) > 6$: there is a boundary edge $f$ of $F'$ such that $c_S(f) \geq c_S(e)-4$ and $\ell(f) \geq (1-4/c_S(e))\ell(e)$. If $f$ is a boundary edge of $B^1$ we are done. Otherwise $f = g$ so $\ell(g') \geq \ell(f) \geq (1-4/c_S(e))\ell(e)$ and $c_S(g') \geq c_S(f)-1 \geq c_S(e)-5$ by our claim because $c_S(e) > 6$. This proves the proposition.
\end{proof}

\subsection{Proof of Proposition~\ref{L:biface boundary hapiness}}

In this section we prove Proposition~\ref{L:biface boundary hapiness}, which we restate for convenience:

\bifaceboundaryhapiness*

First we need two lemmas:

\begin{lemma}\label{L:angle in thin biface}
Let $B$ be a thin biface. Among the two interior edges of $B^1$ let $e$ be a shortest one. Each of the four corners between $e$ and the boundary of $\mathcal S(B)$ has angle greater than $\pi/4$.
\end{lemma}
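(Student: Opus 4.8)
## Proof Plan for Lemma~\ref{L:angle in thin biface}

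The plan is to argue by contradiction: suppose one of the four corners between $e$ and the boundary of $\mathcal{S}(B)$ has angle at most $\pi/4$, and derive a contradiction with the goodness of $B$. First I would set up the planar picture. Cut $\mathcal{S}(B)$ open along the shortest interior edge $e$ to obtain a quadrilateral $Q$ in the plane. The edge $e$ corresponds to two opposite sides $\widehat{e}$ and $\widehat{e}'$ of equal length, the other interior edge $g \neq e$ corresponds to a vertex-to-vertex arc (diagonal) $\widehat{g}$ of $Q$, and the two boundary edges of $B^1$ correspond to the remaining two opposite sides $\widehat{f}$ and $\widehat{f}'$ of $Q$. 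Since $B$ is thin, both boundary edges are strictly shorter than both interior edges, so $\ell(\widehat{f}), \ell(\widehat{f}') < \ell(\widehat{e}) = \ell(\widehat{e}') \leq \ell(\widehat{g})$. The four corners in question are exactly the four corners of $Q$ (two between $\widehat{e}$ and the $\widehat{f}$'s, two between $\widehat{e}'$ and the $\widehat{f}'$'s).

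The key geometric step is to show that a corner of $Q$ of angle at most $\pi/4$, combined with the thinness inequalities, forces the diagonal $\widehat{g}$ not to be the shortest among the two diagonals of $Q$ — or forces the existence of a shorter vertex-to-vertex arc — contradicting that $B$ is good. Concretely: label the quadrilateral $Q = v_0 v_1 v_2 v_3$ so that $\widehat{e} = v_0 v_1$ and $\widehat{e}' = v_2 v_3$, with $\widehat{f} = v_1 v_2$ and $\widehat{f}' = v_3 v_0$, and $\widehat{g}$ is the diagonal $v_0 v_2$ (since $g$ joins an endpoint of $\widehat{e}$ to an endpoint of $\widehat{e}'$). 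Suppose the corner at $v_0$, between $\widehat{e}$ and $\widehat{f}'$, has angle $\leq \pi/4$. Then in the triangle $v_0 v_1 v_3$, the side $v_1 v_3$ opposite the small angle at $v_0$ can be bounded: by the law of cosines, $\ell(v_1 v_3)^2 = \ell(\widehat{e})^2 + \ell(\widehat{f}')^2 - 2\ell(\widehat{e})\ell(\widehat{f}')\cos(\angle v_0)$, and using $\cos(\angle v_0) \geq \cos(\pi/4) = 1/\sqrt{2}$ together with $\ell(\widehat{f}') < \ell(\widehat{e})$, one gets $\ell(v_1 v_3) < \ell(\widehat{e})$ (the hypotenuse-type estimate: a triangle with a small apex angle and with one leg strictly shorter than the other has its base strictly shorter than the longer leg, as long as the apex angle is at most $\pi/4$ and one checks the quadratic $1 + \tau^2 - \sqrt{2}\,\tau < 1$ for $\tau = \ell(\widehat{f}')/\ell(\widehat{e}) \in (0,1)$, which holds exactly when $\tau < \sqrt{2}$). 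So the diagonal $v_1 v_3$ of $Q$ satisfies $\ell(v_1 v_3) < \ell(\widehat{e}) \leq \ell(\widehat{g}) = \ell(v_0 v_2)$. I would then need to check that $v_1 v_3$ is genuinely a vertex-to-vertex arc of $Q$, i.e. that its relative interior lies in the interior of $Q$ — this should follow because $\widehat{g} = v_0 v_2$ is a diagonal (so $Q$ is "diagonal-split" by $v_0 v_2$, hence $v_1$ and $v_3$ lie on opposite sides of line $v_0 v_2$, and the short segment $v_1 v_3$ stays inside), with a small case analysis for degenerate configurations. This contradicts goodness of $B$: since cutting along $e$ yields a quadrilateral with two diagonals, $g$ should be the shorter one, but $v_1 v_3$ is shorter.

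The main obstacle I expect is the planar geometry bookkeeping: making the "small apex angle $\Rightarrow$ short base" estimate fully rigorous against the thinness hypothesis, and handling the possibility that $Q$ is non-convex so that not both $v_0 v_2$ and $v_1 v_3$ are simultaneously diagonals. In the non-convex case, exactly one of the two would-be diagonals is an actual vertex-to-vertex arc; if $v_1 v_3$ is not one, then $v_0 v_2 = \widehat{g}$ is the unique diagonal and goodness only constrains $e$ to be a shortest path, so I would instead argue directly that a corner of angle $\leq \pi/4$ at $v_0$ forces $\mathcal{S}(B)$ to contain a short non-contractible loop or forces $e$ not to be shortest — e.g. the reflex vertex together with the small corner lets one "straighten" $e$ across that corner into a strictly shorter path homotopic rel endpoints, contradicting that $e$ is a shortest path in $\mathcal{S}(B)$ (which goodness guarantees, at least for the shorter interior edge). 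I would organize the write-up so that the convex case uses the diagonal-comparison argument and the non-convex case uses the shortest-path argument, and verify the threshold $\pi/4$ is exactly what both arguments need.
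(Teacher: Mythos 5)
Your convex-case argument is correct and takes a genuinely different route from the paper's: you invoke the second clause of the goodness definition (the diagonal $\widehat g = v_0v_2$ of $Q$ should be the shorter of its two diagonals), while the paper invokes only the first clause ($e$ is a shortest path in $\mathcal S(B)$). The law-of-cosines step is fine, and thinness gives $\tau = \ell(\widehat f')/\ell(\widehat e) < 1 < \sqrt 2$. However, there is a genuine gap, and it lies exactly where you flagged an obstacle. The supporting reason you give for $v_1 v_3$ being a genuine vertex-to-vertex arc — that $v_1, v_3$ lie on opposite sides of the line through $v_0 v_2$, so the segment "stays inside" — is false: a chevron-shaped quadrilateral satisfies the opposite-sides condition with $v_1 v_3$ exiting $Q$. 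And the non-convex case really can occur: only the endpoints $v_0, v_2$ of $\widehat g$ carry doubled triangle angles, $v_0$ has angle $\le \pi/4$, so a reflex vertex must be $v_2$, which thinness alone does not exclude.

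Your non-convex fallback, as written, is not a working argument. The claim that "straightening $e$ across the corner" yields a shorter path \emph{homotopic rel endpoints} to $e$ is incorrect: any such shortcut differs from $e$ in $\pi_1(\mathcal S(B))$ by a conjugate of the non-contractible boundary loop at the common image of $v_0$ and $v_3$, so it is not homotopic to $e$. One is rescued only because "shortest path" in the goodness definition is a global notion, so homotopy is immaterial — but then the argument is exactly the paper's, which is uniform and avoids the convex/non-convex split entirely: drop the perpendicular from $v_3$ to the line of $\widehat e$; thinness ($\ell(\widehat f') < \ell(\widehat e)$) puts its foot $x$ in the relative interior of $\widehat e$, the angle $\le \pi/4$ at $v_0$ makes the perpendicular leg no longer than $|v_0 x|$, the inequalities $\ell(\widehat e), \ell(\widehat e') > \ell(\widehat f')$ keep the perpendicular inside $Q$, and the projected path obtained by replacing the initial piece $v_0 x$ of $e$ with the perpendicular is strictly shorter than $e$ (it has a $\pi/2$ corner at the image of $x$, which can be rounded), contradicting that $e$ is a shortest path. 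Your plan's fallback would have to become this argument, at which point it also supersedes the convex-case diagonal comparison.
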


\begin{proof}
Assume by contradiction that there is a corner $c$ between $e$ and a boundary edge $f$ of $B^1$ whose angle is smaller than or equal to $\pi/4$. Cut $\mathcal S(B)$ open along $e$ and embed the resulting quadrilateral $Q$ in the plane, isometrically. The edge $e$ corresponds to two opposite sides $\widehat e$ and $\widehat e'$ of $Q$. The edge $f$ corresponds to one of the other two sides of $Q$, that we call $\widehat f$. The vertex $v$ of the corner $c$ corresponds to the two end-vertices of $\widehat f$: let $\widehat v$ be the one incident to $\widehat e$, and let $\widehat v'$ be the one incident to $\widehat e'$. Without loss of generality the corner $c$ corresponds to the corner of $Q$ at $\widehat v$, whose angle is thus smaller than or equal to $\pi/4$. Consider the orthogonal projection $x$ of $\widehat v'$ on the line containing $\widehat e$. Then $x$ belongs to $\widehat e$ because $\widehat e$ is longer than $\widehat f$, as $B$ is thin. The segment $p$ between $x$ and $\widehat v'$ is shorter than the portion of $\widehat e$ between $x$ and $\widehat v$. Also $p$ is included in $Q$ because $\widehat e$ and $\widehat e'$ are longer than $\widehat f$. Thus $p$ projects to a path that shortcuts $e$, contradicting the fact that $B$ is a good biface.
\end{proof}

\begin{lemma}\label{L:path length in thin biface} 
In $\mathcal S(B)$ every path $p$ between the two boundary components of $\mathcal S(B)$ is such that $\ell(p) \geq \ell(e)/2$.
\end{lemma}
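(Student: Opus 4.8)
The plan is to reduce the lemma to a planar estimate by cutting $\mathcal S(B)$ along the shortest interior edge $e$. Cutting along $e$ undoes the identification of $s_0$ with $s_0'$ while keeping that of $s_1$ with $s_1'$, turning $\mathcal S(B)$ into a single convex planar quadrilateral $Q$: its sides are, cyclically, a copy of $e$, one boundary edge, the other copy of $e$, the other boundary edge, so the two copies of $e$ are opposite and of length $\ell(e)$, the boundary edges $b_1,b_2$ are opposite, and the other interior edge $f$ survives as a diagonal of $Q$, with $\ell(f)\ge\ell(e)$; thinness gives $\ell(b_1),\ell(b_2)<\ell(e)$. A path $p$ between the two boundary components of $\mathcal S(B)$ either avoids $e$, in which case it lies in $Q$ and joins the side $b_1$ to the side $b_2$, or it crosses $e$; I would settle the first case and then reduce the second to it.

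In the first case it suffices to bound the Euclidean distance between the segments $b_1$ and $b_2$ below by $\ell(e)/2$. I would place the diagonal $f$ on the $x$-axis with endpoints $(0,0)$ and $(\ell(f),0)$, the two triangles of $Q$ lying above and below; the remaining two vertices get explicit coordinates in $\ell(e),\ell(f),\ell(b_1),\ell(b_2)$. Because $\ell(b_i)<\ell(e)$, the side $b_1$ lies strictly to one side of the perpendicular bisector $x=\ell(f)/2$ and $b_2$ strictly to the other. Moreover, by Lemma~\ref{L:angle in thin biface}, the four corners between $e$ and the boundary — which are exactly the four corners of $Q$ — exceed $\pi/4$, which forces $b_1,b_2$ to leave the $x$-axis at slopes bounded away from $0$: writing a generic point of $b_1$ as $(\ell(f)-u,\alpha u)$ and one of $b_2$ as $(v,-\beta v)$ with $u,v\ge 0$, the corner bounds yield lower bounds on $\alpha,\beta$ (and in the subcases where a slope is nonetheless small, the corner bound at the opposite vertex of $Q$, together with goodness — that $f$ is the shorter diagonal — forces the corresponding ``inward'' coordinate to be at least $\ell(e)/\sqrt2$). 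A short case split finishes it: if the two points differ in $x$-coordinate by at least $\ell(e)/2$ we are done; otherwise $\ell(f)\ge\ell(e)$ gives $u+v>\ell(e)/2$, and since the $y$-coordinates have opposite signs their gap $\alpha u+\beta v$ is then also at least $\ell(e)/2$. So every $p$ avoiding $e$ has $\ell(p)\ge\ell(e)/2$.

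For $p$ crossing $e$, I would pass to the universal cover of $\mathcal S(B)$. As $\mathcal S(B)$ is a flat annulus, this cover develops in the plane as a bi-infinite strip tiled by isometric copies of $Q$ glued consecutively along $e$-sides; the two boundary components of $\mathcal S(B)$ lift to the two boundary polylines of the strip, and the copies of $e$ are straight rungs of length $\ell(e)$ that are shortest paths in the strip (since $e$ is a shortest path in $\mathcal S(B)$, $B$ being good). A lift of $p$ joins the two boundary polylines, so $\ell(p)$ is at least the width of the strip; within each tile these polylines are $\ge\ell(e)/2$ apart by the first case, and the rungs, being length-$\ell(e)$ shortest paths, keep the strip from folding back, so its width is $\ge\ell(e)/2$.

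The hard part is the planar estimate of the second paragraph. It genuinely needs Lemma~\ref{L:angle in thin biface}: the bound degenerates to $0$ in the limit $\ell(b_1),\ell(b_2)\to\ell(e)$, $\ell(f)\to 2\ell(e)$. It also seems to need the shorter-diagonal half of goodness in at least one subcase, and the case analysis according to which of $\alpha,\beta$ exceeds $1$ must be carried out carefully rather than packaged into a single inequality.
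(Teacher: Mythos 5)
Your approach diverges substantially from the paper's, and it has real gaps. The paper's proof is a four-line triangle-inequality argument: one may assume (by sliding a shortest transversal of the flat annulus until it hits a corner) that $p$ has an endpoint at a vertex, say $v$; let $x$ be the other endpoint and $q$ the shorter boundary arc from $x$ to the opposite vertex $w$, so $\ell(q)\le\ell(C_2)/2\le\ell(e)/2$ since $B$ is thin; then $p\cdot q$ joins $v$ to $w$, and since $e$ is a shortest $v$--$w$ path by goodness, $\ell(p)+\ell(q)\ge\ell(e)$, hence $\ell(p)\ge\ell(e)/2$. You instead cut along $e$, try to lower-bound the Euclidean gap between $b_1$ and $b_2$ inside the quadrilateral $Q$ by a coordinate computation, and then handle $e$-crossing paths by a universal-cover argument. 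This is both harder and, as written, not correct.

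Two concrete problems. First, in your planar estimate the slope bound does not follow from Lemma~\ref{L:angle in thin biface}. Cutting along $e$ merges two triangle corners at each endpoint of $f$, so the corner of $Q$ at, say, $(\ell(f),0)$ is $\phi_1+\psi_1$ where $\phi_1$ is the angle of $b_1$ with $f$ \emph{in the triangle containing $b_1$} and $\psi_1$ lives in the other triangle. The lemma only gives $\phi_1+\psi_1>\pi/4$; it is perfectly consistent with $\phi_1\to 0$ (take $a\to(\ell(f)-\ell(b_1),0)$, which satisfies all the corner bounds and thinness), and then $\alpha=\tan\phi_1\to 0$, so ``the corner bounds yield lower bounds on $\alpha,\beta$'' is false. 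Your fallback (``the corner bound at the opposite vertex of $Q$, together with goodness, forces\dots'') is too vague to check, and the corner of $Q$ at $a$ is close to $\pi$ in that degenerate regime, so it gives nothing. You also assert that $Q$ is convex, but two of its corners are sums of triangle angles and can be reflex; if $Q$ is non-convex the ``shorter diagonal'' clause of goodness is vacuous, which undercuts the subcase you invoke it for.

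Second, and more seriously, your Case 2 is circular or unjustified. The ``width of the strip'' is exactly $\operatorname{dist}_{\mathcal S(B)}(C_1,C_2)$, which is the quantity you are trying to bound, so the sentence ``$\ell(p)$ is at least the width of the strip; \dots so its width is $\ge\ell(e)/2$'' has no content until that last ``so'' is proved. The proposed justification, ``the rungs keep the strip from folding back,'' is not an argument. Worse, a biface surface can have nontrivial rotational holonomy: the deck transformation $\tau$ satisfies $\tau(a)=(\ell(f),0)$ and $\tau((0,0))=a'$, and $\tau$ is a translation only when the two triangles have equal height over $f$, which is not implied by goodness or thinness. When $\tau$ is a rotation, the developed tiles spiral, the developed boundary polylines can approach each other arbitrarily, and a Euclidean-distance bound in the development tells you nothing about $\ell(p)$. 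Note also that your Case~1 conclusion, $\operatorname{dist}_Q(C_1,C_2)\ge\ell(e)/2$, only bounds paths that stay in $Q$, so you genuinely need Case~2 and cannot dispense with it. Altogether the proposal would need substantial additional work to be salvaged, whereas the paper's argument avoids coordinates, the developing map, and case analysis entirely.
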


\begin{proof}
Without loss of generality one of the two endpoints of $p$ (at least) is a vertex $v$ of $B^1$. Consider the other endpoint $x$ of $p$, and the vertex $w \neq v$ of $B^1$. There is a path $q$ from $x$ to $w$ in the boundary of $\mathcal S(B)$. Without loss of generality $\ell(q) \leq \ell(e)/2$ because $B$ is thin. Also $e$ is a shortest path because $B$ is good. So $\ell(p) + \ell(q) \geq \ell(e)$. We proved $\ell(p) \geq \ell(e)/2$.
\end{proof}

\begin{proof}[Proof of Proposition~\ref{L:biface boundary hapiness}]
Among the two interior edges of $B^1$ let $e$ be a shortest one. Let $f$ be any one of the two boundary edges of $B^1$. We have $\ell(e) \geq \ell(f)$ because $B$ is thin. Assume by contradiction that there is in $S$ a loop $\gamma$ that encloses $f$ by a factor of $t > 2$. Let $x$ be the basepoint of $\gamma$. There is a portion of $\gamma$ that leaves $x$ and enters the interior of $\mathcal S(B)$. This portion of $\gamma$ cannot leave $\mathcal S(B)$ via the other boundary edge of $\mathcal S(B)$, for otherwise $\ell(\gamma) \geq \ell(e)/2$ by Lemma~\ref{L:path length in thin biface}, so $\ell(\gamma) > \langle x \rangle_f / t$, a contradiction. Then $\gamma$ intersects $e$. And $f$ and $e$ have a corner whose angle is smaller than $\pi/4$ because $\ell(\gamma) < \langle x \rangle_f /2$. This contradicts Lemma~\ref{L:angle in thin biface}.
\end{proof}

\subsection{Proof of Proposition~\ref{P:lambda and hapiness global}}

In this section we prove Proposition~\ref{P:lambda and hapiness global}, which we restate for convenience:

\proplambdaandhapinessglobal*

First we need a few lemmas.

\begin{lemma}\label{shortest path encloses}
Let $t > 1$. Assume that there is a shortest path whose relative interior crosses the relative interior of $e$ twice in the same direction, at points $x$ and $y$. If the sub-segment of $e$ between $x$ and $y$ is shorter than $\langle x \rangle_e / 2 t$ then $c_S(e) > t$.
\end{lemma}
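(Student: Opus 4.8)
The plan is to build a loop that encloses $e$ out of a short sub-arc of the given shortest path $p$ together with a short sub-segment of $e$. Let me set up notation: let $p$ be a shortest path whose relative interior crosses the relative interior of $e$ at two points $x$ and $y$, both crossings in the same direction, and let $I$ be the sub-segment of $e$ between $x$ and $y$, with $\ell(I) < \langle x \rangle_e / (2t)$. Let $q$ be the sub-path of $p$ between $x$ and $y$; since $p$ is a shortest path, $q$ is geodesic, and because it is a sub-arc of a simple path it is simple. The closed curve $\gamma_0 = q \cdot I^{-1}$ (traverse $q$ from $x$ to $y$, then come back along $e$ from $y$ to $x$) is a loop based at $x$. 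First I would bound its length: $\ell(\gamma_0) = \ell(q) + \ell(I)$, and since $p$ is a shortest path, $\ell(q) \leq \ell(e') $ for any path $e'$ between $x$ and $y$, in particular $\ell(q) \le \ell(I)$, so $\ell(\gamma_0) \le 2\ell(I) < \langle x \rangle_e / t$.

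Next I would turn $\gamma_0$ into a loop that is geodesic except at its basepoint and still encloses $e$ at $x$. The natural move is to rebase and shorten: keep the basepoint at $x$ and replace $\gamma_0$ by the geodesic loop $\gamma$ homotopic to it (with $x$ fixed). Then $\ell(\gamma) \leq \ell(\gamma_0) < \langle x\rangle_e / t$, and since $t > 1$ we also have $\ell(\gamma) < \langle x \rangle_e$, so $\gamma$ and $e$ are in general position. It remains to check that $\gamma$ crosses $e$ at $x$, rather than meeting $x$ on one side of $e$ only. Here I would argue using the directions of the two strands of $\gamma$ leaving $x$. The strand coming from $q$ leaves $x$ on one side of $e$ (the side $p$ goes to after crossing at $x$), and the strand of $\gamma_0$ coming back along $I$ leaves $x$ along $e$ itself — but I should be careful that after straightening, the $I$-strand could rotate to either side. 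The cleaner approach: appeal to the fact that the two crossings of $p$ with $e$ at $x$ and $y$ are \emph{in the same direction}. Lift to the universal cover $\widetilde S$: take a lift $\widetilde e$ of $e$ and the lift $\widetilde L$ of the maximal geodesic through $e$; it separates $\widetilde S$. Lift $\gamma_0$ starting at the lift $\widetilde x \in \widetilde e$; because the two crossings are in the same direction, the lift of $q$ ends at a point $\widetilde y$ on the opposite side of $\widetilde L$ from where the $q$-strand departs (one checks this from the same-direction hypothesis), while $\widetilde I$ stays on $\widetilde L$; hence the endpoint of the lift of $\gamma_0$ is the deck translate $g\widetilde x$ with $g$ the element represented by $\gamma_0$, and $g\widetilde x$ lies strictly on one side of $\widetilde L$. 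Homotoping with basepoint fixed preserves the endpoint $g\widetilde x$ and moves the path to the geodesic; since the straightened path from $\widetilde x$ to $g\widetilde x$ has length $< \langle x\rangle_e$, it stays within the sub-segment of $\widetilde L$-neighborhood that forces it to leave $\widetilde x$ into the half containing $g\widetilde x$, i.e.\ $\gamma$ departs $x$ on the side of $e$ \emph{opposite} to the $q$-strand's arrival — so the two strands of $\gamma$ leaving $x$ sit on opposite sides of $e$, meaning $\gamma$ crosses $e$ at $x$. Therefore $\gamma$ encloses $e$ in $S$, by a factor $\langle x \rangle_e / \ell(\gamma) > t$, giving $c_S(e) > t$.

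The main obstacle I expect is precisely the last verification — that after straightening, the loop genuinely \emph{crosses} $e$ at $x$ rather than bouncing off it — since the definition of enclosure is sensitive to this crossing condition, and a homotopy that fixes only the basepoint could in principle swing a strand from one side of $e$ to the other. The ``same direction'' hypothesis on the two crossings of $p$ is exactly what is needed to rule this out via the separating-geodesic argument in $\widetilde S$, in the same spirit as the proof of Lemma~\ref{rebasing}; I would structure the argument to mirror that lemma as closely as possible, perhaps even deriving it as a corollary of Lemma~\ref{rebasing} applied with $f$ a tiny transversal segment at $x$, if the bookkeeping works out. Everything else — the length bounds and the general-position check — is routine from $p$ being a shortest path and $t > 1$.
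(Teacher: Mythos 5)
Your overall approach matches the paper's: form a short loop from the sub-arc of the shortest path between the two crossings plus a parallel copy of the intervening sub-segment of $e$, straighten it rel basepoint, and then verify in the universal cover that the straightened loop genuinely crosses $e$ at $x$. You also correctly identify the last verification as the delicate point. However, your treatment of that verification contains an error and, separately, a genuine gap.

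The error: you assert that ``the lift of $q$ ends at a point $\widetilde y$ on the opposite side of $\widetilde L$ from where the $q$-strand departs.'' That is backwards. In the flat universal cover, $\widetilde q$ is a straight segment; since it leaves $\widetilde x \in \widetilde L$ into (say) the left half, it remains strictly in the left half and its endpoint is on the \emph{same} side as the departure, not the opposite side. (This does not break your intermediate conclusion that $g\widetilde x$ lies strictly on one side of $\widetilde L$, provided you also check that the lift of $I^{-1}$ from that endpoint does not recross $\widetilde L$; that check is a little different from what the paper does, since your $I$ lies on $e$ itself rather than on a parallel pushoff, but it goes through by noting that a recrossing point would lie in the relative interiors of two distinct lifts of $e$.)

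The gap: having located $g\widetilde x$ strictly on the left of $\widetilde L$, you conclude that ``the two strands of $\gamma$ leaving $x$ sit on opposite sides of $e$,'' but you have only controlled one of them. The second strand at $\widetilde x$ is the lift of $\gamma'$ ending at $\widetilde x$, i.e.\ the straight segment from $g^{-1}\widetilde x$ to $\widetilde x$, and you give no argument that $g^{-1}\widetilde x$ lies on the \emph{right} of $\widetilde L$. Knowing $g\widetilde x$ is on the left does not by itself imply $g^{-1}\widetilde x$ is on the right. This is precisely where the same-direction hypothesis on the crossing at $y$ must be used a second time: the paper handles it by considering the lift of the shortest-path arc that \emph{ends} at the lift $\widetilde y \in \widetilde e$, noting it approaches $\widetilde y$ from the right (because the crossing at $y$ is in the same direction as at $x$) and hence, being a straight segment leaving $\widetilde L$ on the right, its startpoint $g^{-1}\widetilde x$ lies strictly to the right. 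Without this second half of the argument the crossing at $x$ is not established. Your suggestion to instead invoke Lemma~\ref{rebasing} does not fill the gap either, since that lemma presupposes a loop that already encloses $e$, which is exactly what you are trying to produce.
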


\begin{proof}
Consider the portion $p$ of the shortest path that starts just before its crossing at $x$, and ends just before its crossing at $y$. Consider also the geodesic path $q$ that runs parallel to the sub-segment of $e$ from $y$ to $x$, such that the concatenation of $p$ and $q$ forms a loop $\gamma$. Base $\gamma$ at $x$. There is a unique geodesic loop $\gamma'$ homotopic to $\gamma$ (where the base-point at $x$ is fixed in the homotopy) because the interior of $S$ is flat. We have that $\gamma'$ is not the constant loop based at $x$; for otherwise $\gamma$ would be contractible, so $p$ would be homotopic to the reversal of $q$, and so $p$ would actually be equal to the reversal of $q$ because the interior of $S$ is flat, a contradiction. Moreover $\gamma'$ is shorter than $\langle x \rangle_e / t$; indeed $\gamma'$ is not longer than $\gamma$, $q$ is shorter that $\langle x\rangle_e / 2t$ by assumption, and $p$ is not longer than $q$ because $p$ is a shortest path. Then $\gamma'$ is in general position with $e$. We shall prove that $\gamma'$ meets $x$ on both sides of $e$. This will prove the lemma for then $\gamma'$ will enclose $e$ by a factor of $\langle x \rangle_e / \ell(\gamma') > t$.

Let us prove that. Orient $e$ so that $\gamma$ crosses $e$ from right to left. Consider the universal covering space $\widetilde S$ of $S$, and a lift $\widetilde e$ of $e$ in $\widetilde S$. The interior of $\widetilde S$ being flat, there is a geodesic $\widetilde L$, containing $\widetilde e$, such that on both ends $\widetilde L$ is either infinite or reaches the boundary of $\widetilde S$. And $\widetilde L$ separates $\widetilde S$ in two connected components. Now let $\widetilde x$ be the lift of $x$ in $\widetilde e$. There are two lifts of $\gamma'$ incident to $\widetilde x$: one lift $\widetilde \gamma'_0$ starts at $\widetilde x$, the other lift $\widetilde \gamma'_1$ ends at $\widetilde x$. Let $\widetilde a_0$ be the endpoint of $\widetilde \gamma'_0$, and let $\widetilde a_1$ be the startpoint of $\widetilde \gamma'_1$. We claim that $\widetilde a_0$ lies strictly to the left of $\widetilde L$, and that $\widetilde a_1$ lies strictly to the right of $\widetilde L$. This claim implies that $\widetilde \gamma'_0$ meets $\widetilde x$ on the left of $\widetilde e$, and that $\widetilde \gamma'_1$ meets $\widetilde x$ on the right of $\widetilde e$, which implies that $\gamma'$ meets $x$ on both sides of $e$.

Let us prove the claim. First we prove that $\widetilde a_0$ lies strictly to the left of $\widetilde L$. To do so consider also the lift $\widetilde p$ of $p$ that starts at $\widetilde x$, and the lift $\widetilde q$ of $q$ that starts at the endpoint of $\widetilde p$. The endpoint of $\widetilde q$ is $\widetilde a_0$ because the concatenation of $\widetilde p$ and $\widetilde q$ is a lift of $\gamma$, and because $\gamma$ is homotopic to $\gamma'$. By definition $\widetilde p$ leaves $\widetilde x$ on the left of $\widetilde L$. Also $\widetilde p$ is disjoint from $\widetilde L$ except for its startpoint at $\widetilde x$, the interior of $S$ being flat. Moreover $\widetilde q$ is disjoint from $\widetilde L$. For otherwise $\widetilde q$ would intersect $\widetilde L$ at a point $\widetilde z$ whose distance to $\widetilde x$ would be smaller than $\langle x \rangle_e / t$. But then the sub-segment of $\widetilde L$ between $\widetilde z$ and $\widetilde x$ would be no longer, and so would be included in $\widetilde e$. In particular $\widetilde q$ and $\widetilde e$ would intersect, a contradiction. This proves that $\widetilde a_0$ lies strictly to the left of $\widetilde L$.

To prove that $\widetilde a_1$ lies strictly to the right of $\widetilde L$, consider the lift $\widetilde y$ of $y$ in $\widetilde e$, and the lift $\widetilde p_1$ of $p$ that ends at $\widetilde y$. Then the startpoint of $\widetilde p_1$ is $\widetilde a_1$, and it lies strictly to the right of $\widetilde L$ because $\widetilde p_1$ meets $\widetilde y$ on the right of $\widetilde L$, and because $\widetilde p_1$ is otherwise disjoint from $\widetilde L$. This proves the claim, and the lemma.
\end{proof}

\begin{lemma}\label{L:lambda and hapiness local}
We have $h_S(e) = O(c_S(e) \cdot (1+\log \lceil \ell(e)/s\rceil))$.
\end{lemma}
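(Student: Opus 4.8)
The plan is to bound, for an arbitrary shortest path $p$ of $S$, the number of crossings of $p$ with $e$; the remaining intersections (where $p$ and $e$ meet tangentially, or at an endpoint of $p$ or of $e$) number only $O(1)$, since $e$ has no singularity in its relative interior, so on the flat surface $S$ any tangential contact propagates to an overlap of $p$ along a sub-arc of $e$. Fix such a $p$, write $c = c_S(e)$ and $\ell = \ell(e)$, and orient $e$; at each crossing $p$ passes from one side of $e$ to the other, and I call two crossings \emph{parallel} if $p$ passes in the same side-to-side direction at both. It is enough to bound the number of crossings in one parallel class, the other being handled by the same argument.

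List the crossings of one parallel class as $x_1,\dots,x_m$ in the order they occur along $e$, at arc-length positions $0 < u_1 < \dots < u_m < \ell$, and set $\langle x_k\rangle_e = \min(u_k,\ell-u_k)$. The crux is a lower bound on each gap $u_{k+1}-u_k$, from two ingredients. An \emph{enclosure bound}: $u_{k+1}-u_k \ge \langle x_k\rangle_e/(2c)$, for otherwise Lemma~\ref{shortest path encloses} applied to the two parallel crossings $x_k,x_{k+1}$ with $t=c$ would yield $c_S(e) > c$, a contradiction. A \emph{systole bound}: $u_{k+1}-u_k > s/2$; to see this, let $\gamma$ be the loop based at $x_k$ formed by the sub-path of $p$ between $x_k$ and $x_{k+1}$ followed by the sub-segment of $e$ from $x_{k+1}$ back to $x_k$; since $p$ is a shortest path this sub-path has length $d_S(x_k,x_{k+1}) \le u_{k+1}-u_k$, so $\ell(\gamma) \le 2(u_{k+1}-u_k)$, and if this were $\le s$ then $\gamma$ would be a closed curve shorter than the systole, hence contractible, which would force the sub-path of $p$ to coincide with the sub-segment of $e$ (geodesics of a flat surface are unique in their homotopy class relative to their endpoints --- the reasoning already used in the proof of Lemma~\ref{shortest path encloses}), contradicting that $p$ crosses $e$ at $x_k$. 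Hence $u_{k+1}-u_k \ge \max(s/2,\ \langle x_k\rangle_e/(2c))$; note the enclosure bound already implies the systole bound once $\langle x_k\rangle_e > cs$, so it is the systole bound that controls the crossings close to the endpoints of $e$.

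To count, let $k^*$ be the last index with $u_{k^*} \le \ell/2$ and treat the halves $k \le k^*$ and $k > k^*$ symmetrically (replacing $u_k$ by $\ell-u_k$ in the second). On the first half $\langle x_k\rangle_e = u_k$, so $u_{k+1} \ge u_k + \max(s/2, u_k/(2c))$: as long as $u_k < cs$ each step adds at least $s/2$, so there are at most $2c$ such steps; once $u_k \ge cs$ each step multiplies $u_k$ by at least $1+1/(2c)$, so at most $O(c\log\lceil \ell/s\rceil)$ further steps push $u_k$ past $\ell/2$. Thus $k^* = O(c(1+\log\lceil \ell/s\rceil))$, and symmetrically for the second half, giving $m = O(c(1+\log\lceil \ell/s\rceil))$ crossings in each parallel class and therefore $h_S(e) = O(c_S(e)\cdot(1+\log\lceil \ell(e)/s\rceil))$.

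The step I expect to require the most care is the systole bound --- concretely, the claim that the piecewise-geodesic loop $\gamma$ cannot be contractible, which rests on uniqueness of geodesics within a relative homotopy class on a flat surface (possibly with boundary, whose singularities are allowed). This is exactly the subtlety already navigated in the proof of Lemma~\ref{shortest path encloses}, so I would reuse that argument; everything else --- the enclosure bound, the bookkeeping of crossing parities, and the elementary two-phase recursion --- is routine.
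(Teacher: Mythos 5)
Your proof is correct, and it follows the same route as the paper's: both reduce to Lemma~\ref{shortest path encloses} and to the fact that $S$ is flat with systole~$>s$, and both split the analysis of crossings along $e$ by distance to the nearest endpoint, using the systole to control the region within $O(s)$ of an endpoint and the enclosure bound to force multiplicative spreading further away. The only difference is presentational: the paper argues by contraposition (it dyadically partitions a half of $e$, applies pigeonhole to find a scale $f_i$ with many same-direction crossings, and feeds two close ones into Lemma~\ref{shortest path encloses}), whereas you argue directly by lower-bounding the gap $u_{k+1}-u_k \ge \max(s/2,\ \langle x_k\rangle_e/(2c))$ and running the additive-then-multiplicative recursion; these are two sides of the same dyadic argument, and your constants close out in the same way.
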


\begin{proof}
Let $t > 1$. Assume $h_S(e) > 12 t \cdot (3 + \log \lceil \ell(e)/s\rceil )$. We will prove $c_S(e) \geq t$, and this will prove the lemma. In $S$ there is a shortest path $p$ that intersects $e$ more than $12 t \cdot (3 + \log \lceil \ell(e)/s\rceil )$ times. Cut $e$ at its middle point. One of the two resulting sub-segments of $e$, say $f$, intersects $p$ more than $6 t \cdot (3 + \log \lceil \ell(e)/s\rceil )$ times. Partition $f$ into sub-segments $f_0, f_1, \dots, f_n$ with $n \leq 2 + \log \lceil \ell(e)/s\rceil $, where the sub-segment $f_0$ contains the points $x \in f$ such that $\langle x \rangle_e \leq s/4$, and where for every $1 \leq i \leq n$ the sub-segment $f_i$ contains the points $x \in f$ such that $2^{i-3} s \leq \langle x \rangle_e \leq 2^{i-2} s$. There is $0 \leq i \leq n$ such that $p$ intersects $f_i$ more than $6 t$ times. Then the relative interior of $p$ crosses $f_i$ twice (at least) in the same direction at points $x$ and $y$, such that the sub-segment of $f_i$ between $x$ and $y$ is shorter than $2^{i-4} s / t$, because $\ell(f_i) \leq 2^{i-3} s$. Also $i \geq 1$ as no shortest path crosses $f_0$ twice, because $\ell(f_0) < s/2$, and because the interior of $S$ is flat. In particular $\langle x \rangle_e \geq 2^{i-3} s$. Then $c_S(e) \geq t$ by Lemma~\ref{shortest path encloses}.
\end{proof}

\begin{lemma}\label{L:length and enclosure}
We have $\ell(e) = O( c_S(e) \cdot n \lceil L/s\rceil L)$.
\end{lemma}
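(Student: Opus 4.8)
The plan is to prove the sharper-sounding statement that a long segment is highly enclosed, namely $\ell(e)=O(c_S(e)\cdot nL)$ (or $O(c_S(e)\cdot n\lceil L/s\rceil L)$ if the $s$‑free version turns out to need an area argument); since $nL\le n\lceil L/s\rceil L$ this implies the lemma. First I would bound $\ell(e)$ combinatorially. Fix the triangulation of $S$ with $n$ triangles, all edges shorter than $L$. A maximal sub-path of $e$ lying in a single closed triangle is a straight chord, hence shorter than the longest side of that triangle, hence shorter than $L$; consecutive such sub-paths are separated by a crossing of a triangulation edge (the degenerate case where $e$ runs along an edge is harmless and folds into the bookkeeping). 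So if $k$ is the number of times $e$ crosses a triangulation edge, then $\ell(e)<(k+1)L$. If $\pi_1(S)$ is trivial then $S$ is a flat disk, $c_S(e)=1$, and developing the simply connected $\widetilde S=S$ into $\mathbb R^2$ shows $\ell(e)$ is at most the diameter of a connected region covered by $n$ sets of diameter $<L$, i.e.\ $\ell(e)<nL$, and we are done; otherwise $s$ is itself $O(nL)$, since the systole is at most the length of a shortest noncontractible cycle in the $1$-skeleton, which is at most the total edge length $O(nL)$, which lets us absorb additive $s$ terms later. The $1$-skeleton has $O(n)$ edges, so if $k$ is large then, restricting to the middle third $e''$ of $e$ (every point $x$ of which satisfies $\langle x\rangle_e\ge \ell(e)/3$), some fixed edge $f$ is still crossed $r=\Omega(\ell(e)/(nL))$ times by $e''$.

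Next I would convert many crossings of $f$ into enclosure via Lemma~\ref{shortest path encloses}. List the crossing points of $f$ by $e''$ as $q_1,\dots,q_r$ in the order $e$ meets them. Each crossing of $f$ flips the side of $f$ that $e$ is on, so $q_1,q_3,q_5,\dots$ are pairwise crossed in the same direction along $e$ (a short local-coordinate computation confirms that a traversal of $f$ then crosses $e$ in the same direction at all of them). The sub-arcs $e[q_1,q_3],e[q_3,q_5],\dots$ are disjoint in $e''$, so their lengths sum to at most $\ell(e)/3$, hence some pair $x:=q_i$, $y:=q_{i+2}$ of same-direction crossings has $\ell(e[x,y])=O(\ell(e)/r)=O(nL)$ while $\langle x\rangle_e\ge \ell(e)/3$. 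If a \emph{shortest} path of $S$ crossed $e$ in its relative interior at such a pair in the same direction, then Lemma~\ref{shortest path encloses}, applied with $t$ a suitable constant times $\ell(e)/(nL)$, would give $c_S(e)>t$, i.e.\ $\ell(e)=O(c_S(e)\cdot nL)$, which finishes the proof.

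The step I expect to be the main obstacle is exactly producing that shortest path: a sub-segment of the triangulation edge $f$ between two of the $q_j$ is a geodesic but in general not a shortest path of $S$, and Lemma~\ref{shortest path encloses} genuinely needs a shortest path (its proof uses that the shortest path is no longer than the parallel copy of the corresponding sub-arc of $e$). I would try to replace it by the true shortest path between two crossing points of $f$ taken somewhat further apart along $f$, and argue that, because this shortest path has length less than $L$ while the piece of $e$ joining the relevant intermediate crossings is much longer, its relative interior is forced to cross $e$ near the intermediate $q_j$ in the right directions; the part needing care is transversality and the behaviour near $\partial S$ and near points of small injectivity radius. An alternative route that avoids shortest paths altogether is an area/tube argument: since $\operatorname{Area}(S)<nL^2$, the normal tube of radius $s/2$ about any sub-arc of $e$ of length $2nL^2/s$ cannot embed, which forces a self-approach of $e$ whose two near points lie within $s$ in $S$ but at most $2nL^2/s$ apart along $e$; a Gauss–Bonnet angle-deficit argument then turns this near-return into a noncontractible loop of length $O(nL^2/s+s)$ based near the middle of $e$, and after rebasing and straightening into a geodesic loop one must still check that it crosses $e$ transversally — this crossing check being the residual difficulty of that route.

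Either way one obtains $\ell(e)=O(c_S(e)\cdot n\lceil L/s\rceil L)$, where additive $s$ terms are absorbed using $s=O(nL)$ and the exact exponents of $n$ and $\lceil L/s\rceil$ are irrelevant by the remark following Proposition~\ref{P:lambda and hapiness global}. Finally, the regime where $\ell(e)=O(nL^2/s)=O(n\lceil L/s\rceil L)$ (equivalently, where $r$ is only $O(1)$) needs no argument at all, since $c_S(e)\ge 1$, so throughout we may assume $\ell(e)$ large.
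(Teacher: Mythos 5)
You correctly identify the central obstruction — that Lemma~\ref{shortest path encloses} requires a \emph{shortest} path, while a sub-arc of a triangulation edge $f$ is merely geodesic — but neither of your proposed remedies resolves it, and you acknowledge as much. The shortest path between two far-apart crossings of $f$ is not forced to stay near $f$ at all, let alone to cross $e$ near the intermediate $q_j$ in controlled directions, and the area/tube route, as you concede, still leaves the transversal-crossing check open. You also announce a strictly stronger bound $\ell(e)=O(c_S(e)\cdot nL)$, with no $\lceil L/s\rceil$ factor; this is not delivered and is not what the paper claims.

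The paper's proof fills exactly this gap with a small but essential trick. Instead of pigeonholing crossings of $e$ over the $O(n)$ edges of the triangulation, it subdivides each edge of $T^1$ into $2\lceil L/s\rceil$ equal-length pieces, called \emph{doors}. Each door has length at most $L/(2\lceil L/s\rceil)\le s/2$, i.e.\ at most half the systole, and is therefore automatically a shortest path — this is precisely where the $\lceil L/s\rceil$ factor enters, and why it cannot simply be dropped. There are $O(n\lceil L/s\rceil)$ doors, so a pigeonhole count over a middle sub-segment $e_0$ of $e$ of length $\Theta(n\lceil L/s\rceil L)$ (chosen so the two remaining pieces of $e$ are each longer than $2t\cdot\ell(e_0)$) produces a single door crossed by the relative interior of $e_0$ at least three times, hence twice in the same direction, and Lemma~\ref{shortest path encloses} then applies with no further work. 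Your overall shape — restrict to a middle portion, pigeonhole, invoke Lemma~\ref{shortest path encloses} — is the right one; the missing ingredient is to pigeonhole over sub-edges short enough to be shortest paths from the outset.
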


\begin{proof}
Let $t > 1$. Assume $\ell(e) \geq 600 t \cdot n\lceil L/s \rceil L$. We will prove that $c_S(e) \geq t$, and this will prove the proposition. To do so let $d = 120n \lceil L/s \rceil L$. Cut $e$ into three segments, a middle segment $e_0$ of length $d$, and two peripheral segments each longer than $2t\cdot d$. We claim that there is in $S$ a shortest path crossing the relative interior of $e_0$ twice in the same direction. This claim implies $c_S(e) \geq t$ by Lemma~\ref{shortest path encloses}, which proves the proposition. 

Let us prove the claim. Consider a triangulation $T$ of $S$ with $n$ triangles, whose edges are all smaller than $L > 0$. Cut each edge of $T$ into $2 \lceil L/s \rceil$ equal-length segments, that we shall call \emph{doors}. Each door is smaller than or equal to half the systole of $S$ so it is a shortest path. There are at most $6 n \lceil L/s \rceil$ doors because $T$ has at most $3n$ edges. Each sub-segment $e_1$ of length $4L$ of $e_0$ contains in its relative interior three points $x_0,x_1,x_2$ in this order such that $x_0 \notin p$, $x_1 \in p$, and $x_2 \notin p$ for some door $p$. The relative interior of $e_0$ intersects at least $30 n \lceil L/s \rceil$ times doors this way, so there is a door $p$ intersected at least $5$ times by the relative interior of $e_0$. Then each intersection is a single point ($p$ and $e_0$ do not overlap). Two of those intersection points may be endpoints of $p$, but otherwise the relative interior of $p$ crosses the relative interior of $e_0$ at least three times. So $p$ crosses $e_0$ twice in the same direction, which proves the claim, and the proposition.
\end{proof}

\begin{proof}[Proof of Proposition~\ref{P:lambda and hapiness global}]
We have $h_S(e) = O(c_S(e) \cdot (1+ \log \lceil \ell(e)/s\rceil))$ by Lemma~\ref{L:lambda and hapiness local}. Also $\ell(e)/s = O(c_S(e) \cdot n \cdot \lceil L/s\rceil^2)$ by Lemma~\ref{L:length and enclosure}. So $\log(\lceil \ell(e)/s\rceil) =O(1 + \log c_S(e) + \log(n) + \log \lceil L/s\rceil)$. This proves the proposition.
\end{proof}

%%%%%%%%%%%%%%%%%%%%%%%%%%%%%%%%%%%%%%%%%%%%%%%%%%%%%%%%%%%%%%%%%%%%%%%%%%%%%%%%%%%%%%%%%%%%%%%%%%%%%%%%%%%%%%%%%%%%%%%%%%%%%%%%%%%%%%%%%%%%%%%%%%%%%%%%%%%%%%%%%%%%%
%%%%%%%%%%%%%%%%%%%%%%%%%%%%%%%%%%%%%%%%%%%%%%%%%%%%%%%%%%%%%%%%%%%%%%%%%%%%%%%%%%%%%%%%%%%%%%%%%%%%%%%%%%%%%%%%%%%%%%%%%%%%%%%%%%%%%%%%%%%%%%%%%%%%%%%%%%%%%%%%%%%%%
%%%%%%%%%%%%%%%%%%%%%%%%%%%%%%%%%%%%%%%%%%%%%%%%%%%%%%%%%%%%%%%%%%%%%%%%%%%%%%%%%%%%%%%%%%%%%%%%%%%%%%%%%%%%%%%%%%%%%%%%%%%%%%%%%%%%%%%%%%%%%%%%%%%%%%%%%%%%%%%%%%%%%
%%%%%%%%%%%%%%%%%%%%%%%%%%%%%%%%%%%%%%%%%%%%%%%%%%%%%%%%%%%%%%%%%%%%%%%%%%%%%%%%%%%%%%%%%%%%%%%%%%%%%%%%%%%%%%%%%%%%%%%%%%%%%%%%%%%%%%%%%%%%%%%%%%%%%%%%%%%%%%%%%%%%%

\section{Appendix of Section~\ref{sec:geometric analysis}: proof of Lemma~\ref{lem:geom simplify tubes}}\label{sec:app geom simplify tubes}

\begin{proof}[Proof of Lemma~\ref{lem:geom simplify tubes}]
Assume that $e'$ was not already an edge of $R_A^1$ for otherwise there is nothing to do. Then there is a good biface $B$ computed by step 2 of \algoref{SimplifyTubes} such that $e$ is one of the two interior edges of $B^1$. Also $B$ is thick, for $B$ has not been marked as inactive. So by Proposition~\ref{prop:thick tube} there is a boundary edge $e$ of $B^1$ such that $c_S(e) \geq c_S(e')-5$ and $\ell(e) \geq (1-4/c_S(e')) \ell(e')$.
\end{proof}

\section{Appendix of Section~\ref{sec:proofofTcorealgo}: proof of Lemma~\ref{lem:bound systole}}\label{app:bound systole}

\begin{proof}[Proof of Lemma~\ref{lem:bound systole}]
Otherwise one of the two loops of $X^1$ that constitute the boundary of $\mathcal S(X)$ is contractible in $S$. So this loop bounds a topological disk in $S$ by a result of Epstein~\cite[Theorem~1.7]{e-c2mi-66}. The interior of the disk is flat, and its boundary is geodesic except possibly at the basepoint of the loop. This contradicts the formula of Gauss--Bonnet.
\end{proof}

%%%%%%%%%%%%%%%%%%%%%%%%%%%%%%%%%%%%%%%%%%%%%%%%%%%%%%%%%%%%%%%%%%%%%%%%%%%%%%%%%%%%%%%%%%%%%%%%%%%%%%%%%%%%%%%%%%%%%%%%%%%%%%%%%%%%%%%%%%%%%%%%%%%%%%%%%%%%%%%%%%%%%
%%%%%%%%%%%%%%%%%%%%%%%%%%%%%%%%%%%%%%%%%%%%%%%%%%%%%%%%%%%%%%%%%%%%%%%%%%%%%%%%%%%%%%%%%%%%%%%%%%%%%%%%%%%%%%%%%%%%%%%%%%%%%%%%%%%%%%%%%%%%%%%%%%%%%%%%%%%%%%%%%%%%%
%%%%%%%%%%%%%%%%%%%%%%%%%%%%%%%%%%%%%%%%%%%%%%%%%%%%%%%%%%%%%%%%%%%%%%%%%%%%%%%%%%%%%%%%%%%%%%%%%%%%%%%%%%%%%%%%%%%%%%%%%%%%%%%%%%%%%%%%%%%%%%%%%%%%%%%%%%%%%%%%%%%%%
%%%%%%%%%%%%%%%%%%%%%%%%%%%%%%%%%%%%%%%%%%%%%%%%%%%%%%%%%%%%%%%%%%%%%%%%%%%%%%%%%%%%%%%%%%%%%%%%%%%%%%%%%%%%%%%%%%%%%%%%%%%%%%%%%%%%%%%%%%%%%%%%%%%%%%%%%%%%%%%%%%%%%

\section{Appendix: proof of Proposition~\ref{thm:improving}}\label{app:extension}

In this section we prove Proposition~\ref{thm:improving}, which we restate for convenience:

\thmimproving*

We deduce Proposition~\ref{thm:improving} from Proposition~\ref{T:core algorithm}, essentially by cutting out caps around the singularities in the interior of the surface, by applying Proposition~\ref{T:core algorithm} to the truncated surface, and by putting the caps back afterward. See Figure~\ref{fig:caps}.

\begin{figure}[ht]
    \centering
    \includegraphics[width=\linewidth]{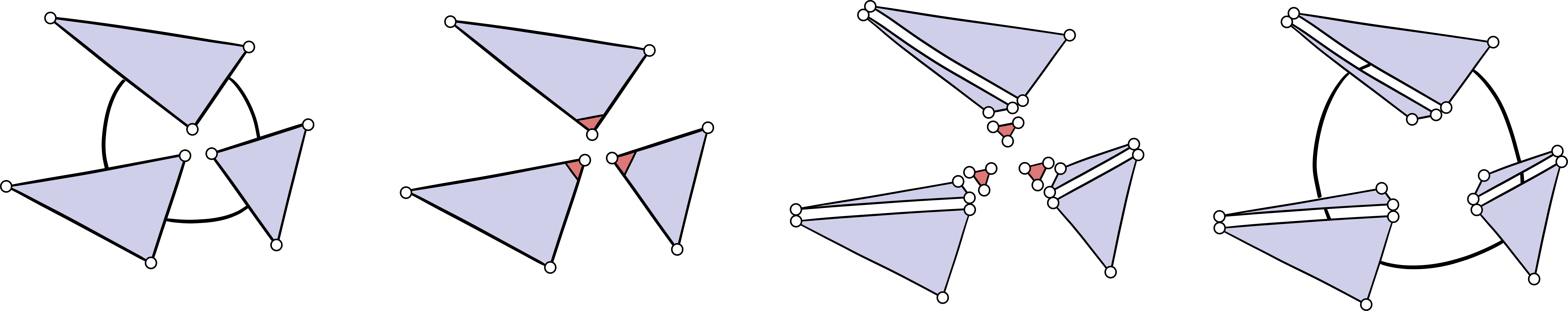}
    \caption{Cutting out a cap in the proof of Proposition~\ref{thm:improving}.}
    \label{fig:caps}
\end{figure}

\begin{proof}[Proof of Proposition~\ref{thm:improving}]
Let $S := \mathcal S(T)$ be the surface of $T$. Let $d$ be the minimum height of the triangles of $T$. Given a vertex $v$ of $T^1$ in the interior of $S$, we define a region around $v$ in $S$, as follows. On every directed edge $e$ of $T^1$ whose tail is $v$, place a point at distance $d/6$ from the tail of $e$ along $e$. Link those $k \geq 1$ points in order (clockwise say, but counter-clockwise would do to) around $v$, using geodesic segments within the faces of $T^1$ incident to $v$. In each corner of $T^1$ incident to $v$ there is a newly created triangle incident to $v$. Those $k$ triangles define a region $C$ around $v$, which we call \emph{cap} of $v$. Importantly, every point in the cap of $v$ is at distance smaller than or equal to $d/6$ from $v$ in $S$. Also every segment $p$ tracing the boundary of $C$ satisfies $\ell(p) \geq d/6r$. To see that consider the face $F$ of $T^1$ containing $p$, and the two sides $e_0$ and $e_1$ of $F$ incident to $v$. For each $i$ consider the point on $e_i$ at distance $m := \min(\ell(e_0), \ell(e_1))$ from $v$ along $e_i$. Join those two points by a geodesic segment $q$ in $F$. Then $q$ is at least as long as the minimum height of the triangle corresponding to $F$, so $\ell(q) \geq m/r$. Moreover $\ell(p)/\ell(q) = d / (6m)$ by the theorem of Thales. This proves $\ell(p) \geq d / 6r$.

For the sake of analysis, given an arbitrary vertex $v$ of $T^1$ (possibly on the boundary of $S$), we define another kind of region around $v$. Link the middle points of the edges around $v$ in order around $v$. The resulting triangles around $v$ constitute the \emph{protected region} of $v$. Importantly, every path smaller than $d/2$ starting from $v$ must lie in the protected region of $v$. Indeed the relative interior of every geodesic path $p$ smaller than $d$ starting from $v$ is included in a single face or edge of $T^1$. Then every prefix of $p$ smaller than $\ell(p)/2$ lies in the protected region of $v$.

First construct in $O(n)$ time a triangular portalgon $T_0$ whose surface is $S$, as follows. Consider every singularity in the interior of $S$ (if any). This singularity is a vertex $v$ of $T^1$. Trace the boundary of the cap around $v$ in the faces of $T^1$. Then cut those faces along the trace, as in Figure~\ref{fig:caps}. Afterward some polygons of $T_0$ may not be triangles, so cut each polygon of $T_0$ into triangles along shortcuts. Now remove the triangles corresponding to the caps from $T_0$, and let $T_1$ be the resulting triangular portalgon. The surface $\mathcal S(T_1)$ is flat.

Our first claim is that the systole of $\mathcal S(T_1)$ is greater than or equal to $d/6r$. By contradiction assume that there is a non-contractible closed curve $\gamma$ in $\mathcal S(T_1)$ smaller than $d/6r$. Without loss of generality $\gamma$ intersects a vertex $w$ of $T_1^1$; indeed if such a $\gamma$ has minimum length and does not intersect any vertex of $T_1^1$ then it can be slided along the surface, without changing its length, until it intersects a vertex of $T_1^1$. If $w$ is a vertex of $T^1$, then $\gamma$ lies in the protected region around $w$, and so $\gamma$ is contractible in $\mathcal S(T_1)$, a contradiction. If $w$ is a vertex on the boundary of some cap $C$ removed, then $\gamma$ lies in the protected region around the central vertex of $C$. In that case $\gamma$ is at least as long as any edge of the boundary of $C$, so $\ell(\gamma) \geq d/6r$. This proves the first claim.

The number of triangles and the maximum side length of a triangle of $T_1$ may be greater than those of $T$, but only by a constant factor. Using the first claim and Proposition~\ref{T:core algorithm}, replace $T_1$ by a portalgon of $O(n \log (r))$ triangles, whose surface is that of $T_1$, and whose segment-happiness is $O(\log (n) \log^2 (r))$, all in $O(n \log^2(n) \log^2(r))$ time. Place back the caps on $\mathcal S(T_1)$, and return the resulting triangular portalgon $T'$.

The segment-happiness of $T'$ and the happiness of $T'$ do not differ by more than a constant factor because the polygons of $T'$ are all triangles. Our second claim is that the segment-happiness of $T'$, and thus the happiness of $T'$, is bounded by $O(n \log(n) \log^2(r))$. To prove the second claim, we call \emph{cap path} any shortest path in $S$ that lies in the closure of some cap. We call \emph{rogue path} any shortest path in $S$ whose relative interior is disjoint from the closures of the caps. Every rogue path intersects every edge of $T'^1$ at most $O(\log (n) \log^2 (r))$ times, because the segment-happiness of $T_1$ is $O(\log (n) \log^2 (r))$. Also every cap path intersects every edge of $T'^1$ at most once. Now consider a shortest path $p$ in $S$. Then $p$ uniquely writes as a sequence $X$ of alternatively cap paths and rogue paths. Also, there cannot be two distinct cap paths $q_0$ and $q_1$ in $X$ that both lie in the same cap $C$. For otherwise any point of $q_0$ would be at distance at most $d/3$ from any point of $q_1$. Also the subpath of $p$ between $q_0$ and $q_1$ contains a rogue path that must leave the protected region around the central vertex of $C$, and is thus longer than $d/2-d/6 = d/3$. That contradicts the fact that $p$ is a shortest path. We proved that there are at most $O(n)$ paths in $X$, each intersecting at most $O(\log(n) \log^2(r))$ times any given edge of $T'^1$. This proves the second claim, and the proposition.
\end{proof}

%%%%%%%%%%%%%%%%%%%%%%%%%%%%%%%%%%%%%%%%%%%%%%%%%%%%%%%%%%%%%%%%%%%%%%%%%%%%%%%%%%%%%%%%%%%%%%%%%%%%%%%%%%%%%%%%%%%%%%%%%%%%%%%%%%%%%%%%%%%%%%%%%%%%%%%%%%%%%%%%%%%%%
%%%%%%%%%%%%%%%%%%%%%%%%%%%%%%%%%%%%%%%%%%%%%%%%%%%%%%%%%%%%%%%%%%%%%%%%%%%%%%%%%%%%%%%%%%%%%%%%%%%%%%%%%%%%%%%%%%%%%%%%%%%%%%%%%%%%%%%%%%%%%%%%%%%%%%%%%%%%%%%%%%%%%
%%%%%%%%%%%%%%%%%%%%%%%%%%%%%%%%%%%%%%%%%%%%%%%%%%%%%%%%%%%%%%%%%%%%%%%%%%%%%%%%%%%%%%%%%%%%%%%%%%%%%%%%%%%%%%%%%%%%%%%%%%%%%%%%%%%%%%%%%%%%%%%%%%%%%%%%%%%%%%%%%%%%%
%%%%%%%%%%%%%%%%%%%%%%%%%%%%%%%%%%%%%%%%%%%%%%%%%%%%%%%%%%%%%%%%%%%%%%%%%%%%%%%%%%%%%%%%%%%%%%%%%%%%%%%%%%%%%%%%%%%%%%%%%%%%%%%%%%%%%%%%%%%%%%%%%%%%%%%%%%%%%%%%%%%%%

\section{Appendix: proof of Theorem~\ref{thm:main lower bound}}\label{app:lower bound}

In this section, we detail our construction for the lower bound, and we prove Theorem~\ref{thm:main lower bound}, which we restate for convenience:

\maintheoremlowerbound*

Recall that we obtain our results within the real RAM model of computation described by Erickson, van der Hoog, and Miltzow~\cite{erickson2022smoothing}. It is an extension of the standard integer word RAM, with an additional memory array storing reals, and with additional instructions. The available instructions are described in~\cite[Table 1]{erickson2022smoothing}. The arithmetic operations that can be performed by the machine on real numbers are addition, subtraction, multiplication, division, and square root.

Recall also that, when modifying a portalgon $T$, we actually modify our representation of $T$, using elementary operations that are easily seen to achievable by a real RAM. For example, consider, in the Euclidean plane, two triangles $\Delta$ and $\Delta'$, given by the coordinates of their vertices. Assuming that $\Delta$ and $\Delta'$ have respective sides $s$ and $s'$ of the same length, we consider the operation of displacing $\Delta$ in the plane so that afterward $\Delta$ and $\Delta'$ are side by side along $s = s'$, and we compute the coordinates of the vertices of $\Delta$ after the displacement. This operation can be achieved by a real RAM.~\footnote{In fact, this can be done without even using the square root operation. To see that, think of the two initial vertices $s_0$ and $s_1$ of $s$ and the vertices $s_0'$ and $s_1'$ of $s'$ as complex numbers. The displacement of $\Delta$ is then described by the map $z \to (z - s_0) \cdot (s_1' - s_0') / (s_1 - s_0) + s_0'$.}

\begin{figure}[ht]
    \centering
    \includegraphics[width=0.8\linewidth]{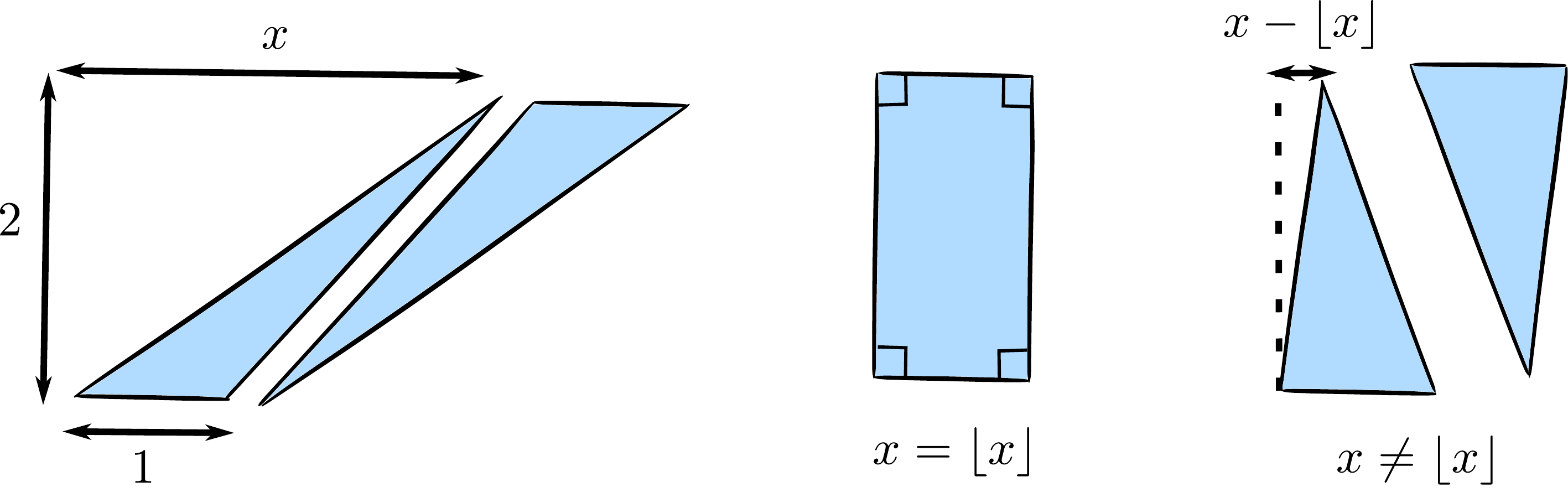}
    \caption{(Left) The polygons in the portalgon $T_x$. (Right) The polygons in the portalgon $D_x$, depending on whether $x = \lfloor x \rfloor$ or not.}
    \label{fig:lower bound}
\end{figure}

To obtain our lower bound, we consider, for every $x \in (1,\infty)$, a particular representation of a portalgon $T_x$. See Figure~\ref{fig:lower bound}. It has exactly two triangles. The first triangle has vertices $(0,0)$, $(1,0)$ and $(x, 2)$. The second triangle is isometric to the first triangle, its vertices are $(1,0)$, $(x,2)$, and $(x+1,2)$. In $T_x$ the sides of the first triangle are matched with the corresponding sides of the second triangle. The surface $\mathcal S(T_x)$ is a flat torus $S$, independent of $x$. We denote by $D_x$ the portalgon of the Delaunay tessellation of $S$. 

To obtain Theorem~\ref{thm:main lower bound}, we use the fact that, on a real RAM, the floor of a positive real number cannot be computed in (strongly) sub-logarithmic time (Lemma~\ref{lem:floor takes log} below), and we reduce the problem of computing the floor of a positive real number to the problem of computing a Delaunay tessellation, in order to transpose the lower bound. More formally, we use the following:

\begin{restatable}{lemma}{lemmafloortakeslog}\label{lem:floor takes log}
Let $c \in (0,1)$. There is no real RAM program computing $\lfloor x \rfloor$ from $x \in (1,\infty)$ in $O((\log x)^c)$ time.
\end{restatable}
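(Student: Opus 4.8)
The plan is to show that a hypothetical fast algorithm for $\lfloor x \rfloor$ would contradict a known information-theoretic (or adversary) lower bound for the real RAM, which I will invoke in the following form: any real RAM program that, on input a single real $x \in (1,\infty)$, branches only on sign comparisons of arithmetic expressions built from $x$ (and constants) using $+,-,\times,/,\sqrt{\cdot}$, and that outputs an integer equal to $\lfloor x \rfloor$, must on some inputs perform $\Omega(\log x)$ steps; in particular it cannot run in $O((\log x)^c)$ time for $c \in (0,1)$. The cleanest route is an adversary argument. First I would observe that in $t$ steps the program can evaluate only $O(1)^t$ (in fact at most $2^{O(t)}$) distinct algebraic expressions in $x$, and branches on at most $2^{O(t)}$ sign tests; each such test $\varphi(x) \gtrless 0$, where $\varphi$ is a nonzero function obtained from finitely many of the allowed operations, has a zero set that is a finite union of points and intervals whose total combinatorial structure is controlled by the degree/nesting of $\varphi$ (nested radicals of bounded depth define semialgebraic sets with a bounded number of connected components on any bounded interval). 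Consequently, the set of inputs $x$ for which the program's entire computation — the sequence of branches taken and hence the final output — is constant forms a partition of $(1,\infty)$ into at most $2^{O(t)}$ cells, each a finite union of intervals.

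Next, the key combinatorial step: if the program halts within $t = O((\log x)^c)$ steps for every $x$ in, say, the interval $[M, 2M]$ for large $M$, then on that interval it uses at most $2^{O((\log M)^c)}$ distinct branch sequences, hence the interval $[M,2M]$ is cut into at most $2^{O((\log M)^c)} = M^{o(1)}$ cells on each of which the output is a fixed integer. But $\lfloor x \rfloor$ takes $M$ distinct integer values on $[M,2M]$, so a correct program must have at least $M$ cells there. Since $M^{o(1)} < M$ for $M$ large, this is a contradiction, and no such program exists. I would phrase the bound on the number of cells carefully: each sign test contributes a bounded number of breakpoints $O(d)$ where $d$ depends on the algebraic complexity (degree and radical-nesting depth) of the tested expression, and after $t$ steps the tested expressions have complexity $2^{O(t)}$, so the total number of breakpoints introduced is $2^{O(t)}$; thus the number of cells is $2^{O(t)}$, and with $t = O((\log x)^c) = O((\log M)^c)$ on $[M,2M]$ this is $2^{O((\log M)^c)} = M^{o(1)}$, as claimed.

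The main obstacle I anticipate is making rigorous the claim that expressions built by $t$ real RAM steps, including the square-root instruction, have only $2^{O(t)}$ "pieces" of sign behaviour — i.e. controlling the number of connected components of the solution set of a nested-radical (in)equality. Square roots introduce domain restrictions (the radicand must be nonnegative) and can nest, so a single expression after $t$ steps can be a tower of radicals of depth up to $t$; one must argue that such a function, restricted to the bounded interval $[M,2M]$, still changes sign only $2^{O(t)}$ times. I would handle this by noting that any such expression lies in a real closed field obtained by $\le t$ successive real radical extensions of $\mathbb{R}(x)$, so the equation "expression $=0$" is equivalent to a polynomial equation in $x$ of degree $2^{O(t)}$ (clear denominators and rationalize radicals one at a time, each step at most squaring the degree), and a univariate real polynomial of degree $D$ has at most $D$ real roots; the domain-of-definition constraints are themselves radical inequalities handled the same way. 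This is essentially the standard "algebraic degree grows at most exponentially in the number of operations" lemma, and I expect the write-up to consist mainly of stating it precisely and citing it, rather than reproving it from scratch. A subtlety to address is that the real RAM of Erickson, van der Hoog, and Miltzow also has an integer word RAM part and can use its integers to index and to control flow; but integer computations do not depend on the real input $x$ except through the outcomes of sign tests on reals, so the cell decomposition of the $x$-axis induced by the whole computation is still refined only by those $2^{O(t)}$ sign tests, and the argument goes through.
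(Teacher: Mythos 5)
Your approach is the same one the paper takes — partition the input axis by the possible branch sequences, use real algebraic geometry to bound the number of pieces, and deduce that $\lfloor x \rfloor$ cannot hit $M$ distinct integers on $[M,2M]$ — but your write-up skips the step that the paper works hardest to make precise, and your variant of the algebraic bound is shakier than what the paper uses.

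The gap: you write that fixing the branch sequence fixes ``hence the final output,'' and then count cells as if the program's output were constant on each one. This does not follow. With a fixed branch sequence the output is the value of a straight-line program in $x$, i.e.\ a real-analytic (generically non-constant) function of $x$; nothing about the control flow makes it constant. To get ``one integer per cell'' you need the additional observation that if this analytic function agrees with $\lfloor x\rfloor$ on any open subinterval — forcing it to be locally constant there — then by real-analyticity it is constant on the entire connected component of its domain. The paper isolates exactly this point: it introduces the notion of a function ``flattening at'' a value, records that the function associated to a straight-line program is real analytic, and proves (Lemma~\ref{lem:flattens bound}) that such a function can flatten at at most one value per connected component of its domain. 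Your proof needs this (or at least a continuity argument that no interval in a cell can straddle an integer) to be correct; as written, the assertion that the output is constant on each cell is simply unjustified.

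A secondary difference: to bound the number of pieces you propose rationalizing the nested radicals into a single univariate polynomial, claiming ``each step at most squares the degree,'' hence $2^{O(t)}$ roots. This is delicate — tracking which of the (up to $2^{O(t)}$) roots is the one actually computed, and ensuring the elimination does not overshoot the degree bound, requires real care, especially in the presence of nested radicals and domain-of-definition constraints. The paper avoids elimination entirely: it keeps each intermediate value $x_i$ as its own coordinate, so the computation graph is a semi-algebraic set in $\mathbb{R}^{n+1}$ cut out by $O(n)$ polynomials each of degree at most $2$, and then applies Milnor's theorem to bound the number of connected components by $(7n)^{3n}$. That multivariate route is both cleaner and clearly sufficient for all $c<1$; your univariate elimination route can likely be made rigorous, but it is not the argument you would want to rely on without filling in the degree bookkeeping.
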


The proof of Lemma~\ref{lem:floor takes log} is deferred to Appendix~\ref{app:floor takes log}. A result similar to Lemma~\ref{lem:floor takes log} was proved by Blum, Shub, and Smale~\cite[Section~4,~Proposition~3]{blum1989theory} on a different machine, excluding the square root operation. We adapt their arguments to the machine described by Erickson, van der Hoog, and Miltzow~\cite{erickson2022smoothing}, including the square root operation.

Theorem~\ref{thm:main lower bound} is almost immediate from Lemma~\ref{lem:floor takes log}:

\begin{proof}[Proof of Theorem~\ref{thm:main lower bound}]
The aspect ratio of $T_x$ is $O(x^2)$ because the triangles of $T_x$ have maximum side length $O(x)$ and minimum height $\Omega(1/(1+x))$. We now prove that there is no real RAM program computing a representation of $D_x$ from $x \in (1,\infty)$ in $O((\log x)^c)$ time for some $c \in (0,1)$. To do so we describe a real RAM program that, given $x$ and a representation of $D_x$, computes $\lfloor x \rfloor$ in constant time. This will prove the theorem by Lemma~\ref{lem:floor takes log}. There are two cases. See Figure~\ref{fig:lower bound}. Either $D_x$ has a unique polygon (in fact a rectangle), in which case $\lfloor x \rfloor = x$, so we return $x$. Otherwise $D$ has two triangles. In this case let $\Delta$ be any of these two triangles in our representation of $D_x$. Our triangle $\Delta$ has a unique side of length one. An orientation-preserving isometry of the plane displaces this unit length side to the segment between $(0,0)$ and $(1,0)$. The same isometry displaces the third vertex of $\Delta$ to a point $(u,v)$ such that $u = x - \lfloor x \rfloor$. We compute $u$ and return $x-u$. This can be done using only the basic arithmetic operations of the real RAM on the coordinates of the vertices of $\Delta$. The theorem is proved.
\end{proof}

\subsection{Proof of Lemma~\ref{lem:floor takes log}}\label{app:floor takes log}

In this section we prove Lemma~\ref{lem:floor takes log}, which we restate for convenience:

\lemmafloortakeslog*

As already mentioned, a result similar to Lemma~\ref{lem:floor takes log} was proved by Blum, Shub, and Smale~\cite[Section~4,~Proposition~3]{blum1989theory} on a different machine, excluding the square root operation. We adapt their arguments to the machine described by Erickson, van der Hoog, and Miltzow~\cite{erickson2022smoothing}, including the square root operation. This requires a few additional arguments.

As a preliminary, note that the real number operations available to the machine have domains of definition. Indeed, dividing by zero is undefined. Moreover, it is convenient for us to define the square root operation on $(0,\infty)$, thus excluding $0$. This way square root is real analytic on its domain, and more generally all real number operations available to the machine are real analytic on their domain, so their combinations too, a fact we will use in our proof. Having the square root of $0$ undefined does not change the computational power of the machine anyway.

Informally, the core of the proof consists in separating the complexity of a computation induced by the flow control instructions from the complexity induced by the operations on real numbers. So first, we consider a program in which there is no flow control instruction, in other words which instructions are executed does not depend on the input. We formalize that with a simper model of computation. A \emphdef{straight-line program} is a sequence of $n \geq 1$ instructions $I_1, \dots, I_n$ where for each $i \in [n]$ the instruction $I_i$ is either $x_i \leftarrow c$ for some constant $c  \in \mathbb R$, $x_i \leftarrow x_j \oplus x_k$ for some $j, k \in \{0, \dots, i-1\}$ and $\oplus \in \{+,-,\times,/\}$, or $x_i \leftarrow \sqrt{x_j}$. We compute in the natural way. The input is the initial value of the variable $x_0$. For each $i \in [n]$ the instruction $I_i$ computes the value of the variable $x_i$, and the output is the value of $x_n$ computed by the last instruction $I_n$. The computation fails if at some point we divide by zero, or if we take the square root of a non-positive number. This defines a function $f: D \subset \mathbb R \to \mathbb R$, where $D$ contains the input values for which the computation does not fail, and where $f$ maps each input value to the corresponding output value. We say that $f$ is the function \emphdef{associated to} to the straight-line program.

The key argument is to show that the function associated to a straight-line program is “nice”. Formally, given $y \in \mathbb R$, we say that $f$ \emphdef{flattens at} $y$ if there is an open set $O \subset \mathbb R$ such that $O \subseteq D$ and $f(O) = \{y\}$. Then:

\begin{lemma}\label{lem:flattens bound}
If $f : D \subset \mathbb R \to \mathbb R$ is associated to some straight-line program with $n$ instructions then $f$ flattens at less than $(7n)^{3n}$ values.
\end{lemma}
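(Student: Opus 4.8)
The plan is to track, instruction by instruction, the set of points where the partial function computed so far can be locally constant, using the fact that each instruction is real analytic on its domain. The key observation is that if $f$ flattens at $y$, then $f$ is constant (equal to $y$) on an open set, hence, since $f$ is real analytic on its (open) domain of definition... wait — the domain $D$ need not be open, but the set of inputs on which the computation succeeds is open, and $f$ is real analytic there. So let me restructure: let $D^\circ$ be the (open) interior of $D$; on each connected component of $D^\circ$, $f$ agrees with a real analytic function, because $f$ is obtained by composing the real analytic operations $+,-,\times,/,\sqrt{\;}$, each restricted to the open set where it is defined. A real analytic function on a connected open set that is constant on some open subset is constant everywhere on that component. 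Hence the number of values at which $f$ flattens is at most the number of connected components of $D^\circ$. So it suffices to bound the number of connected components of $D^\circ$.

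\medskip

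To bound the number of components of $D^\circ$, I would argue inductively on the instructions. Write $g_i : D_i^\circ \to \mathbb{R}$ for the real analytic function computed by the first $i$ instructions (with $g_0 = \mathrm{id}$ on $D_0^\circ = \mathbb{R}$, one component). Passing from $g_{i-1}$ to $g_i$, the domain $D_i^\circ$ is obtained from $D_{i-1}^\circ$ by removing, in the division case $x_i \leftarrow x_j / x_k$, the zero set of the analytic function $g_k$ (a previously computed variable), and in the square root case $x_i \leftarrow \sqrt{x_j}$, the set where $g_j \le 0$, i.e.\ we keep only $\{g_j > 0\}$. The crucial quantitative input is that each intermediate function $g_j$, being a rational-algebraic function built from $i \le n$ operations, has a controlled number of zeros/sign changes on each component: one bounds the number of roots of $g_j$ on an interval by a polynomial in $n$. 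Concretely, since each $g_j$ satisfies a polynomial equation $P(x, g_j(x)) = 0$ where the degree of $P$ is singly exponential in the number of operations used — no wait, that's too crude and gives the wrong bound. Better: track the pair (number of components of the domain, maximum number of zeros of any $g_j$ on a single component) and show each instruction multiplies these quantities by a bounded factor, linear in the current count; after $n$ steps this yields a bound of the claimed form $(7n)^{3n}$.

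\medskip

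So the induction hypothesis I would carry is: after $i$ instructions, $D_i^\circ$ has at most $a_i$ connected components, and on each component, each of the analytic functions $g_0, \dots, g_i$ has at most $b_i$ zeros, where $a_i, b_i$ satisfy recurrences like $a_i \le a_{i-1}(1 + (i) b_{i-1})$ and $b_i \le C \cdot (b_{i-1} + i)$ or similar. The precise recurrence is the main bookkeeping: removing the zero set of $g_k$ (division) cuts each component into at most $1 + b_{i-1}$ pieces, multiplying the component count; restricting to $\{g_j > 0\}$ (square root) similarly; and the new function $g_i = g_{j}/g_k$ or $\sqrt{g_j}$ has zeros controlled by those of $g_j$ (for the square root, $\sqrt{g_j} = 0$ is impossible on the domain, so $b$ barely grows; for a quotient, zeros of $g_j/g_k$ are zeros of $g_j$). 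Unwinding these recurrences over $n$ steps and bounding crudely gives $a_n < (7n)^{3n}$. I expect the main obstacle to be getting the root-counting bound for the intermediate functions right: one must argue that a function built from $i$ arithmetic operations and square roots, restricted to an interval, has at most polynomially-in-$i$ many zeros — this needs either an explicit inductive bound (a sum/product/quotient of two functions with $\le r$ zeros each has $\le 2r+1$ or so zeros, a square root has the same zeros as its radicand) or an appeal to the Pfaffian/fewnomial machinery; the elementary inductive route is cleaner here and is what I would write out. The transfer of Lemma~\ref{lem:flattens bound} back to the full real-RAM model (with flow control) — needed for Lemma~\ref{lem:floor takes log} — would then follow by observing that a $t$-step real-RAM computation induces, on the real variables along each branch, a straight-line program of length $O(t)$, but that is the content of the subsequent proof, not of this lemma.
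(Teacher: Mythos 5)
Your first reduction is the same as the paper's: since $f$ is real analytic on the open interior of its domain, flattening at $y$ forces $f \equiv y$ on an entire connected component, so it suffices to bound the number of components of the domain. From there, though, you diverge: the paper lifts the \emph{whole trace} of the computation to a semi-algebraic set $X \subset \mathbb{R}^{n+1}$ cut out by $\le 3n$ polynomial conditions of degree $\le 2$ (one or three per instruction), invokes Milnor's bound on the number of connected components of such a set, and projects back to $D$. You instead try to stay in $\mathbb{R}$ and run an instruction-by-instruction induction that tracks the number of components of the current domain together with the number of zeros of each intermediate function on each component.

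That inductive route has a genuine gap at its central step. You propose a recurrence of the shape ``each $g_j$ has at most polynomially-in-$n$ many zeros, because a sum/product/quotient of two functions with $\le r$ zeros has $\le 2r+1$ or so.'' The sum/difference case is false: the zero count of $g - h$ is not controlled by the zero counts of $g$ and $h$. For instance take $p(x) = \prod_{i=1}^{2k}(x-i)$, set $g = p + c$ for $c$ large and $h = c$; then $g$ and $h$ each have $O(1)$ zeros but $g - h = p$ has $2k$ zeros. What \emph{is} controlled inductively is the algebraic degree of $g_j$ (as an algebraic function of $x$), which at worst doubles with each instruction and hence can reach $2^n$; that is the quantity you briefly considered and dismissed as ``too crude.'' It is not too crude — an exponential-in-$n$ bound on zero counts is exactly the order of magnitude you need, since $2^n \ll (7n)^{3n}$ — but your stated recurrence $b_i \le C(b_{i-1}+i)$ gives a polynomial bound, which is simply wrong, and without tracking degree the induction does not close. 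So as written the proof fails at the recurrence; with degree bookkeeping substituted for zero-count bookkeeping it could be made to work (and might even give a better constant), but that is a different argument than the one you outlined, and it is also not the paper's argument.
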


Note that without the square root operation, every function associated to a straight-line program would be rational, so it would flatten at at most one value. When including the square root operation however, such functions can flatten at several values: consider, for example, the function that maps each $x \in \mathbb R \setminus \{0\}$ to $\sqrt{x^2}/x$. This is why we need extra arguments compared to Blum, Shub, and Smale~\cite[Section~4,~Proposition~3]{blum1989theory}.

\begin{proof}[Proof of Lemma~\ref{lem:flattens bound}]
The function $f$ is real analytic. Therefore if $f$ is constant on some open set $O \subset \mathbb R$ included in $D$, then $f$ is constant on the entire connected component of $D$ that contains $O$. We will now prove that $D$ has less than $(7n)^{3n}$ connected components. This will prove the lemma. 

By assumption $f$ is associated to a straight-line program $I_1, \dots, I_n$. For each $i \in [n]$ we associate to the instruction $I_i$ some polynomial equation(s). For example each instruction of the form $x_i \leftarrow x_j \times x_k$ is associated to the single equation $x_i - x_j x_k = 0$. And each instruction of the form $x_i \leftarrow \sqrt{x_j}$ is associated to the three equations $x_i^2 - x_j = 0$, $x_i > 0$, and $x_j > 0$. The other instructions are handled the same way. This defines at most $3n$ polynomial equations, each of degree at most two. The corresponding semi-algebraic subset $X \subset \mathbb R^{n+1}$ contains the possible values of the variables $(x_0, \dots, x_n)$ during an execution of the straight-line program. By definition the first co-coordinate projection maps $X$ to $D$ in a one-to-one manner, and this projection is continuous, so the number of connected components of $D$ is smaller than or equal to the number of connected components of $X$. And the latter is smaller than $(7n)^{3n}$ by a result of Milnor~\cite[Theorem~3]{milnor1964betti}.
\end{proof}

\begin{proof}[Proof of Lemma~\ref{lem:floor takes log}]
We consider a program $P$ that satisfies each of the following for every $x \in (1,\infty)$. Initialize all word and real registers with $0$, except the first real register initialized with $x$, and run the machine with program $P$. Then $P$ is such that 1) the machine never tries to perform an undefined operation (dividing by zero, or taking the square root of a non-positive number), 2) at some point the machine halts, and 3) afterward the first real register contains the value $\lfloor x \rfloor$. Assume by contradiction that there are $c \in (0,1)$ and $n \in \mathbb N$ such that $n > 1$, and such that for every $x \geq n$, when executed with program $P$ on input $x$, the machine halts after at most $(\log x)^c$ instructions. We will derive a contradiction, which will prove the lemma.

Consider an integer $m \geq n+1$. By assumption, when executed with program $P$ on some input $x \in [n, m]$, the machine halts after at most $\alpha = (\log m)^c$ instructions. Which instructions are executed depends on $x$, and more precisely on the outcomes of the flow control instructions. Because at most $\alpha$ flow control instructions are executed, there are at most $2^{\alpha}$ possibilities for the outcomes of the flow control instructions. For each such possibility, the output (value of the first real register once the machine halts) is a function of the input (initial value of the first real register) that is associated to a straight-line program with at most $\alpha$ instructions. In particular, this function flattens at less than $(7\alpha)^{3\alpha}$ values by Lemma~\ref{lem:flattens bound}, we will use this fact. Let $F$ contain all these functions, over all the possibilities for the outcomes of the flow control instructions. Now consider an integer $k \in [n, m-1]$. There is a non-empty open subset $O_k$ of the open interval $(k,k+1)$ such that the outcomes of the flow control instructions are the same for all inputs in $O_k$. So there is $f_k \in F$, whose domain contains $O_k$, such that $f_k(O_k) = \{k\}$. In other words, $f_k$ flattens at $k$. Now recall that each $f_k$ flattens at less than $(7\alpha)^{3\alpha}$ values, so $(m-n) \leq 2^{\alpha} \cdot (7\alpha)^{3\alpha}$. As $m$ goes to $\infty$ this inequality becomes false, because $\alpha = (\log m)^c$ and $c < 1$, a contradiction.
\end{proof}

%%%%%%%%%%%%%%%%%%%%%%%%%%%%%%%%%%%%%%%%%%%%%%%%%%%%%%%%%%%%%%%%%%%%%%%%%%%%%%%%%%%%%%%%%%%%%%%%%%%%%%%%%%%%%%%%%%%%%%%%%%%%%%%%%%%%%%%%%%%%%%%%%%%%%%%%%%%%%%%%%%%%%
%%%%%%%%%%%%%%%%%%%%%%%%%%%%%%%%%%%%%%%%%%%%%%%%%%%%%%%%%%%%%%%%%%%%%%%%%%%%%%%%%%%%%%%%%%%%%%%%%%%%%%%%%%%%%%%%%%%%%%%%%%%%%%%%%%%%%%%%%%%%%%%%%%%%%%%%%%%%%%%%%%%%%
%%%%%%%%%%%%%%%%%%%%%%%%%%%%%%%%%%%%%%%%%%%%%%%%%%%%%%%%%%%%%%%%%%%%%%%%%%%%%%%%%%%%%%%%%%%%%%%%%%%%%%%%%%%%%%%%%%%%%%%%%%%%%%%%%%%%%%%%%%%%%%%%%%%%%%%%%%%%%%%%%%%%%
%%%%%%%%%%%%%%%%%%%%%%%%%%%%%%%%%%%%%%%%%%%%%%%%%%%%%%%%%%%%%%%%%%%%%%%%%%%%%%%%%%%%%%%%%%%%%%%%%%%%%%%%%%%%%%%%%%%%%%%%%%%%%%%%%%%%%%%%%%%%%%%%%%%%%%%%%%%%%%%%%%%%%

\section{Appendix: Voronoi diagrams and Delaunay tessellations}\label{app:delaunay and voronoi defs}

In this section we provide elementary definitions and properties of Delaunay tessellations and Voronoi diagrams, for use in Appendix~\ref{A:Delaunay}. This is all folklore, but we could not find all the exact statements in the literature, so we provide proofs for completeness.

\subsection{Delaunay tessellations}\label{sec:preliminaries delaunay}

Consider a closed polyhedral surface $S$, and a finite non-empty $V \subset S$ containing all singularities of $S$. We consider the definition of Bobenko and Springborn~\cite[Section~2]{bobenko2007discrete} of the Delaunay tessellation of $(S,V)$. First, they define an \emphdef{immersed empty disk} as a pair $(D, \varphi)$ where $D$ is an open metric disk of $\mathbb R^2$, and $\varphi : \overline D \to S$ is a map defined on the closure $\overline D$ of $D$ that satisfies the following: the restriction of $\varphi$ to $D$ is an isometric immersion, and $\varphi(D) \cap V = \emptyset$. Note that $\varphi$ is not necessarily injective. Bobenko and Springborn~\cite[Proposition~4]{bobenko2007discrete} proved:

\begin{lemma}[Bobenko and Springborn]\label{L:exists Delaunay}
There is a unique tessellation $\mathcal D$ of $S$ such that for every immersed empty disk $(D, \varphi)$, if $\varphi^{-1}(V)$ is not empty, then the convex hull of $\varphi^{-1}(V)$ projects via $\varphi$ to either a vertex, an edge, or the closure of a face of $\mathcal{D}$, and such that every vertex, edge, and face of $\mathcal{D}$ can be obtained this way.
\end{lemma}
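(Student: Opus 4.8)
The plan is to note first that this is exactly Proposition~4 of Bobenko and Springborn~\cite{bobenko2007discrete}, so for the purposes of this paper it may simply be cited; but here is how one would reprove it. I would fix once and for all a \emph{geodesic triangulation} of $(S,V)$: a tessellation whose vertex set is exactly $V$, whose faces are intrinsically Euclidean triangles, and whose edges are geodesic arcs with endpoints in $V$ and relative interiors disjoint from $V$. Such a triangulation exists because $V$ is finite, non-empty, and contains every singularity of $S$. I would call an edge $e$ \emph{locally Delaunay} when, developing its two incident triangles into the Euclidean plane across $e$, neither apex lies strictly inside the circumscribed circle of the other triangle — equivalently, the sum of the two triangle angles opposite $e$ is at most $\pi$.

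For \emph{existence}, I would start from an arbitrary geodesic triangulation of $(S,V)$ and repeatedly flip edges that are not locally Delaunay, replacing such an edge by the other diagonal of the developed quadrilateral of its two incident triangles, invoking Indermitte, Liebling, Troyanov, and Cl{\'e}men{\c{c}}on~\cite{indermitte2001voronoi} for termination; the outcome is a geodesic triangulation $T$ all of whose edges are locally Delaunay. The central step would then be a local-to-global principle: if every edge of $T$ is locally Delaunay, then for each triangle $t$ of $T$ the developing map obtained by unfolding the triangles of $T$ one at a time across shared edges, starting from $t$, extends the circumdisk of $t$ to an immersed empty disk $(D,\varphi)$, with the three vertices of $t$ lying on $\partial D$ after developing. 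The mechanism is monotonicity: crossing a locally Delaunay edge reveals a new vertex outside or on the current circle, so the disk can be grown without ever engulfing a point of $V$; one turns this into a proof by a finite continuity and compactness argument, ranging over the finitely many triangles reached and the finitely many circles through triples of developed copies of $V$. Finally I would obtain $\mathcal D$ from $T$ by deleting every edge whose opposite angle sum equals exactly $\pi$; its faces are the merged regions so produced, and each develops via the relevant circumdisk to a convex polygon inscribed in an empty circle, from which the stated property — the convex hull of $\varphi^{-1}(V)$ projecting to a vertex, an edge, or the closure of a face, and every vertex, edge, and face arising this way — can be read off.

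For \emph{uniqueness}, the key observation is that membership in a Delaunay tessellation is an intrinsic condition: an arc $a$ between points of $V$, with relative interior disjoint from $V$, is a Delaunay edge precisely when some immersed empty disk $(D,\varphi)$ has two points of $\varphi^{-1}(V)$ on $\partial D$ whose $\varphi$-images are the endpoints of $a$, with $\varphi$ carrying the chord joining them to $a$; similarly faces correspond to immersed empty disks with at least three such boundary points. Because this criterion never mentions the auxiliary triangulation $T$, the edge set — and hence $\mathcal D$ — does not depend on any choice, which gives uniqueness.

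I expect the main obstacle to be the local-to-global step: converting ``locally Delaunay at every edge'' into the global immersed-empty-disk property, while correctly handling developed quadrilaterals that are non-convex or degenerate (where a literal edge flip is unavailable) and while tracking the developing map $\varphi$, which is an immersion and need not be injective, so that the ``the disk stays empty'' argument must be carried out on $D$ and $\varphi$ rather than naively on $S$. Termination of the flipping procedure is also non-trivial, but there I would simply quote~\cite{indermitte2001voronoi}.
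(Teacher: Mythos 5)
The paper does not prove this lemma at all: it is stated with an explicit attribution to \cite[Proposition~4]{bobenko2007discrete} and no proof is given, so citation, as you suggest at the outset, is exactly the paper's own route. Your supplementary existence-by-flipping sketch (with termination quoted from~\cite{indermitte2001voronoi}) and your observation that uniqueness follows because the defining empty-disk condition is intrinsic to $(S,V)$ and makes no reference to an auxiliary triangulation are both sound, although that sketch is not literally what Bobenko and Springborn do in their proof of Proposition~4, where they build~$\mathcal D$ directly from the maximal empty immersed disks rather than by flipping a starting triangulation.
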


The tessellation $\mathcal D$ given by Lemma~\ref{L:exists Delaunay} is the \emphdef{Delaunay tessellation} of $(S, V)$. It is “in general” a triangulation, but not always. 

We will also use the following definition. For every point $x \in S$ there is an immersed empty disk $(D, \varphi)$ such that $\varphi$ maps the center of $D$ to $x$, and such that $\varphi^{-1}(V) \neq \emptyset$. And $(D, \varphi)$ is unique to $x$ in the sense that if $(D', \varphi')$ is another such immersed empty disk then there is a plane isometry $\psi : \mathbb{R}^2 \to \mathbb{R}^2$ satisfying $D' = \psi(D)$ and $\varphi = \varphi' \circ \psi$. We say that $(D, \varphi)$ is the \emphdef{maxi-disk} of the point $x$.

\subsection{Voronoi diagrams}

Again, consider a closed polyhedral surface $S$, and a finite non-empty $V \subset S$ containing all singularities of $S$. The \emphdef{Voronoi diagram} of $(S, V)$ contains the points $x \in S$ such that the distance between $x$ and $V$ is realized by at least two distinct paths in $S$. Note that it is possible for the Voronoi diagram of $(S,V)$ to contain a point $x$ such that all the shortest paths between $x$ and $V$ end at the same point of $V$. This is for example the case if $S$ is a flat torus and $V$ contains exactly one point of $S$. In this section we prove the following:

\begin{lemma}\label{lem:allprops voronoi}
Let $S$ be a closed polyhedral surface. Let $V \subset S$ be finite, non-empty, and containing all singularities of $S$. The Voronoi diagram of $(S, V)$ is a graph with finitely many vertices in which every vertex has degree greater than or equal to three, every edge is geodesic, every face is homeomorphic to an open disk and contains exactly one point of $V$, and every angle at a corner of a face is smaller than or equal to $\pi$.
\end{lemma}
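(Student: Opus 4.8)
The plan is to study the function $d_V(x):=d(x,V)$ together with, for each $x\in S$, the number $k(x)$ of distinct shortest paths from $x$ to $V$; the Voronoi diagram $G$ of $(S,V)$ is exactly $\{x:k(x)\ge 2\}$. First I would record three soft facts: (i) $d_V$ is $1$-Lipschitz; (ii) since $V$ contains every singularity of $S$, the relative interior of any shortest path from $x$ to $V$ meets no point of $V$ (a competing site would otherwise lie on it, strictly closer) and hence lies in the flat locus of $S$; (iii) by Arzelà--Ascoli and lower semicontinuity of length, if $y\to x$ then every shortest path from $y$ to $V$ subconverges to a shortest path from $x$ to $V$. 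Facts (ii)--(iii) give a \textbf{local normal form}: each $x$ has a neighbourhood $U$ and a developing map on $U$ producing $k(x)$ pairwise distinct points $\tilde p_1,\dots,\tilde p_{k(x)}$ of the plane, all at distance $d_V(x)$ from the developed point $\tilde x$, so that $d_V\equiv\min_i|\cdot-\tilde p_i|$ on $U$ and $G\cap U$ equals the planar Voronoi diagram of $\tilde p_1,\dots,\tilde p_{k(x)}$ near $\tilde x$ (this uses $k(x)<\infty$, argued below).

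Granting the normal form, most of the statement falls out. Points with $k(x)=1$ lie off $G$, so $G$ is closed, hence compact. A point with $k(x)=2$ lies on the perpendicular bisector of $\tilde p_1,\tilde p_2$, so $G$ is locally a straight segment in the flat locus; these segments assemble into the edges of $G$, which are therefore geodesic and avoid $V$ (so $d_V>0$ along them). A point $x$ with $k(x)\ge 3$ is isolated in $G$, and since the $\tilde p_i$ are cocircular about $\tilde x$, $G\cap U$ is a star of $k(x)$ rays issued from $\tilde x$; so $x$ is a vertex of degree exactly $k(x)\ge 3$, lying in the flat locus. Moreover each corner of a face at such a vertex develops to the corner at $\tilde x$ of the Voronoi cell of cocircular points, whose angle equals half the sum of the two adjacent angular gaps and is thus $<\pi$; as the boundary of a face is a cycle of edges meeting at these vertices, all corner angles are $\le\pi$.

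For the faces, set $\mathrm{Vor}^\circ(p):=\{x:k(x)=1\text{ and its shortest path to }V\text{ ends at }p\}$ for $p\in V$, so that $S\setminus G=\bigsqcup_{p\in V}\mathrm{Vor}^\circ(p)$. Each $x\in\mathrm{Vor}^\circ(p)$ is joined to $p$ by a unique shortest geodesic, which I parametrize as $\gamma_x:[0,1]\to S$ proportionally to arc length. Every $\gamma_x(t)$ again lies in $\mathrm{Vor}^\circ(p)$: its distance to $p$ is minimal there by the triangle inequality, and a second shortest path from $\gamma_x(t)$ to $V$ would end at $p$ and concatenate with $\gamma_x|_{[0,t]}$ into a shortest --- hence geodesic --- path from $x$ to $p$, which forces it to coincide with $\gamma_x$. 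Thus $(x,t)\mapsto\gamma_x(t)$ is a deformation retraction of $\mathrm{Vor}^\circ(p)$ onto $p$ (continuity from continuous dependence of unique shortest geodesics), so $\mathrm{Vor}^\circ(p)$ is open, connected and simply connected, hence an open disc, and it is a complementary face of $G$ containing $p$ and no other site (a second site $q$ in it would have $p$ as its unique closest point, forcing $d(q,p)<d(q,q)=0$). Hence $S\setminus G$ has exactly $|V|<\infty$ faces, each an open disc with one site. Finiteness of $G$ then follows: its vertex set is discrete in $G$ by the normal form and closed in the compact set $G$ (a limit of vertices lies neither off $G$ nor in the relative interior of an edge), hence finite; and $G$ has finitely many edges, since those incident to a vertex number $\tfrac12\sum_v k(v)<\infty$ and any vertex-free circle component of $G$ bounds a distinct disc face, of which there are at most $|V|$. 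Collecting the items established along the way proves the lemma.

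The step I expect to be the main obstacle is $k(x)<\infty$ (equivalently, finitely many shortest paths from a point to $V$). The plan is a developing argument: the set of directions at $x$ of shortest paths to $V$ is closed in the flat unit circle at $x$ by (iii); if it were infinite it would have an accumulation direction $\theta^\ast$, itself a shortest-path direction ending at some $p^\ast\in V$, and developing the flat neighbourhood of that geodesic shows the length-$d_V(x)$ geodesics in directions near $\theta^\ast$ have pairwise distinct developed endpoints clustering near the developed image of $p^\ast$; since the developing map is a local homeomorphism there and $V$ is discrete, only finitely many of those endpoints can equal $p^\ast$, so only finitely many nearby directions are shortest-path directions, a contradiction. (Alternatively, in the universal cover --- a proper metric space --- the lifts of $x$ form a discrete set, so only finitely many lie on the metric sphere of radius $d_V(x)$ about a fixed lift of a site, which bounds the number of shortest paths.) Everything else is soft once the normal form and this finiteness are in hand.
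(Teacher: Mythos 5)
Your proposal is correct and follows essentially the same route as the paper: in both, the key device is to develop a flat neighbourhood of each point to reduce the local picture to the Euclidean Voronoi diagram of the (finitely many, cocircular) developed nearest sites — in the paper this is packaged as Bobenko and Springborn's ``maxi-disk'' — from which the vertex degrees, geodesic edges and corner angles follow immediately. The only noticeable variation is in the face analysis: you deformation-retract each open cell onto its site along the unique shortest geodesics, while the paper shows instead that the universal cover of a face cannot contain two lifts of sites; both yield that each face is an open disc containing exactly one site, so this is a minor stylistic difference rather than a different proof.
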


Note that without the assumption that $V$ contains all the singularities of $S$, it would be possible for the Voronoi diagram of $(S,V)$ to not be a graph with geodesic edges.

\begin{proof}[Proof of Lemma~\ref{lem:allprops voronoi}]
Consider the Voronoi diagram $\mathcal V$ of $(S, V)$. We have three claims that immediately imply the lemma. Our first claim is that $\mathcal V$ is a graph with finitely many vertices, in which every vertex has degree greater than or equal to three, and in which every edge is geodesic. To prove the first claim consider a point $x \in S$, and the maxi-disk $(D, \varphi)$ of $x$. Let $x^\star$ be the center of $D$. The geodesic paths between $x^\star$ and $\varphi^{-1}(V)$ in $\mathbb R^2$ correspond via $\varphi$ to the shortest paths between $x$ and $V$ in $S$. So $x$ belongs to $\mathcal V$ if and only if $\varphi^{-1}(V)$ contains several points. Assume that $x$ belongs to $\mathcal V$, and let $m \geq 2$ be the number of points in $\varphi^{-1}(V)$. Consider, in $\mathbb R^2$, the Voronoi diagram of $\varphi^{-1}(V)$, which we denote by $X$. Then $X$ consists in $m$ geodesic rays emanating from $x^ \star$. There is an open ball $O \subset D$ on which $\varphi$ is injective, containing $x^\star$, and such that $\varphi(X \cap O) = \mathcal V \cap \varphi(O)$. There are two cases. If $m = 2$ then $\mathcal V$ is locally a geodesic path around $x$. If $m \geq 3$ then $\mathcal V$ is locally a geodesic star whose central vertex is $x$. In particular $\mathcal V$ is a graph whose minimum degree is greater than or equal to three, and whose edges are geodesic segments. And $\mathcal V$ has finitely many vertices because $S$ is compact. That proves the first claim.

Now consider a face $F$ of $\mathcal V$. Our second claim is that $F$ is simply connected, and that $F$ contains exactly one point of $V$. This implies that $F$ is homeomorphic to an open disk because $F$ is not homeomorphic to a sphere. To prove the second claim first consider a point $x \in F$. There is a unique shortest path $p$ from $x$ to $V$. Then $p$ is disjoint from $\mathcal V$. So the endpoint of $p$ belongs to $F$. That proves $F \cap V \neq \emptyset$. Now consider the universal covering space $\widetilde F$ of $F$. Then $\widetilde F$ does not contain two distinct lifts of points of $V$. For otherwise let $\widetilde V$ contain the lifts of the points of $V$ in $\widetilde F$. There is a point $\widetilde x \in \widetilde F$ whose distance to $\widetilde V$ is realized by two distinct paths. And $\widetilde x$ lifts a point of $\mathcal V$, a contradiction. That proves the second claim.

Finally, consider a vertex $v$ of $\mathcal V$. Our third claim is that around $v$ the angles between consecutive edges of $\mathcal V$ are all smaller than or equal to $\pi$. To prove the claim consider the maxi-disk $(D, \varphi)$ of $v$. Let $v^\star$ be the center of $D$. Let $X$ be the Voronoi diagram of $\varphi^{-1}(V)$ in the plane. The faces of $X$ are all convex, being intersections of half-planes. So the angles between consecutive rays of $X$ around $v^\star$ are all smaller than or equal to $\pi$. There is an open disk $O$ on which $\varphi$ is injective, containing $v^\star$, such that $\varphi(X \cap O) = \mathcal V \cap \varphi(O)$. That proves the third claim, and the lemma.
\end{proof}

\subsection{Voronoi diagram and Delaunay tessellation}

A graph $G$ is \emphdef{cellularly} embedded on a surface $S$ if the faces of the embedding are all homeomorphic to open disks. In this section, given two graphs $G$ and $H$ cellularly embedded on $S$, we say that $G$ and $H$ are \emphdef{isomorphic} if there is an orientation-preserving homeomorphism of $S$ that maps $G$ to $H$, for some orientation of $S$. This does not depend on the orientation of $S$. We prove the following:

\begin{lemma}\label{lem:duality delaunay voronoi}
Let $S$ be a closed polyhedral surface. Let $V \subset S$ be finite, non-empty, and containing all singularities of $S$. The Voronoi diagram of $(S,V)$ is isomorphic to the dual of the Delaunay tessellation of $(S,V)$.
\end{lemma}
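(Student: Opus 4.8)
The plan is to realize the claimed isomorphism through a combinatorial correspondence. Write $\mathcal V$ for the Voronoi diagram and $\mathcal D$ for the Delaunay tessellation of $(S,V)$, and $\mathcal D^\star$ for the dual graph of $\mathcal D$ cellularly embedded on $S$. By Lemma~\ref{lem:allprops voronoi}, $\mathcal V$ is a graph $2$-cell-embedded on $S$, and it is connected since $S$ is; the same holds for $\mathcal D$ by definition of a tessellation. A $2$-cell embedding of a connected graph on a closed orientable surface is determined up to orientation-preserving homeomorphism by its rotation system, and the rotation system of $\mathcal D^\star$ is determined by that of $\mathcal D$. So it suffices to produce a bijection between the cells of $\mathcal V$ and the cells of $\mathcal D^\star$ that reverses dimension and matches the incidence relations and the cyclic orders of edges around vertices.

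To build it, I would argue entirely locally, using the maxi-disks of Appendix~\ref{sec:preliminaries delaunay}: near a point $x$ of $\mathcal V$ with maxi-disk $(D,\varphi)$ and center $x^\star$, the set $\varphi^{-1}(V)\subseteq\partial D$ is cocircular around $x^\star$; near $x^\star$ the diagram $\mathcal V$ is the planar Voronoi diagram of $\varphi^{-1}(V)$ (recall from the proof of Lemma~\ref{lem:allprops voronoi} that the radii of $D$ to $\varphi^{-1}(V)$ project via $\varphi$ to the shortest paths from $x$ to $V$); and near $x^\star$, by Lemma~\ref{L:exists Delaunay}, the tessellation $\mathcal D$ consists of the edges of the convex hull of $\varphi^{-1}(V)$. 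The correspondence is then the classical one: a face of $\mathcal V$ maps to the unique point of $V$ it contains (a vertex of $\mathcal D$ by Lemma~\ref{L:exists Delaunay}, and every point of $V$ lies off $\mathcal V$ hence in a face of $\mathcal V$); an edge $\epsilon$ of $\mathcal V$, for which $\varphi^{-1}(V)=\{p_1,p_2\}$ at an interior point, maps to the projection of the segment $[p_1,p_2]$, an edge of $\mathcal D$ joining the sites of the two faces of $\mathcal V$ adjacent to $\epsilon$ (independent of the chosen interior point by continuity along $\epsilon$, and of the maxi-disk by its uniqueness up to a plane isometry); a vertex $u$ of $\mathcal V$ of degree $m\ge 3$, for which $\varphi^{-1}(V)=\{p_1,\dots,p_m\}$, maps to the projection of $\operatorname{conv}(p_1,\dots,p_m)$, a face of $\mathcal D$ whose $m$ vertices are the sites of the $m$ faces of $\mathcal V$ around $u$ and whose $m$ sides, in cyclic order, are the images of the $m$ edges of $\mathcal V$ at $u$ in cyclic order. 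Reading this last picture at $u$ shows the correspondence reverses dimension and matches incidences and cyclic orders; fixing the orientations of the immersions $\varphi$ consistently makes it orientation-preserving.

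It then remains to check that these maps are bijective. Injectivity would follow from uniqueness of maxi-disks together with uniqueness of the circle through $\ge 3$ given points: two Voronoi vertices with the same image face are both the image of the center of the unique immersed empty disk circumscribing that face, and two Voronoi edges with the same image pass through the midpoint of that Delaunay edge along the perpendicular geodesic to it, forcing them to coincide. For surjectivity, given a face $F$ of $\mathcal D$, Lemma~\ref{L:exists Delaunay} supplies an immersed empty disk with $\varphi^{-1}(V)$ equal to the (cocircular) vertex set of $F$; since the disk is empty and geodesics from its center lift into it, the image of the center has distance to $V$ equal to the disk's radius, realized by $\ge 3$ shortest paths, hence is a Voronoi vertex mapping to $F$ — and then every edge of $\mathcal D$, being a side of some face, is hit as well. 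The step I expect to be the main obstacle is the local-to-global bookkeeping: checking that the picture read off at a Voronoi vertex $u$ and the picture read off in the relative interior of an edge of $\mathcal V$ incident to $u$ glue consistently (their maxi-disks overlap on a region containing the relevant segment of $\operatorname{conv}$, on which the immersions agree up to the identifications defining $S$), so that the local correspondences assemble into a single rotation-system isomorphism; everything else is routine.
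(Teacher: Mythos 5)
Your proposal matches the paper's proof in both strategy and content: both build the duality via maxi-disks (vertices of $\mathcal V$ $\leftrightarrow$ faces of $\mathcal D$ through the convex hull of $\varphi^{-1}(V)$, with cyclic orders read off locally), and both handle the gluing across an edge of $\mathcal V$ by realizing the two maxi-disks at its endpoints so they agree along the perpendicular segment — exactly the paper's key "reversal" claim. The step you flag as the main obstacle is precisely the one the paper singles out as the claim requiring the overlapping-maxi-disk argument, so you have correctly identified where the real work lies.
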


(Recall that in Lemma~\ref{lem:duality delaunay voronoi} the Voronoi diagram of $(S,V)$ is a graph cellularly embedded on $S$ by Lemma~\ref{lem:allprops voronoi}.)

\begin{proof}[Proof of Lemma~\ref{lem:duality delaunay voronoi}]
Consider the Voronoi diagram $\mathcal V$ of $(S, V)$, and the Delaunay tessellation $\mathcal D$ of $(S,V)$. Consider a point $x$ of $S$, and its maxi-disk $(D, \varphi)$. We already proved that $x$ is a vertex of $\mathcal V$ is and only if $\varphi^{-1}(V)$ contains at least three points. This is the case if and only if the convex hull of $\varphi^{-1}(V)$ projects via $\varphi$ to the closure of a face $f$ of $\mathcal D$ (Lemma~\ref{L:exists Delaunay}). Every face of $\mathcal D$ can be obtained this way (Lemma~\ref{L:exists Delaunay}), and distinct vertices of $\mathcal V$ are clearly mapped to distinct faces of $\mathcal D$. So this defines a one-to-one correspondence between the vertices of $\mathcal V$ and the faces of $\mathcal D$. When a vertex $v$ of $\mathcal V$ corresponds to a face $f$ of $\mathcal D$ this way we say that $v$ is \emphdef{dual} to $f$.

Now fix a vertex $v$ of $\mathcal V$, and its dual face $f$ of $\mathcal D$. We call \emphdef{side} of $f$ any directed edge of $\mathcal D$ that sees $f$ on its left. We now relate the directed edges based at $v$ in $\mathcal V$ to the sides of $f$. Again, let $(D, \varphi)$ be the maxi-disk of $v$. Let $v^\star$ be the center of $D$, and let $y_0, \dots, y_{m-1}$ be the $m \geq 3$ points of $\varphi^{-1}(V)$. In $\mathbb R^2$ the Voronoi diagram of $\varphi^{-1}(V)$ consists in $m$ geodesic rays $r_0, \dots, r_{m-1}$ emanating from $v^\star$, so that $r_0, y_0, \dots, r_{m-1}, y_{m-1}$ are in clockwise order around $v^\star$. There is an open ball $O \subset D$ on which $\varphi$ is injective, containing $v^\star$, such that within $O$ the rays $r_0, \dots, r_{m-1}$ correspond via $\varphi$ to the directed edges $e_0, \dots, e_{m-1}$ emanating from $v$ in $\mathcal V$. For every $i$ the geodesic path from $y_i$ to $y_{i+1}$ corresponds via $\varphi$ to a side $e_i'$ of $f$, indices are modulo $m$. We say that $e_i$ and $e_i'$ are \emphdef{dual}. This duality is a one-to-one correspondence between the directed edges based at $v$ and the sides of $f$. The former are cyclically ordered around $v$, the latter are cyclically ordered along the boundary of $f$, and the duality correspondence respects these cyclic orders.

We claim that if a directed edge $e_0$ of $\mathcal V$ is dual to a directed edge $e_0'$ of $\mathcal D$, then the reversal of $e_0$ is dual to the reversal of $e_0'$. The claim immediately implies the lemma, for then the duality correspondences define the desired graph isomorphism between $V$ and $\mathcal D$. Let us prove the claim. Let $e_1'$ be the reversal of $e_0'$, and let $e_1$ be the dual of $e_1'$. We shall prove that $e_1$ is the reversal of $e_0$. We consider the maxi-disks $(D_0, \varphi_0)$ and $(D_1, \varphi_1)$ of the base vertices of $e_0$ and $e_1$, and we realize them so that they agree on the geodesic segment $p$ that is the pre-image of the common edge of $e_0'$ and $e_1'$. Then $\varphi_0$ and $\varphi_1$ agree on $\overline D_0 \cap \overline D_1$, so they agree with a common map $\varphi_0 \cup \varphi_1 : \overline D_0 \cup \overline D_1 \to S$. Let $q$ be the geodesic segment between the centers of $D_0$ and $D_1$. Then $q$ is contained in $\overline D_0 \cup \overline D_1$, and projects via $\varphi_0 \cup \varphi_1$ to the common edge of $e_0$ and $e_1$ in $\mathcal V$. Indeed for every point $x^\star$ in the relative interior of $q$ the maxi-disk $(D, \varphi)$ of $\varphi(x^\star)$ can be realized so that $x^\star$ is the center of $D$, and so that $\varphi$ agrees with $\varphi_0 \cup \varphi_1$ on $\overline D \cap (\overline D_0 \cup \overline D_1)$. Then $\varphi^{-1}(V)$ contains exactly the two endpoints of $p$, and so $\varphi(x^\star)$ belongs to the relative interior of an edge of $\mathcal V$. This proves the claim, and the lemma.
\end{proof}

%%%%%%%%%%%%%%%%%%%%%%%%%%%%%%%%%%%%%%%%%%%%%%%%%%%%%%%%%%%%%%%%%%%%%%%%%%%%%%%%%%%%%%%%%%%%%%%%%%%%%%%%%%%%%%%%%%%%%%%%%%%%%%%%%%%%%%%%%%%%%%%%%%%%%%%%%%%%%%%%%%%%%
%%%%%%%%%%%%%%%%%%%%%%%%%%%%%%%%%%%%%%%%%%%%%%%%%%%%%%%%%%%%%%%%%%%%%%%%%%%%%%%%%%%%%%%%%%%%%%%%%%%%%%%%%%%%%%%%%%%%%%%%%%%%%%%%%%%%%%%%%%%%%%%%%%%%%%%%%%%%%%%%%%%%%
%%%%%%%%%%%%%%%%%%%%%%%%%%%%%%%%%%%%%%%%%%%%%%%%%%%%%%%%%%%%%%%%%%%%%%%%%%%%%%%%%%%%%%%%%%%%%%%%%%%%%%%%%%%%%%%%%%%%%%%%%%%%%%%%%%%%%%%%%%%%%%%%%%%%%%%%%%%%%%%%%%%%%
%%%%%%%%%%%%%%%%%%%%%%%%%%%%%%%%%%%%%%%%%%%%%%%%%%%%%%%%%%%%%%%%%%%%%%%%%%%%%%%%%%%%%%%%%%%%%%%%%%%%%%%%%%%%%%%%%%%%%%%%%%%%%%%%%%%%%%%%%%%%%%%%%%%%%%%%%%%%%%%%%%%%%

\section{Appendix: proof of Proposition~\ref{thm:isometry}}\label{A:Delaunay}

See Appendix~\ref{app:delaunay and voronoi defs} for basic definitions and properties of Delaunay tessellations and Voronoi diagrams. In this section we prove Proposition~\ref{thm:isometry}, which we restate for convenience:

\thmisometry*

Proposition~\ref{thm:isometry} slightly extends known results, and is not surprising at all, but we provide proofs for completeness. Our strategy for computing the Delaunay tessellation is, classically, to first compute the Voronoi diagram:

\begin{restatable}{proposition}{propcomputevoronoi}\label{prop:compute voronoi}
Let $T$ be a portalgon of $n$ triangles, of happiness $h$. Let $V$ be a set of vertices of $T^1$. Assume that $V$ is not empty and contains all singularities of $\mathcal S(T)$. We can compute in $O(n^2 h \log(nh))$ time a portalgon $T'$ of $O(n^2h)$ triangles, whose surface is $\mathcal S(T)$, and a subgraph $\mathcal V$ of $T'^1$, such that $\mathcal V$ is the Voronoi diagram of $(\mathcal S(T), V)$.
\end{restatable}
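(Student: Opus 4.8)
The goal is to compute the Voronoi diagram of $(\mathcal S(T), V)$ as a subgraph of the $1$-skeleton of a suitable portalgon refinement of $T$, in $O(n^2 h \log(nh))$ time. The classical route is to run a multiple-source shortest path algorithm from $V$ on $\mathcal S(T)$, tracking the wavefront, and to read off the Voronoi diagram as the locus where waves issued from distinct points of $V$ (or from distinct ``initial pieces'' of the wavefront) meet. The key enabling fact is that $T$ has happiness $h$, which means that any shortest path crosses any triangle of $T$ at most $h$ times; hence the continuous Dijkstra / wavefront propagation of Mitchell, Mount, and Papadimitriou~\cite{mitchell1987discrete}, in the form adapted to portalgons by L\"offler, Ophelders, Staals, and Silveira~\cite[Section~3]{portalgons}, runs in $O(n^2 h \log(nh))$ time and produces a subdivision of each triangle of $T$ into $O(nh)$ cells, each cell carrying a single straight-line distance function to a pseudo-source. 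Summing over the $n$ triangles gives a subdivision of $\mathcal S(T)$ of total combinatorial complexity $O(n^2 h)$, which we triangulate into a portalgon $T_0$ of $O(n^2 h)$ triangles whose $1$-skeleton refines $T^1$.

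First I would set up the propagation: start the wavefront simultaneously from every point of $V$ (treating points of $V$ that are singularities of $\mathcal S(T)$ as sources emitting on their full surrounding angle), and propagate it across the edges of $T^1$, exactly as in the single-source case but with the initial event queue seeded by all of $V$. Each point $x \in \mathcal S(T)$ ends up labelled by the pseudo-source $s(x) \in V$ realizing $d(x, V)$, together with the combinatorial type (the ``ancestor sequence'' of triangle crossings) of its shortest path. The Voronoi diagram $\mathcal V$ is then the set of points $x$ where $d(\cdot, V)$ is realized by at least two distinct geodesic paths; by Lemma~\ref{lem:allprops voronoi} this is a graph with geodesic edges, minimum degree $\ge 3$, and each face a topological disk containing exactly one point of $V$. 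Concretely, within each cell of the wavefront subdivision the distance-to-$V$ function is a single analytic (in fact, square-root-of-quadratic) function, so the bisectors between adjacent cells carrying shortest paths to two \emph{different} geodesic homotopy classes are algebraic arcs; the subset of these bisectors on which \emph{equality} holds and which is \emph{locally minimal} among all paths is $\mathcal V$. I would collect these arcs, note that $\mathcal V$ has $O(n^2 h)$ combinatorial complexity (it lives inside the $O(n^2 h)$-cell subdivision and, being a graph of minimum degree $3$ with one face per point of $V$, has $O(|V|) = O(n)$ vertices of its own), insert $\mathcal V$ as edges into $T_0$, and re-triangulate; this keeps the triangle count at $O(n^2 h)$ and costs $O(n^2 h \log(nh))$ time.

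The main obstacle I anticipate is correctness and honest complexity accounting for the \emph{bisector arcs}, which are not straight in the plane: a Voronoi edge of $(\mathcal S(T), V)$ is geodesic in $\mathcal S(T)$ but, traced across the triangles of $T$, is the equidistant locus between two pseudo-sources lying in possibly different planar copies, hence a piece of a conic. To produce a \emph{portalgon} (all polygons flat in the plane, edges straight segments) I must therefore further subdivide so that each Voronoi edge becomes a polygonal path of straight segments; the number of segments needed is controlled by the number of wavefront cells the Voronoi edge traverses, which is $O(n^2 h)$ in total, so this does not blow up the size bound, but it requires care. A secondary subtlety is arguing that the wavefront propagation genuinely terminates and respects the $O(n^2 h)$ cell bound on a portalgon of happiness $h$ with \emph{multiple} sources --- here I would invoke the analysis of~\cite{portalgons} essentially verbatim, observing that seeding with $|V| \le 3n$ sources changes only constants. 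The geometric facts needed about $\mathcal V$ (geodesic edges, one point of $V$ per face, angles $\le \pi$) are already established in Lemma~\ref{lem:allprops voronoi}, so the remaining work is purely the algorithmic bookkeeping of reading $\mathcal V$ off the wavefront subdivision and installing it into a portalgon, together with the running-time tally.
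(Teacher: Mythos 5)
Your overall plan — a multi-source wavefront propagation following L\"offler et al., seeded from $V$, and then reading off the Voronoi diagram from the resulting subdivision — matches the paper's approach, which indeed revisits and extends the single-source algorithm of L\"offler, Ophelders, Silveira, and Staals to multiple sources. But your account contains a genuine misconception about the Voronoi edges: you describe them as ``pieces of a conic'' that must be approximated by polygonal paths, and that is wrong in a way that matters.

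Because $V$ contains \emph{all} singularities of $\mathcal S(T)$, a shortest path from any point to $V$ never bends at an interior vertex: its relative interior avoids singularities, and it passes straight through the remaining flat vertices. Unfolding the traversed triangles therefore gives a straight segment to an unfolded image $\hat v$ of some $v \in V$, with \emph{no additive offset}. The wavefront distance functions in each triangle are plain Euclidean distances $x \mapsto |x - \hat v|$ to finitely many unfolded points, so the bisector between two pseudo-sources is a perpendicular bisector — a straight line, not a hyperbola. This is precisely the simplification the paper exploits (its wave algorithm operates purely on Voronoi diagrams of planar \emph{point sets} per triangle, whose edges are straight), and it is consistent with Lemma~\ref{lem:allprops voronoi}, which you invoke but then contradict: a geodesic in $\mathcal S(T)$ restricted to a single triangle of $T$ \emph{is} a straight segment in that triangle. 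Moreover, your fallback — subdividing conic arcs into polygonal paths — would be unsound on its own terms if the conics were real: it would produce a portalgon whose surface is $\mathcal S(T)$ but whose designated subgraph is only a polygonal \emph{approximation} of the Voronoi diagram, not the Voronoi diagram required by the statement. Fortunately the fallback is unnecessary. A lighter concern: you invoke L\"offler et al.'s cell-complexity analysis ``essentially verbatim'' to get the $O(n^2 h)$ bound with $|V|$ sources; the paper instead proves the per-triangle bound of $O(nh)$ unfolded source images from scratch (its ``constellation'' argument, Lemma~\ref{L:VF bound}) via a counting argument using the happiness bound. You should verify that the single-source analysis transfers to your multi-source setting rather than assert it.
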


We will then derive the Delaunay tessellation of from the Voronoi diagram:

\begin{restatable}{proposition}{computedelaunayfromvoronoi}\label{prop:compute delaunay from voronoi}
Let $T$ be a portalgon of $n$ triangles. Let $V$ be a set of vertices of $T^1$. Let $\mathcal V$ be a subgraph of $T^1$. Assume that $V$ is not empty and contains all singularities of $\mathcal S(T)$, and that $\mathcal V$ is the Voronoi diagram of $(\mathcal S(T),V)$. We can compute the portalgon of the Delaunay tessellation of $(\mathcal S(T), V)$ in $O(n)$ time.
\end{restatable}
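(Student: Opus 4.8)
The plan is to first build the Delaunay tessellation combinatorially, as the dual graph of $\mathcal V$, and then to realize each of its faces geometrically using the maxi-disk of the dual Voronoi vertex. Write $S = \mathcal S(T)$.

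\emph{Combinatorial part.} By Lemma~\ref{lem:allprops voronoi}, $\mathcal V$ is a graph cellularly embedded on $S$, every face of which contains exactly one point of $V$, and by Lemma~\ref{lem:duality delaunay voronoi} the Delaunay tessellation $\mathcal D$ of $(S,V)$ is isomorphic to the dual of $\mathcal V$. So I would first compute, for every face $F$ of $\mathcal V$, the unique site $p_F \in V$ it contains: since $\mathcal V \subseteq T^1$, each face of $\mathcal V$ is a union of faces of $T^1$, and a linear-time flood fill of the faces of $T^1$ across the edges of $T^1$ that do not belong to $\mathcal V$ groups them by $\mathcal V$-face and locates the site in each group (there is exactly one, and it is a vertex of $T^1$, hence of $T$'s polygons, since singularities lie in $V$). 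Dualizing the embedding then yields, in $O(n)$ time, the abstract tessellation $\mathcal D$: its vertices are the points of $V$ (in bijection with faces of $\mathcal V$), its faces are the vertices of $\mathcal V$, its edges correspond to the edges of $\mathcal V$, and the cyclic order of edges around each face of $\mathcal D$ is read off from the rotation system of $\mathcal V$.

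\emph{Geometric part.} Fix a vertex $v$ of $\mathcal V$. It is not a point of $V$ (a point of $V$ is at distance $0$ from $V$, hence not in $\mathcal V$), so $v$ is flat. Let $F_0,\dots,F_{m-1}$ be the faces of $\mathcal V$ around $v$ in cyclic order, with sites $p_0,\dots,p_{m-1}$, and let $e_0,\dots,e_{m-1}$ be the Voronoi edges at $v$, with $e_i$ between $F_{i-1}$ and $F_i$. By Lemma~\ref{L:exists Delaunay} and the definition of the maxi-disk, the face of $\mathcal D$ dual to $v$ is the convex hull of $\varphi^{-1}(V)$ in the maxi-disk $(D,\varphi)$ of $v$: a Euclidean polygon inscribed in the boundary circle of $D$, of radius $\rho_v = d(v,V)$, with vertices $y_0,\dots,y_{m-1}$ (the preimages of the $p_i$) in cyclic order, where the ray of $D$ carrying $e_i$ bisects the angle $\angle\, y_{i-1}\, v^\star\, y_i$. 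I would reconstruct this polygon by unfolding: inside $\overline{F_i}$ the only possible cone point of $S$ is $p_i$, and $v \in \partial F_i$, so the shortest path from $v$ to $p_i$ in $\overline{F_i}$, which realizes $d(v,V)$, is a single geodesic segment of $S$; unfolding the triangles of $T^1$ it crosses places $p_i$ in the plane at a point $y_i$ at distance $\rho_v$ from $v$. Carrying this out for all $i$ produces the polygon of $v$'s dual face: its vertices lie on a common circle of radius $\rho_v$ since they are all at distance $d(v,V)$ from $v$, and the bisector property at $e_i$ holds because $e_i$ is a perpendicular bisector. Finally, for each edge of $\mathcal D$ I would match the two corresponding sides of the two realized polygons; they have equal length because both realize the same Delaunay edge of $S$, a single geodesic segment that is a chord of both maxi-disks. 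The resulting portalgon has $1$-skeleton $\mathcal D$, and its surface is $S$ since regluing the maxi-disk pieces reconstitutes $S$; this is the portalgon of the Delaunay tessellation (faces may additionally be cut along vertex-to-vertex arcs if a triangular portalgon is wanted).

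\emph{Running time and main obstacle.} The graph $\mathcal V$, being a subgraph of $T^1$, has $O(n)$ vertices, edges, and faces, so the combinatorial part costs $O(n)$. The crux, and the step I expect to be the main obstacle, is bounding the total cost of the unfoldings: a priori a segment from a Voronoi vertex $v$ to a site $p_i$ crosses many triangles, and the same triangle may be crossed by such segments for many different Voronoi vertices. The plan is to exploit that the shortest-path segments from a fixed site $p_i$ to the boundary vertices of its cell $F_i$ lie in $\overline{F_i}$, avoid cone points other than $p_i$, and hence form a non-crossing fan; processing all of $F_i$ at once is then a single sweep of $\overline{F_i}$ whose complexity is proportional to the number of $T^1$-triangles in $F_i$, and summing over the faces of $\mathcal V$ each triangle of $T^1$ is counted $O(1)$ times, giving $O(n)$ overall. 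Making this amortized accounting precise, and checking that the sweep correctly recovers each maxi-disk, is where the real work lies.
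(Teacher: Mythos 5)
Your plan matches the paper's proof in outline: dualize $\mathcal V$ for the combinatorics (Lemmas~\ref{lem:allprops voronoi} and~\ref{lem:duality delaunay voronoi}), then recover the shape of each Delaunay polygon by unfolding the triangles of $T^1$ within each Voronoi face $F$ from its site $p_F$ out to the boundary vertices, and stitch these together around each flat Voronoi vertex $v$. But the step you flag as ``the real work'' is a genuine gap, and the fan-based accounting you sketch does not close it: a single shortest-path segment from $p_F$ to a boundary vertex of $F$ can cross the \emph{same} triangle of $T^1$ arbitrarily many times (the happiness of $T$ is unbounded in $n$), so tracing the fan segment-by-segment is not linear even though every triangle lies in a unique face of $\mathcal V$. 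Conversely, if you instead flood-fill the triangles of $F$ to get a manifestly linear sweep, you no longer follow shortest paths, and you must justify that the resulting placement of $p_F$ agrees with the maxi-disk placement.

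The missing idea is exactly the paper's Lemma~\ref{L:Voronoi unfolding}, a \emph{coherence} statement: in the unfolding of \emph{any} walk in the dual of $T^1$ whose visited faces all lie in a single face $F$ of $\mathcal V$, every occurrence of the site $w=p_F$ lands at the same point of $\mathbb R^2$. (The paper proves it by cutting $\overline F$ open to a disk $\widehat F$, triangulating $\widehat F$ by the segments from its boundary corners to $w$ --- the dual of that triangulation is a cycle, so the statement is clear there --- and then passing to a common refinement with any other triangulation of $\widehat F$, in particular the one induced by $T^1$.) With coherence in hand the paper unfolds the triangles of $F$ along an \emph{arbitrary} spanning tree of the dual of $T^1$ restricted to $F$: each triangle is placed once, the total cost is $\sum_F O(|\{\text{triangles in }F\}|) = O(n)$, and Lemma~\ref{lem:delaunay assignment} then verifies that the spanning-tree placement of $p_F$ agrees with the maxi-disk one precisely by invoking coherence against your shortest-path fan. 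So you have the right decomposition, the right target, and the right intuition about where the difficulty sits; what you are missing is the coherence lemma, which both fixes the running-time accounting and frees you from having to follow the fan at all.
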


Proposition~\ref{prop:compute voronoi} and Proposition~\ref{prop:compute delaunay from voronoi} will immediately imply Proposition~\ref{thm:isometry}:

\begin{proof}[Proof of Proposition~\ref{thm:isometry}, assuming Propositions~\ref{prop:compute voronoi}~and~\ref{prop:compute delaunay from voronoi}]
Let $V$ contain the singularities of $\mathcal S(T)$, except if $\mathcal S(T)$ is a flat torus in which case let $V$ contain a single arbitrary vertex of $T^1$. Apply Proposition~\ref{prop:compute voronoi} to replace $T$ by a portalgon $T'$ of $O(n^2h)$ triangles, and to compute a subgraph $\mathcal V$ of $\mathcal T'^1$ that is also the Voronoi diagram of $(\mathcal S(T), V)$, all in $O(n^2h\log(nh))$ time. Apply Proposition~\ref{prop:compute delaunay from voronoi} to derive from $T'$ and $\mathcal V$ the portalgon of the Delaunay tessellation of $(\mathcal S(T), V)$ in $O(n^2h)$ time.
\end{proof}

All there remains to do is to prove Proposition~\ref{prop:compute voronoi} and Proposition~\ref{prop:compute delaunay from voronoi}. We prove Proposition~\ref{prop:compute delaunay from voronoi} in Section~\ref{sec:computedelfromvor}, and we prove Proposition~\ref{prop:compute voronoi} in Section~\ref{sec:computevor}.

\subsection{Computing the Delaunay tessellation from the Voronoi diagram}\label{sec:computedelfromvor}

In this section we prove Proposition~\ref{prop:compute delaunay from voronoi}, which we restate for convenience:

\computedelaunayfromvoronoi*

In the setting of Proposition~\ref{prop:compute delaunay from voronoi}, our goal is to compute the portalgon of the Delaunay tessellation $\mathcal D$ of $(\mathcal S(T),V)$. If we do not care about the shapes of the polygons in the portalgon, then we can easily compute this portalgon from the embedded graph $\mathcal V$, due to the duality between $\mathcal D$ and $\mathcal V$ (Lemma~\ref{lem:duality delaunay voronoi}). All there remains to do is to compute the shape of each polygon in the portalgon. First we need a definition and a lemma. Consider a walk $W$ in the dual of $T^1$. To ease the reading assume that every edge of $T^1$ is incident to two distinct faces of $T^1$; the following definition extends in a straightforward manner to general triangulations. In the plane $\mathbb R^2$ realize the $k \geq 1$ faces visited by $W$ isometrically, and respecting their orientation, by respective triangles $\Delta_1, \dots, \Delta_k$. Make sure that for every $1 \leq i < k$ the triangles $\Delta_i$ and $\Delta_{i+1}$ agree on the placement of the $i$-th edge of $T^1$ crossed by $W$. The resulting sequence $\Delta = (\Delta_1, \dots, \Delta_k)$ is an \emphdef{unfolding} of $W$. In general a vertex $w$ of $T^1$ may occur several times among the vertices of the triangles in $\Delta$, and those occurrences may be at distinct points in the plane. Nevertheless:

\begin{lemma}\label{L:Voronoi unfolding}
If the faces of $T^1$ visited by $W$ are all included in the same face of $\mathcal V$, and if $w \in V$, then all the occurrences of $w$ in $\Delta$ are at the same point of $\mathbb R^2$.
\end{lemma}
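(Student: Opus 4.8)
The statement to prove is Lemma~\ref{L:Voronoi unfolding}: when all the triangles $\Delta_1,\dots,\Delta_k$ of an unfolding $\Delta$ of a dual walk $W$ lie inside a single Voronoi face of $\mathcal V$, and $w\in V$, then all occurrences of $w$ among the vertices of $\Delta$ coincide as points of $\mathbb R^2$. The plan is to argue by induction on $k$, the number of triangles in the unfolding, reducing the general statement to the case of two adjacent triangles sharing an edge. So the crux is: if two triangles $\Delta_i,\Delta_{i+1}$ of $T^1$ sit side by side in the plane, glued along the common edge that $W$ crosses, and both lie inside the same Voronoi face $F$, and $w\in V$ is a vertex of both triangles (i.e.\ $w$ appears as a vertex of the shared edge, or as the apex of one of them coinciding with a vertex of the other), then the two planar copies of $w$ agree.

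\textbf{Key steps.} First I would set up the universal-cover picture: lift the Voronoi face $F$ to a simply connected region $\widetilde F$ (Lemma~\ref{lem:allprops voronoi} gives that $F$ is homeomorphic to an open disk, and the proof there shows $\widetilde F$ contains exactly one lift of a point of $V$, namely the unique site whose Voronoi cell is $F$). The developing map from the unfolding gives an isometric immersion of the chain of triangles into $\mathbb R^2$; since all triangles lie in $F$ and $F$ is flat and simply connected, this immersion factors through $\widetilde F$ and is in fact a local isometry onto its image, compatible with the (single-valued, since $\widetilde F$ is flat and simply connected) developing chart of $\widetilde F$. Second, observe that the unique site $v^\star\in V\cap F$ has a unique lift in $\widetilde F$; every vertex $w\in V$ that is a vertex of some triangle in the chain must actually be that site $v^\star$, because a vertex of $T^1$ lying in the closure of $F$ and belonging to $V$ can only be $v^\star$ — indeed any point of $V$ strictly inside $F$ would split $F$ (it would be a second site, contradicting Lemma~\ref{lem:duality delaunay voronoi}/uniqueness), and the argument in the proof of Lemma~\ref{lem:allprops voronoi} shows each face meets $V$ in exactly one point. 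Third, since each occurrence of $w=v^\star$ in $\Delta$ is the image under the single developing chart of the unique lift $\widetilde{v^\star}\in\widetilde F$, all occurrences are the same point of $\mathbb R^2$.

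\textbf{Alternative / cleaner route.} If one wants to avoid invoking the covering space, I would instead argue directly: consecutive triangles $\Delta_i,\Delta_{i+1}$ are glued along an edge, so the only potential discrepancy is the apex of $\Delta_{i+1}$ versus some vertex of $\Delta_i$, and vice versa; since both triangles lie in $F$, and $F$ contains a single point of $V$, each vertex of each triangle that belongs to $V$ is forced to be that common site $v^\star$. Two copies of $v^\star$ in adjacent triangles are linked by a path in the (developed) union $\Delta_i\cup\Delta_{i+1}$ that projects to a closed loop at $v^\star$ in $S$ contained in $F$; because $F$ is simply connected and flat, this loop is null-homotopic rel $v^\star$ and lies in a flat region, so it develops to a constant, forcing the two copies to coincide. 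Iterating along the chain $W$ (which is connected) propagates the equality to all $k$ triangles.

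\textbf{Main obstacle.} The delicate point is making rigorous the passage from ``the triangles lie inside $F$'' to ``their developed images are governed by a single-valued flat chart.'' A triangle of $T^1$ lying inside $F$ means its \emph{interior} is disjoint from $\mathcal V$; its vertices and edges may lie on the boundary of $F$, so one must be slightly careful to use the \emph{closed} face $\overline F$ and the fact that $\overline F$ still deformation retracts appropriately, or to thicken $F$ slightly. I also need to handle the degenerate cases permitted in the definition (an edge of $T^1$ incident to the same face on both sides, vertices of a triangle that coincide in $S$ but not in $T$'s combinatorial data), but these are routine once the simply-connected-flat-region picture is in place. I expect the bulk of the writeup to be this topological bookkeeping; the geometric content — flat plus simply connected implies the developing map is single-valued, and each Voronoi face owns exactly one site — is already essentially contained in Lemma~\ref{lem:allprops voronoi} and Lemma~\ref{lem:duality delaunay voronoi}.
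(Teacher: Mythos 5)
Your argument rests on the repeated claim that the Voronoi face $F$ (or its cut-open closure $\widehat F$) is \emph{flat}, so that it carries a single-valued developing chart into $\mathbb R^2$. But the interior of $F$ contains the site $w$, and $w$ may well be a singularity of $\mathcal S(T)$: by hypothesis $V$ contains \emph{all} singularities, so in the typical case the site of a Voronoi face is a cone point. Then $\widehat F$ is a disk with a cone point at $w$, the developing map is multivalued near $w$ (it lives on the universal cover of $\widehat F\setminus\{w\}$, not of $\widehat F$), and the step ``this immersion factors through $\widetilde F$ \ldots compatible with the single-valued developing chart'' fails. Your alternative route has the same problem: a loop based at $v^\star$ cannot ``lie in a flat region'' when $v^\star$ itself is a cone point, so you cannot conclude it develops to a closed loop merely from null-homotopy in $F$.

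What actually saves the lemma is the observation that the holonomy of any loop around $w$ is a \emph{rotation fixing the developed image of $w$}, so the multivaluedness of the developing map does not affect the position of $w$. The paper sidesteps the whole issue combinatorially: it builds an explicit fan triangulation $Y$ of $\widehat F$ with $w$ as its central vertex, whose dual is a cycle, so that consecutive triangles in any dual walk share an edge incident to $w$ and therefore trivially agree on the planar position of $w$ — with no flatness at $w$ needed — and then extends to the actual triangulation $Y' = T^1\big|_{\widehat F}$ via a common refinement. Your approach can be repaired (develop $\widehat F\setminus\{w\}$ from its universal cover and note that the holonomy generator is a rotation about $\mathrm{dev}(w)$), and your ``main obstacle'' paragraph correctly flags the cutting/closure technicality, but it misses the cone-point issue, which is the genuine gap here.
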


\begin{proof}
Let $F$ be the face of $\mathcal V$ containing the faces of $T^1$ visited by $W$. By Lemma~\ref{lem:allprops voronoi} the face $F$ is homeomorphic to an open disk, and $w$ is the unique point of $V \cap F$. Let $\widehat F$ be the surface homeomorphic to a closed disk obtained by cutting the closure of $F$ along the boundary of $F$. The angles at the corners of $\widehat F$ are smaller than or equal to $\pi$ by Lemma~\ref{lem:allprops voronoi}. So the shortest paths between those corners and $w$ are, together with the boundary edges of $\widehat F$, the edges of a triangulation $Y$ of $\widehat F$. The dual of $Y$ is a cycle, and $w$ is the central vertex of $Y$. If $\Delta$ is an unfolding of a walk in the dual of $Y$, then all occurrences of $w$ in $\Delta$ are at the same point in the plane. This easily extends to every other triangulation $Y'$ of $\widehat F$, by considering a triangulation of $\widehat F$ that contains both $Y$ and $Y'$.
\end{proof}

In the portalgon of $\mathcal D$, consider a polygon $P$. We describe how to compute the positions of the vertices of $P$. Note that these positions are only defined up to translating and rotating $P$ in the plane. As a preliminary, consider the vertex $v$ of the Voronoi diagram $\mathcal V$ that is dual to $P$. Embed a neighborhood of $v$ in the plane $\mathbb R^2$, isometrically and respecting the orientation, by embedding the faces of $T^1$ incident to $v$. This is possible because $v$ is flat. 

Assign to each vertex $x$ of $P$ a point $\pi_P(x) \in \mathbb R^2$ as follows. The vertex $x$ of $P$ is dual to an incidence $c$ between the vertex $v$ of $\mathcal V$ and some face $F$ of $\mathcal V$. In this incidence $c$, consider one of the faces $W_0$ of $T^1$ that we embedded in the plane, and its embedding $W_0^*$. Consider the unique point $w \in V \cap F$ (Lemma~\ref{lem:allprops voronoi}). Consider a walk $W$ in the dual of $T^1$ that starts with $W_0$, remains in $F$, and visits at least one face of $T^1$ incident to $w$. Unfold the faces visited by $W$ in the plane, starting from $W_0^*$. Let $\pi_P(x)$ be any occurrence of $w$ in the unfolding.

\begin{lemma}\label{lem:delaunay assignment}
Up to translating and rotating $P$, the assignment $\pi_P$ maps each vertex of $P$ to its position.
\end{lemma}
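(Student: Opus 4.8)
The plan is to show that $\pi_P$ is well-defined (independent of the choices made in its construction) and then that it realizes $P$ isometrically, so that it recovers the shape of $P$ up to an orientation-preserving rigid motion of the plane. Throughout, fix the vertex $v$ of $\mathcal V$ dual to $P$ and the chosen isometric orientation-preserving embedding of the star of $v$ in $\mathbb{R}^2$; this embedding is available precisely because $v$ is a flat point of $\mathcal S(T)$, and it is itself only defined up to a rigid motion of $\mathbb{R}^2$, which accounts for the ``up to translating and rotating $P$'' in the statement.

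First I would argue that $\pi_P(x)$ is well-defined for each vertex $x$ of $P$. Recall $x$ is dual to an incidence $c$ between $v$ and a face $F$ of $\mathcal V$, and that $w$ is the unique point of $V\cap F$ (Lemma~\ref{lem:allprops voronoi}). The construction picks a face $W_0$ of $T^1$ belonging to the incidence $c$, together with its embedding $W_0^*$ from the star of $v$, and then a walk $W$ in the dual of $T^1$ that stays inside $F$, starts at $W_0$, and reaches a face of $T^1$ incident to $w$. Unfolding $W$ from $W_0^*$, the position of the $w$-occurrence at the end is the candidate value $\pi_P(x)$. Since every face of $T^1$ visited by $W$ lies in the single face $F$ of $\mathcal V$, and $w\in V$, Lemma~\ref{L:Voronoi unfolding} says that \emph{all} occurrences of $w$ in the unfolding of any such walk coincide in $\mathbb{R}^2$; in particular the endpoint occurrence does not depend on which face of $T^1$ incident to $w$ is reached, nor on which walk $W$ with the prescribed endpoints is used, nor on which starting face $W_0$ of the incidence $c$ is chosen (two such choices can be joined by a walk inside $F$, to which Lemma~\ref{L:Voronoi unfolding} again applies). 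So $\pi_P(x)\in\mathbb{R}^2$ depends only on $x$ and on the fixed embedding of the star of $v$.

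Next I would check that $\pi_P$ is an isometric, orientation-respecting realization of $P$. The polygon $P$ is the face of $\mathcal D$ dual to $v$; by the duality of Lemma~\ref{lem:duality delaunay voronoi}, its vertices $x$ correspond bijectively and cyclically to the incidences $(v,F)$ of $v$ with faces of $\mathcal V$, equivalently (via the maxi-disk $(D,\varphi)$ of $v$, as in the proof of Lemma~\ref{lem:duality delaunay voronoi}) to the points $y_0,\dots,y_{m-1}$ of $\varphi^{-1}(V)$ in cyclic order around the center $v^\star$ of $D$. For each such $x$ dual to the incidence with $F$, the unique $w\in V\cap F$ is exactly $\varphi(y_i)$, and unfolding a walk in $T^1$ from the star of $v$ to a face incident to $w$ amounts to tracking $\varphi^{-1}$ along that walk; since $\varphi$ restricts to an isometric immersion on $D$ and the walk stays in the simply connected face $F$, the unfolded position of $w$ is the point $y_i$ carried by the isometry $\mathbb{R}^2\to\mathbb{R}^2$ that matches our embedding of the star of $v$ with the chart around $v^\star$ given by $\varphi$. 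Hence, up to this single rigid motion, $\pi_P(x)=y_i$. But the convex hull of $\{y_0,\dots,y_{m-1}\}$ is, by Lemma~\ref{L:exists Delaunay}, precisely what $\varphi$ projects onto the closure of the face $P$ of $\mathcal D$, so the $y_i$ are the vertices of a planar polygon isometric to $P$ with the matching cyclic orientation. Therefore $\pi_P$ places the vertices of $P$ at the vertices of an isometric copy of $P$, i.e. it gives the position of each vertex of $P$ up to translating and rotating $P$, as claimed.

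The main obstacle I anticipate is the well-definedness step: one must be careful that the position assigned to $x$ does not secretly depend on the walk $W$, on the choice of starting face $W_0$ within the incidence $c$, or on which face incident to $w$ is reached, especially since $w$ may have many occurrences at many distinct points in a generic unfolding. Lemma~\ref{L:Voronoi unfolding} is tailored exactly for this, but applying it requires observing that any two admissible choices can be interpolated by a walk that never leaves the face $F$ of $\mathcal V$ — this uses that $F$ is an open disk (Lemma~\ref{lem:allprops voronoi}) so that the sub-complex of $T^1$ contained in $F$ is connected and simply connected. Once that topological bookkeeping is in place, the identification with the maxi-disk picture and the isometry claim are routine.
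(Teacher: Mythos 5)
Your proposal is correct and follows essentially the same route as the paper: both identify the vertices of $P$ with the points $\varphi^{-1}(V)$ of the maxi-disk $(D,\varphi)$ of $v$, and both invoke Lemma~\ref{L:Voronoi unfolding} to pin down the unfolded position of $w$. The main difference is one of exposition rather than substance: you first establish well-definedness of $\pi_P$ abstractly via Lemma~\ref{L:Voronoi unfolding} and then identify the resulting point with $y_i\in\varphi^{-1}(V)$, whereas the paper proceeds more directly by taking the geodesic $p$ from $v^\star$ to $y$ inside $D$, observing its image $\varphi\circ p$ is a shortest path to $V$ whose relative interior lies in $F$, perturbing it to a dual walk, and reading off $\pi_P(x)=y$ from Lemma~\ref{L:Voronoi unfolding}. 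The paper's concrete choice of walk sidesteps a slight imprecision in your second paragraph, where the assertion that ``unfolding amounts to tracking $\varphi^{-1}$'' implicitly requires the walk to stay inside $\varphi(D)$ (not just inside $F$); this is harmless given your well-definedness step, but the paper's explicit geodesic walk makes it automatic.
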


\begin{proof}[Proof of Lemma~\ref{lem:delaunay assignment}]
Consider the maxi-disk $(D, \varphi)$ of $v$. Without loss of generality $\varphi$ agrees with the embedding of the neighborhood of $v$ that we fixed as a preliminary. Recall from the definition of the Delaunay tessellation that $P$ is the convex hull of $\varphi^{-1}(V)$. Let $v^\star$ be the middle point of $D$ (the embedding of $v$). The points in $\varphi^{-1}(V)$ correspond to the incidences between $v$ and the faces of $\mathcal V$ around $v$. Consider such an incidence $c$, and its corresponding point $y \in \varphi^{-1}(V)$. Consider also the face $F$ of $\mathcal V$ that contains $c$. The geodesic path $p$ from $v^\star$ to $y$ projects via $\varphi$ to a shortest path $\varphi \circ p$ from $v$ to $V$. And $\varphi \circ p$ immediately enters $F$ after leaving $v$. So the relative interior of $\varphi \circ p$ is included in $F$, and thus ends at the unique point $w \in V \cap F$. By slightly perturbing $p$ without changing its endpoints we may ensure that $\varphi \circ p$ corresponds to a walk in the dual of $T^1$. Then $y = \pi_P(x)$ by Lemma~\ref{L:Voronoi unfolding}. This proves the lemma.
\end{proof}

\begin{proof}[Proof of Proposition~\ref{prop:compute delaunay from voronoi}]
We must compute the portalgon of the Delaunay tessellation $\mathcal D$ of $(S, V)$. We immediately compute the combinatorics of the portalgon from $\mathcal V$, because $\mathcal V$ is isomorphic to the dual of $\mathcal D$ by Lemma~\ref{lem:duality delaunay voronoi}.

Now, by Lemma~\ref{lem:delaunay assignment}, all there remains to do is to compute, for each polygon $P$ of the portalgon, the assignment $\pi_P$ of positions for the vertices of $P$. Achieving the claimed linear running time when doing so requires a slight technicality. Consider a face $F$ of $\mathcal V$, and the point $w \in V \cap F$. Recall that for some faces $W_0$ of $T^1$ included in $F$ we need to construct a walk $W$ from $W_0$ to $w$ in the dual of $T^1$, unfold $W$, and retain the relative positions of some occurrences of $W_0$ and $w$ in the unfolding. Doing so independently for every face $W_0$ may take too long as we would visit faces of $T^1$ several times. Instead we consider a single spanning tree $Y$ in the dual of $T^1$ within $F$, we unfold the faces of $T^1$ that are included in $F$ along $Y$, and we retrieve all the required information from this unfolding. Note that the choice of $Y$ does not matter, and that the unfolding may overlap. Doing so for all faces $F$ of $\mathcal V$ takes linear time.
\end{proof}

\subsection{Computing the Voronoi diagram}\label{sec:computevor}

In this section we prove Proposition~\ref{prop:compute voronoi}, which we restate for convenience:

\propcomputevoronoi*

To prove Proposition~\ref{prop:compute voronoi} we revisit the single source shortest path algorithm described by Löffler, Ophelders, Silveira, and Staals~\cite{portalgons}. In particular we extend their algorithm to multiple sources (we let the sources be the points in $V$). The authors consider a triangulated surface, and compute the shortest paths emanating from a point $x_0$ on the surface by decomposing the surface according to how those paths visit the faces of the triangulation. They describe a discrete process that simulates the propagation of some waves on the surface. Their waves all start from the point $x_0$. In the setting of Proposition~\ref{prop:compute voronoi}, we adapt this strategy to simulate waves on $\mathcal S(T)$ that start from all the points in $V$, so that the waves meet along the Voronoi diagram $\mathcal V$ of $(\mathcal S(T), V)$. That simplifies the algorithm because waves now meet along a graph with geodesic edges (Lemma~\ref{lem:allprops voronoi}) and do not go through singularities. We now provide a formalization of this \emph{wave algorithm}. The continuous propagation of waves is discretized by a propagation of \emph{events}. A crucial feature of our formalization of the algorithm is that it operates on triangles, point sets, and Voronoi diagrams \emph{in the plane} $\mathbb R^2$, never in the surface $\mathcal S(T)$. Only the proofs of correctness will argue on the surface $\mathcal S(T)$. We will insist on that.

Recall that the triangles of the input portalgon $T$ lie in the Euclidean plane $\mathbb R^2$, and they are disjoint. The reader can think of them as being far away from each other if this helps the reading. The data structure maintains, for every triangle $\Delta$ of $T$, a set $X_\Delta$ of points in $\mathbb R^2$. We insist, again, that all these objects lie \emph{in the plane} $\mathbb R^2$, not in the surface $\mathcal S(T)$.

We need a definition. Given $X \subset \mathbb R^2$ finite and $x \in X$ we denote by $\text{Vor}(x, X)$ the closed cell of $x$ in the Voronoi diagram of $X$ in $\mathbb R^2$. Formally, $\text{Vor}(x, X)$ contains the points $y \in \mathbb R^2$ such that the distance between $x$ and $y$ is smaller than or equal to the distance between $x'$ and $y$ for every $x' \in X$.

Central to the wave algorithm is the notion of candidate event that we now define. Consider a triangle $\Delta$ of $T$, a side $s$ of $\Delta$, a point $x \in \mathbb R^2$, and some $t > 0$. The surface $\mathcal S(T)$ being closed, there are a triangle $\Delta'$ of $T$ and a side $s'$ of $\Delta'$ such that $s$ is matched to $s'$. Consider the orientation-preserving isometry of $\mathbb R^2$ that maps $s$ to $s'$ and puts $\Delta$ side-by-side with $\Delta'$, apply this isometry to $x$, and consider the resulting point $x' \in \mathbb R^2$. The tuple $(t, \Delta, s, x)$ is a \emphdef{candidate event} if it satisfies each of the following. First, $x \notin X_\Delta$ and $x' \in X_{\Delta'}$. Second, the intersection between $\text{Vor}(x', X_{\Delta'})$ and $s'$ is not empty, and its distance to $x'$ is equal to $t$. In other words, $t$ is equal to the smallest distance between $x'$ and a point of $\text{Vor}(x', X_{\Delta'}) \cap s'$. We say that $t$ is the \emphdef{date} of the candidate event $(t, \Delta, s, x)$. 

The data structure additionally maintains a list of the candidate events sorted by date.

\algo{Wave algorithm}{Initialize for each triangle $\Delta$ of $T$ the set $X_\Delta$ with the vertices of $\Delta$ that correspond to points in $V$, if any. Then, as long as possible, consider any candidate event $(t, \Delta, s, x)$ of smallest date $t$, add $x$ to $X_\Delta$, and repeat. In the end return the sets $(X_\Delta)_\Delta$.}

Again, we insist that the wave algorithm operates \emph{in the plane} $\mathbb R^2$. In particular the sets $X_\Delta$ are subsets of $\mathbb R^2$. Nevertheless, their Voronoi diagrams are related to the Voronoi diagram of $V$ on the surface $\mathcal S(T)$, and more strongly:

\begin{proposition}\label{prop:wave algorithm terminates}
The wave algorithm terminates after $O(n^2h)$ iterations. In the end, for every triangle $\Delta$ of $T$, the intersection with $\Delta$ of the Voronoi diagram of $X_\Delta$ in $\mathbb R^2$ is the pre-image in $\Delta$ of the Voronoi diagram of $V$ in $\mathcal S(T)$.
\end{proposition}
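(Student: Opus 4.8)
The plan is to run a single ``wavefront'' induction on distance, in the style of Mitchell--Mount--Papadimitriou and of L\"{o}ffler, Ophelders, Silveira, and Staals, while keeping in mind that the wave algorithm manipulates only points and triangles in the plane $\mathbb R^2$, and that the surface $\mathcal S(T)$ is used solely in the analysis. The bridge between the two pictures is the notion of \emph{virtual source}: given a triangle $\Delta$ of $T$, a geodesic path $\gamma$ in $\mathcal S(T)$ from a point $p\in V$ to a point of the closed triangle $\overline\Delta$ crosses a finite sequence of triangles ending with $\Delta$; unfolding this sequence isometrically into $\mathbb R^2$ in the frame of $\Delta$ places $p$ at a well-defined point $x\in\mathbb R^2$, and for every $y\in\overline\Delta$ the Euclidean distance $|xy|$ equals the length of the straight continuation of $\gamma$ to $y$. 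The goal is to show that, at termination, $X_\Delta$ is precisely the set of virtual sources of $\Delta$ that realise the distance $d(\cdot,V)$ at some point of $\overline\Delta$, together with the (at most three) vertices of $\Delta$ lying in $V$.

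First I would establish two bookkeeping invariants by induction on the iterations. (i) \emph{Genuineness}: every $x\in X_\Delta$ is the unfolded image of a geodesic path from $V$ to $\overline\Delta$, and hence $|xy|\ge d(y,V)$ for all $y\in\overline\Delta$; the inductive step splits the segment $xy$ at the point $q$ where it crosses into $\Delta$ from the neighbour $\Delta'$ across $s$, writing $|xy|=|x'q'|+|qy|$ with $x'\in X_{\Delta'}$ the glued copy, and uses the inductive bound together with the fact that intrinsic distances inside $\overline\Delta$ do not exceed Euclidean ones. (ii) The \emph{date} of the event inserting $x$ into $X_\Delta$ equals the length of the geodesic it represents. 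Because events are processed in non-decreasing date order, this yields the key fact that \textbf{no insertion is premature}: when an event $(t,\Delta,s,x)$ coming from $x'\in X_{\Delta'}$ fires, the point $q'\in\mathrm{Vor}(x',X_{\Delta'})\cap s'$ at distance $t$ from $x'$ already satisfies $d(q',V)=t=|x'q'|$, for otherwise (by the inductive wavefront statement applied to smaller dates) $X_{\Delta'}$ would already contain a virtual source strictly closer to $q'$, contradicting $q'\in\mathrm{Vor}(x',X_{\Delta'})$. Thus every firing comes from a genuine shortest virtual source, and its image $x$ is a genuine shortest virtual source of $\Delta$ for the point $q$.

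This ``no premature insertion'' property drives both halves of the proposition. For \emph{termination and the iteration count}, I map each point inserted into $X_\Delta$ to the combinatorial type (the sequence of triangle sides crossed) of a shortest path it witnesses; this map is injective, since the unfolding is determined by the side sequence. It remains to bound, for each $\Delta$, the number of distinct combinatorial types of shortest paths from $V$ ending in $\Delta$. Since $V$ contains every singularity of $\mathcal S(T)$, the shortest path from a point $y$ to $V$ is unique off the Voronoi diagram $\mathcal V$ and varies continuously, so $\mathcal S(T)$ decomposes into $O(n)$ maximal regions of constant shortest-path-tree structure (by Lemma~\ref{lem:allprops voronoi}, $|V|=O(n)$ and $\mathcal V$ has $O(n)$ vertices); inside each such region the combinatorial type changes only when the sweeping shortest paths cross a side of $T^1$, and since each shortest path visits each of the $n$ triangles at most $h$ times, a charging/overlay argument bounds by $O(nh)$ the number of types contributed to any fixed triangle. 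Summing gives $|X_\Delta^{\mathrm{final}}|=O(nh)$, hence $O(n^2h)$ iterations in total. I expect \emph{this} counting step — converting the happiness bound into an $O(nh)$ bound on useful virtual sources per triangle — to be the main obstacle, as it is the only place where one must reason about how the global shortest-path map interacts with the fixed triangulation rather than with a single path.

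For the \emph{correctness} half, I would upgrade ``no premature insertion'' to \emph{completeness}: by induction on the length of its combinatorial type, every genuine shortest virtual source $x^\ast$ of every $\Delta$ lies in $X_\Delta^{\mathrm{final}}$ — the base case is a vertex of $\Delta$ in $V$, and in the inductive step the predecessor virtual source lies in $X_{\Delta'}^{\mathrm{final}}$, its planar Voronoi cell meets the shared side at the crossing point because all members of $X_{\Delta'}^{\mathrm{final}}$ are genuine, so the corresponding event is still a candidate at termination and must therefore already have fired. Combining completeness with genuineness: for $y$ in the relative interior of $\Delta$, the points of $X_\Delta^{\mathrm{final}}$ nearest to $y$ are exactly the unfolded images of the shortest paths from $y$ to $V$, and distinct such paths give distinct images; hence $y$ lies on the planar Voronoi diagram of $X_\Delta^{\mathrm{final}}$ iff at least two shortest paths realise $d(y,V)$ iff $y$ lies on $\mathcal V$, with $\partial\Delta$ handled by continuity. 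This is exactly the asserted equality of the planar Voronoi diagram of $X_\Delta^{\mathrm{final}}$ restricted to $\Delta$ with the pre-image of $\mathcal V$, completing the proof.
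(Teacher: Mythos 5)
Your proposal follows essentially the same plan as the paper: unfold geodesics from $V$ into the frame of each triangle (your ``virtual sources'' are exactly the paper's \emph{constellations} $V_\Delta$, defined via immersed empty disks), maintain a wavefront-style invariant relating dates to geodesic lengths, and then derive both the iteration bound and the Voronoi-equality from that invariant. Your ``no premature insertion'' lemma is the paper's Lemma~\ref{lem:min date event} in slightly different clothing, and your completeness induction plays the role of the paper's Lemma~\ref{lem:exists candidate event}.

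There are, however, two concrete gaps. First, the $O(nh)$ bound on the number of useful virtual sources per triangle. You correctly flag this as the obstacle, but the sketch you give does not quite work as stated: the $O(n)$ Voronoi cells of $(\mathcal S(T),V)$ do not directly control the number of distinct combinatorial types that end in a fixed triangle, since within a single cell a sweeping shortest path can still cross arbitrarily many triangulation sides, and ``charging/overlay'' does not specify where the savings come from. The paper instead argues on the \emph{prefix tree} of region-sequences: a \emph{critical} sequence (the longest common prefix of two distinct maximal sequences) must end at a face, and each face supports at most one critical sequence because otherwise two shortest paths would cross and could be shortened. This gives $O(n)$ maximal sequences, each with $O(h)$ occurrences of $\Delta$, hence $O(nh)$. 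That branching-count argument, not a Voronoi-cell count, is the missing ingredient. Second, your completeness step silently assumes that the last step of a combinatorial type is a transversal side crossing into $\Delta$. When the geodesic from $V$ enters $\Delta$ through a vertex, there is no well-defined ``side $s$'' and no obvious predecessor triangle from which the event originates, yet candidate events are keyed on sides. The paper handles this as a separate case (Lemma~\ref{lem:exists candidate event}, second case) by unfolding the entire fan of faces around the vertex and exhibiting a consecutive pair $\Delta_i,\Delta_{i+1}$ with $x_i\in X_{\Delta_i}$ and $x_{i+1}\notin X_{\Delta_{i+1}}$, which supplies the needed candidate event; without that case analysis the induction does not close.
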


It is easy to compute the list of the candidate events from the sets $(X_\Delta)_\Delta$ in polynomial time. More strongly:

\begin{proposition}\label{prop:maintain candidate events}
We can maintain the list of candidate events sorted by date through $k$ insertions of points in the sets $(X_\Delta)_\Delta$ in $O(k \log k)$ total time.
\end{proposition}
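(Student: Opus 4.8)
The plan is to keep, for every side $\sigma$ of a triangle of $T$, a balanced binary search tree $B_\sigma$ that records how a Voronoi diagram meets $\sigma$. Precisely, if $\sigma$ is a side of a triangle $\Delta'$ matched to a side of the triangle $\Delta$, then the nodes of $B_\sigma$ are the maximal sub-segments of $\sigma$ contained in a single cell of $\text{Vor}(X_{\Delta'})$, ordered along $\sigma$, each tagged with the point $z'\in X_{\Delta'}$ owning that cell and with the distance from $z'$ to that sub-segment. The point is that these sub-segments are in one-to-one correspondence with the candidate events through $\sigma$ into $\Delta$: the sub-segment owned by $z'$ corresponds to the event $(t,\Delta,s,x)$ where $s$ is the side of $\Delta$ matched to $\sigma$, $x$ is the image of $z'$ under the gluing isometry, and $t$ is the tag, and this event is valid precisely when $x\notin X_\Delta$. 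Alongside the $O(n)$ trees $B_\sigma$ I would keep one balanced search structure on all currently valid candidate events keyed by date; this is exactly the sorted list to be maintained, and it supports extracting the minimum in $O(\log k)$ time. With this in place the wave algorithm becomes: extract the minimum event, and insert the associated point $x$ into $X_\Delta$, so that processing an event is itself one of the $k$ point-insertions.

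What remains is to support one point-insertion, say of $w'$ into $X_{\Delta'}$, in $O(\log k)$ amortized time. Fix a side $\sigma$ of $\Delta'$. The set of points of $\sigma$ that become strictly closer to $w'$ than to every point of $X_{\Delta'}$ is $\sigma\cap\bigcap_{z'\in X_{\Delta'}}\{\,p : \operatorname{dist}(p,w')<\operatorname{dist}(p,z')\,\}$, an intersection of a segment with an open convex set, hence a single (possibly empty) sub-interval $I\subseteq\sigma$, and only the sub-segments meeting $I$ are affected. To locate $I$ I would use that the function $p\mapsto\operatorname{dist}(p,w')^2-\operatorname{dist}(p,X_{\Delta'})^2=\max_{z'\in X_{\Delta'}}\bigl(\operatorname{dist}(p,w')^2-\operatorname{dist}(p,z')^2\bigr)$ is convex and piecewise linear along $\sigma$, with breakpoints at the current sub-segment boundaries and each linear piece computable in $O(1)$ from the owner of that sub-segment since the leading quadratic terms cancel; moreover its values at the breakpoints are computable in $O(1)$ because a breakpoint is equidistant from the two owners flanking it. A binary search down $B_\sigma$ guided by these $O(1)$-time evaluations and the convexity then either certifies $I=\varnothing$ or returns one sub-segment meeting $I$; from it I walk outward in $B_\sigma$, deleting the sub-segments fully contained in $I$, trimming the at most two sub-segments straddling an endpoint of $I$ (recomputing their tags in $O(1)$), and inserting the single new sub-segment owned by $w'$. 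Every deletion, insertion, or tag change in $B_\sigma$ is mirrored in $O(\log k)$ time in the global structure of valid events (where a sub-segment is included only while its image point lies outside the corresponding $X_\Delta$; this status only ever changes from valid to invalid, and exactly when the event is processed).

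For the running time: one point-insertion creates exactly one new sub-segment per side of $\Delta'$ (the one owned by $w'$, when $I\neq\varnothing$) and modifies the tags of at most two more, so over $k$ insertions only $O(k)$ sub-segments are ever created—absorbing the $O(n)$ present after the trivial initializations of the sets $X_\Delta$ into the count—and hence at most $O(k)$ are ever deleted; every tree then has size $O(k)$, so each tree operation, each successor/predecessor step, and each global-structure operation costs $O(\log k)$. The binary searches are $O(1)$ per side per insertion, hence $O(k\log k)$ in total; the outward walks are charged, sub-segment by sub-segment, to the creations of the sub-segments they delete or trim, again $O(k\log k)$; and the extractions performed by the wave algorithm are $O(k)$ further $O(\log k)$-time operations. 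Summing gives the claimed $O(k\log k)$.

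The step I expect to be the main obstacle is the precise handling of the interval $I$: showing that it is genuinely connected, that it never lies strictly inside a single current sub-segment (so that only trimmings, not splittings, occur), and that it can be pinned down by a single $O(\log k)$-time binary search relying only on the convexity of $p\mapsto\operatorname{dist}(p,w')^2-\operatorname{dist}(p,X_{\Delta'})^2$ and $O(1)$-time boundary evaluations, together with all the degenerate cases (where $I$ reaches an endpoint of $\sigma$, or where $w'$ lies on a bisector). A secondary point to verify carefully is that re-examining the validity and date of only the $O(1)$ sub-segments actually modified by an insertion suffices to keep the global sorted list correct, with no need to eagerly re-date the many Voronoi cells that may shrink when $w'$ is inserted.
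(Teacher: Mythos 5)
Your proposal is correct and follows essentially the same approach as the paper: per-side structures tracking the Voronoi cells clipped to the matched side, a dichotomy to locate the affected cells, and amortization against cell creations (the paper formalizes the bookkeeping as the ``Alice game'' of Lemma~\ref{lem:cardinality voronoi update set} and Lemma~\ref{L:amortized insertion}, while you drive the binary search via convexity of $p\mapsto \operatorname{dist}(p,w')^2-\operatorname{dist}(p,X_{\Delta'})^2$, which is a clean equivalent). Your two flagged worries do resolve as you hoped: $I$ is an interval since it is a convex set intersected with a segment, and no splitting can occur because each post-insertion cell $\operatorname{Vor}(z',X_{\Delta'}\cup\{w'\})$ is convex, so its trace on $\sigma$ stays connected. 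One imprecision worth fixing: you say a sub-segment's event goes from valid to invalid ``exactly when the event is processed,'' but when $w'$ is added to $X_{\Delta'}$ after processing $(t,\Delta',s_0',w')$, the sibling events $(\cdot,\Delta',s',w')$ for the other sides $s'$ of $\Delta'$ also become stale and must be located (via an index keyed by $(\Delta,x)$) and removed; the paper does this explicitly as its first step, and it costs only $O(\log k)$ extra per insertion, so the bound is unaffected.
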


Proposition~\ref{prop:compute voronoi} is immediate from Proposition~\ref{prop:wave algorithm terminates} and Proposition~\ref{prop:maintain candidate events}:

\begin{proof}[Proof of Proposition~\ref{prop:compute voronoi}]
Proposition~\ref{prop:wave algorithm terminates} and Proposition~\ref{prop:maintain candidate events} imply 
that the wave algorithm can be performed in $O(n^2 \cdot h \cdot \log(nh))$ time. Consider the returned sets $(X_\Delta)_\Delta$. The sum of the cardinalities of the sets $X_\Delta$, summed over all the triangles $\Delta$ of $T$, is $O(n^2h)$ by Proposition~\ref{prop:wave algorithm terminates}. Also, for every triangle $\Delta$, the intersection with $\Delta$ of the Voronoi diagram of $X_\Delta$ in $\mathbb R^2$ is the pre-image in $\Delta$ of the Voronoi diagram of $V$ in $\mathcal S(T)$, by Proposition~\ref{prop:wave algorithm terminates}. Cutting each triangle $\Delta$ along the Voronoi diagram of $X_\Delta$, and cutting the resulting polygons into triangles along vertex-to-vertex arcs, provides the desired triangular portalgon $T'$, along with $\mathcal V$.
\end{proof}

The rest of this section is dedicated to the proofs of Proposition~\ref{prop:wave algorithm terminates} and Proposition~\ref{prop:maintain candidate events}. 

\subsubsection{Proof of Proposition~\ref{prop:wave algorithm terminates}}

In this section we prove Proposition~\ref{prop:wave algorithm terminates}. Recall that the triangles of the portalgon $T$ are realized dis-jointly in the Euclidean plane $\mathbb R^2$, and that we think of these triangles as being far away from each other, this will help the reading. It is now convenient to introduce a notation for the projection of this disjoint union of triangles onto the surface $\mathcal S(T)$, so we let $\rho$ be this projection. 

Given a triangle $\Delta$ of $T$, we consider the immersed empty disks $(D, \varphi)$ such that the center of $D$ belongs to $\Delta$, and such that $\varphi$ agrees with $\rho$ on $\overline D \cap \Delta$. We say that $(D, \varphi)$ is an immersed empty disk \emphdef{attached to} $\Delta$. We further consider the union of the sets $\varphi^{-1}(V)$ over the immersed empty disks $(D, \varphi)$ attached to $\Delta$. We call this union the \emphdef{constellation} of $\Delta$, and we denote it by $V_\Delta$. We will show that the sets $X_\Delta$ computed by the wave algorithm are exactly the constellations $V_\Delta$. Before that, we have two preliminary lemmas on constellations. 

First, the constellations, while lying \emph{in the plane} $\mathbb R^2$, are related to the Voronoi diagram of $V$ \emph{in the surface} $\mathcal S(T)$:

\begin{lemma}\label{lem:vp is what we want}
For every triangle $\Delta$ of $T$ the intersection with $\Delta$ of the Voronoi diagram of the constellation $V_\Delta$ is the pre-image in $\Delta$ of the Voronoi diagram of $V$ in $\mathcal S(T)$.
\end{lemma}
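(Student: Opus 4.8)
The plan is to relate the Voronoi diagram of the constellation $V_\Delta$ in the plane to the Voronoi diagram of $V$ on the surface by way of the immersed empty disks attached to $\Delta$. First I would fix a triangle $\Delta$ and a point $z$ in its interior (points on the boundary or vertices are handled by continuity, or by the same argument using the appropriate one-sided disk). Let $x = \rho(z) \in \mathcal S(T)$. I would consider the maxi-disk $(D, \varphi)$ of $x$, realised so that its center $z^\star$ coincides with $z$ and so that $\varphi$ agrees with $\rho$ on $\overline D \cap \Delta$; this is exactly an immersed empty disk attached to $\Delta$ whose $\varphi^{-1}(V)$ is as large as possible, so $\varphi^{-1}(V) \subseteq V_\Delta$. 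The key geometric fact, already used in Appendix~\ref{app:delaunay and voronoi defs}, is that the shortest paths from $x$ to $V$ in $\mathcal S(T)$ are in bijection with the geodesic segments in $\mathbb R^2$ from $z^\star$ to the points of $\varphi^{-1}(V)$, via $\varphi$, and they have the same lengths. Hence the distance from $x$ to $V$ in $\mathcal S(T)$ equals the distance from $z$ to $\varphi^{-1}(V)$ in $\mathbb R^2$, and $x$ lies on the Voronoi diagram of $V$ in $\mathcal S(T)$ iff this distance is realised by at least two segments, i.e.\ iff $z$ is equidistant from two points of $\varphi^{-1}(V)$ at minimal distance.

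Next I would check the two inclusions. For ``$\subseteq$'': suppose $z \in \Delta$ lies on the Voronoi diagram of $V_\Delta$ in $\mathbb R^2$, so there are two points $p, q \in V_\Delta$ with $|z - p| = |z - q| = \operatorname{dist}(z, V_\Delta)$. I must show both $p$ and $q$ are realised as preimages of $V$ by the maxi-disk of $x$ (not merely by \emph{some} attached disk each), so that the two corresponding geodesic segments in $\mathbb{R}^2$ are honest shortest paths from $x$ to $V$ of equal length. This is where the definition of $V_\Delta$ as a \emph{union} over all attached immersed empty disks, combined with the fact that the open disk of radius $\operatorname{dist}(z, V_\Delta)$ around $z$ meets $V_\Delta$ only in $\{p,q,\dots\}$ on its boundary, forces that same radius-$\operatorname{dist}(z,V_\Delta)$ disk around $z^\star$ to be an immersed \emph{empty} disk: any point of $\varphi^{-1}(V)$ strictly inside it would be a point of $V_\Delta$ strictly closer to $z$ than $p$. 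So $p$ and $q$ both belong to $\varphi^{-1}(V)$ for the maxi-disk, hence $x$ is on the surface Voronoi diagram. For ``$\supseteq$'': if $\rho(z)$ is on the surface Voronoi diagram, the two competing shortest paths pull back through the maxi-disk to two segments from $z^\star = z$ to two points $p, q \in \varphi^{-1}(V) \subseteq V_\Delta$ with $|z-p| = |z-q|$, and this common length is $\operatorname{dist}(x,V) = \operatorname{dist}(z, \varphi^{-1}(V))$; since every point of $V_\Delta$ is the $\varphi$-preimage of a point of $V$ via \emph{some} attached empty disk, no point of $V_\Delta$ can be strictly closer to $z$ than $\operatorname{dist}(x,V)$ (such a point would give a path shorter than the distance), so $|z-p|=|z-q| = \operatorname{dist}(z, V_\Delta)$ and $z$ lies on the planar Voronoi diagram of $V_\Delta$.

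The main obstacle I anticipate is the bookkeeping in the ``$\subseteq$'' direction: one has to argue carefully that the disk around $z^\star$ of radius equal to the minimal distance is genuinely empty of $\varphi^{-1}(V)$ for the maxi-disk, and that the witnessing points $p$ and $q$ — which a priori come from possibly \emph{different} attached disks in the definition of the union $V_\Delta$ — are simultaneously preimages under the maxi-disk's $\varphi$. The crux is that an immersed empty disk can always be enlarged (staying centered at $z^\star$, staying an isometric immersion away from singularities, since the interior avoids $V$ which contains all singularities) until its boundary first hits $V$; its preimage of $V$ then lands exactly on the boundary sphere, and this common enlargement simultaneously captures $p$ and $q$. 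Once this is set up, the statement follows, and the boundary/vertex cases of $\Delta$ follow by taking limits or by the identical argument with half-disks. I would also note that this lemma, together with Lemma~\ref{L:Voronoi unfolding}-type unfolding arguments, is exactly what makes the wave algorithm produce $X_\Delta = V_\Delta$, which is proved in the sequel.
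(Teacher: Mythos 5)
Your approach matches the paper's: for each point $z \in \Delta$, consider the maximal immersed empty disk attached to $\Delta$ and centered at $z$, and use the bijection between shortest paths from $\rho(z)$ to $V$ and segments from $z$ to $\varphi^{-1}(V)$ to show that $z$ lies on the planar Voronoi diagram of $V_\Delta$ exactly when $\rho(z)$ lies on the surface Voronoi diagram of $V$. The coherence step you re-derive inline (that any point of $V_\Delta$ in $\overline{D}$ is already in $\varphi^{-1}(V)$ for the maxi-disk, so no stray member of $V_\Delta$ can be nearer to $z$) is exactly Lemma~\ref{lem: coherence disk} in the paper, and your worry about boundary or vertex cases of $\Delta$ is unnecessary since $\mathcal S(T)$ is closed and the maxi-disk is always a full open disk regardless of where $z$ sits in $\Delta$.
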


The proof of Lemma~\ref{lem:vp is what we want} relies on the following, which will be used again:

\begin{lemma}\label{lem: coherence disk}
Let $(D, \varphi)$ be an immersed empty disk attached to a triangle $\Delta$ of $T$. Then $V_\Delta \cap \overline D = \varphi^{-1}(V)$. In particular $V_\Delta \cap D = \emptyset$.
\end{lemma}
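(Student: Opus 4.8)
## Proof plan for Lemma~\ref{lem: coherence disk}

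The plan is to prove the two inclusions $\varphi^{-1}(V) \subseteq V_\Delta \cap \overline D$ and $V_\Delta \cap \overline D \subseteq \varphi^{-1}(V)$ separately; the ``in particular'' clause is then immediate since $\varphi(D) \cap V = \emptyset$ by the definition of an immersed empty disk, so $\varphi^{-1}(V) \cap D = \emptyset$, hence $V_\Delta \cap D = \emptyset$. The first inclusion is essentially by definition: the constellation $V_\Delta$ is the union of the sets $\psi^{-1}(V)$ over \emph{all} immersed empty disks $(E, \psi)$ attached to $\Delta$, and $(D,\varphi)$ is one such disk, so $\varphi^{-1}(V) \subseteq V_\Delta$; and of course $\varphi^{-1}(V) \subseteq \overline D$ trivially. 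The content is therefore entirely in the reverse inclusion.

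For the reverse inclusion, I would take a point $y \in V_\Delta \cap \overline D$ and produce, by definition of $V_\Delta$, some immersed empty disk $(E, \psi)$ attached to $\Delta$ with $y \in \psi^{-1}(V)$, i.e. $\psi(y) \in V$. The goal is to show $\varphi(y) \in V$ as well; since $\varphi(D) \cap V = \emptyset$ this would also force $y \in \overline D \setminus D = \partial D$, but more to the point it shows $y \in \varphi^{-1}(V)$. The key observation is that both $\varphi$ and $\psi$ agree with the projection $\rho$ on a neighborhood of $\Delta$: precisely, $\varphi$ agrees with $\rho$ on $\overline D \cap \Delta$ and $\psi$ agrees with $\rho$ on $\overline E \cap \Delta$. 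Both disks have their centers in $\Delta$, so both centers project to points of $\rho(\Delta)$ via the respective isometric immersions. Now the restrictions of $\varphi$ and $\psi$ to the interiors $D$ and $E$ are isometric immersions of flat disks into $\mathcal S(T)$ that agree with $\rho$ on the portion of $\Delta$ they contain; since an isometric immersion of a connected flat domain is determined by its $1$-jet (value and derivative) at a single point, and $\varphi$ and $\psi$ share such a $1$-jet at the center of $\Delta$ (or at any interior point of the overlap with $\Delta$), they must agree on the connected component of $D \cap E$ containing that point — and then, since $\overline D$ and $\overline E$ are convex Euclidean disks whose interiors both meet $\Delta$, on all of $\overline D \cap \overline E$. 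In particular $\varphi(y) = \psi(y) \in V$, as desired.

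The main obstacle, and the step that needs the most care, is making the ``isometric immersions agreeing at one point agree everywhere on the overlap'' argument precise for \emph{closed} metric disks rather than open ones, and making sure the overlap $\overline D \cap \overline E$ is connected (it is, being the intersection of two convex planar sets) and nonempty with nonempty interior (it contains a neighborhood of the center of $\Delta$ intersected with $\Delta$, which is nonempty since both centers lie in $\Delta$). One subtlety: the definition only says $\varphi$ agrees with $\rho$ on $\overline D \cap \Delta$, which may be a lower-dimensional set if the center of $D$ lies on an edge or vertex of $\Delta$; but for the center to lie in $\Delta$ and for $\Delta \cap D$ to be two-dimensional we can pick an interior point of $\Delta$ near the center where both maps agree with $\rho$ and are local isometries, giving the shared $1$-jet. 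From there a standard analytic-continuation / monodromy argument on the flat (hence locally rigid) structure propagates the equality $\varphi = \psi$ across $\overline D \cap \overline E$, which contains $y$. I would also remark that the ``in particular'' then follows since $y \in V_\Delta \cap D$ would give $\varphi(y) \in V \cap \varphi(D)$, contradicting that $(D,\varphi)$ is an immersed empty disk.
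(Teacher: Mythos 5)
Your proof is correct and follows essentially the same route as the paper's: both inclusions, with the content in the backward one, deduced from the fact that any two immersed empty disks attached to $\Delta$ have maps that agree on the intersection of their closures. The paper simply asserts that fact with no justification; your analytic-continuation argument for flat isometric immersions anchored on $\rho|_\Delta$ (including the observation that the overlap with $\Delta$ is two-dimensional, or, in the externally tangent degenerate case, that the single contact point already lies on the segment $[c_D,c_E]\subset\Delta$) supplies the missing details.
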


\begin{proof}
We have $V_\Delta \cap \overline D \supseteq \varphi^{-1}(V)$ by definition of the constellation $V_\Delta$. The other inclusion is immediate from the fact that if two immersed empty disks $(D, \varphi)$ and $(D', \varphi')$ are attached to $\Delta$ then $\varphi$ and $\varphi'$ agree on $\overline D \cap \overline D'$. Finally $\varphi^{-1}(V) \cap D = \emptyset$ by definition of an immersed empty disk (recall that $D$ is open).
\end{proof}

\begin{proof}[Proof of Lemma~\ref{lem:vp is what we want}]
Consider a point $x \in \Delta$. There is a unique immersed empty disk $(D, \varphi)$ attached to $\Delta$ such that the center or $D$ is $x$, and such that the radius of $D$ is maximum. Then $\varphi^{-1}(V) \neq \emptyset$ because the radius of $D$ is maximum. And $\varphi^{-1}(V) = V_\Delta \cap \overline D$ by Lemma~\ref{lem: coherence disk}. The geodesic path(s) between $x$ and the point(s) in $\varphi^{-1}(V)$ corresponds via $\varphi$ to the shortest path(s) between $\rho(x)$ and $V$. So $\rho(x)$ belongs to the Voronoi diagram of $V$ in $\mathcal S(T)$ if and only if $\varphi^{-1}(V)$ contains several points, equivalently $V_\Delta \cap \overline D$, which is the case if and only if $x$ belongs to the Voronoi diagram of $V_\Delta$ in $\mathbb R^2$.
\end{proof}

Second, the cardinalities of the constellations are bounded by the number $n$ of triangles and the happiness $h$ of the portalgon $T$:

\begin{lemma}\label{L:VF bound}
For every triangle $\Delta$ of $T$ the constellation $V_\Delta$ has cardinality $O(nh)$.
\end{lemma}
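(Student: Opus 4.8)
The plan is to bound the cardinality of the constellation $V_\Delta$ by relating each point of $V_\Delta$ to a distinct way in which a shortest path from $V$ can arrive at the triangle $\Delta$, and then to invoke the happiness bound. Recall that a point $y \in V_\Delta$ arises as $\varphi^{-1}(v)$ for some immersed empty disk $(D,\varphi)$ attached to $\Delta$ and some $v \in V$; here $\varphi$ restricted to the segment from the center $x^\star$ of $D$ to $y$ is an isometric immersion, so it traces a geodesic path $p_y$ in $\mathcal S(T)$ from $\rho(x^\star)\in\rho(\Delta)$ to $v$, of length $|x^\star y| < \operatorname{radius}(D)$, whose interior avoids $V$. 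The key observation is that $p_y$, read backwards from $v$, is a shortest path in $\mathcal S(T)$ up to the point where it would hit the Voronoi diagram: more precisely, since $\varphi^{-1}(V) = V_\Delta \cap \overline D$ contains $y$ and $D$ is empty of $V_\Delta$, the path $p_y$ is a shortest path between its endpoints. So each $y \in V_\Delta$ determines a combinatorial "unfolding type'': the sequence of triangles of $T$ that $p_y$ crosses, together with which triangle of $T$ contains $x^\star$.

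First I would make this precise. Fix $v \in V$ and look at all $y \in V_\Delta$ with the associated geodesic $p_y$ ending at $v$. Slightly perturb the base point $x^\star$ inside $\Delta$ if necessary so that $p_y$ crosses only edges of $T^1$ transversally and misses the vertices; this does not change the combinatorial crossing sequence for an open set of base points, and for the counting it suffices to treat these generic types. The crossing sequence of $p_y$ is a walk $W_y$ in the dual graph of $T^1$, starting at the face $\rho(\Delta)$ (or rather the face of $T^1$ corresponding to $\Delta$), ending at a face incident to $v$. Two points $y, y' \in V_\Delta$ (both over the same $v$, both with base point in $\Delta$) that induce the same walk $W_y = W_{y'}$ must in fact coincide: unfolding $W_y$ in the plane starting from $\Delta$ places $v$ at a single well-defined point (the unfolding is rigid once the first triangle and the crossing sequence are fixed), and that point is exactly $y$. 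Hence the number of points of $V_\Delta$ lying over $v$ is at most the number of distinct such dual walks $W_y$ that are realized by shortest geodesics from $\rho(\Delta)$ to $v$.

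Next I would bound the number of such walks using happiness. Each $W_y$ is the crossing sequence of a shortest path $p_y$ of $\mathcal S(T)$ (reversed), and by definition of happiness, any single shortest path of $\mathcal S(T)$ visits each of the $n$ faces of $T^1$ at most $h$ times; hence $W_y$ has length $O(nh)$. But this bounds the length, not the count, of walks. To bound the count, I would argue that the family of geodesics $\{p_y\}_{y}$ (over a fixed $v$, with base point ranging over $\Delta$) forms a "fan'' of shortest paths to $v$, and two of them with different crossing sequences must first diverge at some face of $T^1$; tracking the faces at which consecutive (in angular order around $v$) geodesics of the fan split, and using that each such split face is visited $O(h)$ times by the geodesics passing through it, one gets that the number of distinct crossing sequences is $O(nh)$: essentially, the geodesic fan from $v$, restricted to those ending in $\Delta$, decomposes $\Delta$ (pulled back appropriately) into $O(nh)$ cells, one per crossing sequence, because each geodesic of the fan, being a shortest path, contributes at most $h$ crossings in each of the $n$ faces, and the total combinatorial complexity of the subdivision of $\Delta$ by the "shadow'' structure is controlled by the total number of these crossings. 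Summing over the $|V| \le 3n$ points $v$ then gives $|V_\Delta| = O(n \cdot nh / ?)$ — so I would be careful here: the clean statement is that the contribution of each $v$ is $O(nh)$ and there are $O(n)$ choices of $v$, which naively gives $O(n^2 h)$, whereas the lemma claims $O(nh)$. Resolving this discrepancy is the main obstacle.

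The main obstacle, then, is the counting argument: obtaining $O(nh)$ rather than $O(n^2 h)$ for the size of a single constellation $V_\Delta$. I expect the fix is that one should not sum independently over all $v \in V$, but instead observe that the union over all $v$ of the geodesic fans landing in $\Delta$ is itself carried by the (pulled-back) cut locus / shortest-path structure, which is a planar subdivision of $\Delta$ whose total complexity is governed by how many times shortest paths of $\mathcal S(T)$ cross the triangles of $T$ in aggregate — and that aggregate is $O(nh)$ because there are $n$ faces each visited $\le h$ times by any single shortest path, combined with the fact that the relevant geodesics are in one-to-one correspondence with the cells and cells have total complexity linear in the number of crossing "walls''. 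In other words, the cells of $\Delta$ induced by "crossing sequence to the nearest point of $V$'' are exactly the faces one gets by overlaying, inside $\Delta$, the pullbacks of the shortest-path wavefronts, and a direct face-count of that overlay (an Euler-formula argument, as in the combinatorial analysis of Section~\ref{sec:combi analysis}) yields $O(nh)$. I would write the proof by making this overlay explicit: pull back to $\Delta$ the Voronoi diagram of $V$ (which is $O(n)$ complexity by Lemma~\ref{lem:allprops voronoi} and Lemma~\ref{lem:vp is what we want}) together with the images of the edges of $T^1$ under the various unfoldings, bound the number of unfolding sheets by $O(h)$ per face and hence $O(nh)$ total sheets over $\Delta$, and conclude via Euler's formula that $V_\Delta$, being the vertex set dual to cells of a planar subdivision of $O(nh)$ complexity, has $O(nh)$ elements.
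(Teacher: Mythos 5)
You correctly identify the two key ingredients --- that each $y\in V_\Delta$ comes from a shortest path in $\mathcal S(T)$ from a point of $V$ to a point of $\rho(\Delta)$, and that happiness bounds how many times such a path can revisit any region --- and you correctly reduce the problem to counting combinatorial ``types'' of such paths. You also correctly flag that the naive per-$v$ count gives $O(n^2h)$, not $O(nh)$, and that closing this gap is the whole lemma. But you do not close it: your final paragraph only \emph{speculates} about an overlay/Euler-formula argument (``I expect the fix is\dots'', ``I would write the proof by\dots'') and never verifies its key assertion, namely that the number of unfolding sheets contributing to $\Delta$ is $O(nh)$ in aggregate. As written, the proposal has a genuine gap precisely at the step you yourself call ``the main obstacle.''

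The paper's way of closing the gap is to \emph{not} fix $v$ at all. It records, for every shortest path $p$ from $V$ to a point of $\mathcal S(T)$, the sequence of \emph{regions} (vertices, open edges, faces of $T^1$) that $p$ visits, directed from $V$ to the far endpoint, and observes that two such paths ending in $\rho(\Delta)$ with the same sequence give the same point of $V_\Delta$. The collection of all such sequences forms a prefix forest. A \emph{critical} sequence is the maximal common prefix of two distinct maximal sequences; such a prefix must end with a face, and for each face there is at most one critical sequence ending there, because otherwise two of the four realizing shortest paths would cross and could be shortened. Hence there are $O(n)$ critical sequences, so $O(n)$ maximal sequences, and each maximal sequence contains $O(h)$ occurrences of the region $R$ corresponding to $\Delta$, by the definition of happiness. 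This gives $O(nh)$ sequences ending at $R$, hence $|V_\Delta| = O(nh)$. The point is that the $O(n)$ comes from a global branch-point count across \emph{all} sources simultaneously, not from $|V|\le O(n)$; if you insist on conditioning on $v$ you have already lost the factor you need. Your overlay sketch is aiming at the same global count, but the forest/critical-sequence argument is both simpler and the one you would need to supply in detail to make the proof complete.
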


\begin{proof}
Given a triangle $\Delta$ of $T$, and a point $x$ in the constellation $V_\Delta$, there is by definition an immersed empty disk $(D, \varphi)$ attached to $\Delta$ such that $x \in \varphi^{-1}(V)$. And the center $y$ of $D$ belongs to $\Delta$. Then the geodesic segment between $y$ and $x$ projects via $\rho$ to a path between $\rho(y)$ and $\rho(x)$, and the length of this path is the smallest possible among all the paths between $\rho(y)$ and a point of $V$ (possibly another point of $V$ than $\rho(x)$). We will argue on such shortest paths between a point of $\mathcal S(T)$ and the set $V$.

We call regions the following subsets of $\mathcal S(T)$: a vertex of $T^1$, the relative interior of an edge of $T^1$, and a face of $T^1$. The regions partition $\mathcal S(T)$. For every shortest path $p$ between a point $x \in \mathcal S(T)$ and the set $V$, record the sequence of regions intersected by $p$ when directed from $V$ to $x$. If two such paths $p$ and $p'$ end in $\rho(\Delta)$ and have the same sequence then they correspond to the same point in the constellation $V_\Delta$. We claim that for every region $R$ there are $O(nh)$ sequences ending with $R$. This claim implies the lemma. Let us prove the claim. We say that a sequence is maximal if it is not a strict prefix of another sequence. And we say that a sequence is critical if it is the maximal common prefix of two distinct maximal sequences. Every critical sequence ends with a face of $T^1$. For every face $R'$ of $T^1$ there is at most one critical sequence ending with $R'$. Indeed every critical sequence is realized by two distinct paths. If two distinct critical sequences were to end with $R'$, then at least two of the four associated paths would cross, and thus could be shortened, a contradiction. We proved that there are $O(n)$ critical sequences. So there are $O(n)$ maximal sequences. And every sequence contains $O(h)$ occurrences of $R$ because the happiness of $T$ is equal to $h$. This proves the claim, and the lemma.
\end{proof}

We will now show that the wave algorithm computes the constellations. To do so, we introduce an invariant. We need a definition. Fix a point $x \in V_\Delta$, and consider all the immersed empty disks $(D, \varphi)$ attached to $\Delta$ such that $x \in \varphi^{-1}(V)$. Among all these immersed empty disks $(D, \varphi)$, the smallest radius of $D$ is the \emphdef{depth} of $x$ in $V_\Delta$.

\algo{invariant}{
There is $\tau > 0$ such that both of the following hold for every triangle $\Delta$ of $T$. Every point of $X_\Delta$ belongs to the constellation $V_\Delta$. And every point of $V_\Delta \setminus X_\Delta$ has depth greater than or equal to $\tau$ in $V_\Delta$.
}

It is not clear a priori that the invariant is maintained by the wave algorithm, and this will be proved only at the end, when proving Proposition~\ref{prop:wave algorithm terminates}. Before that we need some lemmas.

\begin{lemma}\label{lem:min date event}
Assume that the invariant holds for some $\tau > 0$, and that there is a candidate event $(t, \Delta, s, x)$ such that $t \leq \tau$. Then $t = \tau$, $x$ belongs to $V_\Delta$, and the depth of $x$ in $V_\Delta$ is equal to $\tau$.
\end{lemma}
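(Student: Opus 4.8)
The plan is to unwind the definition of a candidate event and match it against the invariant. Suppose $(t, \Delta, s, x)$ is a candidate event with $t \le \tau$. By definition there are a triangle $\Delta'$ and a side $s'$ of $\Delta'$ with $s$ matched to $s'$, and a point $x' \in X_{\Delta'}$ (the image of $x$ under the orientation-preserving gluing isometry) such that $x \notin X_\Delta$, $x' \in X_{\Delta'}$, the intersection $\mathrm{Vor}(x', X_{\Delta'}) \cap s'$ is non-empty, and its distance to $x'$ equals $t$. Let $y'$ be a point of $\mathrm{Vor}(x', X_{\Delta'}) \cap s'$ realizing this distance, so $\ell$ of the segment $[x', y']$ is $t$. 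First I would check that this segment, reflected/glued across $s'=s$ into $\Delta$, together with a disk of radius $t$ centered at $y'$, assembles into an immersed empty disk $(D, \varphi)$ attached to $\Delta$ with $x \in \varphi^{-1}(V)$: the center of $D$ is the point $y$ on $s$ corresponding to $y'$, which lies in $\Delta$; the radius is $t$; and $\varphi$ agrees with $\rho$ on $\overline D \cap \Delta$. Emptiness of this disk is the crux and I address it below. Granting it, $x \in V_\Delta$ and the depth of $x$ in $V_\Delta$ is $\le t \le \tau$. Since $x \notin X_\Delta$, the invariant says the depth of $x$ in $V_\Delta$ is $\ge \tau$. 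Combining, the depth of $x$ equals $\tau$ and $t = \tau$.

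The step I expect to be the main obstacle is verifying that the disk $D$ of radius $t$ centered at $y'$ really is empty, i.e.\ $\varphi(D) \cap V = \emptyset$, equivalently (by Lemma~\ref{lem: coherence disk}) $V_{\Delta'} \cap D = \emptyset$ for the part of $D$ on the $\Delta'$ side, and similarly on the $\Delta$ side. On the $\Delta'$ side: $y' \in \mathrm{Vor}(x', X_{\Delta'})$ means every point of $X_{\Delta'}$ is at distance $\ge |x'-y'| = t$ from $y'$, so the open disk of radius $t$ about $y'$ misses $X_{\Delta'}$; but I need it to miss the whole constellation $V_{\Delta'}$, not just $X_{\Delta'}$. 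Here the invariant is used again: any point of $V_{\Delta'} \setminus X_{\Delta'}$ has depth $\ge \tau \ge t$ in $V_{\Delta'}$, and a short geometric argument (if such a point lay in the open $t$-disk about $y'$, one could build an immersed empty disk attached to $\Delta'$ witnessing it with radius $< \tau$, contradicting the depth bound — this is where one must be careful that the disk about $y'$ genuinely sits inside an immersed empty disk attached to $\Delta'$, which follows because $y'$ is on the boundary side $s'$ of $\Delta'$ reachable from $x'$). The $\Delta$ side is symmetric once one observes that the segment $[y, x]$ in $\Delta$ together with the portion of $D$ on that side forms (the image of) an immersed empty disk attached to $\Delta$, whose emptiness against $X_\Delta$ follows from $x \notin X_\Delta$ being the \emph{nearest} candidate and against $V_\Delta \setminus X_\Delta$ from the invariant's depth bound $\ge \tau \ge t$.

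Once emptiness is in hand, I would also record the converse direction needed for later lemmas (that $x \in V_\Delta$ of depth exactly $\tau$ and not in $X_\Delta$ produces such a candidate event of date $\tau$), but for the present statement only the implication above is required. I would keep the geometric verification lightweight: the key facts are (i) distance in an immersed empty disk pulls back from Euclidean distance in $\overline D$, (ii) the invariant controls precisely the points of the constellation not yet inserted, and (iii) the gluing isometry across $s = s'$ is an isometry, so distances and emptiness transfer across the edge without loss.
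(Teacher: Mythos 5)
Your overall strategy matches the paper's: exhibit an immersed empty disk of radius $t$ centered at $y'$ (the paper calls it $z'$), with $x$ on its boundary mapped to a point of $V$, deduce $x \in V_\Delta$ with depth at most $t$, and then combine this with the invariant's lower bound on depth to force $t = \tau$. However, the way you propose to establish emptiness of the $t$-disk is circular, and this is a genuine gap. You want to invoke Lemma~\ref{lem: coherence disk} to reduce emptiness to $V_{\Delta'} \cap D = \emptyset$, but that lemma takes ``$(D,\varphi)$ is an immersed empty disk'' as a hypothesis; you cannot use it to establish that very fact. Likewise, the parenthetical ``if such a point lay in the open $t$-disk about $y'$, one could build an immersed empty disk attached to $\Delta'$ witnessing it with radius $< \tau$'' leaves the key construction unexplained: the natural witness disk (the smaller disk centered at $y'$ passing through the offending point) is again only an immersed empty disk if it is empty, which is precisely what you are trying to show. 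And the remark that this ``follows because $y'$ is on the boundary side $s'$ of $\Delta'$ reachable from $x'$'' is a connectivity statement that has no bearing on emptiness.

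The missing idea, which the paper uses, is to start from the \emph{maximal} immersed empty disk $(D', \varphi')$ centered at $z' = y'$ attached to $\Delta'$. This object exists and is empty by construction, so there is no circularity. You then only need to bound its radius $\rho'$: if $\rho' < t$, maximality gives a point $v \in \varphi'^{-1}(V) \subseteq V_{\Delta'}$ on the boundary circle of $D'$ with depth at most $\rho' < t \le \tau$, so the invariant forces $v \in X_{\Delta'}$; but then $v$ is strictly closer to $z'$ than $x'$, contradicting $z' \in \mathrm{Vor}(x', X_{\Delta'})$. Hence $\rho' \ge t$, so $x' \in \overline{D'}$, and since $x' \in X_{\Delta'} \subseteq V_{\Delta'}$ (by the invariant) Lemma~\ref{lem: coherence disk} gives $x' \in \varphi'^{-1}(V)$ and $\rho' = t$. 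Transferring this single disk across the gluing isometry $\lambda$ to a disk attached to $\Delta$ then yields $x \in V_\Delta$ with depth at most $t$, with no separate ``$\Delta$-side'' emptiness argument needed (the immersed disk is one object, not two halves); you also need this transfer to justify $x \in \varphi^{-1}(V)$, which your sketch asserts but does not derive. With the maximal-disk substitution your plan goes through; without it the emptiness step does not close.
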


\begin{proof}
We claim that $x \in V_\Delta$, and that the depth of $x$ in $V_\Delta$ is smaller than or equal to $t$. The claim implies the lemma. Indeed we assumed $t \leq \tau$. And, if $x \in V_\Delta$, then the depth of $x$ in $V_\Delta$ cannot be smaller than $\tau$, for otherwise the invariant would imply $x \in X_\Delta$, contradicting the fact that $(t, \Delta, s, x)$ is a candidate event. All there remains to do is to prove the claim.

To do so consider the triangle $\Delta'$ of $T$ and the side $s'$ of $\Delta'$ such that $s$ is matched to $s'$. Consider the orientation-preserving isometry of $\mathbb R^2$ that maps $s$ to $s'$ and puts $\Delta$ side-by-side with $\Delta'$, apply this isometry to $x$, and consider the resulting point $x' \in \mathbb R^2$. Using the assumption that $(t, \Delta, s, x)$ is a candidate event, the point $x'$ belongs to $X_{\Delta'}$, while $x$ does not belong to $X_\Delta$. Moreover there is a point $z'$ along $s'$ such that $x'$ is at distance $t$ from $z'$, and such that no point of $X_{\Delta'}$ is closer to $z'$ than $x'$. Consider the immersed empty disk $(D', \varphi')$ attached to $\Delta'$ such that the center of $D'$ is $z'$, and such that the radius of $D'$ is maximum.

By contradiction, assume that the radius of $D'$ is smaller than $t$. There is a point $v \in \varphi'^{-1}(V)$ because the radius of $D'$ is maximum. We have $v \in V_{\Delta'}$, and the depth of $v$ in $V_{\Delta'}$ is smaller than or equal to the radius of $D'$, which is smaller than $\tau$. So $v$ belongs to $X_{\Delta'}$ by the invariant. But then $v$ is a point of $X_{\Delta'}$ closer to $z'$ than $x'$, a contradiction.

We proved that the radius of $D'$ is greater than or equal to $t$. Then $x'$ belongs to $\overline D$. Moreover $x'$ belongs to $X_{\Delta'}$, and thus to $V_{\Delta'}$ by the invariant. Therefore $x'$ belongs to $\varphi'^{-1}(V)$ by Lemma~\ref{lem: coherence disk}. In particular the radius of $D'$ is \emph{equal} to $t$.

It is now convenient to name the orientation-preserving isometry of $\mathbb R^2$ that maps $s$ to $s'$ and puts $\Delta$ side-by-side with $\Delta'$, so let $\lambda : \mathbb R^2 \to \mathbb R^2$ be this isometry. Consider the point $z = \lambda^{-1}(z')$, and the immersed empty disk $(D, \varphi)$ attached to $\Delta$ such that the center of $D$ is $z$, and such that the radius of $D$ is maximum. Observe that $\lambda(D) = D'$, and that $\varphi' \circ \lambda = \varphi$. In particular the radius of $D$ is also $t$. And, crucially, $x \in \varphi^{-1}(V)$, because we already proved $x' \in \varphi'^{-1}(V)$. This proves that $x \in V_\Delta$. And the depth of $x$ in $V_\Delta$ is smaller than or equal to the radius of $D$, which is $t$. The claim is proved, along with the lemma.
\end{proof}

\begin{lemma}\label{lem:exists candidate event}
Assume that the invariant holds for some $\tau > 0$. Further assume that there is a triangle $\Delta$ of $T$ such that $V_\Delta \setminus X_\Delta$ contains a point whose depth in $V_\Delta$ is $\tau$. Then there is a candidate event whose date is smaller than or equal to $\tau$.
\end{lemma}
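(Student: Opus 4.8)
The plan is to read off, from the hypothesized point, a shortest path to $V$ that exits the triangle $\Delta$, and to convert the first edge it crosses into a candidate event of date at most $\tau$. First, by the definition of depth, fix an immersed empty disk $(D,\varphi)$ attached to $\Delta$ with center $z\in\Delta$, with $x_0\in\varphi^{-1}(V)$, and of radius exactly $\tau$. Since $\varphi^{-1}(V)\cap D=\emptyset$, the point $x_0$ lies on $\partial D$, so $\varphi$ carries the segment $[z,x_0]$ to a path $p$ of length $\tau$ from $\rho(z)$ to $v:=\varphi(x_0)\in V$; thus $d(\rho(z),V)\le\tau$. Conversely the unique maximum-radius immersed empty disk attached to $\Delta$ and centered at $z$ has radius $d(\rho(z),V)$ (as in the proof of Lemma~\ref{lem:vp is what we want}), so $\tau\le d(\rho(z),V)$; hence $d(\rho(z),V)=\tau$, the path $p$ is a shortest path from $\rho(z)$ to $V$, and every subpath of $p$ is a shortest path from its startpoint to $V$. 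If $[z,x_0]$ were contained in $\Delta$, then $\rho(x_0)=\varphi(x_0)\in V$ would force $x_0$ to be a vertex of $\Delta$ lying in $V$, hence $x_0\in X_\Delta$ by the initialization of the wave algorithm, a contradiction; so $[z,x_0]\not\subseteq\Delta$, and since $\Delta$ is convex and $z\in\Delta$ the set $[z,x_0]\cap\Delta$ is a subsegment $[z,w]$ with $w\in\partial\Delta$ and $\mu:=|z-w|\in[0,\tau)$. I will assume $w$ lies in the relative interior of a side $s$ of $\Delta$; the remaining case, where $w$ is a vertex of $\Delta$ — necessarily a flat vertex, as $\rho(w)\notin V$ and $V$ contains all singularities — is entirely analogous up to an arbitrarily small perturbation of the geodesic off that vertex.

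Let $\Delta'$ and $s'$ be the triangle and side of $T$ matched to $s$, let $\lambda\colon\mathbb R^2\to\mathbb R^2$ be the orientation-preserving isometry sending $s$ onto $s'$ and placing $\Delta$ side-by-side with $\Delta'$, and set $x':=\lambda(x_0)$ and $w':=\lambda(w)$; then $w'$ lies in the relative interior of $s'$, one has $\rho(w')=\rho(w)=p(\mu)$, and $|x'-w'|=|x_0-w|=\tau-\mu$. I will show that $(t,\Delta,s,x_0)$ is a candidate event, where $t:=d\bigl(x',\text{Vor}(x',X_{\Delta'})\cap s'\bigr)$, and that $t\le\tau$. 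The condition $x_0\notin X_\Delta$ holds by hypothesis and the distance condition holds by the choice of $t$, so it remains to verify $x'\in X_{\Delta'}$ and $\text{Vor}(x',X_{\Delta'})\cap s'\ne\emptyset$.

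For $x'\in X_{\Delta'}$, fix a small $\delta>0$ and let $z_1:=w'+\tfrac{\delta}{\tau-\mu}(x'-w')$, which lies in $\Delta'$ for $\delta$ small (because $[w',x']$ points into $\Delta'$) and satisfies $\rho(z_1)=p(\mu+\delta)$ and $|z_1-x'|=\tau-\mu-\delta$. As $p|_{[\mu+\delta,\tau]}$ is a shortest path from $\rho(z_1)$ to $V$, we get $d(\rho(z_1),V)=\tau-\mu-\delta$; hence the maximum-radius immersed empty disk attached to $\Delta'$ and centered at $z_1$ has radius $\tau-\mu-\delta$, and since its map is pinned down near $z_1$ and develops along the same geodesic as $p$, its boundary point $x'$ maps to $v\in V$. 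Thus $x'\in V_{\Delta'}$ with depth at most $\tau-\mu-\delta<\tau$, so $x'\in X_{\Delta'}$ by the invariant. For the remaining condition I claim $w'\in\text{Vor}(x',X_{\Delta'})$: since $p|_{[\mu,\tau]}$ is shortest from $\rho(w')=p(\mu)$ to $V$, we have $d(\rho(w'),V)=\tau-\mu$, so the maximum-radius immersed empty disk attached to $\Delta'$ and centered at $w'$ has radius $\tau-\mu$, and by Lemma~\ref{lem: coherence disk} the open disk $B(w',\tau-\mu)$ is disjoint from $V_{\Delta'}$, hence from $X_{\Delta'}$. Therefore every point of $X_{\Delta'}$ is at distance at least $\tau-\mu=|x'-w'|$ from $w'$, which together with $x'\in X_{\Delta'}$ gives $w'\in\text{Vor}(x',X_{\Delta'})\cap s'$. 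In particular that set is non-empty, and $0<t\le|x'-w'|=\tau-\mu\le\tau$ (positivity since $x_0\notin\Delta$ forces $x'\notin s'$). This shows $(t,\Delta,s,x_0)$ is a candidate event of date $t\le\tau$.

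The main obstacle will be the careful treatment of the degenerate configurations — geodesics passing through flat vertices, in particular $w$ being a vertex of $\Delta$ — and of the rigidity facts invoked implicitly above: that an isometric immersion of a flat planar domain is determined by its germ at a point, and that the immersed empty disks attached to a fixed triangle form a coherent family (Lemma~\ref{lem: coherence disk}), so that the re-centered maximum-radius disks really are attached to $\Delta'$ and their maps develop along $p$. None of this is deep, but it has to be spelled out.
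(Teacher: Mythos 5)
Your side case is correct, and it takes a different route from the paper in one respect: where the paper invokes Lemma~\ref{lem:depth smaller} once to place the disk center on the boundary of $\Delta$ and then a second time to deduce $x'\in X_{\Delta'}$ (via the line $L'$ through $s'$ separating $x'$ from $\Delta'$), you instead develop the shortest path $p$ explicitly, construct a slightly shifted maximal empty disk centered at $z_1\in\Delta'$, and read off $x'\in V_{\Delta'}$ with depth $<\tau$. Both arguments work, though yours needs the rigidity/coherence facts about developments that you correctly flag as having to be spelled out. (Incidentally, since your disk has minimal radius $\tau$, Lemma~\ref{lem:depth smaller} already forces $[z,x_0]\cap\Delta=\{z\}$, so $\mu=0$ and $w=z$; keeping $\mu$ as a free parameter is harmless but superfluous.)

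The real problem is the vertex case, which you dismiss as ``entirely analogous up to an arbitrarily small perturbation of the geodesic off that vertex.'' It is not analogous, and the perturbation idea does not obviously produce a candidate event of date $\le\tau$. When $w$ is a vertex of $\Delta$, the ray $[z,x_0]$ passes through the flat vertex and, once developed, enters some triangle $\Delta_j$ incident to $\rho(w)$ that need not be adjacent to $\Delta$ across a side. Perturbing the ray keeps it inside $\Delta$ until it exits through a side of $\Delta$, so any candidate event produced by your side-case argument would necessarily be of the form $(\cdot,\Delta,s,\cdot)$ with $s$ a side of $\Delta$; but the paper's proof shows that in the vertex case the candidate event can live between two triangles $\Delta_i,\Delta_{i+1}$ around $\rho(w)$, neither of which is $\Delta$. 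Moreover a perturbation changes $d(\rho(z),V)$ and the boundary point of the maximal disk, so the inequality $t\le\tau$ would not follow without a new estimate. The paper's actual treatment is a discrete intermediate-value argument: place the developed copies $\Delta_0=\Delta,\Delta_1,\dots,\Delta_{k-1}$ around $y$, transport $(D,\varphi)$ and $x$ to each copy to get $x_i\in V_{\Delta_i}$ with $x_i$ on the boundary circle of $D_i$, use Lemma~\ref{lem:depth smaller} on the copy whose interior the segment $[y,x]$ enters to get one index $j$ with $x_j\in X_{\Delta_j}$, note $x_0\notin X_{\Delta_0}$, and hence find a consecutive pair where the membership flips. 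Then Lemma~\ref{lem: coherence disk} and $\ell(y_i,x_i)=\tau$ give the date bound. None of this is a ``small perturbation'' of your side case; it is a separate argument that you would need to supply.
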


The proof of Lemma~\ref{lem:exists candidate event} relies on the following:

\begin{lemma}\label{lem:depth smaller}
Let $(D, \varphi)$ be an immersed empty disk attached to a triangle $\Delta$ of $T$. Assume that there is $x \in \varphi^{-1}(V)$, and let $y$ be the center of $D$. If the geodesic segment between $x$ and $y$ intersect $\Delta$ in any other point than $y$ then the depth of $x$ in $V_\Delta$ is smaller than the radius of $D$.
\end{lemma}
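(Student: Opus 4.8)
The plan is to construct, from the given immersed empty disk $(D,\varphi)$, a strictly smaller immersed empty disk $(D',\varphi')$ that is still attached to $\Delta$ and still has $x$ as a preimage of a point of $V$. Exhibiting such a $(D',\varphi')$ immediately shows that the depth of $x$ in $V_\Delta$, being an infimum of radii over all such disks, is smaller than the radius of $D$.

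First I would record the elementary observation that, since $\varphi(D)\cap V=\emptyset$ while $\varphi(x)\in V$, the point $x$ cannot lie in the open disk $D$; hence $x$ lies on the bounding circle, and the radius $r$ of $D$ equals $|x-y|$. Because $\overline D$ is convex and contains both $x$ and $y$, the straight segment $[y,x]$ lies in $\overline D$. Using convexity of the closed triangle $\Delta$ together with the hypothesis that $[y,x]$ meets $\Delta$ at some point other than $y$, I would write $[y,x]\cap\Delta=[y,z]$ with $z\neq y$, and choose $y'$ in the open sub-segment $(y,z)$ — say its midpoint — so that $y'\in\Delta$, $y'\neq y$, $y'\neq x$, and consequently $0<|x-y'|<|x-y|=r$; set $r':=|x-y'|$.

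Next I would take $D'$ to be the open disk centered at $y'$ of radius $r'$ and set $\varphi':=\varphi|_{\overline{D'}}$. The key point to verify is the nesting $\overline{D'}\subseteq\overline D$ (and $D'\subseteq D$), which follows from the triangle inequality together with the fact that $y'$ lies on the radius $[y,x]$, so $|y-y'|=r-r'$. From this nesting, every defining property of an immersed empty disk attached to $\Delta$ transfers to $(D',\varphi')$: the restriction of $\varphi'$ to $D'$ is an isometric immersion, $\varphi'(D')\subseteq\varphi(D)$ is disjoint from $V$, $\varphi'$ still agrees with $\rho$ on $\overline{D'}\cap\Delta\subseteq\overline D\cap\Delta$, and the center $y'$ of $D'$ lies in $\Delta$. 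Since $x$ is at distance exactly $r'$ from $y'$ we have $x\in\overline{D'}$, and $\varphi'(x)=\varphi(x)\in V$, so $x\in\varphi'^{-1}(V)$. Therefore the depth of $x$ in $V_\Delta$ is at most $r'<r$.

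I do not expect a genuine obstacle here. The only points requiring care are the degenerate cases: ensuring $y'\neq y$ (guaranteed by the hypothesis that the segment leaves $\Delta$) and $y'\neq x$ so that $D'$ is non-degenerate (guaranteed by picking $y'$ strictly inside $[y,z]\subseteq[y,x)$), and double-checking the inclusion $\overline{D'}\subseteq\overline D$ on which the whole argument hinges. Everything else is a routine transfer of properties under restriction of $\varphi$.
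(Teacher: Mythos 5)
Your argument is correct and follows exactly the route the paper takes: shrink the disk to one centered at an interior point $y'$ of the segment $[y,x]\cap\Delta$ with $x$ still on its bounding circle, restrict $\varphi$, and observe that the depth is at most the new, strictly smaller radius. Your choice of the midpoint of $(y,z)$ is a slightly more explicit way of ruling out the degenerate cases $y'=y$ and $y'=x$, but otherwise the two proofs are the same.
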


\begin{proof}
Assuming that the geodesic segment between $x$ and $y$ intersects $\Delta$ in a point $y' \neq y$ (at least), consider the open disk $D'$ whose center is $y'$ and whose boundary circle contains $x'$. Then $D' \subset D$. Let $\varphi'$ be the restriction of $\varphi$ to $\overline D'$. Then $(D', \varphi')$ is an immersed empty disk, $\varphi'$ agrees with $\rho$ on $\Delta \cap \overline D'$, and $x \in \varphi'^{-1}(V)$. So the depth of $x$ in $V_\Delta$ is smaller than or equal to the radius of $D'$, which is smaller than the radius of $D$.
\end{proof}

\begin{proof}[Proof of Lemma~\ref{lem:exists candidate event}]
Consider a point $x \in V_\Delta \setminus X_\Delta$ that has depth $\tau$ in $V_\Delta$. There is an immersed empty disk $(D, \varphi)$ that satisfies each of the following. Let $y$ be the center of $D$. Then $y$ belongs to $\Delta$, the radius of $D$ is $\tau$, $\varphi$ agrees with $\rho$ on $\overline D \cap \Delta$, and $x \in \varphi^{-1}(V)$. In top of that we can add that $y$ belongs to the boundary of $\Delta$, for otherwise the depth of $x$ in $V_\Delta$ would be smaller than $\tau$ by Lemma~\ref{lem:depth smaller}, a contradiction. There are two cases: either $y$ lies in the relative interior of a side of $\Delta$, or $y$ is a vertex of $\Delta$. 

\paragraph{First case}
 
First consider the case where $y$ lies in the relative interior of a side $s$ of $\Delta$. In this case we shall prove that there is $t \leq \tau$ such that $(t, \Delta, s, x)$ is a candidate event. The surface $\mathcal S(T)$ being closed, there are a triangle $\Delta'$ of $T$, and a side $s'$ of $\Delta'$, such that $s$ is matched to $s'$. Consider the orientation-preserving isometry $\lambda : \mathbb R^2 \to \mathbb R^2$ that maps $s$ to $s'$ and puts $\Delta$ side-by-side with $\Delta'$. We consider the points $x' = \lambda(x)$ and $y' = \lambda(y)$, the open disk $D' = \lambda(D)$, and the map $\varphi' = \varphi \circ \lambda^{-1}$. Observe that $(D', \varphi')$ is an immersed empty disk attached to $\Delta'$, that the center of $D'$ is $y'$, and that $x'$ belongs to the boundary circle of $D'$. Informally, $x'$, $y'$, and $(D', \varphi')$ correspond to $x$, $y$, and $(D, \varphi)$, but in the reference frame of $\Delta'$.

We claim that $x'$ belongs to $X_{\Delta'}$. To prove the claim consider the geodesic line $L$ supported by $s$, and direct $L$ so that $\Delta$ is on the \emph{right} of $L$. Similarly, consider the geodesic line $L' = \lambda(L)$. Then $L'$ is supported by $s'$, and $\Delta'$ is on the \emph{left} of $L'$. We have that $x$ lies strictly on the left of $L$, for otherwise the depth of $x$ in $V_\Delta$ would be smaller than $\tau$ by Lemma~\ref{lem:depth smaller}, a contradiction. So $x'$ lies (strictly) on the left of $L'$. And so the depth of $x'$ in $V_{\Delta'}$ is smaller than $\tau$ by Lemma~\ref{lem:depth smaller}. Therefore $x' \in X_{\Delta'}$ by the invariant. The claim is proved.

We use the claim immediately, $x'$ belongs to $X_{\Delta'}$. No point of $X_{\Delta'}$ is closer to $y'$ than $x'$, because $X_{\Delta'} \subseteq V_{\Delta'}$ by the invariant, and because $D' \cap V_{\Delta'} = \emptyset$ by Lemma~\ref{lem: coherence disk}. So $\text{Vor}(x', X_{\Delta'})$ intersects $s'$ (at least in $y'$), and its intersection with $s'$ is at distance a distance $t \leq \tau$ from $x'$ (because the distance between $y'$ and $x'$ is $\tau$). The tuple $(t, \Delta, s, x)$ is a candidate event. We are done in this case. 

\paragraph*{Second case.}

Now consider the case where $y$ is a vertex of $\Delta$. Then $\rho(y)$ is a vertex of the graph $T^1$ embedded on $\mathcal S(T)$. Note also that $\rho(y)$ lies in the interior of $\mathcal S(T)$ because $\mathcal S(T)$ has no boundary. And $\rho(y)$ is flat as it does not belong to $V$. We assume that no face of $T^1$ appears twice around $y$, for this eases the reading, and the proof trivially extends to the general case. Consider the $k \geq 3$ faces of $T^1$ incident to $\rho(y)$, in order around $\rho(y)$ (clockwise say, but counter-clockwise would do too), and the corresponding triangles $\Delta_0, \dots, \Delta_{k-1}$ of $T$, with $\Delta_0 = \Delta$. We fix $\Delta_0$, and we place copies of the triangles $\Delta_1, \dots, \Delta_{k-1}$ around $y$, in order. This is possibly because $\rho(y)$ is flat. For each $i$ we record the orientation-preserving isometry $\lambda_i : \mathbb R^2 \to \mathbb R^2$ that maps the copy of $\Delta_i$ around $y$ to the original triangle $\Delta_i$. We consider the points $x_i = \lambda_i(x)$ and $y_i = \lambda_i(y)$. Also we consider the open disk $D_i = \lambda_i(D)$ and the map $\varphi_i = \varphi \circ \lambda_i^{-1}$. Observe that $(D_i, \varphi_i)$ is an immersed empty disk attached to $\Delta_i$, that the center of $D_i$ is $y_i$, and that $x_i$ belongs to the boundary circle of $D_i$. Informally, $x_i$, $y_i$, and $(D_i, \varphi_i)$ correspond to $x$, $y$, and $(D, \varphi)$, but in the reference frame of $\Delta_i$.
 
We claim that there is $i$ such that $x_i \in X_{\Delta_i}$. Indeed there is $i$ such that the geodesic segment between $y$ and $x$ intersects the copied triangle $\lambda_i^{-1}(\Delta_i)$ in another point than $y$. Then the geodesic segment between $y_i$ and $x_i$ intersects $\Delta_i$ in another point than $y$. So $x_i$ belongs to $V_{\Delta_i}$ and has depth smaller than $\tau$ in $V_{\Delta_i}$, by Lemma~\ref{lem:depth smaller} applied to $\Delta_i$, $(D_i, \varphi_i)$, $x_i$, and $y_i$. And so $x_i \in X_{\Delta_i}$ by the invariant. The claim is proved.

Using the claim immediately, and because $x_0 \notin X_{\Delta_0}$, there is $i$ such that $x_i \in X_{\Delta_i}$ and $x_{i+1} \notin X_{\Delta_{i+1}}$, indices are modulo $k$. Consider the side $s_i$ of $\Delta_i$ that is matched to a side of $\Delta_{i+1}$. We shall prove that there is $t \leq \tau$ such that $(\tau, \Delta_i, s_i, x_i)$ is a candidate event. To do so first observe that no point of $X_{\Delta_i}$ is closer to $y_i$ than $x_i$ because $X_{\Delta_i} \subseteq V_{\Delta_i}$ by the invariant, and because $D_i \cap V_{\Delta_i} = \emptyset$ by Lemma~\ref{lem: coherence disk}. So $\text{Vor}(x_i, X_{\Delta_i})$ intersects $s_i$ (at least in $y_i$), and its intersection with $s_i$ is at a distance $t \leq \tau$ from $x_i$ (because the distance between $y_i$ and $x_i$ is $\tau$). The tuple $(t, \Delta_i, s_i, x)$ is a candidate event. We are done in this case. The lemma is proved.
\end{proof}

\begin{proof}[Proposition~\ref{prop:wave algorithm terminates}]
First we prove that the invariant holds throughout the execution of the wave algorithm. To prove the claim first observe that the invariant holds after the initialization phase. Now assume that it holds at the beginning of an iteration of the loop, for some $\tau > 0$, and that there is a candidate event $(t, \Delta, s, x)$, of smallest date $t$. If every triangle $\Delta$ of $T$ satisfies $X_\Delta = V_\Delta$ then the invariant holds for every $\tau > 0$ anyway. Otherwise there are without loss of generality a triangle $\Delta$ and a point in $V_\Delta \setminus X_\Delta$ whose depth in $V_\Delta$ is $\tau$, so there is a candidate event whose date is smaller than or equal to $\tau$ by Lemma~\ref{lem:exists candidate event}. In any case $t \leq \tau$ holds without loss of generality. Then $t = \tau$, $x$ belongs to $V_\Delta$, and the depth of $x$ in $V_\Delta$ is equal to $\tau$ by Lemma~\ref{lem:min date event}. So, after adding $x$ to $X_\Delta$, the invariant still holds. This proves that the invariant holds throughout the execution of the wave algorithm.

The wave algorithm never adds twice the same point in a set $X_\Delta$ of a triangle $\Delta$ of $T$. Moreover $X_\Delta \subseteq V_\Delta$ by the invariant. And the cardinality of $V_\Delta$ is $O(nh)$ by Lemma~\ref{L:VF bound}. So the wave algorithm terminates after $O(n^2h)$ iterations. The algorithm does not stop until $X_\Delta = V_\Delta$ for every triangle $\Delta$ of $T$, by Lemma~\ref{lem:exists candidate event}. And the sets $(V_\Delta)_\Delta$ are as desired by Lemma~\ref{lem:vp is what we want}. The lemma is proved.
\end{proof}

\subsubsection{Proof of Proposition~\ref{prop:maintain candidate events}}

In this section we prove Proposition~\ref{prop:maintain candidate events}, that during the wave algorithm the list of the candidate events sorted by date can be maintained in amortized $O(\log(nh))$ time per insertion of a point in the set $X_\Delta$ of a triangle $\Delta$.

The crux of the matter is to maintain the intersection of a Voronoi diagram in $\mathbb R^2$ with a closed segment of $\mathbb R^2$ in a dynamic manner while adding the sources one-by-one to the Voronoi diagram. To do that we consider a game that we play with Alice. Informally, Alice sends us the sources of the Voronoi diagram one-by-one, and we tell her what is changed after each insertion of a source. Formally, Alice initially sends us a closed segment $I$ of $\mathbb R^2$. Then Alice sends us $k \geq 1$ pairwise distinct points $z_1, \dots, z_k \in \mathbb R^2$ in this order. We do not know the points before they are sent to us by Alice, nor the number of points to be sent. For each $i \in [k]$, after the $i$-th point $z_i$ is sent to us by Alice, and before $i+1$-th point $z_{i+1}$ is sent to us, we must send two things to Alice. First, we must send the set $U_i \subseteq [i]$ containing the index $i$ together with the indices $j \in [i-1]$ such that $\text{Vor}(z_j,Z_i) \cap I \neq \text{Vor}(z_j, Z_{i-1}) \cap I$. Second, for each index $j \in U_i$, we must send the  (possibly empty) set $\text{Vor}(z_j, Z_i) \cap I$. Note that each set $\text{Vor}(z_j, Z_i) \cap I$ is a closed segment of $\mathbb R^2$, so it is either empty, a single point, or has two distinct endpoints by which it is uniquely determined. We have two lemmas:

\begin{lemma}\label{lem:cardinality voronoi update set}
The sum over $1 \leq i \leq k$ of the cardinality of the set $U_i$ is smaller than or equal to $5k$.
\end{lemma}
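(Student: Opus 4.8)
The plan is to charge each index that enters some $U_i$ to a combinatorial event in the evolving one-dimensional arrangement on the segment $I$. Fix the segment $I$; for each $i$, the restriction $\mathrm{Vor}(\cdot,Z_i)\cap I$ partitions (the portion of) $I$ that is covered into at most $i$ closed sub-intervals, one per source $z_j$ whose cell meets $I$. I will track the number of nonempty pieces, i.e. $a_i := \#\{j\in[i] : \mathrm{Vor}(z_j,Z_i)\cap I \neq \emptyset\}$, which by a standard fact about the restriction of a planar Voronoi diagram to a line is exactly the number of maximal intervals in the partition (each source contributes at most one interval on a line, since its Voronoi cell is convex). The key observation is that inserting $z_i$ changes $\mathrm{Vor}(z_j,Z_i)\cap I$ only for sources $z_j$ whose interval is adjacent to, or overlapped by, the new interval of $z_i$: concretely, going from $Z_{i-1}$ to $Z_i$, the new cell of $z_i$ carves out one interval, which either lies inside a single old interval (splitting it in two, so one old index changes and one new index appears), or it overlaps a contiguous run of old intervals, deleting the interior ones entirely and trimming the two boundary ones.

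So the concrete steps are: first establish that $\mathrm{Vor}(z_i, Z_i)\cap I$ is a single (possibly empty, possibly degenerate) interval, and that for $j<i$ the set $\mathrm{Vor}(z_j,Z_{i-1})\cap I \setminus \mathrm{Vor}(z_i,Z_i)$ is exactly $\mathrm{Vor}(z_j,Z_i)\cap I$ — i.e. the only way an old cell shrinks on $I$ is by losing the part captured by $z_i$ (this is immediate from the definition of Voronoi cells: $\mathrm{Vor}(z_j,Z_i) = \mathrm{Vor}(z_j,Z_{i-1})\setminus \{\text{points strictly closer to }z_i\}$, up to boundary). Consequently the indices $j\in U_i\setminus\{i\}$ are exactly those whose interval met the new interval of $z_i$; among them, all but at most two have their interval entirely erased (so $a$ decreases by one for each), and at most two are merely trimmed at an endpoint. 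Writing $U_i = \{i\} \cup E_i \cup B_i$ where $E_i$ is the set of erased indices and $|B_i|\le 2$, we get $|U_i| \le 1 + |E_i| + 2 = |E_i| + 3$. Finally, since each erasure drops $a$ by $1$ while each insertion raises $a$ by at most $1$, we have $a_i \le a_{i-1} + 1 - |E_i|$, so $\sum_{i=1}^k |E_i| \le \sum_{i=1}^k (a_{i-1} - a_i + 1) = k + a_0 - a_k \le k$ (telescoping, with $a_0 = 0$ and $a_k \ge 0$). Therefore $\sum_i |U_i| \le \sum_i (|E_i| + 3) \le k + 3k = 4k \le 5k$, which is the claimed bound (and the slack of $k$ absorbs any degenerate cases, such as the new interval being empty, where $|U_i|$ can be just $1$, or sitting exactly between two old intervals without erasing either).

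The main obstacle I anticipate is the careful bookkeeping of the degenerate and boundary configurations: when the new interval of $z_i$ is empty (then $U_i=\{i\}$, fine), when it exactly abuts an old interval endpoint (does that old index change? only if the shared endpoint moves — I should define "changes" by the set equality in the lemma statement, which makes a measure-zero endpoint shift count, but such shifts are still charged to one of the $\le 2$ boundary indices), and when several sources are equidistant so that a cell's restriction to $I$ is a single point. In all these cases the convexity-on-a-line argument still gives "at most one interval per source," and the telescoping argument on $a_i$ is robust, so the only real work is to phrase the erased/trimmed dichotomy precisely enough that every index of $U_i\setminus\{i\}$ is accounted for. I would state the interval-structure fact as a short sub-claim, prove the "old cells only shrink by the new cell" identity directly from the definition, and then run the telescoping sum.
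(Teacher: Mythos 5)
Your argument is correct and is essentially the same approach as the paper's: both proofs split $U_i$ into cells that are entirely erased (each index erased at most once over the whole game, contributing an amortized $\le k$) and cells that survive the insertion (bounded by a constant per iteration via the fact that, restricted to $I$, every Voronoi cell is a closed sub-interval and the region captured by $z_i$ is a single sub-interval, so it can only \emph{partially} overlap at most two old cells, one at each end). The paper reaches $5k$ by looking at arbitrarily short sub-segments $X_\varepsilon$ near the two endpoints of the captured interval $X$ and charging each surviving $j<i$ to one of these; you instead charge directly via the erased/trimmed dichotomy and a telescoping potential $a_i$, which is a cleaner formulation and yields the sharper constant $4k$. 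One small slip worth noting: your first case, where ``the new interval lies inside a single old interval, splitting it in two,'' cannot occur. If $\mathrm{Vor}(z_i,Z_i)\cap I$ were strictly inside $\mathrm{Vor}(z_j,Z_{i-1})\cap I$, then $\mathrm{Vor}(z_j,Z_i)\cap I$ would be a union of two disjoint intervals, contradicting convexity of $\mathrm{Vor}(z_j,Z_i)$; the captured region intersects any old cell only in a prefix or suffix. This actually strengthens your argument (it removes a case) rather than weakening it, so the proof stands.
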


\begin{proof}
Consider $i \in [k]$. We claim that at most four indices $j \in U_i$ are such that $\text{Vor}(z_j, Z_i)$ intersects $I$. The claim immediately implies the lemma.

To prove the claim we consider the subset $Y$ of $I$ that contains the points that are strictly closer to $z_i$ than to any point of $Z_{i-1}$. And we consider the closure $X$ of $Y$. Then $X$ is a closed segment of $\mathbb R^2$ and, assuming that $U_i$ is not empty, we have that $Y$ is not empty, so $X$ is not empty, and $X$ is not a single point either. Informally, we now consider the two “ends” of $X$. Formally, we consider the two endpoints $x_0$ and $x_1$ of $X$, and for each $\varepsilon \in \{0,1\}$, we consider an arbitrarily short closed segment $X_\varepsilon \subset X$, not a single point, that contains $x_\varepsilon$. Provided $X_\varepsilon$ is short enough, there are no more than two indices $j \in U_i$ such that the relative interior of $X_\varepsilon$ is included in $\text{Vor}(z_j, Z_{i-1})$ . 

On the other hand if $j \in U_i$ is such that $\text{Vor}(z_j, Z_i)$ intersects $I$, then not only $\text{Vor}(z_j, Z_{i-1})$ also intersects $I$, but $\text{Vor}(z_j, Z_{i-1}) \cap I$ contains a point in $Y$ and a point not in $Y$, so it contains the relative interior of $X_{\varepsilon}$ for some $\varepsilon \in \{0,1\}$. This proves the claim, and the lemma.
\end{proof}

\begin{lemma}\label{L:amortized insertion}
There is an algorithm that receives $I$ and $z_1, \dots, z_k$ in this order, and that, after receiving $z_i$, $i \in [k]$, returns $U_i$ together with the closed segments $\text{Vor}(z_j, Z_i) \cap I$ for all $j \in U_i$, and runs in $O(k \log k)$ total time. 
\end{lemma}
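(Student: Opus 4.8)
The plan is to maintain, incrementally, the ordered list of \emph{breakpoints} along the fixed segment $I$: the points of $I$ at which the nearest-site identity (among $z_1,\dots,z_i$) changes. Parametrize $I$ by arclength, so a breakpoint is just a real number together with the pair of indices it separates. The list of breakpoints, kept in a balanced binary search tree (e.g. a red-black tree) keyed by position along $I$, together with the index assigned to each maximal sub-interval of $I$ between consecutive breakpoints, encodes exactly the intersections $\text{Vor}(z_j,Z_i)\cap I$: each nonempty such intersection is the union of the sub-intervals labeled $j$, and since Voronoi cells in the plane are convex, $\text{Vor}(z_j,Z_i)\cap I$ is a single (possibly empty, possibly degenerate) sub-interval, so at most one maximal sub-interval carries each label. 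I would also keep, for each live index $j$, a pointer to the sub-interval labeled $j$ (or a null pointer), so that the sets $\text{Vor}(z_j,Z_i)\cap I$ can be reported in $O(1)$ time per index once we know which indices changed.

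When $z_i$ arrives, the set $Y_i=\{p\in I:\ d(p,z_i)<d(p,z_j)\ \forall j<i\}$ of points strictly closer to $z_i$ than to all earlier sites is, by convexity of half-planes, a single open sub-segment of $I$ (possibly empty). I would first locate the two endpoints of $Y_i$ on $I$. To do this, note that along a maximal sub-interval currently labeled $j$, the sign of $d(p,z_i)-d(p,z_j)$ changes at most once (it is monotone as one moves along $I$, since the perpendicular bisector of $z_i z_j$ meets the line containing $I$ in at most one point). So I can binary-search the breakpoint tree: evaluate at the two ends of a sub-interval whether $z_i$ beats the current label there; this tells me whether the $Y_i$-boundary lies to the left, to the right, or inside that sub-interval, and in the ``inside'' case a single bisector-line intersection computes it exactly. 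This locates each endpoint of $Y_i$ in $O(\log k)$ time (walking down the tree), using only the arithmetic operations of the real RAM (squared distances and one linear solve suffice, so no square roots even are needed). Having found the sub-segment $Y_i$, I splice it in: delete all breakpoints strictly inside $Y_i$ (these are exactly the old breakpoints now ``swallowed'' by $z_i$), relabel the interior of $Y_i$ with index $i$, and insert the (at most) two new breakpoints at the endpoints of $Y_i$. Finally I form $U_i$ as $\{i\}$ together with every index $j$ whose sub-interval was touched: the (at most two) indices flanking $Y_i$ whose sub-intervals shrank, and every index that was swallowed and now has an empty cell; I recover these from the deleted breakpoints and update their pointers to null. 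Then I report $U_i$ and, for each $j\in U_i$, the endpoints of its (possibly empty) current sub-interval.

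For the running time: locating the two endpoints of $Y_i$ costs $O(\log k)$; inserting the two new breakpoints costs $O(\log k)$; and the deletions and relabelings cost $O(\log k)$ per deleted breakpoint. A deleted breakpoint was created by some earlier insertion and is destroyed at most once, and each insertion creates at most two breakpoints, so over all $k$ rounds the total number of deletions is $O(k)$, contributing $O(k\log k)$ amortized. Reporting costs $O(1+|U_i|)$ per round, and $\sum_i|U_i|\le 5k$ by Lemma~\ref{lem:cardinality voronoi update set}, so reporting is $O(k)$ in total. Altogether the algorithm runs in $O(k\log k)$ time, as claimed. The main obstacle I anticipate is the endpoint-location step: one must argue carefully that the monotonicity of $d(\cdot,z_i)-d(\cdot,z_j)$ along $I$ (for a \emph{fixed} label $j$ on a \emph{fixed} sub-interval) is enough to make the tree descent correct even though different sub-intervals have different labels, i.e.\ that the predicate ``$z_i$ is closer here than the current label'' is itself unimodal enough along $I$ that a single binary search finds both boundaries of the convex region $Y_i$; this follows because $Y_i$ is convex (hence an interval), so once we know, at each probed position, whether we are inside or outside $Y_i$ and, if outside, on which side, the search is well-defined. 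Everything else — the bookkeeping of pointers, the handling of degenerate (single-point or empty) intersections, and the extension to the general case where a face of $T^1$ may occur several times — is routine.
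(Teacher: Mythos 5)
Your proposal is correct and takes essentially the same approach as the paper: both maintain the ordered one-dimensional Voronoi subdivision of $I$ in a balanced search structure, locate the convex region claimed by $z_i$ by binary search using the convexity of Voronoi cells, and amortize by charging deletions to earlier insertions together with the bound $\sum_i|U_i|\le 5k$ from Lemma~\ref{lem:cardinality voronoi update set}. The only cosmetic difference is that you store breakpoints along $I$ while the paper stores the (index, interval) tuples directly, which does not change the argument.
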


\begin{proof}
Consider $i \in [k]$, and assume that the point $z_i$ has just been sent by Alice. We must return to Alice. The crux of the matter is to have maintained at this point the list of tuples $(j, \text{Vor}(z_j, Z_{i-1}) \cap I)$ over $j \in [i-1]$, ordered by the position of $\text{Vor}(z_j, Z_{i-1}) \cap I$ along $I$ (for some direction of $I$, and resolving any ambiguity arbitrarily). Now we can use the list to answer Alice, and update the list, as follows. Given a tuple $(j, \text{Vor}(z_j, Z_{i-1}) \cap I)$ we can determine in constant time whether $j \in U_i$ by checking whether there is a point of $\text{Vor}(j, Z_{i-1}) \cap I$ that is strictly closer to $z_i$ than to $z_j$. If $j \notin U_i$, then either all the tuples $(j', \text{Vor}(z_{j'}, Z_{i-1}) \cap I)$ before $(j, \text{Vor}(z_j, Z_{i-1}) \cap I)$ in the list are such that $j' \notin U_i$, or all the tuples after $(j, \text{Vor}(z_j, Z_{i-1}) \cap I)$ are like that, and we can find out which case it is in constant time. So we can list by dichotomy the $k' \geq 0$ tuples $(j, \text{Vor}(z_j, Z_{i-1}) \cap I)$ such that $j \in U_i$ in $O(k' + \log k)$ time. For each such tuple $(j, \text{Vor}(z_j, Z_{i-1}) \cap I)$, we derive $\text{Vor}(z_j, Z_i) \cap I$ from $\text{Vor}(z_j, Z_{i-1})$, $z_j$, and $z_i$ in constant time. In the end we compute $\text{Vor}(z_i, Z_i) \cap I$ in $O(\log k)$ time by finding by dichotomy the first and last tuples $(j, \text{Vor}(z_j, Z_{i-1})$ such that $\text{Vor}(z_j, Z_{i-1}) \cap I$ contains a point that is at least as close to $z_i$ than to $z_j$, if any. This way we can return to Alice, and update the list of tuples, in $O(k' + \log k)$ total time. Lemma~\ref{lem:cardinality voronoi update set} concludes.
\end{proof}

In the following, when maintaining the list of candidate events, we also maintain appropriate search trees in which we store the candidate events, so that the candidate events can be accessed by date or position in logarithmic time.

\begin{proof}[Proof of Proposition~\ref{prop:maintain candidate events}]
When inserting a point $x$ in the set $X_\Delta$ of a triangle $\Delta$, we maintain the list of candidate events sorted by date as follows. First, we find the candidate events of the form $(\cdot, \Delta, \cdot, x)$, and we remove these candidate events from the list. All but $O(\log k)$ of the time spent here is amortized by the fact that every event deleted here was created earlier in the execution of the algorithm. 

Second, for every side $s$ of $\Delta$, we consider the triangle $\Delta'$ and the side $s'$ of $\Delta'$ such that $s$ is matched to $s$, along with the orientation-preserving isometry $\lambda : \mathbb R^2 \to \mathbb R^2$ that maps $s$ to $s'$ and puts $\lambda(\Delta)$ and $\Delta'$ side by side. Among the candidate events of the form $(\cdot, \Delta', s', \lambda(y))$, $y \in X_\Delta$, those for which $\text{Vor}(y, X_\Delta \cup \{x\}) \cap s \neq \text{Vor}(y, X_\Delta) \cap s$ may have to updated. If $\text{Vor}(y, X_\Delta \cup \{x\}) \cap s = \emptyset $, then the event must be deleted. Otherwise, only the date of the event may change. This is done in amortized $O(\log k)$ time using Lemma~\ref{L:amortized insertion}. 

Finally, Lemma~\ref{L:amortized insertion} also provides us with the set $\text{Vor}(x, X_\Delta \cup \{x\}) \cap s$. If this set is not empty, and if $\lambda(x) \notin X_{\Delta'}$, then we consider the distance $t$ between $x$ and $\text{Vor}(x, X_\Delta \cup \{x\}) \cap s$, and we create the event $(t, s', \Delta', \lambda(x))$, in $O(\log k)$ time.
\end{proof}

%%%%%%%%%%%%%%%%%%%%%%%%%%%%%%%%%%%%%%%%%%%%%%%%%%%%%%%%%%%%%%%%%%%%%%%%%%%%%%%%%%%%%%%%%%%%%%%%%%%%%%%%%%%%%%%%%%%%%%%%%%%%%%%%%%%%%%%%%%%%%%%%%%%%%%%%%%%%%%%%%%%%%
%%%%%%%%%%%%%%%%%%%%%%%%%%%%%%%%%%%%%%%%%%%%%%%%%%%%%%%%%%%%%%%%%%%%%%%%%%%%%%%%%%%%%%%%%%%%%%%%%%%%%%%%%%%%%%%%%%%%%%%%%%%%%%%%%%%%%%%%%%%%%%%%%%%%%%%%%%%%%%%%%%%%%
%%%%%%%%%%%%%%%%%%%%%%%%%%%%%%%%%%%%%%%%%%%%%%%%%%%%%%%%%%%%%%%%%%%%%%%%%%%%%%%%%%%%%%%%%%%%%%%%%%%%%%%%%%%%%%%%%%%%%%%%%%%%%%%%%%%%%%%%%%%%%%%%%%%%%%%%%%%%%%%%%%%%%
%%%%%%%%%%%%%%%%%%%%%%%%%%%%%%%%%%%%%%%%%%%%%%%%%%%%%%%%%%%%%%%%%%%%%%%%%%%%%%%%%%%%%%%%%%%%%%%%%%%%%%%%%%%%%%%%%%%%%%%%%%%%%%%%%%%%%%%%%%%%%%%%%%%%%%%%%%%%%%%%%%%%%

\section{Appendix: applications of Theorem~\ref{thm:main result}}\label{app:applications}

Theorem~\ref{thm:main result} has two interesting applications. First, recall that Delaunay triangulations have bounded happiness~\cite[Section~4]{portalgons}. Combined with Theorem~\ref{thm:main result} we obtain:

\begin{corollary}\label{cor:happy}
Let $T$ be a portalgon of $n$ triangles, of aspect ratio $r$, whose surface $\mathcal S(T)$ is closed. One can compute in $O(n^3 \log^2(n) \cdot \log^4 (r))$ time a portalgon $T'$ of $O(n)$ triangles, whose surface is $\mathcal S(T)$, and whose happiness is $O(1)$.
\end{corollary}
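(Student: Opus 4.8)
The plan is to combine Theorem~\ref{thm:main result} with the bounded-happiness result for Delaunay triangulations cited from~\cite[Section~4]{portalgons}. First I would apply Theorem~\ref{thm:main result} to the input portalgon $T$ to obtain, in $O(n^3 \log^2(n) \cdot \log^4(r))$ time, the portalgon of the Delaunay tessellation $\mathcal D$ of $\mathcal S(T)$. The number of triangles of this portalgon is $O(n)$: the Delaunay tessellation has the singularities of $\mathcal S(T)$ as vertices (or a single vertex in the flat-torus case), and the number of singularities is $O(n)$ since each is a vertex of the input triangulation; then Euler's formula (as in Lemma~\ref{lem:euler edges}) bounds the number of edges and faces, and triangulating any non-triangular faces of $\mathcal D$ along vertex-to-vertex arcs only changes the count by a constant factor while keeping the triangulation Delaunay.

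The next step is to invoke the fact that Delaunay triangulations have bounded happiness. Concretely, the Delaunay triangulation obtained from $\mathcal D$ has happiness $O(1)$ by~\cite[Section~4]{portalgons}. I would set $T'$ to be the portalgon of this Delaunay triangulation: it has $O(n)$ triangles, its surface is $\mathcal S(T)$ by construction, and its happiness is $O(1)$. The total running time is dominated by the call to Theorem~\ref{thm:main result}, namely $O(n^3 \log^2(n) \cdot \log^4(r))$, since the post-processing (triangulating faces of $\mathcal D$, re-reading off the happiness bound) takes only $O(n)$ additional time.

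The only mild subtlety — and the one point I would be careful about — is making sure the happiness bound of~\cite[Section~4]{portalgons} is stated for the triangulated Delaunay object we actually return, not merely for the (possibly non-triangular) Delaunay tessellation $\mathcal D$; if it is stated for $\mathcal D$, then one checks that inserting diagonals inside a Delaunay face does not spoil the happiness bound, because such a diagonal lies inside a single face and any shortest path crosses it at most a bounded number of times relative to how often it crosses the boundary of that face. This is analogous to the argument relating segment-happiness and happiness for triangular portalgons in Section~\ref{sec:preliminaries}. With that observation in hand, the corollary is immediate, so I do not expect a genuine obstacle here — it is essentially a bookkeeping composition of the two cited results.
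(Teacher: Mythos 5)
Your proposal is correct and follows essentially the same route as the paper: apply Theorem~\ref{thm:main result}, triangulate the Delaunay tessellation along vertex-to-vertex arcs, and invoke the bounded-happiness result of~\cite{portalgons}. The ``subtlety'' you flag at the end is a non-issue, since the paper uses the fact (which you also note in passing) that triangulating a Delaunay tessellation's faces along vertex-to-vertex arcs directly yields a Delaunay \emph{triangulation}, so the happiness bound of~\cite[Section~4]{portalgons} applies verbatim and no auxiliary argument about diagonals is needed.
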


\begin{proof}[Proof of Corollary~\ref{cor:happy}]
Apply Theorem~\ref{thm:main result} to compute the portalgon $T'$ of the Delaunay tessellation of $\mathcal S(T)$ in $O(n^3 \log^2(n) \cdot \log^4 (r))$ time. Some polygons of $T'$ may not be triangles. Cut the polygons of $T'$ that are not triangles (if any) along vertex-to-vertex arcs to obtain a triangular portalgon $T''$. Then $T''$ is the portalgon of a Delaunay triangulation of $\mathcal S(T)$, so $T''$ has bounded happiness by the result of L{\"o}ffler, Ophelders, Silveira, and Staals~\cite[Section~4]{portalgons}. Moreover the vertex set of its 1-skeleton $T''^1$ is exactly the set of singularities of $\mathcal S(T)$, except if $\mathcal S(T)$ is a flat torus in which case $T''^1$ has exactly one vertex, so in any case $T''$ has $O(n)$ triangles.
\end{proof}

On the portalgon $T'$ returned by Corollary~\ref{cor:happy} the single-source shortest path algorithm of L{\"o}ffler, Ophelders, Silveira, and Staals~\cite[Section~3]{portalgons} would run in time $O(n^2 \log^{O(1)}(n))$, so that:

\begin{observation}
On the portalgon $T'$, one can compute a shortest path between two given points in time $O(n^2 \log^{O(1)}(n))$.
\end{observation}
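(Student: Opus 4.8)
The plan is to run the single-source shortest path algorithm of L{\"o}ffler, Ophelders, Silveira, and Staals~\cite[Section~3]{portalgons} on the portalgon $T'$ of Corollary~\ref{cor:happy}, which has $O(n)$ triangles and happiness $O(1)$, and then to read the desired shortest path off the structure it produces. Let $a$ and $b$ be the two query points. First I would make $a$ a vertex of $T'^1$: the point $a$ lies in the image of some triangle of $T'$, and inserting it as a vertex together with the three vertex-to-vertex arcs joining it to the corners of that triangle adds only $O(1)$ triangles. This does not change the surface $\mathcal S(T')$, hence does not change its shortest paths, and it multiplies the happiness by at most a constant factor: a straight chord of a triangle crosses the three inserted arcs at most twice, so a single visit to the old triangle becomes at most three visits to the new ones. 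The modified portalgon therefore still has $O(n)$ triangles and $O(1)$ happiness.

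Next I would run the algorithm of~\cite[Section~3]{portalgons} with source $a$. That algorithm propagates a wavefront across the triangles; since each triangle is visited $O(h)$ times by shortest paths, the wavefront generates $O(Nh)$ combinatorial events, each processed with a priority queue, so the algorithm runs in time polynomial in the number $N$ of triangles, with the degree and the $\log$ factors governed by the happiness $h$. For $N = O(n)$ and $h = O(1)$ this is $O(n^2 \log^{O(1)} n)$. The output is a shortest path map: for every triangle, a subdivision recording which neighbouring triangle the shortest path from $a$ last crossed on its way to each point. To conclude, I would locate $b$ in this map and follow the back-pointers from $b$ toward $a$; this yields a combinatorial description of a shortest path as a sequence of triangles of $T'$ with the straight segment traversing each, which unfolds in the plane to the path itself. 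This last step costs time linear in the number of triangles the path crosses, which is $O(n)$ because the happiness is $O(1)$.

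There is no genuine obstacle here: the only thing to verify is that the running time of the algorithm of~\cite[Section~3]{portalgons} is polynomial in the number of triangles once the happiness is bounded, which is exactly the content of their analysis. So the observation follows directly from Corollary~\ref{cor:happy} together with their result.
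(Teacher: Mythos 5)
Your proposal is correct and takes essentially the same approach as the paper: both invoke the single-source shortest path algorithm of L{\"o}ffler, Ophelders, Silveira, and Staals on the portalgon $T'$ of Corollary~\ref{cor:happy}, which has $O(n)$ triangles and $O(1)$ happiness, yielding a running time of $O(n^2 \log^{O(1)} n)$. You simply fill in details (inserting the source $a$ as a vertex, locating $b$ in the shortest path map and tracing back) that the paper leaves implicit.
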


Second, Theorem~\ref{thm:main result} enables us to test whether the surfaces of two given portalgons are isometric, simply by computing and comparing the portalgons of the associated Delaunay tessellations:

\begin{corollary}\label{cor:isometry}
Let $T$ and $T'$ be portalgons of less than $n$ triangles, whose aspect ratios are smaller than $r$, and whose surfaces $\mathcal S(T)$ and $\mathcal S(T')$ are closed. One can determine whether $\mathcal S(T)$ and $\mathcal S(T')$ are isometric in $O(n^3 \log^2(n) \cdot \log^4 (r))$ time.
\end{corollary}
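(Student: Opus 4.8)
The plan is to reduce the question to comparing the portalgons of the two Delaunay tessellations, which are canonical and of small combinatorial complexity. First I would apply Theorem~\ref{thm:main result} to $T$ and to $T'$, obtaining the portalgons $D$ and $D'$ of the Delaunay tessellations of $\mathcal S(T)$ and $\mathcal S(T')$; I would also apply it to the \emph{mirror} portalgon $\overline{T'}$, obtained by reflecting every polygon of $T'$ while keeping the side matchings, whose surface is the mirror image $\overline{\mathcal S(T')}$, obtaining a portalgon $\overline{D'}$. Each of these three invocations costs $O(n^3 \log^2(n)\cdot \log^4(r))$ time, since $\overline{T'}$ has the same number of triangles and the same aspect ratio as $T'$. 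Moreover $D$, $D'$, and $\overline{D'}$ each have $O(n)$ polygons (hence total size $O(n)$): the vertices of a Delaunay tessellation are the singularities of the surface, or a single point in the flat-torus case, and there are at most as many singularities as vertices of a triangulation of the input, that is $O(n)$, and then Euler's formula bounds the numbers of edges and faces, using that the genus is $O(n)$.

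Next I would observe that $\mathcal S(T)$ and $\mathcal S(T')$ are isometric if and only if there is an orientation-preserving isometry from $\mathcal S(T)$ to $\mathcal S(T')$ or from $\mathcal S(T)$ to $\overline{\mathcal S(T')}$ (compose with the canonical, point-wise identity, orientation-reversing identification $\mathcal S(T')\to\overline{\mathcal S(T')}$), and that such an isometry exists if and only if the relevant pair among $(D,D')$ and $(D,\overline{D'})$ is \emph{equivalent as portalgons}: there is a bijection between their polygons, and for each matched pair an orientation-preserving plane isometry carrying one polygon to the other, all compatible with the side matchings. The nontrivial direction uses the uniqueness of the Delaunay tessellation: an orientation-preserving isometry $\varphi:\mathcal S(T)\to\mathcal S(T')$ preserves cone angles, hence maps singularities to singularities, hence maps the Delaunay tessellation of $\mathcal S(T)$ to a Delaunay tessellation of $\mathcal S(T')$ with the same vertex set; by uniqueness this is the Delaunay tessellation of $\mathcal S(T')$, so $\varphi$ restricts on each face to a congruence, which is exactly the desired portalgon equivalence. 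The Delaunay condition is invariant under reflection, so $\overline{D'}$ is, up to equivalence, the Delaunay portalgon of $\overline{\mathcal S(T')}$; this captures orientation-reversing isometries. The flat-torus case is handled separately: there $\mathcal D$ is canonical only up to an orientation-preserving isometry of the surface, but any two one-vertex Delaunay tessellations of a flat torus are related by such an isometry, so the equivalence criterion for $D$ and $D'$ still characterizes isometry of the surfaces.

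Then I would test portalgon equivalence of two given portalgons $A$ and $B$ of total size $O(n)$ as follows. Fix one oriented edge $e$ of $A^1$. For each oriented edge $e'$ of $B^1$ with $\ell(e')=\ell(e)$, attempt to build a candidate equivalence by a breadth-first traversal: the face of $A$ to the left of $e$ must map to the face of $B$ to the left of $e'$ via the unique orientation-preserving plane isometry carrying $e$ onto $e'$, provided such an isometry matches the two polygons; then propagate across every side, using the side matchings of $A$ and $B$, each step forcing a congruence of the next face, while checking that every newly reached face is congruent as forced and that, whenever a face is reached a second time, the two forced congruences agree. If some attempt completes with all faces and all side matchings consistently covered, output that $A$ and $B$ are equivalent. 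This search is complete because an orientation-preserving isometry between connected surfaces is determined by the image of a single oriented geodesic segment (equivalently, by the image of one flat point and one tangent direction there), so trying every oriented edge of $B^1$ as the image of $e$ is exhaustive; in particular, when $B^1$ is reached via an orientation-preserving isometry of surfaces, the relative interior of $e$ maps into flat points and the corresponding attempt succeeds. Each attempt runs in time linear in $n$, so the test is $O(n^2)$; running it for $(A,B)=(D,D')$ and $(D,\overline{D'})$ adds only $O(n^2)$, dominated by the three Delaunay computations, giving the stated bound. All elementary operations used — computing and comparing lengths, deciding whether two polygons are congruent with a prescribed oriented-edge correspondence, computing the forced congruence, comparing two plane isometries — are achievable by a real RAM, exactly as in the discussion of our model of computation.

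The main obstacle I anticipate is not the algorithm but the correctness of the reduction: pinning down precisely in which sense the Delaunay tessellation is canonical, verifying that an (orientation-preserving) surface isometry necessarily carries one canonical Delaunay portalgon to the other, correctly dealing with the degenerate flat-torus case where canonicity holds only up to an isometry of the surface, and handling orientation-reversing isometries through the mirror portalgon. A secondary point to check carefully is the completeness and termination of the traversal in the comparison subroutine, which rests on the rigidity fact that an orientation-preserving surface isometry is determined by the image of a single oriented edge and on the fact that Delaunay edges meet singularities only at their endpoints.
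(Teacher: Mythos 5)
Your proof follows the same high-level strategy as the paper: compute the portalgons of the Delaunay tessellations via Theorem~\ref{thm:main result}, then compare them in $O(n^2)$ time by fixing one oriented side and trying all candidate images, propagating the forced face-by-face correspondence by a traversal. The paper's comparison tries every side $s'$ of $\mathcal T'$ as the image of a fixed side $s$ of $\mathcal T$, reconstructs the unique consistent side bijection $\varphi$ when it exists, and then checks that each face of $\mathcal T$ is carried to the corresponding face of $\mathcal T'$ by an orientation-preserving plane isometry $\tau_P$; your breadth-first traversal is the same procedure phrased slightly differently. Where you go beyond the paper is in explicitly treating orientation-reversing isometries by also computing and comparing against a mirror Delaunay portalgon $\overline{D'}$. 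The paper's proof, as written, tests only orientation-preserving plane isometries $\tau_P$ and never considers a reflected portalgon, so it will answer ``not isometric'' whenever the only isometries between the two surfaces reverse orientation. If the corollary's ``isometric'' is read as ``orientation-preservingly isometric'' (a defensible reading since $\mathcal S(T)$ and $\mathcal S(T')$ are oriented), the paper is correct as stated and your extra invocation is unnecessary overhead; if ``isometric'' is meant in the unrestricted sense, your proposal repairs a genuine gap while staying within the same $O(n^3\log^2(n)\log^4(r))$ budget. You also spell out the correctness of the reduction in more detail than the paper — canonicity of the Delaunay tessellation, the fact that a surface isometry maps singularities to singularities and hence transports one canonical Delaunay tessellation to the other, and the flat-torus degenerate case where canonicity holds only up to an isometry of the surface — all of which the paper leaves implicit but which are exactly the points one should verify.
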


\begin{proof}
Theorem~\ref{thm:main result} computes the portalgons $\mathcal T$ and $\mathcal T'$ of the Delaunay tessellations of respectively $\mathcal S(T)$ and $\mathcal S(T')$  in $O(n^3 \log^2(n) \cdot \log^4 (r)$ time. We claim that we can determine whether $\mathcal T$ and $\mathcal T'$ are equal in $O(n^2)$ time. The claim immediately implies the corollary. 

Let us prove the claim. We consider the sides of the polygons of $\mathcal T$ and $\mathcal T'$. There are $O(n)$ such sides. Fix a side $s$ of a polygon of $\mathcal T$. For every side $s'$ of a polygon of $\mathcal T'$, determine in $O(n)$ time whether there exists a one-to-one correspondence $\varphi$ from the sides of the polygons of $\mathcal T$ to the sides of the polygons of $\mathcal T'$ that maps $s$ to $s'$, the boundary closed walks of the polygons of $\mathcal T$ to the boundary closed walks of the polygons of $\mathcal T'$, and the matching of $\mathcal T$ to the matching of $\mathcal T'$. If $\varphi$ exists then $\varphi$ is unique since $\mathcal S(\mathcal T)$ and $\mathcal S(\mathcal T')$ are connected: construct $\varphi$ in $O(n)$ time. Then determine in $O(n)$ time if for every polygon $P$ of $\mathcal T$ there is an orientation-preserving isometry $\tau_P : \mathbb R^2 \to \mathbb R^2$ such that $\varphi(s) = \tau_P(s)$ for every side $s$ of $P$. In which case return correctly that $\mathcal T$ and $\mathcal T'$ are equal. In the end, if every polygon side $s'$ of $\mathcal T'$ has been looped upon, and if no equality has been found, return correctly that $\mathcal T$ and $\mathcal T'$ are distinct. This proves the claim, and the corollary.
\end{proof}

\end{document}